\newcommand{\typeof}{1} %
\newcommand{\longv}[1]{\ifthenelse{\equal{\typeof}{0}}{}{#1}}
\newcommand{\shortv}[1]{\ifthenelse{\equal{\typeof}{0}}{#1}{}}
\newcommand{\longshortv}[2]{\ifthenelse{\equal{\typeof}{0}}{#2}{#1}}
\newcommand{\drop}[1]{\ifthenelse{\equal{\typeof}{0}}{}{}}
\newenvironment{varitemize}
{
\begin{list}{\labelitemii}
{\setlength{\itemsep}{0pt}
 \setlength{\topsep}{0pt}
 \setlength{\parsep}{0pt}
 \setlength{\partopsep}{0pt}
 \setlength{\leftmargin}{15pt}
 \setlength{\rightmargin}{0pt}
 \setlength{\itemindent}{0pt}
 \setlength{\labelsep}{5pt}
 \setlength{\labelwidth}{10pt}
}}
{
 \end{list}
}
\newcounter{numberone}
\renewcommand{\epsilon}{\varepsilon}
\renewcommand{\phi}{\varphi}
\renewcommand{\vec}{\overrightarrow}
\newcommand{\intterm}[1]{\underset{\bar{}}{#1}}
\newcommand{\bnf}{\;::=\;}
\newcommand{\sep}{\ \ \big|\ \ }
\newcommand{\smallsep}{\ \ \vert\ \ }
\newcommand{\proves}{\vdash}
\newcommand{\typsep}{\,:\,}
\newcommand{\supremumnat}[1]{\operatorname{sup}_{n \in \NN} \left( #1 \right)}
\newcommand{\semantics}[1]{[\![\,#1\,]\!]}
\newcommand{\valdec}{\,\overset{\mathit{VD}}{=}\,}
\newcommand{\NN}{\mathbb{N}}
\newcommand{\RR}{\mathbb{R}}
\newcommand{\indexone}{i}
\newcommand{\indextwo}{j}
\newcommand{\indexthree}{k}
\newcommand{\indexfour}{l}
\newcommand{\indexfive}{g}
\newcommand{\indexsix}{h}
\newcommand{\indexsetone}{\mathcal{I}}
\newcommand{\indexsettwo}{\mathcal{J}}
\newcommand{\indexsetthree}{\mathcal{K}}
\newcommand{\indexsetfour}{\mathcal{L}}
\newcommand{\indexsetfive}{\mathcal{G}}
\newcommand{\indexsetsix}{\mathcal{H}}
\newcommand{\probaconv}[2]{\mathit{Pr}_{#1,#2}}
\newcommand{\values}{\termsproba^{V}}
\newcommand{\dom}[1]{\mathit{dom}(#1)}
\newcommand{\abstr}[2]{\lambda #1.#2}
\newcommand{\casenme}{\mathsf{case}}
\newcommand{\caseof}[2]{\casenme \ #1\ \mathsf{of} \ \left\{\, #2 \,\right\}}
\newcommand{\casenat}[3]{\casenme \ #1\ \mathsf{of} \ \left\{\, \natsucc \rightarrow #2 \sep \natzero \rightarrow #3 \,\right\}}
\newcommand{\letrec}[2]{\letrecname\ #1\ = \ #2}
\newcommand{\letrecname}{\mathsf{letrec}}
\newcommand{\letin}[3]{\mathsf{let}\ #1\ = \ #2 \ \mathsf{in}\ #3}
\newcommand{\choice}{\oplus}
\newcommand{\subst}[3]{#1[#3/#2]}
\newcommand{\languageproba}{\lambda_{\choice}}
\newcommand{\PCF}{\ensuremath{\mathsf{PCF}}}
\newcommand{\termsproba}{\Lambda_{\choice}}
\newcommand{\setredtms}[2]{\mathsf{TRed}^{#1}_{#2}}
\newcommand{\setredtmsfin}[2]{\mathsf{TRed}^{#1}_{#2}}
\newcommand{\setredvals}[2]{\mathsf{VRed}^{#1}_{#2}}
\newcommand{\setreddists}[2]{\mathsf{DRed}^{#1}_{#2}}
\newcommand{\settypedclosedterms}[1]{\termsproba \left(#1\right)}
\newcommand{\settypedclosedvalues}[1]{\termsproba^V \left(#1\right)}
\newcommand{\settypedterms}[2]{\termsproba \left(#2,#1\right)}
\newcommand{\settypedvalues}[2]{\termsproba^V \left(#2,#1\right)}
\newcommand{\setdistrtypedclosedterms}[1]{\termsproba^{\sizeone} \left(#1\right)}
\newcommand{\setdistrtypedclosedvalues}[1]{\termsproba^{\sizeone,V} \left(#1\right)}
\newcommand{\setdistrtypedterms}[2]{\termsproba^{\sizeone} \left(#2,#1\right)}
\newcommand{\setdistrtypedvalues}[2]{\termsproba^{\sizeone,V} \left(#2,#1\right)}
\newcommand{\vars}{\mathcal{X}}
\newcommand{\termvector}[1]{\vec{#1}}
\newcommand{\termone}{M}
\newcommand{\termtwo}{N}
\newcommand{\termthree}{L}
\newcommand{\termfour}{P}
\newcommand{\termfive}{R}
\newcommand{\termsix}{T}
\newcommand{\termnonaff}{\termone_{\mathit{NAFF}}}
\newcommand{\varone}{x}
\newcommand{\vartwo}{y}
\newcommand{\varthree}{z}
\newcommand{\funcone}{f}
\newcommand{\functwo}{g}
\newcommand{\valone}{V}
\newcommand{\valtwo}{W}
\newcommand{\valthree}{Z}
\newcommand{\valfour}{X}
\newcommand{\valfive}{Y}
\newcommand{\unfoldings}[1]{\mathit{Unfold}\left(#1\right)}
\newcommand{\dataconstrone}{c}
\newcommand{\rcbv}{\rightarrow_{v}}
\newcommand{\redval}{\Rrightarrow_{v}}
\newcommand{\setoneton}{\left\{1,\,\ldots,\,n\right\}}
\newcommand{\setcard}[1]{\#\left(#1\right)}
\newcommand{\typone}{\sigma}
\newcommand{\typtwo}{\tau}
\newcommand{\typthree}{\theta}
\newcommand{\expectype}[1]{\mathbb{E}\left(#1\right)}
\newcommand{\distrtypone}{\mu}
\newcommand{\distrtyptwo}{\nu}
\newcommand{\distrtypthree}{\xi}
\newcommand{\distrtypfour}{\rho}
\newcommand{\simpletypone}{\kappa}
\newcommand{\simpletyptwo}{\kappa'}
\newcommand{\simpletypthree}{\kappa''}
\newcommand{\refines}{\,::\,}
\newcommand{\typarrow}{\rightarrow}
\newcommand{\spine}[1]{\operatorname{spine}\left(#1\right)}
\newcommand{\contextone}{\Gamma}
\newcommand{\contexttwo}{\Delta}
\newcommand{\contextsep}{\,\vert\,}
\newcommand{\contextsizedone}{\Gamma}
\newcommand{\contextsizedtwo}{\Delta}
\newcommand{\contextsizedthree}{\Xi}
\newcommand{\contextdistrone}{\Theta}
\newcommand{\contextdistrtwo}{\Psi}
\newcommand{\contextdistrthree}{\Phi}
\newcommand{\contextsumdisj}{\uplus}
\newcommand{\underlying}[1]{\langle #1\rangle }
\newcommand{\positive}[2]{#1\ \mathsf{pos}\ #2}
\newcommand{\negative}[2]{#1\ \mathsf{neg}\ #2}
\newcommand{\sizeleq}{\preccurlyeq}
\newcommand{\sizevars}{\mathcal{S}}
\newcommand{\sizevarone}{\mathfrak{i}}
\newcommand{\sizevartwo}{\mathfrak{j}}
\newcommand{\sizevarthree}{\mathfrak{l}}
\newcommand{\sizeone}{\mathfrak{s}}
\newcommand{\sizetwo}{\mathfrak{r}}
\newcommand{\sizethree}{\mathfrak{t}}
\newcommand{\sizefour}{\mathfrak{u}}
\newcommand{\sizeinf}{\infty}
\newcommand{\sizesucc}[1]{\widehat{#1}}
\newcommand{\sizesuccit}[2]{\widehat{#1}^{^{#2}}}
\newcommand{\subtypeleq}{\sqsubseteq}
\newcommand{\seone}{\rho}
\newcommand{\sesem}[2]{\llbracket #1\rrbracket_{#2}}
\newcommand{\setredopentms}[2]{\mathsf{OTRed}^{#1}_{#2}}
\newcommand{\setredopenvals}[2]{\mathsf{OVRed}^{#1}_{#2}}
\newcommand{\states}{Q}
\newcommand{\mainstate}{q_{\alpha}}
\newcommand{\zerostate}{q_{\mathit{zero}}}
\newcommand{\transitionzero}{\delta^{=0}}
\newcommand{\transitionsupzero}{\delta^{>0}}
\newcommand{\probatransitionzero}{P^{=0}}
\newcommand{\probatransitionsupzero}{P^{>0}}
\newcommand{\markovvertices}{V}
\newcommand{\markovverticeone}{v}
\newcommand{\markovtransition}{\mapsto}
\newcommand{\markovproba}{\mathit{Pr}}
\newcommand{\distribs}{\mathcal{P}}
\newcommand{\distrsum}[1]{\sum\,#1}
\newcommand{\distrleq}{\preccurlyeq}
\newcommand{\distrone}{\mathscr{D}}
\newcommand{\distroneval}{\mathscr{D}_{|V}}
\newcommand{\distroneterm}{\mathscr{D}_{|T}}
\newcommand{\distrtwo}{\mathscr{E}}
\newcommand{\distrthree}{\mathscr{F}}
\newcommand{\closedtypeddistrone}{\mathscr{T}}
\newcommand{\closedtypeddistrtwo}{\mathscr{U}}
\newcommand{\closedtypeddistrthree}{\mathscr{V}}
\newcommand{\distrelts}[1]{\left\{\,#1 \,\right\}}
\newcommand{\pseudorep}[1]{\left[\,#1 \,\right]}
\newcommand{\supp}[1]{\mathcal{S}(#1)}
\newcommand{\termbiased}{\termone_{\mathit{BIAS}}}
\newcommand{\termunbiased}{\termone_{\mathit{UNB}}}
\newcommand{\termnat}{\termone_{\mathit{EXP}}}
\newcommand{\Nat}{\mathsf{Nat}}
\newcommand{\natsucc}{\mathsf{S}}
\newcommand{\natzero}{\mathsf{0}}
\newcommand{\setone}{X}
\newtheorem{theorem}{Theorem}
\newtheorem{proposition}{Proposition}
\newtheorem{lemma}{Lemma}
\newtheorem{definition}{Definition}
\newtheorem{example}{Example}
\newtheorem{corollary}{Corollary}
\newenvironment{proof}{\begin{trivlist}
       \item[\hskip \labelsep {\bfseries Proof.}]}{\hfill $\Box$ \end{trivlist}}
  \author{Ugo Dal Lago\and Charles Grellois}
\title{Probabilistic Termination\\ by Monadic Affine Sized Typing\\(Long Version)}
\begin{document}
\maketitle

\begin{abstract}
  We introduce a system of monadic affine sized types, which
  substantially generalise usual sized types, and allows this way to
  capture probabilistic higher-order programs which terminate almost
  surely. Going beyond plain, strong normalisation without losing
  soundness turns out to be a hard task, which cannot be accomplished
  without a richer, quantitative notion of types, but also without
  imposing some affinity constraints. The proposed type system is
  powerful enough to type classic examples of probabilistically
  terminating programs such as random walks. The way typable programs
  are proved to be almost surely terminating is based on reducibility,
  but requires a substantial adaptation of the technique.
\end{abstract}
\section{Introduction}

Probabilistic models are more and more pervasive in computer 
science~\cite{manning-schutze:foundations-statistical-language-processing,pearl:machine-learning,thrun:robotic-mapping}.
Moreover, the concept of algorithm, originally assuming
determinism, has been relaxed so as to allow probabilistic evolution
since the very early days of theoretical computer 
science~\cite{shannon-schapiro:computability-proba-machines}. All
this has given impetus to research on probabilistic programming
languages, which however have been studied at a large scale only in
the last twenty years, following advances in randomized 
computation~\cite{motwani-raghavan:randomized-algorithms},
cryptographic protocol verification~\cite{barthe-gregoire-beguelin:certicrypt,barthe-gregoire-heraud-beguelin:easycrypt},
and machine learning~\cite{goodman-et-al:church-language-generative-models}.
Probabilistic programs can be seen as ordinary
programs in which specific instructions are provided to make the
program evolve probabilistically rather than deterministically.
The typical example are instructions for sampling from a given
distribution toolset, or for performing probabilistic choice.

One of the most crucial properties a program should satisfy is
\emph{termination}: the execution process should be guaranteed to
end. In (non)deterministic computation, this is easy to formalize,
since any possible computation path is only considered qualitatively,
and termination is a boolean predicate on programs: any
non-deterministic program either terminates -- in must or may sense -- or
it does not. In probabilistic programs, on the other hand, any
terminating computation path is attributed a probability, and thus
termination becomes a \emph{quantitative} property. It is therefore natural
to consider a program terminating when its terminating paths form a set
of measure one or, equivalently, when it terminates with maximal
probability.  This is dubbed ``almost sure termination'' (AST for
short) in the literature~\cite{bournez-kirchner:proba-rewrite-strategies},
and many techniques for automatically and
semi-automatically checking programs for AST have been introduced in
the last years~\cite{esparza-gaiser-kiefer:probabilistic-termination-using-patterns,fioriti-hermans:probabilistic-termination,chatterjee-et-al:analysis-proba-termination,chatterjee-et-al:termination-analysis-proba-positivstellensatz}.
All of them, however, focus on imperative programs;
while probabilistic functional programming languages are nowadays
among the most successful ones in the realm of probabilistic
programming~\cite{goodman-et-al:church-language-generative-models}.
It is not clear at all whether the existing techniques
for imperative languages could be easily applied to functional ones,
especially when higher-order functions are involved.

In this paper, we introduce a system of monadic affine sized types for
a simple probabilistic $\lambda$-calculus with recursion
and show that it guarantees
the AST property for all typable programs. The type system, described
in Section \ref{sect:sizedtypes}, can be seen as a non-trivial
variation on Hughes et al.'s sized
types~\cite{hughes-pareto-sabry:sized-types}, whose main novelties are
the following:
\begin{varitemize}
\item
  Types are generalised so as to be \emph{monadic},
  this way encapsulating the kind of information we need to type
  non-trivial examples. This information, in particular, is taken
  advantage of when typing recursive programs.
\item
  Typing rules are \emph{affine}:
  higher-order variables cannot be freely duplicated. This is quite
  similar to what happens when characterising polynomial time
  functions by restricting higher-order languages akin to the
  $\lambda$-calculus~\cite{hofmann:mixed-modal-linear-lambda-calc}.
  Without affinity, the type system is bound to be unsound for AST.
\end{varitemize}
The necessity of both these variations is discussed in Section
\ref{sect:necessity} below.  The main result of this paper is that
typability in monadic affine sized types entails AST, a property which
is proved using an adaptation of the Girard-Tait reducibility
technique~\cite{girard-taylor-lafont:proofs-and-types}.  This
adaptation is technically involved, as it needs substantial
modifications allowing to deal with possibly infinite and
probabilistic computations. In particular, every reducibility set
must be parametrized by a quantitative parameter $p$
guaranteeing that terms belonging to this set terminate with
probability at least $p$.  The idea of parametrizing such sets already
appears in work by the first author and
Hofmann~\cite{dal-lago-hofmann:realizability-models-icc}, in which a
notion of realizability parametrized by resource monoids is
considered. These realizability models are however studied in relation
to linear logic and to the complexity of normalisation, and do not fit
as such to our setting, even if they provided some crucial
inspiration. In our approach, the fact that recursively-defined terms
are AST comes from a continuity argument on this parameter: we can
prove, by unfolding such terms, that they terminate with probability
$p$ for every $p<1$, and continuity then allows to take the limit and
deduce that they are AST. This soundness result is technically
speaking the main contribution of this paper, and is described in Section
\ref{sect:reducibility}.

\shortv{An extended version with more details and proofs is available
  \cite{dal-lago-grellois:monadic-affine-sized-types-full}.}
\subsection{Related Works}
Sized types have been originally introduced by Hughes, Pareto, and
Sabry~\cite{hughes-pareto-sabry:sized-types} in the context of
reactive programming. A series of papers by Barthe and
colleagues~\cite{barthe-et-al:type-based-termination,barthe-gregoire-riba:type-based-termination-tutorial,barthe-gregoire-riba:type-based-termination-sized-products}
presents sized types in a way similar to the one we will adopt here,
although still for a deterministic functional language. Contrary to the other works on sized types,
their type system is proved to admit a decidable type inference, see the unpublished tutorial~\cite{barthe-gregoire-riba:type-based-termination-tutorial}.
Abel developed independently of Barthe and colleagues a similar type system
featuring size informations~\cite{abel:termination-checking-with-types}.
These three lines of work allow polymorphism, arbitrary inductive data constructors, and ordinal sizes, so
that data such as infinite trees can be manipulated. These three features will be absent of our system, in order to focus the challenge on the
treatment of probabilistic recursive programs.
Another interesting approach is the one of 
Xi's Dependent ML~\cite{xi:dependent-types-program-termination},
in which a system of lightweight dependent types allows a more liberal treatment of the notion of size,
over which arithmetic or conditional operations may in particular be applied,
see~\cite{abel:termination-checking-with-types} for a detailed comparison. This type system
is well-adapted for practical termination checking, but does not handle ordinal sizes either.
Some works along these lines are able to deal with coinductive data, as well as
inductive ones~\cite{hughes-pareto-sabry:sized-types,barthe-et-al:type-based-termination,abel:termination-checking-with-types}.
They are related to Amadio and Coupet-Grimal's work on guarded types ensuring
productivity of infinite structures such as streams~\cite{amadio-coupet-grimal:guard-condition-type-theory}.
None of these works deal with probabilistic computation, and in particular
with almost sure termination.

There has been a lot of interest, recently, about probabilistic
termination as a verification problem in the context of imperative
programming~\cite{esparza-gaiser-kiefer:probabilistic-termination-using-patterns,fioriti-hermans:probabilistic-termination,chatterjee-et-al:analysis-proba-termination,chatterjee-et-al:termination-analysis-proba-positivstellensatz}.
All of them deal, invariably, with some form of while-style language without 
higher-order functions. A possible approach is to reduce AST for probabilistic
programs to termination of non-deterministic programs~\cite{esparza-gaiser-kiefer:probabilistic-termination-using-patterns}.
Another one is to extend the concept of ranking function to the probabilistic case.
Bournez and Garnier obtained in this way the notion of Lyapunov ranking function~\cite{bournez-garnier:proving-past},
but such functions capture a notion more restrictive than AST: \emph{positive} almost
sure termination, meaning that the program is AST and terminates in expected finite time.
To capture AST, the notion of ranking supermartingale~\cite{chakarov-sankaranarayanan:proba-programs-martingales}
has been used. Note that the use of ranking supermartingales allows to deal with programs which are both probabilistic
and non-deterministic~\cite{fioriti-hermans:probabilistic-termination,chatterjee-et-al:termination-analysis-proba-positivstellensatz}
and even to reason about programs with real-valued variables~\cite{chatterjee-et-al:analysis-proba-termination}.

Some recent work by Cappai, the first author, and Parisen
Toldin~\cite{cappai-dal-lago:equivalences-metrics-ptime,dal-lago-parisen-tolding:ho-carac-probaptime}
introduce type systems ensuring that all typable programs can be
evaluated in probabilistic polynomial time. This is too restrictive
for our purposes. On the one hand, we aim at termination, and
restricting to polynomial time algorithms would be an overkill. On the
other hand, the above-mentioned type systems guarantee that the length
of \emph{all} probabilistic branches are uniformly bounded 
(by the same polynomial). In our setting, this would restrict the focus
to terms in which infinite computations are forbidden, while we
simply want the set of such computations to have probability 0.

\section{Why is Monadic Affine Typing Necessary?}\label{sect:necessity}
In this section, we justify the design choices that guided
us in the development of our type system. As we will see, the nature
of AST requires a significant and non-trivial extension
of the system of sized types originally introduced to ensure
termination in the deterministic case~\cite{hughes-pareto-sabry:sized-types}.
\paragraph*{Sized Types for Deterministic Programs.}
The simply-typed $\lambda$-calculus endowed with a typed recursion
operator $\letrecname$ and appropriate constructs for the natural
numbers, sometimes called \PCF, is already Turing-complete, so that there is
no hope to prove it strongly normalizing.
Sized types~\cite{hughes-pareto-sabry:sized-types} refine the simple
type system by enriching base types with annotations, so as to ensure
the termination of any recursive definition. Let us explain the idea
of sizes in the simple, yet informative case in which the base type is
$\Nat$. Sizes are defined by the grammar
$$
\sizeone \ \ \bnf\ \ \sizevarone \sep \sizeinf \sep \sizesucc{\sizeone}
$$
where $\sizevarone$ is a size variable and $\sizesucc{\sizeone}$ is
the successor of the size $\sizeone$ --- with
$\sizesucc{\sizeinf}\,=\,\sizeinf$.  These sizes permit to consider
decorations $\Nat^\sizeone$ of the base type $\Nat$, whose elements
are natural numbers of size at most $\sizeone$.  The type system
ensures that the only constant value of type
$\Nat^{\sizesucc{\sizevarone}}$ is $\natzero$, that the only constant
values of type $\Nat^{\sizesucc{\sizesucc{\sizevarone}}}$ are
$\natzero$ or $\intterm{1}\,=\,\natsucc\ \natzero$, and so on. The
type $\Nat^{\sizeinf}$ is the one of all natural numbers, and is therefore
often denoted as $\Nat$.

The crucial rule of the sized type system, which we present here
following Barthe et al.~\cite{barthe-et-al:type-based-termination},
allows one to type recursive definitions as follows:
\begin{equation}
\label{eq:original-fix-rule}
\AxiomC{$\contextone,\funcone \typsep \Nat^\sizevarone \typarrow \typone \proves \termone \typsep \Nat^{\sizesucc{\sizevarone}} \typarrow 
\subst{\typone}{\sizevarone}{\sizesucc{\sizevarone}}$}
\AxiomC{$\sizevarone \text{ pos } \typone$}
\BinaryInfC{$\contextone \proves \letrec{\funcone}{\termone} \typsep \Nat^{\sizeone} \typarrow 
\subst{\typone}{\sizevarone}{\sizeone}$}
\DisplayProof
\end{equation}
This typing rule ensures that, to recursively define the function
$\funcone\,=\,\termone$, the term $\termone$ taking an input of size
$\sizesucc{\sizevarone}$ calls $\funcone$ on inputs of \emph{strictly lesser}
size $\sizevarone$. This is for instance the case when typing the
program 
$$
\termone_{\mathit{DBL}}\ \ =\ \ \letrec{\funcone}{\abstr{\varone}{\casenat{\varone}{\abstr{\vartwo}{\natsucc\ \natsucc \ (\funcone\ \vartwo)}}{\natzero}}}
$$
computing recursively the double of an input integer, as the hypothesis of the fixpoint rule 
in a typing derivation of $\termone_{\mathit{DBL}}$ is
$$
\funcone \typsep \Nat^{\sizevarone} \typarrow \Nat 
\proves \abstr{\varone}{\casenat{\varone}{\abstr{\vartwo}{\natsucc\ \natsucc \ (\funcone\ \vartwo)}}{\natzero}}
\typsep \Nat^{\sizesucc{\sizevarone}} \typarrow \Nat 
$$
The fact that $\funcone$ is called on an input $\vartwo$ of strictly
lesser size $\sizevarone$ is ensured by the rule typing the $\casenme$
construction:
$$
\AxiomC{$\contextone \proves \varone \typsep \Nat^{\sizesucc{\sizevarone}}$}
\AxiomC{$\contextone \proves \abstr{\vartwo}{\natsucc\ \natsucc \ (\funcone\ \vartwo)} \typsep \Nat^\sizevarone \typarrow \Nat$}
\AxiomC{$\contextone \proves \natzero \typsep \Nat$}
\TrinaryInfC{$\contextone \proves \casenat{\varone}{\abstr{\vartwo}{\natsucc\ \natsucc \ (\funcone\ \vartwo)}}{\natzero} \typsep \Nat$}
\DisplayProof
$$
where $\contextone\ =\ \funcone \typsep \Nat^{\sizevarone} \typarrow
\Nat ,\ \varone\typsep \Nat^{\sizesucc{\sizevarone}}$.  The soundness
of sized types for strong normalization allows to conclude that
$\termone_{\mathit{DBL}}$ is indeed SN.

\paragraph*{A Naïve Generalization to Probabilistic Terms.}
The aim of this paper is to obtain a probabilistic,
\emph{quantitative} counterpart to this soundness result for sized
types.  Note that unlike the result for sized types, which was
focusing on \emph{all} reduction strategies of terms, we only consider
a \emph{call-by-value} calculus\footnote{Please notice that choosing a reduction
strategy is crucial in a probabilistic setting, otherwise one risks
getting nasty forms of non-confluence~\cite{dal-lago-zorzi:probabilistic-operational-semantics-lambda-calculus}.}.  Terms can now
contain a probabilistic choice operator $\choice_p$, such that
$\termone\, \choice_p\, \termtwo$ reduces to the term $\termone$ with
probability $p\in\RR_{[0,1]}$, and to $\termtwo$ with probability $1-p$.  The
language and its operational semantics will be defined more extensively in
Section~\ref{sect:proba-language}.  Suppose for the moment that we
type the choice operator in a naïve way:
$$
 \AxiomC{$\contextsizedone \ \proves\ \termone \typsep \typone$}
 \AxiomC{$\contextsizedone \ \proves\ \termtwo \typsep \typone$}
 \LeftLabel{Choice $\quad$}
 \BinaryInfC{$\contextsizedone\  \proves\ \termone \choice_p \termtwo \typsep \typone$}
 \DisplayProof
$$
On the one hand, the original system of sized types features
subtyping, which allows some flexibility to ``unify'' the types of
$\termone$ and $\termtwo$ to $\typone$. On the other hand,
it is easy to realise
that \emph{all} probabilistic branches would have to be terminating,
without any hope of capturing interesting AST programs: nothing has
been done to capture the \emph{quantitative} nature of probabilistic
termination. An instance of a term which is not strongly normalizing
but which is almost-surely terminating --- meaning that it normalizes with
probability 1 --- is
\begin{equation}
 \label{eq:termbiased}
 \resizebox{0.9\textwidth}{!}{
 $\termbiased \ =\ \left(\letrec{\funcone}{\abstr{\varone}{\casenat{\varone}{\abstr{\vartwo}{\funcone(\vartwo) \choice_{\frac{2}{3}} \left( 
\funcone(\natsucc\,\natsucc\,\vartwo))\right)}}{\natzero}}}\right)\ \intterm{n}$}
\end{equation}
simulating a \emph{biased random walk} which, on $\varone = m+1$, goes
to $m$ with probability $\frac{2}{3}$ and to $m+2$ with probability
$\frac{1}{3}$.  The naïve generalization of the sized type system only
allows us to type the body of the recursive definition as follows:
\begin{equation}
\label{eq:naive-type1}
\funcone \typsep \ \Nat^{\sizesucc{\sizesucc{\sizevarone}}} \rightarrow \Nat^{\sizeinf} \ \proves\ 
 \abstr{\vartwo}{\funcone(\vartwo) \choice_{\frac{2}{3}} \left( 
\funcone(\natsucc\,\natsucc\,\vartwo))\right)} \ \typsep\ \Nat^{\sizesucc{\sizevarone}} \rightarrow \Nat^{\sizeinf}
\end{equation}
and thus does not allow us to deduce any relevant information on the
\emph{quantitative} termination of this term: nothing tells us
that the recursive call $\funcone(\natsucc\,\natsucc\,\vartwo)$
is performed with a relatively low probability.
\paragraph*{A Monadic Type System.}
Along the evaluation of $\termbiased$, there is \emph{indeed} a
quantity which decreases during each recursive call to the function
$\funcone$: the \emph{average} size of the input on which the call is
performed. Indeed, on an input of size $\sizesucc{\sizevarone}$,
$\funcone$ calls itself on an input of smaller size
$\sizevarone$ with probability $\frac{2}{3}$, and on an input of
greater size $\sizesucc{\sizesucc{\sizevarone}}$ with probability only
$\frac{1}{3}$. To capture such a relevant \emph{quantitative}
information on the recursive calls of $\funcone$, and with the aim to
capture almost sure termination, we introduce a \emph{monadic} type
system, in which \emph{distributions of types} can be used to type in a
finer way the functions to be used recursively.  Contexts
$\contextsizedone \contextsep \contextdistrone$ will be generated by a
context $\contextsizedone$ attributing sized types to any number of
variables, while $\contextdistrone$ will attribute a
\emph{distribution} of sized types to at most \emph{one} variable ---
typically the one we want to use to recursively define a function.  In
such a context, terms will be typed by a distribution type, formed by
combining the Dirac distributions of types introduced in the Axiom
rules using the following rule for probabilistic choice:
$$
   \AxiomC{$\contextsizedone \contextsep \contextdistrone\ \proves\ \termone \typsep \distrtypone$}
 \AxiomC{$\contextsizedone \contextsep \contextdistrtwo\ \proves\ \termtwo \typsep \distrtyptwo$}
 \AxiomC{$\underlying{\distrtypone}\,=\,\underlying{\distrtyptwo}$}
 \LeftLabel{Choice $\quad$}
 \TrinaryInfC{$\contextsizedone \contextsep \contextdistrone \choice_p \contextdistrtwo\ 
 \proves\ \termone \choice_p \termtwo \typsep \distrtypone \choice_p \distrtyptwo$}
 \DisplayProof
 $$
The guard condition
$\underlying{\distrtypone}\,=\,\underlying{\distrtyptwo}$ ensures that
$\distrtypone$ and $\distrtyptwo$ are distributions of types
decorating of the \emph{same} simple type.  Without this condition,
there is no hope to aim for a decidable type inference algorithm.
\paragraph*{The Fixpoint Rule.}
Using these monadic types, instead of the insufficiently informative typing (\ref{eq:naive-type1}),
we can derive the sequent
  \begin{equation}
  \label{eq:monadic-typing-ex1}
  \resizebox{0.9\textwidth}{!}{
 $\funcone \typsep \ \left\{
 \left(\Nat^{\sizevarone} \rightarrow \Nat^{\sizeinf}\right)^{\frac{2}{3}},\ 
 \left(\Nat^{\sizesucc{\sizesucc{\sizevarone}}} \rightarrow \Nat^{\sizeinf}\right)^{\frac{1}{3}}\right\}
\ \proves\
 \abstr{\vartwo}{\funcone(\vartwo) \choice_{\frac{2}{3}} \left( 
\funcone(\natsucc\,\natsucc\,\vartwo))\right)} \ \typsep\ \Nat^{\sizesucc{\sizevarone}} \rightarrow \Nat^{\sizeinf}$}
  \end{equation}
in which the type of $\funcone$ contains finer information on the sizes of arguments over which it
is called recursively, and with which probability. This information enables us to perform a first
switch from a qualitative to a quantitative notion of termination: we will adapt the hypothesis
\begin{equation}
 \label{eq:hyp-fix-original-sized-types}
 \contextone,\funcone \typsep \Nat^\sizevarone \typarrow \typone \proves \termone \typsep \Nat^{\sizesucc{\sizevarone}} \typarrow 
\subst{\typone}{\sizevarone}{\sizesucc{\sizevarone}}
\end{equation}
of the original fix rule (\ref{eq:original-fix-rule}) of sized types, expressing that $\funcone$ is called on an argument of size one
less than the one on which $\termone$ is called, to a condition meaning that there is probability $1$
to call $\funcone$ on arguments of a lesser size \emph{after enough iterations of recursive calls}.
We therefore define a random walk associated to the distribution type $\distrtypone$ of $\funcone$,
the \emph{sized walk} associated to $\distrtypone$, and which is as follows for the typing (\ref{eq:monadic-typing-ex1}):
\begin{varitemize}
 \item the random walk starts on $1$, corresponding to the size $\sizesucc{\sizevarone}$,
 \item on an integer $n+1$, the random walk jumps to $n$ with probability $\frac{2}{3}$ and to $n+2$ with probability $\frac{1}{3}$,
 \item $0$ is stationary: on it, the random walk loops.
\end{varitemize}
This random walk -- as all sized walks will be -- is an instance of 
\emph{one-counter Markov decision problem}~\cite{bradzdil-et-al:one-counter-markov-decision-processes},
so that it is decidable in polynomial time whether the walk reaches $0$ with probability $1$.
We will therefore replace the hypothesis
(\ref{eq:hyp-fix-original-sized-types}) of the $\letrecname$ rule by the quantitative counterpart we just sketched,
obtaining
$$
\AxiomC{$\distrelts{\left(\Nat^{\sizeone_{\indextwo}} \typarrow \subst{\distrtyptwo}{\sizevarone}{\sizeone_{\indextwo}}
\right)^{p_{\indextwo}} \sep \indextwo \in \indexsettwo} \text{induces an AST sized walk}$}
\noLine
\UnaryInfC{$\contextsizedone \contextsep
\funcone \typsep \distrelts{\left(\Nat^{\sizeone_{\indextwo}} \typarrow \subst{\distrtyptwo}{\sizevarone}{\sizeone_{\indextwo}}
\right)^{p_{\indextwo}} \sep \indextwo \in \indexsettwo} \proves \valone \typsep 
\Nat^{ \sizesucc{\sizevarone}} \typarrow \subst{\distrtyptwo}{\sizevarone}{\sizesucc{\sizevarone}}$}
\LeftLabel{$\letrecname$ $\quad$}
\UnaryInfC{$\contextsizedone,\,\contextsizedtwo \contextsep \contextdistrone \proves \letrec{\funcone}{\valone} \typsep  
\Nat^{\sizetwo} \typarrow \subst{\distrtyptwo}{\sizevarone}{\sizetwo}$}
 \DisplayProof
 $$
where we omit two additional technical conditions to be found in
Section~\ref{sect:monadic-types} and which justify the weakening 
on contexts incorporated to this rule. The resulting type system allows
to type a varieties of examples, among which the following program
computing the geometric distribution over the natural numbers:
\begin{equation}
 \label{eq:termnat}
 \termnat \ =\ \left(\letrec{\funcone}{\abstr{\varone}{\varone \choice_{\frac{1}{2}} \natsucc\ \left(\funcone\ \varone\right)}}\right)\ \natzero
\end{equation}
and for which the decreasing quantity is the size of the set of
probabilistic branches of the term making recursive calls to
$\funcone$.  Another example is the unbiased random walk
\begin{equation}
 \label{eq:termunbiased}
 \resizebox{0.9\textwidth}{!}{$
 \termunbiased \ =\ \left(\letrec{\funcone}{\abstr{\varone}{\casenat{\varone}{\abstr{\vartwo}{\funcone(\vartwo) \choice_{\frac{1}{2}} \left( 
\funcone(\natsucc\,\natsucc\,\vartwo))\right)}}{\natzero}}}\right)\ \intterm{n}$}
\end{equation}
\begin{wrapfigure}[18]{R}{.5\textwidth}
\vspace{-0.4cm}
\fbox{
 \begin{tikzpicture}[baseline=(current  bounding  box.center)]
 \tikzset{level distance=25pt}
 \Tree [.$[1]$ $[0]$ [.$[2\ 2]$ [.$[2\ 1]$ [.$[2]$ [.$[1]$ $[0]$ [.$[2\ 2]$ $\vdots$ ] ] [.$[3\ 3]$ $\vdots$ ] ] 
 [.$[2\ 2\ 2]$ $\vdots$ ] ] [.$[2\ 3\ 3]$ $\vdots$ ] ] ]
 \end{tikzpicture}}
 \caption{A Tree of Recursive Calls.}
 \label{fig:tree-calls-termnonaff}
\end{wrapfigure}
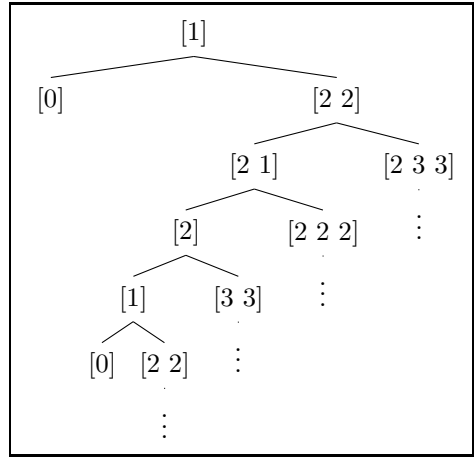
for which there is no clear notion of decreasing measure during
recursive calls, but which yet terminates almost surely, as
witnessed by the sized walk associated to an appropriate derivation in the sized
type system.  We therefore
claim that the use of this external guard condition on associated sized
walks, allowing us to give a general condition of termination, is satisfying
as it both captures an interesting class of examples, and is computable
in polynomial time.

In Section~\ref{sect:reducibility}, we prove that this shift from a
qualitative to a quantitative hypothesis in the type system results in
a shift from the soundness for strong normalization of the original
sized type system to a soundness for its quantitative counterpart:
\emph{almost-sure termination}.
\paragraph{Why Affinity?}
To ensure the soundness of the $\letrecname$ rule, we need one more
structural restriction on the type system.  For the sized walk
argument to be adequate, we must ensure that the recursive calls of
$\funcone$ are indeed precisely modelled by the sized walk, and this is
not the case when considering for instance the following term:
\begin{equation}
 \label{eq:termnonaff}
\resizebox{0.9\textwidth}{!}{$
\termnonaff\ \ =\ \ \left(\letrec{\funcone}{\abstr{\varone}{\casenat{\varone}{\abstr{\vartwo}{\funcone(\vartwo) \choice_{\frac{2}{3}} \left( 
\funcone(\natsucc\,\natsucc\,\vartwo)\,;\,\funcone(\natsucc\,\natsucc\,\vartwo)\right)}}{\natzero}}}\right)\ \intterm{n}$}
\end{equation}
where the sequential composition $;$ is defined in this call-by-value
calculus as
\longv{$$}\shortv{$}
\termone \,;\,\termtwo \ \ =\ \ \left(\abstr{\varone}{\abstr{\vartwo}{\natzero}}\right)\ \termone\ \termtwo
\shortv{$.}\longv{$$}
Note that $\termnonaff$ calls recursively $\funcone$ \emph{twice} in
the right branch of its probabilistic choice, and is not therefore
modelled appropriately by the sized walk associated to its type. In
fact, we would need a generalized notion of random walk to model the
recursive calls of this process; it would be a random walk on
\emph{stacks} of integers. In the case where $n=1$, the
recursive calls to $f$ can indeed be represented by a tree of stacks
as depicted
in Figure~\ref{fig:tree-calls-termnonaff}, where leftmost edges have
probability $\frac{2}{3}$ and rightmost ones $\frac{1}{3}$.  The root
indicates that the first call on $\funcone$ was on the integer
$1$. From it, there is either a call of $\funcone$ on $0$ which
terminates, or \emph{two} calls on $2$ which are put into a stack of
calls, and so on. We could prove that, without the \emph{affine}
restriction we are about to formulate, the term $\termnonaff$ is
typable with monadic sized types and the fixpoint rule we just
designed. However, this term is not almost-surely terminating. Notice,
indeed, that the \emph{sum} of the integers appearing in a stack
labelling a node of the tree in Figure~\ref{fig:tree-calls-termnonaff}
decreases by $1$ when the left edge of probability $\frac{2}{3}$ is
taken, and increases by \emph{at least} $3$ when the right edge of
probability $\frac{1}{3}$ is taken.  It follows that the expected
increase of the sum of the elements of the stack during one step is at
least $-1 \times \frac{2}{3} + 3 \times \frac{1}{3}\,=\,\frac{1}{3} >
0$.
This implies that the probability that $\funcone$ is called on an
input of size $0$ after enough iterations is strictly less than $1$,
so that the term $\termnonaff$ cannot be almost surely terminating.

Such general random processes have stacks as states and are rather
complex to analyse. To the best of the authors' knowledge, they do not seem
to have been considered in the literature.  We also believe that the
complexity of determining whether $0$ can be reached almost surely in
such a process, if decidable, would be very high. This leads us to
the design of an \emph{affine} type system, in which the management of
contexts ensures that a given probabilistic branch of a term may only use at
most once a given higher-order symbol. We do not however formulate
restrictions on variables of simple type $\Nat$, as affinity is only
used on the $\letrecname$ rule and thus on higher-order symbols. 
Remark that this is in the spirit of certain systems from implicit computational
complexity~\cite{hofmann:mixed-modal-linear-lambda-calc,dal-lago:geometry-linear-ho-recursion}.

\section{A Simple Probabilistic Functional Programming Language}\label{sect:proba-language}

We consider the language $\languageproba$, which is an extension of
the $\lambda$-calculus with recursion, constructors for the natural
numbers, and a choice operator. In this section, we introduce this
language and its operational semantics, and use them to define the
crucial notion of \emph{almost-sure termination}.

\paragraph{Terms and Values.}  Given a set of variables $\vars$, terms
and values of the language $\languageproba$ are defined by mutual
induction as follows:
$$
\begin{array}{lrcl}
\text{Terms:} \qquad \quad & \termone, \,\termtwo,\,\ldots & \ \ \bnf\ \ & \valone \sep \valone \ \valone 
\sep \letin{\varone}{\termone}{\termtwo}
\sep \termone \choice_p \termtwo\\
& & &\!\!\!\!\!\!\!\! \sep \caseof{\valone}{\natsucc \rightarrow \valtwo \smallsep \natzero \rightarrow \valthree}\\[0.4cm]
\text{Values:} \qquad \quad &\valone,\,\valtwo,\,\valthree,\,\ldots & \ \ ::=\ \ & \varone \sep \natzero \sep \natsucc\ \valone
\sep \abstr{\varone}{\termone} \sep \letrec{\funcone}{\valone}\\
\end{array}
$$
where $\varone,\,\funcone \in \vars,\,p \in ]0,1[$. When
    $p\,=\,\frac{1}{2}$, we often write $\choice$ as a shorthand for
    $\choice_{\frac{1}{2}}$.  The set of terms is denoted
    $\termsproba$ and the set of values is denoted $\values$. Terms of
    the calculus are assumed to be in A-normal form~\cite{sabry-felleisen:reasoning-about-programs-in-cps}.
    This allows to
    formulate crucial definitions in a simpler way, concentrating in the
    Let construct the study of the probabilistic behaviour of terms.  We
    claim that all traditional constructions can be encoded in this
    formalism.  For instance, the usual application
    $\termone\ \termtwo$ of two terms can be harmlessly recovered via
    the encoding
    $\letin{\varone}{\termone}{\left(\letin{\vartwo}{\termtwo}{\varone\ \vartwo}\right)}$.
    In the sequel, we write $\dataconstrone \ \vec{\valone}$ when a
    value may be either $\natzero$ or of the shape $\natsucc
    \ \valone$.
\paragraph{Term Distributions.}
The introduction of a probabilistic choice operator in the syntax
leads to a \emph{probabilistic} reduction relation.  It is therefore
meaningful to consider the (operational) semantics of a term as a
\emph{distribution} of values modelling the outcome of \emph{all} the
finite probabilistic reduction paths of the term.  For instance, the
term $\termnat$ defined in (\ref{eq:termnat}) evaluates to the term
distribution assigning probability $\frac{1}{2^{n+1}}$ to the value
$\intterm{n}$.  Let us define this notion more formally:

\begin{definition}[Distribution]
A \emph{distribution} on $\setone$ is a function
$\distrone\,:\,\setone \rightarrow [0,1]$ satisfying the constraint
$\distrsum{\distrone}\ =\ \sum_{\varone \in \setone}\ \distrone (\varone)\ \leq\ 1$,
where $\distrsum{\distrone}$ is called the \emph{sum} of the distribution $\distrone$.
We say that $\distrone$ is \emph{proper} precisely when $\distrsum{\distrone}\,=\,1$.
We denote by $\distribs$ the set of all distributions, would they be proper or not.
We define the \emph{support} $\supp{\distrone}$ of a distribution $\distrone$ as:
$\supp{\distrone} =  \left\{ \varone \in \setone \sep \distrone(\varone) > 0 \right\}$.
When $\supp{\distrone}$ consists only of closed terms,
we say that $\distrone$ is a \emph{closed} distribution.
When it is finite, we say that $\distrone$ is a \emph{finite} distribution.
We call \emph{Dirac} a proper distribution $\distrone$ such that $\supp{\distrone}$ is a singleton.
We denote by $0$ the null distribution, mapping every term to the probability $0$.
\end{definition}

When $\setone\,=\,\termsproba$, we say that $\distrone$ is a
\emph{term distribution}.  In the sequel, we will use a more
practical notion of \emph{representation} of distributions, which
enumerates the terms with their probabilities as a family of
assignments. For technical reasons, notably related to the subject
reduction property, we will also need \emph{pseudo-representations},
which are essentially multiset-like decompositions of the
representation of a distribution.

\begin{definition}[Representations and Pseudo-Representations]  
Let $\distrone \in \distribs$ be of support
$\left\{\varone_{\indexone} \sep \indexone \in \indexsetone\right\}$,
where $\varone_{\indexone}\,=\,\varone_{\indextwo}$ implies
$\indexone=\indextwo$ for every $\indexone,\indextwo\in\indexsetone$.
The \emph{representation} of $\distrone$ is the set
$\distrone\,=\,\distrelts{\varone_{\indexone}^{\distrone(\varone_{\indexone})}
  \sep \indexone \in \indexsetone}$ where
$\varone_{\indexone}^{\distrone(\varone_{\indexone})}$ is just an intuitive
way to write the pair $(\varone_{\indexone},\distrone(\varone_{\indexone}))$.    
A \emph{pseudo-representation} of $\distrone$ is any multiset
$\pseudorep{\vartwo_{\indextwo}^{p_{\indextwo}} \sep \indextwo \in
  \indexsettwo}$ such that
$$
\forall \indextwo \in \indexsettwo,\ \ \vartwo_{\indextwo} \in \supp{\distrone}
\qquad\qquad
\forall \indexone \in \indexsetone, \ \ 
 \distrone(\varone_{\indexone})=\sum_{\vartwo_\indextwo = \varone_\indexone}\ p_{\indextwo}.
$$
By abuse of notation, we will simply write
$\distrone\,=\,\pseudorep{\vartwo_{\indextwo}^{p_{\indextwo}} \sep
  \indextwo \in \indexsettwo}$ to mean that $\distrone$ admits
$\pseudorep{\vartwo_{\indextwo}^{p_{\indextwo}} \sep \indextwo \in
  \indexsettwo}$ as pseudo-representation.  
Any distribution has a canonical pseudo-representation obtained by
simply replacing the set-theoretic notation with the
multiset-theoretic one.
\end{definition}
\longv{
\begin{definition}[$\omega$-CPO of distributions]
We define the pointwise order on distributions over $\setone$ as
$$
\distrone \distrleq \distrtwo \qquad \text{if and only if} \qquad \forall \varone \in \setone,\quad \distrone(\varone) \leq \distrtwo(\varone).
$$
This turns $(\distribs,\,\distrleq)$ into a partial order. This partial order is an $\omega$-CPO,
but not a lattice as the join of two distributions
does not necessarily exist. The bottom element of this $\omega$-CPO is the null distribution $0$.
\end{definition}

\begin{definition}[Operations on distributions]
Given a distribution $\distrone$ and a real number $\alpha \leq 1$, we
define the distribution $\alpha \cdot \distrone$ as $\varone \mapsto
\alpha \cdot \distrone (\varone)$. We similarly define the sum
$\distrone + \distrtwo$ of two distributions over a same set $\setone$
as the function $\varone \mapsto \distrone (\varone) +
\distrtwo(\varone)$.  Note that this is a total operation on functions
$\setone \rightarrow \RR$, but a partial operation on distributions:
it is defined if and only if $\distrsum{\distrone} +
\distrsum{\distrtwo} \leq 1$.  When $\distrone \distrleq \distrtwo$,
we define the partial operation of difference of distributions
$\distrtwo - \distrone$ as the function $\valone \mapsto \distrtwo
(\valone) - \distrone(\valone)$.  We naturally extend these operations
to representations and pseudo-representations of distributions.\\
\end{definition}
}
\shortv{
Distributions support operations like affine combinations and sums --
the latter being only a partial operation. We extend these operations
to (pseudo)-representations, in a natural way. Distributions, endowed
with the pointwise partial-order $\distrleq$, form an $\omega$-CPO, but not
a lattice, since the join of two distributions is not guaranteed
to exist.
}
\begin{definition}[Value Decomposition of a Term Distribution]
 Let $\distrone$ be a term distribution. We write its
 \emph{value decomposition} as $\distrone\ \valdec \ \distroneval +
 \distroneterm$, where $\distroneval$ is the maximal subdistribution
 of $\distrone$ whose support consists of values, and
 $\distroneterm\,=\,\distrone - \distroneval$ is the subdistribution
 of ``non-values'' contained in $\distrone$.
\end{definition}
\paragraph{Operational Semantics.}
The semantics of a term will be the value distribution to which it
reduces via the probabilistic reduction relation, iterated up to the
limit.  \shortv{As a first
  step, we define the call-by-value reduction relation $\rcbv
  \subseteq \distribs \times \distribs$ on
  Figure~\ref{fig:def-weak-cbv-distribs}.}  \longv{As a first step, we
  define the call-by-value reduction relation $\rcbv \subseteq
  \distribs \times \RR^{\termsproba}$ on
  Figure~\ref{fig:def-weak-cbv-distribs}.  The relation $\rcbv$ is in
  fact a relation on distributions:

\begin{lemma}
  Let $\distrone$ be a distribution such that $\distrone \rcbv
  \distrtwo$. Then $\distrtwo$ is a distribution.
\end{lemma}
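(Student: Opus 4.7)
The plan is to proceed by inspection of the rules defining $\rcbv$ displayed in Figure~\ref{fig:def-weak-cbv-distribs}, checking that each of them produces an output in $\RR^{\termsproba}$ whose values are nonnegative and whose sum is at most $1$. Nonnegativity will be immediate, since every rule combines the original probabilities $\distrone(\termone)$ with nonnegative scalars (in particular the probability $p\in\,]0,1[$ attached to probabilistic choice, or the scalar $1$ in the case of a deterministic reduction). The substantive part of the argument is thus the bound on the total mass.

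For this, the key observation is \emph{local mass preservation}: for every term $\termone \in \supp{\distrone}$, the rules redistribute the weight $\distrone(\termone)$ among the possible one-step reducts of $\termone$, in such a way that the sum of the new weights is exactly $\distrone(\termone)$. Indeed, since terms are in A-normal form, a non-value term $\termone$ is either a redex whose reduction is deterministic -- a $\beta$-redex, a $\letin{\varone}{\valone}{\termtwo}$ with $\valone$ a value, a $\casenme$ on a numeral, or the unfolding of a $\letrecname$ -- in which case $\distrone(\termone)$ is entirely transferred to its unique contractum; or it is of the form $\termtwo \choice_p \termthree$, in which case $\distrone(\termone)$ is split as $p \cdot \distrone(\termone) + (1-p) \cdot \distrone(\termone) = \distrone(\termone)$ between $\termtwo$ and $\termthree$; or it is already a value, in which case the rule presumably keeps it unchanged with the same weight. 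In each case, the contribution to $\distrsum{\distrtwo}$ coming from $\termone$ is exactly $\distrone(\termone)$.

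Summing over all $\termone \in \supp{\distrone}$, we obtain $\distrsum{\distrtwo} = \distrsum{\distrone} \leq 1$, where the last inequality holds because $\distrone$ is itself a distribution. Combined with nonnegativity, this yields the required conclusion that $\distrtwo \in \distribs$.

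The only technical subtlety I anticipate is bookkeeping: if several distinct terms in $\supp{\distrone}$ happen to reduce to a common term $\termone'$ (for instance two different probabilistic choices with the same left branch), then the weights coming from these different sources must be summed when computing $\distrtwo(\termone')$, and one has to argue that this summation is well-defined rather than overflowing past $1$. But this follows immediately from the local mass preservation together with $\sum_{\termone}\distrone(\termone) \leq 1$, so no real difficulty arises. The statement will therefore be established by a straightforward rule-by-rule verification, with the A-normal form discipline ensuring that the case analysis on the shape of a non-value term in $\supp{\distrone}$ is exhaustive.
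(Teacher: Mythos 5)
Your overall strategy — nonnegativity plus local mass preservation, summed over the support — is exactly the intended argument; the paper itself omits the proof of this lemma as routine, and the only non-trivial content is the bookkeeping you describe. However, your case analysis on the shape of a non-value term is \emph{not} exhaustive, and the missing case is precisely the one where a purely local, one-step inspection does not suffice. A non-value term in A-normal form can be $\letin{\varone}{\termone}{\termtwo}$ where $\termone$ is itself \emph{not} a value; this is handled by the congruence rule of Figure~\ref{fig:def-weak-cbv-distribs}, whose premise is $\termone \rcbv \distrelts{\termthree_{\indexone}^{p_{\indexone}} \sep \indexone \in \indexsetone}$ and whose conclusion redistributes the mass over the terms $\letin{\varone}{\termthree_{\indexone}}{\termtwo}$. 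This case is neither a ``deterministic redex with a unique contractum'' nor a top-level $\choice_p$, so your trichotomy misses it. Mass preservation here is not an inspection but an \emph{inductive} step: one needs to know that $\sum_{\indexone} p_{\indexone} = 1$ for the premise (which holds because the premise is itself a derivable reduction of a Dirac distribution), and one needs the observation that $\termthree \mapsto \letin{\varone}{\termthree}{\termtwo}$ is injective so that no mass is lost or double-counted in passing to the representation of the conclusion. The clean way to organise the proof is therefore by induction on the derivation of $\distrone \rcbv \distrtwo$ rather than by case analysis on term shapes: the base rules are Dirac-to-mass-one, the let-congruence rule preserves mass by the induction hypothesis and injectivity, and the final lifting rule yields $\distrsum{\distrtwo} = \sum_{\indextwo} p_{\indextwo} \cdot \distrsum{\distrtwo_{\indextwo}} + \distrsum{\distroneval} = \distrsum{\distrone} \leq 1$. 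With that repair, your argument (including the remark about overlapping supports, which is indeed harmless since the weighted sum of distributions is a pointwise sum of functions) goes through.
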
}
Note that we write Dirac distributions simply as terms on the left
side of $\rcbv$, to improve readability.  As usual, we denote by
$\rcbv^n$ the $n$-th iterate of the relation $\rcbv$, with
$\rcbv^0$ being the identity relation.  We then define the relation $\redval^n$ as
follows. Let $\distrone \rcbv^n \distrtwo \valdec \distrtwo_{|V} +
\distrtwo_{|T}$.  Then $\distrone \redval^n \distrtwo_{|V}$. Note
that, for every $n \in \NN$ and $\distrone \in \distribs$, there is a
unique distribution $\distrtwo$ such that $\distrone \rcbv^n
\distrtwo$. Moreover, $\distrtwo_{|V}$ is the only distribution such
that $\distrone \redval^n \distrtwo_{|V}$.

\begin{figure}[t]
\centering
 \fbox{
 \begin{minipage}{0.95\textwidth}
 \begin{center}
 \vspace{0.3cm}
\begin{tabular}{ccc}
 \AxiomC{}
\UnaryInfC{$\letin{\varone}{\valone}{\termone} \ \ \rcbv\ \ \distrelts{\left(\subst{\termone}{\varone}{\valone}\right)^1}$}
 \DisplayProof
 &
 $\qquad$
 &
 \AxiomC{}
\UnaryInfC{$\left(\abstr{\varone}{\termone}\right)\ \valone \ \ \rcbv\ \ \distrelts{\left(\subst{\termone}{\varone}{\valone}\right)^1}$}
 \DisplayProof
\\[0.5cm]
 
\end{tabular}

 \AxiomC{}
 \UnaryInfC{$\termone\ \choice_p\ \termtwo \ \ \rcbv\ \ \distrelts{\termone^p,\,\termtwo^{1-p}}$}
 \DisplayProof

 \vspace{0.5cm}

  \AxiomC{$\termone \ \ \rcbv\ \ \distrelts{\termthree_{\indexone}^{p_{\indexone}} \sep \indexone \in \indexsetone}$}
 \UnaryInfC{$\letin{\varone}{\termone}{\termtwo} \ \ \rcbv\ \ 
 \distrelts{\left(\letin{\varone}{\termthree_{\indexone}}{\termtwo}\right)^{p_{\indexone}} \sep \indexone \in \indexsetone}$}
 \DisplayProof

 \vspace{0.5cm}


%

\AxiomC{}
\UnaryInfC{$ \caseof{\natsucc\ \valone}{\natsucc \rightarrow \valtwo \smallsep
 \natzero \rightarrow \valthree}  \ \ \rcbv\ \ \distrelts{\left(\valtwo\ \valone\right)^1}$}
\DisplayProof

\vspace{0.5cm}

\AxiomC{}
\UnaryInfC{$ \caseof{\natzero}{\natsucc \rightarrow \valtwo \smallsep
 \natzero \rightarrow \valthree}  \ \ \rcbv\ \ \distrelts{\left(\valthree\right)^1}$}
\DisplayProof

\vspace{0.5cm}

\AxiomC{}
\UnaryInfC{$ \left(\letrec{\funcone}{\valone}\right)\ \left(\dataconstrone\ \termvector{\valtwo}\right)\ \ \rcbv\ \ 
 \distrelts{\left(\subst{\valone}{\funcone}{\left(\letrec{\funcone}{\valone}\right)}\ \left(\dataconstrone\ \termvector{\valtwo}\right)\right)^1}$}
\DisplayProof

\vspace{0.5cm}

\AxiomC{$\distrone\ \valdec\ \distrelts{\termone_{\indextwo}^{p_{\indextwo}} \sep \indextwo \in \indexsettwo} + \distroneval$}
\AxiomC{$\forall \indextwo \in \indexsettwo,\ \ \termone_{\indextwo} \ \ \rcbv\ \ \distrtwo_{\indextwo}$}
\BinaryInfC{$\distrone \ \ \rcbv\ \ \left(\sum_{\indextwo \in \indexsettwo} p_j \cdot \distrtwo_j\right) + \distroneval$}
\DisplayProof

\vspace{0.2cm}

 \end{center}
 \end{minipage}
}

\vspace{0.35cm}

 \caption{Call-by-value reduction relation $\rcbv$ on distributions.}
 \label{fig:def-weak-cbv-distribs} 
\end{figure}

\longv{
\begin{lemma}
\label{lemma/semantics-is-computed-monotonically3}
 Let $n,m \in \NN$ with $n < m$. Let $\distrone_n$ (resp $\distrone_m$) be the distribution such that $\termone \rcbv^n \distrone_n$
 (resp $\termone \rcbv^m \distrone_m$).
 Then $\distrone_n \distrleq \distrone_m$.
\end{lemma}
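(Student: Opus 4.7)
My plan is to reduce the statement to a one-step monotonicity property and then induct on the number of reduction steps. The note just above the lemma tells us that $\rcbv$ is functional (the reduct of a distribution is unique), so $\distrone_n \rcbv \distrone_{n+1} \rcbv \cdots \rcbv \distrone_m$; by induction on $m - n$ it thus suffices to show that $\distrone \rcbv \distrtwo$ implies $\distrone \distrleq \distrtwo$ in the pointwise order.

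For that base case, I would unfold the distribution-lifting rule at the bottom of Figure~\ref{fig:def-weak-cbv-distribs}. Writing the value decomposition $\distrone \valdec \distrelts{\termone_j^{p_j} \sep j \in \indexsettwo} + \distroneval$, the rule produces $\distrtwo = \sum_{j \in \indexsettwo} p_j \cdot \distrtwo_j + \distroneval$ with $\termone_j \rcbv \distrtwo_j$ for each $j$. The crucial observation is that the value component $\distroneval$ of $\distrone$ is copied verbatim into $\distrtwo$: no value already in the support ever loses probability mass along a reduction step. Furthermore, the additional mass $\sum_j p_j \cdot (\distrtwo_j)_{|V}$ coming from the value parts of the reducts only \emph{adds} to the value support of $\distrtwo$, so on value terms we obtain $\distroneval \distrleq (\distrtwo)_{|V}$. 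Composed with the $\redval^n$ relation introduced in the paragraph just above the lemma, this is exactly the monotonicity asserted.

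The main technical subtlety lies in the arithmetic of (pseudo-)representations: one has to check carefully that the value/non-value decomposition commutes with the affine sums and scalar multiplications appearing in the lifting rule, and that summing subdistributions preserves $\distrleq$. The pseudo-representation framework introduced in the previous definition makes this bookkeeping essentially routine, but it is where almost all the work of the proof sits. Once the one-step case is in place, iterating the bound across the $m - n$ steps gives the claim, yielding an $\omega$-chain of value distributions in the CPO $(\distribs,\,\distrleq)$ whose supremum serves as the operational semantics of $\termone$.
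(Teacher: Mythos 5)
Your reduction to a one-step property is the right instinct, but the one-step claim you announce --- that $\distrone \rcbv \distrtwo$ implies $\distrone \distrleq \distrtwo$ for the \emph{full} distributions --- is false, and the argument you then give does not establish it. In the lifting rule of Figure~\ref{fig:def-weak-cbv-distribs}, the non-value part $\distrelts{\termone_{\indextwo}^{p_{\indextwo}} \sep \indextwo \in \indexsettwo}$ of $\distrone$ is \emph{replaced} by $\sum_{\indextwo} p_{\indextwo}\cdot\distrtwo_{\indextwo}$: each non-value term loses its mass, which is redistributed to its one-step reducts. Concretely, for $\termone = \letin{\varone}{\natzero}{\varone}$ one has $\distrone_0 = \distrelts{\termone^1}$ and $\distrone_1 = \distrelts{\natzero^1}$, so $\distrone_0(\termone) = 1 > 0 = \distrone_1(\termone)$ and $\distrone_0 \not\distrleq \distrone_1$. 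What your unfolding of the rule actually shows is only the inequality on the value components, namely $\distroneval \distrleq \distrtwo_{|V}$ --- your own wording pivots to value terms at exactly that point --- and composing that over the $m-n$ steps proves Lemma~\ref{lemma/semantics-is-computed-monotonically2} (the statement for $\redval^n$), not the statement under review, which is phrased for $\rcbv^n$ and the unrestricted pointwise order.

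Note that the statement as literally printed is itself only salvageable under the value-restricted reading: as written (full distributions under $\rcbv^n$) it fails on the counterexample above. The version the development actually needs downstream --- to define $\semantics{\termone}$ as a supremum of an $\omega$-chain in $(\distribs,\distrleq)$ --- is precisely the $\redval^n$ one, and your value-component argument is a correct proof of that. So either prove the $\redval$ lemma and say so explicitly, or restrict the conclusion to the value parts of $\distrone_n$ and $\distrone_m$; the announced step ``it suffices to show $\distrone \rcbv \distrtwo$ implies $\distrone \distrleq \distrtwo$'' cannot be repaired as stated.
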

}

\begin{lemma}
\label{lemma/semantics-is-computed-monotonically2}
 Let $n,m \in \NN$ with $n < m$. Let $\distrone_n$ (resp $\distrone_m$) be the distribution such that $\termone \redval^n \distrone_n$
 (resp $\termone \redval^m \distrone_m$).
 Then $\distrone_n \distrleq \distrone_m$.
\end{lemma}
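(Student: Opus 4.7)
The plan is to prove an auxiliary monotonicity statement at the level of single $\rcbv$-steps, restricted to the value parts of distributions, and then lift it to $\redval^n$ via the uniqueness of $\rcbv^n$-reducts noted just before the lemma. The key observation is that, although $\rcbv$ itself is not monotone for $\distrleq$ (mass on a non-value term may be redistributed to strictly different terms after one step), its restriction to the value subdistribution is: values never reduce, so once produced they persist.

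Concretely, I would first show that whenever $\distrone \rcbv \distrtwo$, one has $\distrone_{|V} \distrleq \distrtwo_{|V}$. By inspecting the last rule defining $\rcbv$ in Figure~\ref{fig:def-weak-cbv-distribs}, we can write $\distrone \valdec \distrelts{\termone_{\indextwo}^{p_{\indextwo}} \sep \indextwo \in \indexsettwo} + \distroneval$ where $\distroneval = \distrone_{|V}$ and each $\termone_{\indextwo}$ is a non-value with $\termone_{\indextwo} \rcbv \distrtwo_{\indextwo}$, so that $\distrtwo = \sum_{\indextwo \in \indexsettwo} p_{\indextwo} \cdot \distrtwo_{\indextwo} + \distroneval$. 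Since being a value is a property of a term that does not depend on the distribution it sits in, value decomposition commutes with affine sums of distributions, and hence
$$
\distrtwo_{|V}\ =\ \sum_{\indextwo \in \indexsettwo} p_{\indextwo} \cdot (\distrtwo_{\indextwo})_{|V}\ +\ \distroneval\ \geq\ \distroneval\ =\ \distrone_{|V},
$$
where the inequality is pointwise and follows because the first summand is a non-negative distribution.

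A straightforward induction on $k$ then extends this to $\rcbv^k$: if $\distrone \rcbv^k \distrtwo$, then $\distrone_{|V} \distrleq \distrtwo_{|V}$. For the lemma itself, let $\distrthree_n$ and $\distrthree_m$ be the unique distributions such that $\termone \rcbv^n \distrthree_n$ and $\termone \rcbv^m \distrthree_m$; by definition of $\redval$, we have $\distrone_n = (\distrthree_n)_{|V}$ and $\distrone_m = (\distrthree_m)_{|V}$. Decomposing $\rcbv^m$ as $\rcbv^n$ followed by $\rcbv^{m-n}$ and invoking the uniqueness of $\rcbv^n$-reducts stated just before the lemma, we obtain $\distrthree_n \rcbv^{m-n} \distrthree_m$, to which the auxiliary statement applies and yields $\distrone_n = (\distrthree_n)_{|V} \distrleq (\distrthree_m)_{|V} = \distrone_m$.

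The only delicate point is the bookkeeping around value decomposition and affine sums; but this simply reduces to the pointwise identity $(\alpha \distrone + \beta \distrtwo)_{|V} = \alpha \distrone_{|V} + \beta \distrtwo_{|V}$, which follows since the value/non-value partition of the support of any distribution is determined termwise.
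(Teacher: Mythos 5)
Your proof is correct, and it is exactly the argument the paper intends: the paper states this lemma without proof, treating it as immediate from the fact that the last rule of Figure~\ref{fig:def-weak-cbv-distribs} preserves the value part $\distroneval$ of a distribution while only redistributing the mass of non-values. Your one-step monotonicity of $(\cdot)_{|V}$ under $\rcbv$, followed by induction and the decomposition $\rcbv^m = \rcbv^{m-n}\circ\rcbv^n$ using uniqueness of reducts, is precisely the bookkeeping the authors leave implicit, so there is nothing to add.
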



\begin{definition}[Semantics of a Term, of a Distribution]
 The semantics of a distribution $\distrone$ is the distribution
 $\semantics{\distrone}\ \ =\ \ \supremumnat{\left\{\distrone_n\sep
   \distrone \redval^n \distrone_n\right\}}$. This supremum exists
 thanks to Lemma~\ref{lemma/semantics-is-computed-monotonically2},
 combined with the fact that $(\distribs,\,\distrleq)$
 is an $\omega$-CPO.
 We define the semantics of a term $\termone$ as $\semantics{\termone}\ =\ \semantics{\distrelts{\termone^1}}$.
\end{definition}
\longv{
\begin{corollary}
\label{corollary:semantics-is-computed-monotonically}
 Let $n \in \NN$ and $\distrone_n$ be such that $\termone \redval^n \distrone_n$.
 Then $\distrone_n \distrleq \semantics{\termone}$.
\end{corollary}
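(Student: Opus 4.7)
The plan is to derive this corollary directly from the definition of $\semantics{\termone}$ as a supremum in the $\omega$-CPO $(\distribs, \distrleq)$, using Lemma~\ref{lemma/semantics-is-computed-monotonically2} only to justify that the supremum is well-defined.

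First I would unfold definitions. By convention, Dirac distributions on a term are written as the term itself on the left of $\redval$, so the hypothesis $\termone \redval^n \distrone_n$ is the same as $\distrelts{\termone^1} \redval^n \distrone_n$. By the uniqueness of the distribution obtained in $n$ steps (explicitly asserted after the definition of $\redval^n$), this $\distrone_n$ is the unique element of the family $\left\{\distrtwo_m \sep \distrelts{\termone^1} \redval^m \distrtwo_m\right\}$ indexed by $m = n$.

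Next I would appeal to the definition $\semantics{\termone} = \semantics{\distrelts{\termone^1}} = \supremumnat{\left\{\distrtwo_m \sep \distrelts{\termone^1} \redval^m \distrtwo_m\right\}}$. Lemma~\ref{lemma/semantics-is-computed-monotonically2} tells us that this family is an increasing $\omega$-chain in the $\omega$-CPO $(\distribs, \distrleq)$, so the supremum exists. Since $\distrone_n$ is a member of the family of which $\semantics{\termone}$ is the supremum, the universal property of the supremum immediately yields $\distrone_n \distrleq \semantics{\termone}$.

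There is essentially no obstacle here: the corollary is a direct instance of the defining inequality between an element of a chain and its supremum. The only bookkeeping is the notational identification of $\termone$ with its Dirac distribution $\distrelts{\termone^1}$, which is a stipulated convention rather than a mathematical step.
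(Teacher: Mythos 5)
Your proof is correct and is exactly the argument the paper intends: the paper states this as an immediate corollary of the definition of $\semantics{\termone}$ as a supremum together with Lemma~\ref{lemma/semantics-is-computed-monotonically2}, and gives no further proof. Your observation that the monotonicity lemma is needed only to guarantee existence of the supremum in the $\omega$-CPO (suprema being guaranteed only for increasing chains), while the inequality itself follows from the supremum being an upper bound, is the right reading.
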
}
We now have all the ingredients required to define the central
concept of this paper, the one of almost-surely terminating
term:
\begin{definition}[Almost-Sure Termination]
  We say that a term $\termone$ is \emph{almost-surely terminating}
  precisely when $\distrsum{\semantics{\termone}}\,=\,1$.
\end{definition}

\longv{Before we terminate this section, let us formulate the following
lemma on the operational semantics of the $\mathsf{let}$ construction, which will be
used in the proof of typing soundness for monadic affine sized types:
\begin{lemma}
\label{lemma:operational-decomposition-let}
 Suppose that $\termone \redval^n \pseudorep{\valone^{p_\indexone} \sep \indexone \in \indexsetone}$
 and that, for every $\indexone \in \indexsetone$, $\subst{\termtwo}{\varone}{\valone_\indexone} \redval^m \distrtwo_\indexone$.
 Then $\letin{\varone}{\termone}{\termtwo} \redval^{n+m+1} \sum_{\indexone \in \indexsetone} p_\indexone\cdot\distrtwo_\indexone$.
\end{lemma}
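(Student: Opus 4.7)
The plan is to proceed by induction on $n$, using the context-closure rule for $\letin{\varone}{\cdot}{\termtwo}$ (the last rule of Figure~\ref{fig:def-weak-cbv-distribs}) to propagate the internal reduction of $\termone$ through the let-binder. Informally, a reduction of $\letin{\varone}{\termone}{\termtwo}$ decomposes into three phases: $n$ steps reducing $\termone$ within the surrounding context, one step firing the let-value rule as soon as a value $\valone_\indexone$ emerges, and then $m$ further steps reducing $\subst{\termtwo}{\varone}{\valone_\indexone}$ to yield $\distrtwo_\indexone$. The factor $p_\indexone$ comes from weighting the branchwise reductions and combining them via the parallel reduction rule.

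In the base case $n=0$, the pseudo-representation $\pseudorep{\valone_\indexone^{p_\indexone} \sep \indexone \in \indexsetone}$ must be extracted from the Dirac $\distrelts{\termone^1}$, so $\termone$ is itself a value $\valone$, every $\valone_\indexone$ coincides with $\valone$, and the $p_\indexone$ sum to~$1$. A single application of the let-value rule gives $\letin{\varone}{\valone}{\termtwo} \rcbv \distrelts{(\subst{\termtwo}{\varone}{\valone})^1}$; combining the hypotheses $\subst{\termtwo}{\varone}{\valone_\indexone} \redval^m \distrtwo_\indexone$ weighted by the $p_\indexone$ then yields $\sum_\indexone p_\indexone \distrtwo_\indexone$ in $m+1 = 0+m+1$ steps.

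For the inductive step, if $\termone$ is already a value the base-case argument produces the desired value distribution in $m+1$ steps, and Lemma~\ref{lemma/semantics-is-computed-monotonically2} keeps it valid at step $n+m+1$, since value distributions grow monotonically in the number of reduction steps. Otherwise, I take the unique reduction $\termone \rcbv \distrelts{\termthree_\indexthree^{q_\indexthree} \sep \indexthree \in \indexsetthree}$ and lift it via the context-closure rule to $\letin{\varone}{\termone}{\termtwo} \rcbv \distrelts{(\letin{\varone}{\termthree_\indexthree}{\termtwo})^{q_\indexthree} \sep \indexthree \in \indexsetthree}$. The pseudo-rep of the value part of $\termone$ at rank $n+1$ splits along the index $\indexthree$ into pseudo-reps of the value parts of each $\termthree_\indexthree$ at rank $n$; applying the induction hypothesis to each $\letin{\varone}{\termthree_\indexthree}{\termtwo}$ with the corresponding sub-pseudo-rep yields branchwise $(n+m+1)$-step reductions, and the parallel reduction rule of Figure~\ref{fig:def-weak-cbv-distribs} glues these into a single reduction of $\letin{\varone}{\termone}{\termtwo}$ at step $(n+1)+m+1$ producing $\sum_\indexone p_\indexone \distrtwo_\indexone$.

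The main obstacle is the combinatorial bookkeeping of values produced at different depths of $\termone$'s reduction. Because $\rcbv$ reduces all non-values in a distribution in parallel, paths along which $\termone$ has already reached a value fire the let rule at the next step while other paths keep reducing $\termone$ internally; the values listed in $\pseudorep{\valone_\indexone^{p_\indexone}}$ are thus in general not all produced simultaneously, and their let-firings occur at different absolute step counts. The pseudo-representation formalism, by permitting subsums and duplicated occurrences, is precisely what makes the decomposition along the first reduction step in the inductive case well-defined and lets the branchwise inductive hypotheses recombine into the expected weighted sum. Once this accounting is set up, the structural rules of Figure~\ref{fig:def-weak-cbv-distribs} close the induction mechanically.
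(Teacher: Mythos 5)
Your inductive strategy (decompose along the first $\rcbv$-step of $\termone$, lift it through the $\mathsf{let}$ via the context rule, recombine with the parallel rule) is the natural way to make the paper's one-line justification precise, and your base case and recombination bookkeeping are fine. The gap sits exactly at the point you flag and then dismiss: branches on which the bound term reaches a value \emph{strictly before} step $n$. In your sub-case ``$\termone$ is already a value'' you derive $\letin{\varone}{\termone}{\termtwo}\redval^{m+1}\distrtwo$ and then invoke Lemma~\ref{lemma/semantics-is-computed-monotonically2} to ``keep it valid at step $n+m+1$''. But $\redval^{k}$ relates a term to a \emph{unique} distribution, and monotonicity only tells you that the value part at step $n+m+1$ dominates $\distrtwo$; it does not yield the asserted identity. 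On such branches the $\mathsf{let}$ fires early and the body is reduced for strictly more than $m$ steps by the time the global clock reaches $n+m+1$, so extra value mass can appear. The same defect recurs inside your inductive step whenever some $\termthree_\indexthree$ is already a value.

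This is not only a defect of the proof: the equality in the statement is false. Take $\termone=\termtwo=\natzero\choice\left((\abstr{y}{y})\,\natzero\right)$, so that $n=2$ with $\termone\redval^{2}\distrelts{\natzero^{1}}$, and $m=1$ with $\distrtwo_{1}=\distrelts{\natzero^{1/2}}$; the lemma predicts $\letin{\varone}{\termone}{\termtwo}\redval^{4}\distrelts{\natzero^{1/2}}$, but computing the four $\rcbv$-steps explicitly gives the value part $\distrelts{\natzero^{3/4}}$, because the half of the mass on which $\termone$ is already a value fires the $\mathsf{let}$ at step $2$ and then runs the body for three steps instead of one. What your induction does establish, once the monotonicity appeal is used for what it actually gives, is the corrected statement $\letin{\varone}{\termone}{\termtwo}\redval^{n+m+1}\distrthree$ for the unique $\distrthree$ with $\sum_{\indexone\in\indexsetone}p_{\indexone}\cdot\distrtwo_{\indexone}\distrleq\distrthree$. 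Since the paper offers no argument to compare against (its proof reads ``easy from the definition''), the right move is to weaken the conclusion to this inequality and run your induction for that version, checking that the single downstream use of the lemma tolerates the larger distribution.
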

}

\longv{
\begin{proof}
 Easy from the definition of $\redval$ and of $\rcbv$ in the case of $\mathsf{let}$.
\end{proof}
}
\section{Monadic Affine Sized Typing}\label{sect:monadic-types}\label{sect:sizedtypes}
Following the discussion from Section~\ref{sect:necessity}, we
introduce in this section a non-trivial lifting of sized types to our
probabilistic setting.  As a first step, we design an \emph{affine}
simple type system for $\languageproba$.  This means that no
higher-order variable may be used more than once in the same
probabilistic branch.  However, variables of base type $\Nat$ may be
used freely. 
In spite of this restriction, the resulting system
allows to type terms corresponding to any probabilistic Turing machine.
In Section~\ref{subsect:monadic-types}, we introduce a more sophisticated type
system, which will be \emph{monadic} and affine, and which will be
sound for almost-sure termination as we prove in
Section~\ref{sect:reducibility}.
\subsection{Affine Simple Types for $\languageproba$}

The terms of the language $\languageproba$ can be typed using a
variant of the simple types of the $\lambda$-calculus, extended to
type $\letrecname$ and $\choice_p$, but also restricted to an \emph{affine}
management of contexts. Recall that the constraint of affinity ensures that a
given higher-order symbol is used at most \emph{once} in a
probabilistic branch.  We define simple types over the base type
$\Nat$ in the usual way:
 $
 \simpletypone,\,\simpletyptwo,\,\ldots \bnf \Nat \sep \simpletypone \typarrow \simpletyptwo
 $
where, by convention, the arrow associates to the right.  Contexts
$\contextone,\,\contexttwo,\,\ldots$ are sequences of simply-typed
variables $\varone \refines \simpletypone$.  We write sequents as
$\contextone \proves \termone \refines \simpletypone$ to distinguish
these sequents from the ones using distribution types appearing later
in this section.  Before giving the rules of the type system, we need
to define two policies for contracting contexts: an affine and a general one.
\paragraph*{Context Contraction.} 
\longv{
The contraction $\contextone \cup \contexttwo$ of two contexts is a
non-affine operation, and is partially defined as follows:
\begin{itemize}
 \item $\varone \refines\simpletypone \in \contextone \setminus \contexttwo \ \implies\ x\refines\simpletypone \in \contextone \cup \contexttwo$,
 \item $\varone \refines\simpletypone \in \contexttwo \setminus \contextone \ \implies\ x\refines\simpletypone \in \contextone \cup \contexttwo$ ,
 \item if $\varone \refines\simpletypone \in \contextone$ and $\varone \refines\simpletyptwo \in \contexttwo$,
 \begin{itemize}
  \item if $\simpletypone\,=\,\simpletyptwo$, $x\refines\simpletypone \in \contextone \cup \contexttwo$,
  \item else the operation is undefined.
 \end{itemize}
\end{itemize}
This operation will be used to contract contexts in the rule typing the choice operation $\choice_p$:
indeed, we allow a same higher-order variable $\funcone$ to occur both in $\termone$ and in $\termtwo$ when
forming $\termone \choice_p \termtwo$, as both terms correspond to different probabilistic branches.

\bigbreak

\noindent
\textbf{Affine contraction of contexts.} The affine contraction $\contextone \contextsumdisj \contexttwo$
will be used in all rules but the one for $\choice_p$. It is partially defined as follows:
\begin{itemize}
 \item $\varone \refines\simpletypone \in \contextone \setminus \contexttwo \ \implies\ x\refines\simpletypone \in \contextone \contextsumdisj \contexttwo$,
 \item $\varone \refines\simpletypone \in \contexttwo \setminus \contextone \ \implies\ x\refines\simpletypone \in \contextone \contextsumdisj \contexttwo$ ,
 \item if $\varone \refines\simpletypone \in \contextone$ and $\varone \refines\simpletyptwo \in \contexttwo$,
 \begin{itemize}
  \item if $\simpletypone\,=\,\simpletyptwo\,=\,\Nat$, $x\refines\simpletypone \in \contextone \contextsumdisj \contexttwo$,
  \item in any other case, the operation is undefined.
 \end{itemize}
\end{itemize}
As we explained earlier, only variables of base type $\Nat$ may be contracted.
}
\shortv{
Contexts can be combined in two different ways. On the one hand, one
can form the \emph{non-affine contraction} $\contextone \cup
\contexttwo$ of two contexts, for which $\contextone$ and $\contexttwo$
are allowed to share some variables, but these variables must be
attributed the \emph{same type} in both contexts. On the other hand, one
can form the \emph{affine contraction $\contextone \contextsumdisj
  \contexttwo$}, in which variables in common between
$\contextone$ and $\contexttwo$ must be attributed the type $\Nat$.
}
\paragraph*{The Affine Type System.}
The affine simple type system is then defined in
Figure~\ref{fig:affine-simple-type-system}. All the rules are quite
standard. Higher-order variables can occur at most once in any
probabilistic branch because all binary typing rules -- except
probabilistic choice -- treat contexts affinely.
We set $\settypedvalues{\simpletypone}{\contextone}\ =\ \left\{\valone
\in \values \sep\right.$ $\left. \contextone \proves \valone \refines
\simpletypone\right\}$ and
$\settypedterms{\simpletypone}{\contextone}\ =\ \left\{\termone \in
\termsproba \sep \contextone \proves \termone\refines
\simpletypone\right\}$.  We simply write
$\settypedclosedvalues{\simpletypone}\,=\,\settypedvalues{\simpletypone}{\emptyset}$
and
$\settypedclosedterms{\simpletypone}\,=\,\settypedterms{\simpletypone}{\emptyset}$
when the terms or values are closed.
%
These closed, typable terms enjoy subject reduction and the progress property.

\begin{figure}[t]
\centering
\fbox{
\begin{minipage}[c]{0.95\textwidth}
\vspace{0.3cm}
\begin{center}
\begin{tabular}{ccccc}
 \AxiomC{}
 \LeftLabel{Var $\quad$}
 \UnaryInfC{$\contextone,\,\varone\refines \simpletypone\ \proves\ \varone\refines \simpletypone$}
 \DisplayProof
 &
 $\qquad$
 &
   \AxiomC{$\contextone\ \proves\ \valone \refines \Nat$}
 \UnaryInfC{$\contextone\ \proves\ \natsucc\ \valone \refines \Nat$}
 \DisplayProof
 &
 $\qquad$
 &
 \AxiomC{}
 \UnaryInfC{$\contextone\ \proves\ \natzero \refines \Nat$}
 \DisplayProof
 \\[0.65cm]
 
 \end{tabular} 
 \begin{tabular}{ccc}
  \AxiomC{$\contextone,\,\varone\refines\simpletypone \ \proves\ \termone \refines\simpletyptwo$}
 \LeftLabel{$\lambda$ $\quad$}
 \UnaryInfC{$\contextone\ \proves\ \abstr{\varone}{\termone} \refines \simpletypone \typarrow\simpletyptwo$}
 \DisplayProof
 &
 $\qquad$
 &
  \AxiomC{$\contextone \ \proves\ \valone\refines \simpletypone \typarrow\simpletyptwo$}
  \AxiomC{$\contexttwo \ \proves\ \valtwo \refines \simpletypone$}
  \RightLabel{$\ \ $ App}
 \BinaryInfC{$\contextone \contextsumdisj \contexttwo \ \proves\ \valone\ \valtwo \refines \simpletyptwo$}
 \DisplayProof
  \\[0.65cm]
 \end{tabular} 
\end{center}

 \vspace{-0.4cm}
 
$$
  \AxiomC{$\contextone\ \proves\ \termone \refines \simpletypone$}
 \AxiomC{$\contexttwo\ \proves\ \termtwo \refines \simpletypone$}
 \LeftLabel{Choice $\quad$}
 \BinaryInfC{$\contextone \cup \contexttwo\ \proves\ \termone \choice_p \termtwo \refines \simpletypone$}
 \DisplayProof
$$  

 \vspace{0.2cm}
 
 $$
 \AxiomC{$\contextone \proves \termone \refines \simpletypone$}
 \AxiomC{$\contexttwo,\,\varone\refines \simpletypone \proves \termtwo \refines \simpletyptwo$}
 \LeftLabel{Let $\quad$}
 \BinaryInfC{$ \contextone \contextsumdisj \contexttwo \proves
 \letin{\varone}{\termone}{\termtwo} \refines \simpletyptwo$}
 \DisplayProof
 $$

  \vspace{0.2cm}
  
 $$
 \AxiomC{$\contextone \proves \valone \refines \Nat$}
 \AxiomC{$\contexttwo \proves \valtwo \refines \Nat \typarrow \simpletypone$}
 \AxiomC{$\contexttwo \proves \valthree \refines \simpletypone$}
 \LeftLabel{Case $\quad$}
 \TrinaryInfC{$\contextone \contextsumdisj \contexttwo \proves \caseof{\valone}{\natsucc \rightarrow \valtwo \smallsep
 \natzero \rightarrow \valthree} \refines \simpletypone$}
 \DisplayProof
 $$
 \vspace{0.4cm}
$$
\AxiomC{$\contextone,\,\funcone \refines \Nat \typarrow \simpletypone \proves \valone \refines \Nat \typarrow \simpletypone$}
 \AxiomC{$\forall x \in \Gamma,\ \ x \refines \Nat$}
\LeftLabel{$\letrecname$ $\quad$}
\BinaryInfC{$\contextone \proves \letrec{\funcone}{\valone} \refines  
\Nat \typarrow \simpletypone$}
 \DisplayProof
 $$
  \vspace{0.2cm}
 \end{minipage}
} 
 
 \caption{Affine simple types for $\languageproba$.}
 \label{fig:affine-simple-type-system}
\end{figure}
\subsection{Monadic Affine Sized Types}
\label{subsect:monadic-types}
This section is devoted to giving the basic definitions and results
about monadic affine sized types (MASTs, for short), which can be seen
as decorations of the affine simple types with some \emph{size
  information}. 
\paragraph{Sized Types.}
We consider a set $\sizevars$ of \emph{size variables}, denoted
$\sizevarone,\,\sizevartwo,\,\ldots$ and define \emph{sizes} (called
\emph{stages} in~\cite{barthe-et-al:type-based-termination}) as:
$$
\sizeone,\,\sizetwo  \ \ \bnf\ \ \sizevarone \sep \sizeinf \sep \sizesucc{\sizeone}
$$
where $\sizesucc{\cdot}$ denotes the \emph{successor} operation. We
denote the iterations of $\sizesucc{\cdot}$ as follows:
$\sizesucc{\sizesucc{\sizeone}}$ is denoted
$\sizesuccit{\sizeone}{2}$,
$\sizesucc{\sizesucc{\sizesucc{\sizeone}}}$ is denoted
$\sizesuccit{\sizeone}{3}$,and so on.  By definition, at most one
variable $\sizevarone \in \sizevars$ appears in a given size
$\sizeone$. We call it its \emph{spine variable}, denoted as
$\spine{\sizeone}$. We write $\spine{\sizeone}\,=\,\emptyset$ when
there is no variable in $\sizeone$.  An order $\sizeleq$ on
sizes can be defined as follows:
 $$
 \begin{array}{ccccccc}
  \AxiomC{}
  \UnaryInfC{$\sizeone \sizeleq \sizeone$}
  \DisplayProof
  &
  \quad
  &
  \AxiomC{$\sizeone \sizeleq \sizetwo$}
  \AxiomC{$\sizetwo \sizeleq \sizethree$}
  \BinaryInfC{$\sizeone \sizeleq \sizethree$}
  \DisplayProof
  &
  \quad
  &
  \AxiomC{}
  \UnaryInfC{$\sizeone \sizeleq \sizesucc{\sizeone}$}
  \DisplayProof
  &
  \quad
  &
  \AxiomC{}
  \UnaryInfC{$\sizeone \sizeleq \sizeinf$}
  \DisplayProof
 \end{array}
 $$
Notice that these rules imply notably that $\sizesucc{\sizeinf}$ is
equivalent to $\sizeinf$, i.e., $\sizesucc{\sizeinf}\sizeleq\sizeinf$
and $\sizeinf\sizeleq\sizesucc{\sizeinf}$. We consider sizes modulo
this equivalence.  We can now define sized types and distribution
types by mutual induction, calling distributions of (sized) types the
distributions over the set of sized types:

\begin{definition}[Sized Types, Distribution Types]
Sized types and distribution types are defined by mutual induction,
contextually with the function $\underlying{\cdot}$ which maps any
sized or distribution type to its \emph{underlying} affine type.
$$
\begin{array}{lrcl}
\text{Sized types:} \qquad & \typone,\,\typtwo &  \bnf & \typone \typarrow \distrtypone \sep \Nat^{\sizeone}\\
\text{Distribution types:} \qquad &  
  \distrtypone,\,\distrtyptwo & \bnf & \distrelts{\typone_{\indexone}^{p_{\indexone}} \sep \indexone \in \indexsetone},\\
\text{Underlying map:} \qquad & \underlying{\typone \typarrow \distrtypone}&=&\underlying{\typone} \typarrow \underlying{\distrtypone}\\
& \underlying{\Nat^{\sizeone}}&=&\Nat\\
& \underlying{\distrelts{\typone_{\indexone}^{p_{\indexone}} \sep \indexone \in \indexsetone}}&=&
   \underlying{\typone_\indextwo}
\end{array}
$$
For distribution types we require additionally that
$\sum_{\indexone \in \indexsetone}\ p_{\indexone}\ \leq\ 1$, that
$\indexsetone$ is a finite non-empty set, and that
$\underlying{\typone_\indexone}=\underlying{\typone_\indextwo}$ for
every $\indexone,\indextwo\in\indexsetone$. In the last
equation, $\indextwo$ is any element of $\indexsetone$. 
\end{definition}

The definition of sized types is \emph{monadic} in that a
higher-order sized type is of the shape $\typone \typarrow
\distrtypone$ where $\typone$ is again a sized type, and
$\distrtypone$ is a \emph{distribution} of sized types.  

\longv{
The definition of the fixpoint will refer to the notion
of \emph{positivity} of a size variable in a sized or
distribution type. We define positive and negative occurrences
of a size variable in such a type in Figure~\ref{fig:positivity}.
}

%

\longv{ 
\begin{figure}
$$
\begin{array}{ccccc}
 \AxiomC{}
 \UnaryInfC{$\positive{\sizevarone}{\Nat^{\sizeone}}$}
 \DisplayProof
 & \qquad \qquad &
 \AxiomC{$\negative{\sizevarone}{\typone}$}
 \AxiomC{$\positive{\sizevarone}{\distrtypone}$}
 \BinaryInfC{$\positive{\sizevarone}{\typone \typarrow \distrtypone}$}
 \DisplayProof
 & \qquad \qquad &
 \AxiomC{$\forall \indexone \in \indexsetone,\ \ \positive{\sizevarone}{\typone_\indexone}$}
 \UnaryInfC{$\positive{\sizevarone}{\distrelts{\typone^{p_\indexone}_\indexone \sep \indexone \in \indexsetone}}$}
 \DisplayProof
 \\[0.7cm]
 \AxiomC{$\sizevarone \notin \sizeone$}
 \UnaryInfC{$\negative{\sizevarone}{\Nat^{\sizeone}}$}
 \DisplayProof
 & \qquad \qquad &
 \AxiomC{$\positive{\sizevarone}{\typone}$}
 \AxiomC{$\negative{\sizevarone}{\distrtypone}$}
 \BinaryInfC{$\negative{\sizevarone}{\typone \typarrow \distrtypone}$}
 \DisplayProof
 & \qquad \qquad &
 \AxiomC{$\forall \indexone \in \indexsetone,\ \ \negative{\sizevarone}{\typone_\indexone}$}
 \UnaryInfC{$\negative{\sizevarone}{\distrelts{\typone^{p_\indexone}_\indexone \sep \indexone \in \indexsetone}}$}
 \DisplayProof
 \\
\end{array}
$$
 \caption{Positive and negative occurrences of a size
 variable in a size type.}
 \label{fig:positivity}
\end{figure}
}


%
%
%
%
\paragraph{Contexts and Operations on Them.} 
Contexts are sequences of variables together with a sized
type, and at most one distinguished variable with a distribution type:
\begin{definition}[Contexts]
 Contexts are of the shape $\contextsizedone \contextsep \contextdistrone$, with
 $$
 \begin{array}{llrclrr}
  \text{\emph{Sized contexts:}} & \qquad \ \  & \contextsizedone,\,\contextsizedtwo,\,\ldots &\ \bnf \ & \emptyset \sep \varone \typsep \typone,\,
  \contextsizedone & \quad& (\varone \notin \dom{\contextsizedone})\\
  \text{\emph{Distribution contexts:}} & \qquad & \contextdistrone,\,\contextdistrtwo,\,\ldots & \ \bnf\  & \emptyset \sep \varone \typsep
  \distrtypone  & & \\
 \end{array}
 $$
 As usual, we define the \emph{domain} $\dom{\contextsizedone}$ of a sized context $\contextsizedone$ by induction:
 $\dom{\emptyset}\, =\,\emptyset$ and 
 $\dom{\varone \typsep \typone,\,\contextsizedone}\,=\, \left\{\varone\right\} \uplus \dom{\contextsizedone}$.
 We proceed similarly for the domain $\dom{\contextdistrone}$ of a distribution context $\contextdistrone$.
 When a sized context $\contextsizedone\,=\,\varone_1\typsep\typone_1,\,\ldots,\,\varone_n\typsep\typone_n$ ($n \geq 1$)
 is such that there is a simple type $\simpletypone$ with $\forall \indexone \in \setoneton,\ \ \underlying{\typone_i}\,=\,\simpletypone$,
 we say that $\contextsizedone$ is \emph{uniform} of simple type $\simpletypone$.
 We write this as $\underlying{\contextsizedone}\,=\,\simpletypone$.
 
 We write $\contextsizedone,\,\contextsizedtwo$ for the \emph{disjoint union} of these sized contexts: it is defined whenever
 $\dom{\contextsizedone} \cap \dom{\contextsizedtwo}\,=\,\emptyset$. We proceed similarly for
 $\contextdistrone,\,\contextdistrtwo$, but note that due to the restriction on the cardinality of such contexts, there is the additional requirement
 that $\contextdistrone\,=\,\emptyset$ or $\contextdistrtwo\,=\,\emptyset$.
 
 We finally define  \emph{contexts} as pairs $\contextsizedone \contextsep \contextdistrone$ of a sized context and of a distribution context,
 with the constraint that $\dom{\contextsizedone} \cap \dom{\contextdistrone}\,=\,\emptyset$.
\end{definition}

\begin{definition}[Probabilistic Sum]
Let $\distrtypone$ and $\distrtyptwo$ be two distribution types. We define their probabilistic sum
$\distrtypone \choice_p \distrtyptwo$ as the distribution type $p \cdot \distrtypone + (1-p) \cdot \distrtyptwo$.
We extend this operation to a \emph{partial} operation on distribution contexts:
\begin{varitemize}
 \item For two distribution types $\distrtypone$ and $\distrtyptwo$ such that
 $\underlying{\distrtypone}\,=\,\underlying{\distrtyptwo}$, we define $\left(\varone\typsep \distrtypone\right)\, \choice_p\,
 \left(\varone\typsep \distrtyptwo\right) \ \ =\ \ \varone\typsep \distrtypone\,\choice_p\, \distrtyptwo$,
 \item $\left(\varone\typsep \distrtypone\right)\, \choice_p \,\emptyset
 \ \ =\ \ \varone\typsep p \cdot \distrtypone$,
 \item $\emptyset \,\choice_p\, \left(\varone\typsep \distrtypone\right)
 \ \ =\ \ \varone\typsep (1-p) \cdot \distrtypone$,
 \item In any other case, the operation is undefined.
\end{varitemize}
\end{definition}

\begin{definition}[Weighted Sum of Distribution Contexts]
 Let $\left(\contextdistrone_\indexone\right)_{\indexone \in \indexsetone}$ be a non-empty family of distribution contexts and
 $\left(p_\indexone\right)_{\indexone \in \indexsetone}$ be a family of reals of $[0,1]$. 
 We define the weighted sum 
 $\sum_{\indexone\in\indexsetone}\ p_\indexone \cdot \contextdistrone_\indexone$
 as the distribution context
 $x\,:\,\sum_{\indexone\in\indexsetone}\ p_\indexone \cdot \distrtypone_\indexone$ when the following conditions are met:
 \begin{enumerate}
  \item $\exists \varone,\ \ \forall \indexone \in \indexsetone,\ \contextdistrone_\indexone \,=\,\varone \typsep\distrtypone_\indexone$,
  \item $\forall (\indexone,\indextwo) \in \indexsetone^2,\ \ \underlying{\contextdistrone_\indexone}\,=\,
  \underlying{\contextdistrone_\indextwo}$,
  \item and $\sum_{\indexone \in \indexsetone}\ p_\indexone \leq 1$,
 \end{enumerate}
 In any other case, the operation is undefined.
\end{definition}

\shortv{
\noindent
We define the substitution $\subst{}{\sizevarone}{\sizetwo}$ of a size variable in a size
or in a sized or distribution type in the expected way; see the long version \cite{dal-lago-grellois:monadic-affine-sized-types-full}
for details. A subtyping relation allows to lift the order $\sizeleq$ on sizes to monadic sized types:}
\longv{
\begin{definition}[Substitution of a Size Variable]
 We define the substitution $\subst{\sizeone}{\sizevarone}{\sizetwo}$ of a size variable in a size as follows:
 %
$$
\subst{\sizevarone}{\sizevarone}{\sizetwo}\ =\ \sizetwo \qquad\quad
 \subst{\sizevartwo}{\sizevarone}{\sizetwo}\ =\ \sizevartwo\qquad\quad
 \subst{\sizeinf}{\sizevarone}{\sizetwo}\ =\ \sizeinf\qquad\quad
  \subst{\sizesucc{\sizeone}}{\sizevarone}{\sizetwo}\ =\ \sizesucc{\subst{\sizeone}{\sizevarone}{\sizetwo}}
$$
 where $\sizevarone \neq \sizevartwo$.
 We then define the substitution $\subst{\typone}{\sizevarone}{\sizeone}$ (resp. $\subst{\distrtypone}{\sizevarone}{\sizeone}$)
 of a size variable $\sizevarone$ by a size $\sizeone$ in a sized or distribution type as:
 $$
 \subst{\left(\typone \typarrow \distrtypone\right)}{\sizevarone}{\sizeone}\ \ =\ \ 
 \subst{\typone}{\sizevarone}{\sizeone}  \typarrow \subst{\distrtypone}{\sizevarone}{\sizeone}
 \qquad\qquad\qquad
 \subst{\left(\Nat^{\sizeone}\right)}{\sizevarone}{\sizetwo}\ \ =
 \ \ \Nat^{\subst{\sizeone}{\sizevarone}{\sizetwo}}
 $$
 $$
 \subst{\left(\distrelts{\typone_{\indexone}^{p_{\indexone}} \sep \indexone \in \indexsetone}\right)}{\sizevarone}{\sizeone}
 \ =\ \distrelts{\left(\subst{\typone_{\indexone}}{\sizevarone}{\sizeone}\right)^{p_{\indexone}} \sep \indexone \in \indexsetone}
 $$
 We define the substitution of a size variable in a sized or distribution context in the obvious way:
 $$
 \subst{\emptyset}{\sizevarone}{\sizeone}\ =\ \emptyset \qquad \qquad
  \subst{\left(\varone \typsep \typone,\,\contextsizedone\right)}{\sizevarone}{\sizeone}\ =\ 
  \varone \typsep \subst{\typone}{\sizevarone}{\sizeone},\,\subst{\contextsizedone}{\sizevarone}{\sizeone}
 $$
 $$
 \subst{\left(\varone \typsep \distrtypone\right)}{\sizevarone}{\sizeone}\ =\ \varone \typsep \subst{\distrtypone}{\sizevarone}{\sizeone}
 $$  
\end{definition}
}

\longv{
\begin{lemma}~
 \label{lemma:commuting-size-substitutions-with-operations}
 \begin{enumerate}
 \item $\subst{\left(\distrtypone \choice_p \distrtyptwo\right)}{\sizevarone}{\sizeone}\ \ =\ \ 
 \subst{\distrtypone}{\sizevarone}{\sizeone}  \choice_p \subst{\distrtyptwo}{\sizevarone}{\sizeone}$
   \item For distribution contexts, $\subst{\left(\contextdistrone \choice_p \contextdistrtwo\right)}{\sizevarone}{\sizeone}\ \ =\ \ 
 \subst{\contextdistrone}{\sizevarone}{\sizeone}  \choice_p \subst{\contextdistrtwo}{\sizevarone}{\sizeone}$
 
  \item For distribution contexts, $\subst{\left(\sum_{\indexone \in \indexsetone}\ p_{\indexone} \cdot \contextone_{\indexone}\right)}{\sizevarone}{\sizeone}\ \ =\ \
 \sum_{\indexone \in \indexsetone}\ p_{\indexone} \cdot \subst{\contextone_{\indexone}}{\sizevarone}{\sizeone}$
 \end{enumerate}
 \end{lemma}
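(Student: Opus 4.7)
All three equalities are pure bookkeeping identities, so the plan is to proceed by unfolding the definitions on each side and checking that the outcomes coincide. The key observation, which I would state once and reuse throughout, is that size substitution $\subst{(-)}{\sizevarone}{\sizeone}$ acts componentwise on distribution types and contexts: it rewrites the sized types appearing inside them while leaving the attached probabilities untouched. In particular, substitution commutes with the scaling $p \cdot (-)$ and with pointwise addition of distribution types, which are the two primitive operations from which $\choice_p$ and weighted sums of distribution contexts are built.

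For part (1), I would unfold $\distrtypone \choice_p \distrtyptwo\,=\,p \cdot \distrtypone + (1-p)\cdot\distrtyptwo$ and check that $\subst{(p\cdot\distrtypone)}{\sizevarone}{\sizeone}\,=\,p\cdot\subst{\distrtypone}{\sizevarone}{\sizeone}$ (immediate from the definition of substitution on $\distrelts{\typone_{\indexone}^{p_\indexone} \sep \indexone \in \indexsetone}$) and that $\subst{(\distrtypone + \distrtyptwo)}{\sizevarone}{\sizeone}\,=\,\subst{\distrtypone}{\sizevarone}{\sizeone} + \subst{\distrtyptwo}{\sizevarone}{\sizeone}$. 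The latter is where a small care is needed: when summing, two formal entries $\typone_{\indexone}^{p}$ and $\typone_{\indextwo}^{q}$ with $\typone_{\indexone}=\typone_{\indextwo}$ are merged into a single entry of weight $p+q$. Substitution may identify additional entries (the images $\subst{\typone_\indexone}{\sizevarone}{\sizeone}$ and $\subst{\typone_\indextwo}{\sizevarone}{\sizeone}$ may become equal even when $\typone_\indexone\neq\typone_\indextwo$), but since we are dealing with underlying distribution-theoretic sums, this extra merging happens symmetrically on both sides, and the resulting distribution types are equal. The well-formedness side conditions on underlying simple types are preserved because substitution does not alter the underlying map $\underlying{\cdot}$.

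For part (2), I would do a case distinction following the partial definition of $\contextdistrone \choice_p \contextdistrtwo$: either both contexts share the same variable $\varone$ (in which case the claim reduces to part (1) applied to the distribution types on $\varone$, after checking that $\underlying{\subst{\distrtypone}{\sizevarone}{\sizeone}}\,=\,\underlying{\distrtypone}$), or one of the two contexts is empty (in which case the claim reduces to $\subst{(p\cdot\distrtypone)}{\sizevarone}{\sizeone}\,=\,p\cdot\subst{\distrtypone}{\sizevarone}{\sizeone}$), or the operation is undefined on both sides (and undefinedness is preserved by substitution since again $\underlying{\cdot}$ is invariant). Part (3) is handled analogously: by definition the weighted sum is, up to a single common variable $\varone$, the pointwise weighted sum of distribution types; substitution commutes with this operation by the same componentwise argument as in part (1), iterated over the finite index set $\indexsetone$, and the three well-formedness conditions (shared variable, common underlying type, total weight $\leq 1$) are all stable under substitution.

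The main obstacle, modest as it is, will be the probability-merging step in part (1): I would want to phrase it cleanly in terms of pushforwards of finite measures along the substitution map $\typone\mapsto\subst{\typone}{\sizevarone}{\sizeone}$, so that the identity becomes an instance of the fact that pushforward commutes with affine combinations. Once this is in place, parts (2) and (3) follow by unfolding definitions and invoking part (1).
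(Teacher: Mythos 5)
Your proposal is correct and follows essentially the same route as the paper: unfold the definitions and check that substitution acts componentwise, leaving probabilities untouched. The only cosmetic difference is in handling the merging of entries that become equal after substitution — the paper sidesteps it by passing to pseudo-representations (multisets), whereas you invoke linearity of the pushforward along $\typone\mapsto\subst{\typone}{\sizevarone}{\sizeone}$; these are equivalent, and your explicit case analysis on the partial definition of $\choice_p$ for contexts in part (2) is if anything slightly more complete than the paper's.
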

 }
 
\longv{ 
\begin{proof}
 \begin{enumerate}
  \item Let $\distrtypone\,=\,\distrelts{\typone_\indexone^{p'_\indexone} \sep \indexone \in \indexsetone}$
  and $\distrtyptwo\,=\,\distrelts{\typtwo_\indextwo^{p''_\indextwo} \sep \indextwo \in \indexsettwo}$.
  Then
  $$
  \begin{array}{rl}
   & \subst{\distrtypone}{\sizevarone}{\sizeone}  \choice_p \subst{\distrtyptwo}{\sizevarone}{\sizeone}\\
   = \ \ &
   \subst{\distrelts{\typone_\indexone^{p'_\indexone} \sep \indexone \in \indexsetone}}{\sizevarone}{\sizeone}  \choice_p \subst{\distrelts{\typtwo_\indextwo^{p''_\indextwo} \sep \indextwo \in \indexsettwo}}{\sizevarone}{\sizeone}\\
   = \ \ &
   \distrelts{\left(\subst{\typone_\indexone}{\sizevarone}{\sizeone}\right)^{p'_\indexone} \sep \indexone \in \indexsetone} \choice_p \distrelts{\left(\subst{\typtwo_\indextwo}{\sizevarone}{\sizeone}\right)^{p''_\indextwo} \sep \indextwo \in \indexsettwo}\\
     = \ \ &
   \pseudorep{\left(\subst{\typone_\indexone}{\sizevarone}{\sizeone}\right)^{pp'_\indexone} \sep \indexone \in \indexsetone} + \pseudorep{\left(\subst{\typtwo_\indextwo}{\sizevarone}{\sizeone}\right)^{(1-p)p''_\indextwo} \sep \indextwo \in \indexsettwo}\\
   = \ \ &
   \subst{\left(\pseudorep{\left(\typone_\indexone\right)^{pp'_\indexone} \sep \indexone \in \indexsetone} + \pseudorep{\left(\typtwo_\indextwo\right)^{(1-p)p''_\indextwo} \sep \indextwo \in \indexsettwo}\right)}{\sizevarone}{\sizeone}\\
   = \ \ &
   \subst{\left(\distrtypone \choice_p \distrtyptwo\right)}{\sizevarone}{\sizeone}\\
  \end{array}
  $$

  \item Suppose that $\contextdistrone\,=\,\varone \typsep \distrtypone$
  and that $\contextdistrtwo\,=\,\varone \typsep \distrtyptwo$.
  Then $\contextdistrone\choice_p \contextdistrtwo\ =\ \varone \typsep \distrtypone \choice_p \distrtyptwo$. It follows from (1) that 
  $\subst{\contextdistrone}{\sizevarone}{\sizeone} \choice_p \subst{\contextdistrtwo}{\sizevarone}{\sizeone} 
  \ =\ \varone \typsep \subst{\distrtypone}{\sizevarone}{\sizeone} \choice_p
  \subst{\distrtyptwo}{\sizevarone}{\sizeone}
  \ =\ \varone \typsep \subst{\left(\distrtypone \choice_p \distrtyptwo\right)}{\sizevarone}{\sizeone}
  \ =\ \subst{\left(\contextdistrone \choice_p \contextdistrtwo\right)}{\sizevarone}{\sizeone}$

  \item The proof is similar to the previous cases.
 \end{enumerate}

\end{proof}
}

\longv{
\noindent
A subtyping relation allows to lift the order $\sizeleq$ on sizes to monadic sized types:
}

\begin{definition}[Subtyping]
 We define the subtyping relation $\subtypeleq$ on sized types and distribution types as follows:
 $$
 \begin{array}{ccccc}
  \AxiomC{}
  \UnaryInfC{$\typone \subtypeleq \typone$}
  \DisplayProof
  &
  \quad
  &
  \AxiomC{$\sizeone \sizeleq \sizetwo$}
  \UnaryInfC{$\Nat^{\sizeone} \subtypeleq\ \Nat^{\sizetwo}$}
  \DisplayProof
  &
  \quad
  &
  \AxiomC{$\typtwo \subtypeleq \typone$}
  \AxiomC{$\distrtypone \subtypeleq \distrtyptwo$}
  \BinaryInfC{$\typone \typarrow \distrtypone \ \subtypeleq\ \typtwo \typarrow \distrtyptwo$}
  \DisplayProof
 \end{array}
 $$
 $$
 \AxiomC{$\exists \funcone \,:\, \indexsetone \to \indexsettwo,\ \ \left(\forall \indexone \in \indexsetone,\ \ \typone_{\indexone}
 \,\subtypeleq\,\typtwo_{\funcone(\indexone)} \right) \text{ and }
 \left(\forall \indextwo \in \indexsettwo,\ \ \sum_{\indexone \in \funcone^{-1}(\indextwo)}\ p_{\indexone} \leq p'_{\indextwo} \right)$}
 \UnaryInfC{$\distrelts{\sigma_{\indexone}^{p_{\indexone}} \sep \indexone \in \indexsetone}\ \subtypeleq\ 
 \distrelts{\typtwo_{\indextwo}^{p'_{\indextwo}} \sep \indextwo \in \indexsettwo}$}
 \DisplayProof
 $$
\end{definition}
\paragraph{Sized Walks and Distribution Types.}
As we explained in Section~\ref{sect:necessity}, the rule typing
$\letrecname$ in the monadic, affine type system relies on an external
decision procedure, computable in polynomial time.  This procedure
ensures that the \emph{sized walk} --- a particular instance of
\emph{one-counter Markov decision process}
(OC-MDP, see~\cite{bradzdil-et-al:one-counter-markov-decision-processes}),
but which does not make use of non-determinism --- associated to the
type of the recursive function of interest indeed ensures almost
sure termination. Let us now define the sized walk associated to a
distribution type $\distrtypone$.\shortv{ For the precise connection
  with OC-MDPs, see the long
  version~\cite{dal-lago-grellois:monadic-affine-sized-types-full}.}
\longv{We then make precise the connection with OC-MDPs, from which
  the computability in polynomial time of the almost-sure termination
  of the random walk follows.}

\begin{definition}[Sized Walk]
Let $\indexsetone\ \subseteq_{\mathit{fin}} \NN$ be a
finite set of integers. Let
$\left\{p_\indexone\right\}_{\indexone\in\indexsetone}$
be such that $\sum_{\indexone \in
  \indexsetone}\ p_{\indexone}\ \leq\ 1$. These parameters define a
Markov chain whose set of states is $\NN$ and
whose transition relation is defined as follows:
\begin{varitemize}
\item 
  the state $0\in\NN$ is stationary (i.e. one goes from $0$ to $0$
  with probability $1$),
\item
  from the state $s+1\in\NN$ one moves:
  \begin{varitemize}
  \item 
    to the state $s + \indexone$ with probability $p_\indexone$, 
    for every $\indexone \in \indexsetone$;
  \item 
    to $0$ with probability $1 -
    \left(\sum_{\indexone \in \indexsetone}\ p_{\indexone}\right)$.
  \end{varitemize}
\end{varitemize}
We call this Markov chain the \emph{sized walk} on $\NN$
associated to $\left(\indexsetone,\left(p_\indexone\right)_{\indexone
  \in \indexsetone}\right)$.
 A sized walk is \emph{almost surely terminating} when it
 reaches $0$ with probability 1 from any initial state.
\end{definition}
Notably, checking whether a sized walk is terminating
is relatively easy:
\begin{proposition}[Decidability of AST for Sized Walks]\label{proposition/decidability-ast}
 It is decidable in polynomial time whether a sized walk is AST.
\end{proposition}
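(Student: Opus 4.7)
The plan is to reduce the question to an elementary drift analysis on $\NN$ in which the relevant parameter is easy to compute. Set $q := 1 - \sum_{\indexone \in \indexsetone} p_{\indexone}$, the probability of jumping directly to $0$ from any non-zero state. The first step is to dispose of the case $q > 0$: by induction on $n$, the probability of being in a non-zero state after $n$ steps is at most $(1-q)^n$ regardless of the starting state, so the walk is AST.

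If instead $q = 0$, then before absorption at $0$ the walk behaves as a random walk on $\ZZ$ with i.i.d.\ increments taking value $\indexone - 1$ with probability $p_{\indexone}$, of expected value $\mu := \sum_{\indexone \in \indexsetone} (\indexone - 1) p_{\indexone}$. Since $\indexsetone \subseteq \NN$, increments are bounded below by $-1$, so from a positive starting state the walk cannot skip $0$. Classical random walk theory then yields the case split: if $\mu < 0$, the strong law of large numbers forces the partial sums to tend to $-\infty$ almost surely, hence $0$ is hit from any positive state (AST); if $\mu = 0$, Chung--Fuchs recurrence of one-dimensional mean-zero random walks again yields AST; and if $\mu > 0$, the strong law yields divergence to $+\infty$ almost surely, so with positive probability the walk never reaches $0$ from state $1$, and AST fails.

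Putting these together, the decision procedure is: compute $q$; if $q > 0$ return yes; otherwise compute $\mu$ and return yes iff $\mu \leq 0$. Both quantities are finite rational sums over $\indexsetone$, and so the procedure runs in polynomial time in the size of the input. Alternatively, since a sized walk is explicitly a nondeterminism-free instance of a one-counter Markov decision process, one may directly invoke the polynomial-time decidability result for AST in that framework due to Br\'azdil et al.~\cite{bradzdil-et-al:one-counter-markov-decision-processes}; this has the advantage of a uniform proof but hides the combinatorics behind a heavier machinery. The main subtlety of the direct approach is the borderline case $\mu = 0$, which cannot be settled by a drift argument alone and requires appealing to a genuine recurrence theorem for one-dimensional random walks.
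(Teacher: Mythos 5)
Your direct drift analysis is almost right, but it misclassifies a degenerate case, and that case is exactly one the type system must reject. Take $q = 0$ and the increment distribution concentrated on $0$, i.e.\ $\indexsetone = \{1\}$ with $p_1 = 1$ (the sized walk induced by $\distrelts{(\Nat^{\sizesucc{\sizevarone}} \typarrow \sigma)^1}$, a function that always recurses on an argument of the \emph{same} size). Then $\mu = 0$, your procedure answers ``AST'', but from state $1$ the walk stays at $1$ forever and never reaches $0$, so it is not AST. The error is in the appeal to Chung--Fuchs: recurrence of the increment process $S_n$ about $0$ only gives that the walk returns to its \emph{starting point}; to conclude that it hits the absorbing state $0$ from $m>0$ you need $\liminf_n S_n = -\infty$, which holds for mean-zero walks only when the increment is not almost surely $0$. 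The fix is small: in the case $q=0$, answer ``AST'' iff $\mu < 0$, or $\mu = 0$ and the walk is non-degenerate (equivalently, $0 \in \indexsetone$ with $p_0 > 0$; note that $\mu \leq 0$ with increments bounded below by $-1$ forces $p_0 > 0$ unless the increment is a.s.\ $0$). With that correction your argument goes through: the cases $q>0$, $\mu<0$, $\mu>0$ are handled as you say, and the non-degenerate $\mu=0$ case does follow from recurrence of aperiodic integer-valued mean-zero walks together with the fact that increments $\geq -1$ cannot skip over $0$.

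Your fallback route is the one the paper actually takes: it encodes sized walks as (deterministic) one-counter MDPs and invokes Theorem~4.1 of Br\'azdil et al.\ for polynomial-time decidability of termination with probability $1$. If you go that way, note that it is not entirely immediate that a sized walk ``is'' an OC-MDP: counter updates in an OC-MDP are restricted to $\{-1,0,1\}$, so jumps of size $\indexone - 1 \geq 2$ must be simulated by chains of intermediate states, which is what the paper's construction does. Once corrected, your elementary argument is arguably preferable to the paper's, since it yields an explicit closed-form criterion ($q>0$, or $\mu<0$, or $\mu=0$ non-degenerate) computable by two rational sums, rather than a black-box reduction.
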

\shortv{\begin{proof}
 By encoding sized walks into OC-MDPs, which enjoy this property~\cite{bradzdil-et-al:one-counter-markov-decision-processes}.
 See the long version~\cite{dal-lago-grellois:monadic-affine-sized-types-full}.
\end{proof}}
\longv{ 
\begin{proof}
 See Section~\ref{section:proof-oc-mdp}.
\end{proof}
}

\begin{definition}[From Types to Sized Walks]
Consider a distribution type $\distrtypone\ =\ \distrelts{\left(\Nat^{\sizeone_{\indextwo}}
\typarrow \distrtyptwo_\indextwo\right)^{p_{\indextwo}} \sep \indextwo \in \indexsettwo}$ 
such that $\forall \indextwo \in
\indexsettwo,\ \spine{\sizeone_\indextwo}\,=\,\sizevarone$.  Then
$\distrtypone$ induces a sized walk, defined as follows. First, by
definition, $\sizeone_\indextwo$ must be of the shape
$\sizesuccit{\sizevarone}{k_\indextwo}$ with $k_{\indextwo} \geq 0$
for every $\indextwo \in \indexsettwo$.  We set
$\indexsetone\ =\ \left\{k_\indextwo \sep \indextwo \in
\indexsettwo\right\}$ and $q_{k_\indextwo}\,=\,p_\indextwo$ for
every $\indextwo \in \indexsettwo$.  The sized walk induced by the
distribution type $\distrtypone$ is then the sized walk associated to
$\left(\indexsetone,\left(q_\indexone)_{\indexone \in
  \indexsetone}\right)\right)$.
\end{definition}

\begin{example}
 Let $\distrtypone\ =\ \distrelts{\left(\Nat^{\sizevarone} \typarrow \Nat^{\sizeinf}\right)^{\frac{1}{2}},\ 
 \left(\Nat^{\sizesuccit{\sizevarone}{2}} \typarrow \Nat^{\sizeinf}\right)^{\frac{1}{3}}}$.
 Then the induced sized walk is the one associated to $\left(\left\{0,2\right\},\left(p_{0}=\frac{1}{2},\,
 p_{2}=\frac{1}{3}\right)\right)$. In other words, it is the random walk on $\NN$ which is stationary on $0$, and which on non-null integers
 $\indexone +1$ moves to $\indexone$ with probability $\frac{1}{2}$, to $\indexone +2$ with probability $\frac{1}{3}$,
 and jumps to $0$ with probability $\frac{1}{6}$. Note that the type $\distrtypone$, and therefore the associated sized walk,
 models a recursive function which calls itself on a size lesser by one unit with probability $\frac{1}{2}$,
 on a size greater by one unit with probability $\frac{1}{3}$, and which does not call itself with probability $\frac{1}{6}$.
\end{example}

\paragraph{Typing Rules.}
Judgements are of the shape $\contextsizedone \contextsep
\contextdistrone \proves \termone \typsep \distrtypone$.  When a
distribution $\distrtypone\,=\,\distrelts{\typone^1}$ is Dirac, we
simply write it $\typone$.  The type system is defined in
Figure~\ref{figure/affine-distribution-type-system}.  As earlier, we
define sets of typable terms, and set
$\setdistrtypedvalues{\typone}{\contextsizedone \contextsep
  \contextdistrone} \ =\ \left\{\valone \sep \contextsizedone
\contextsep \contextdistrone \proves \valone \typsep \typone\right\}$,
and $\setdistrtypedterms{\distrtypone}{\contextsizedone \contextsep
  \contextdistrone} \ =\ \left\{\termone \sep \contextsizedone
\contextsep \contextdistrone \proves \termone \typsep
\distrtypone\right\}$.  We abbreviate
$\setdistrtypedvalues{\typone}{\emptyset \contextsep \emptyset}$ as
$\setdistrtypedclosedvalues{\typone}$ and
$\setdistrtypedterms{\typone}{\emptyset \contextsep \emptyset}$ as
$\setdistrtypedclosedterms{\typone}$.

This sized type system is a refinement of the affine simple type system
for $\languageproba$: if $\varone_1 \typsep \typone_1,\,\ldots,\,\varone_n
\typsep \typone_n \contextsep \funcone \typsep \distrtypone \proves
\termone \typsep \distrtyptwo$,
then it is easily checked that
$\varone_1 \refines \underlying{\typone_1},\,\ldots,\,\varone_n
\refines \underlying{\typone_n},\,  \funcone \refines \underlying{\distrtypone}
\proves \termone \refines \underlying{\distrtyptwo}$.


\begin{figure}[t!]
\centering
\fbox{
\resizebox{0.95\textwidth}{!}{
\begin{minipage}[c]{0.97\linewidth}
\vspace{0.3cm}
\begin{center}
\begin{tabular}{ccc}
 \AxiomC{}
 \LeftLabel{Var $\quad$}
 \UnaryInfC{$\contextsizedone,\,\varone \typsep \typone \contextsep \contextdistrone \ \proves\ \varone\typsep \typone$}
 \DisplayProof
 &
 $\qquad$
 &
 \AxiomC{}
 \RightLabel{$\quad$ Var'}
 \UnaryInfC{$\contextsizedone \contextsep \varone \typsep \typone \ \proves\ \varone\typsep \typone$}
 \DisplayProof
 \\[0.65cm]
   \AxiomC{$\contextsizedone \contextsep \contextdistrone\ \proves\ \valone \typsep \Nat^{\sizeone}$}
  \LeftLabel{Succ $\quad$}
 \UnaryInfC{$\contextsizedone \contextsep \contextdistrone\ \proves\ \natsucc\ \valone \typsep \Nat^{\sizesucc{\sizeone}}$}
 \DisplayProof
 &
 $\qquad$
 &
 \AxiomC{}
  \RightLabel{$\quad$ Zero}
 \UnaryInfC{$\contextsizedone \contextsep \contextdistrone\ \proves\ \natzero \typsep \Nat^{\sizesucc{\sizeone}}$}
 \DisplayProof
 \\[0.65cm]
 
 \end{tabular} 
 \begin{tabular}{ccc}
  \AxiomC{$\contextsizedone,\,\varone\typsep\typone \contextsep \contextdistrone \ \proves\ \termone \typsep\distrtypone$}
 \LeftLabel{$\lambda$ $\quad$}
 \UnaryInfC{$\contextsizedone \contextsep \contextdistrone\ \proves\ \abstr{\varone}{\termone} \typsep \typone \typarrow\distrtypone$}
 \DisplayProof
 &
 $\qquad$
 &
\AxiomC{$\contextsizedone \contextsep \contextdistrone\ \proves\ \termone\typsep \distrtypone$}
 \AxiomC{$\distrtypone\ \subtypeleq\ \distrtyptwo$}
 \RightLabel{$\quad \text{Sub} $}
 \BinaryInfC{$\contextsizedone \contextsep \contextdistrone\ \proves\ \termone\typsep\distrtyptwo$}
 \DisplayProof
  \\[0.65cm]
 \end{tabular} 
\end{center}

$$
  \AxiomC{$\contextsizedone,\,\contextsizedtwo \contextsep \contextdistrone \ \proves\ \valone\typsep \typone \typarrow\distrtypone$}
  \AxiomC{$\contextsizedone,\,\contextsizedthree \contextsep \contextdistrtwo \ \proves\ \valtwo \typsep \typone$}
  \AxiomC{$\underlying{\contextsizedone}\,=\,\Nat$}
  \LeftLabel{App $\quad$}
 \TrinaryInfC{$\contextsizedone ,\, \contextsizedtwo ,\, \contextsizedthree \contextsep
 \contextdistrone ,\, \contextdistrtwo\ \proves\ \valone\ \valtwo \typsep \distrtypone$}
 \DisplayProof
 $$

 \vspace{0.2cm}
 
$$
  \AxiomC{$\contextsizedone \contextsep \contextdistrone\ \proves\ \termone \typsep \distrtypone$}
 \AxiomC{$\contextsizedone \contextsep \contextdistrtwo\ \proves\ \termtwo \typsep \distrtyptwo$}
 \AxiomC{$\underlying{\distrtypone}\,=\,\underlying{\distrtyptwo}$}
 \LeftLabel{Choice $\quad$}
 \TrinaryInfC{$\contextsizedone \contextsep \contextdistrone \choice_p \contextdistrtwo\ 
 \proves\ \termone \choice_p \termtwo \typsep \distrtypone \choice_p \distrtyptwo$}
 \DisplayProof
$$  

 \vspace{0.2cm}
 
 $$
 \AxiomC{$\contextsizedone,\,\contextsizedtwo \contextsep \contextdistrone
 \proves \termone \typsep \distrelts{\typone_{\indexone}^{p_{\indexone}} \sep \indexone \in \indexsetone}$}
 \AxiomC{$\underlying{\contextsizedone}\,=\,\Nat$}
 \noLine
 \BinaryInfC{$\contextsizedone,\,\contextsizedthree,\,\varone\typsep \typone_{\indexone} \contextsep \contextdistrtwo_{\indexone}
 \proves \termtwo \typsep \distrtypone_{\indexone}\quad (\forall \indexone \in \indexsetone)$}
 \LeftLabel{Let $\quad$}
 \UnaryInfC{$\contextsizedone,\,\contextsizedtwo,\,\contextsizedthree \contextsep \contextdistrone ,\,
 \left(\sum_{\indexone \in \indexsetone}\ p_{\indexone} \cdot \contextdistrtwo_{\indexone}\right)
 \proves \letin{\varone}{\termone}{\termtwo} \typsep \sum_{\indexone \in \indexsetone}\ p_{\indexone} \cdot \distrtypone_{\indexone}$}
 \DisplayProof
 $$

  \vspace{0.2cm}
  
 $$
 \AxiomC{$\contextsizedone \contextsep \emptyset \proves \valone \typsep \Nat^{\sizesucc{\sizeone}}$}
 \AxiomC{$\contextsizedtwo \contextsep \contextdistrone \proves \valtwo \typsep \Nat^{\sizeone} \typarrow \distrtypone$}
 \AxiomC{$\contextsizedtwo \contextsep \contextdistrone \proves \valthree \typsep \distrtypone$}
 \LeftLabel{Case $\quad$}
 \TrinaryInfC{$\contextsizedone,\,\contextsizedtwo \contextsep \contextdistrone \proves \caseof{\valone}{\natsucc \rightarrow \valtwo \smallsep
 \natzero \rightarrow \valthree} \typsep \distrtypone$}
 \DisplayProof
 $$
 \vspace{0.4cm}
$$
 \AxiomC{$\underlying{\contextsizedone}\,=\,\Nat$}
 \noLine
 \UnaryInfC{$\sizevarone \notin \contextone \text{ and } \sizevarone \text{ positive in } \nu
 \text{ and } \forall \indextwo \in \indexsettwo,\ \spine{\sizeone_\indextwo}\,=\,\sizevarone$}
 \noLine
\UnaryInfC{$\distrelts{\left(\Nat^{\sizeone_{\indextwo}} \typarrow \subst{\distrtyptwo}{\sizevarone}{\sizeone_{\indextwo}}
\right)^{p_{\indextwo}} \sep \indextwo \in \indexsettwo} \text{induces an AST sized walk}$}
\noLine
\UnaryInfC{$\contextsizedone \contextsep
\funcone \typsep \distrelts{\left(\Nat^{\sizeone_{\indextwo}} \typarrow \subst{\distrtyptwo}{\sizevarone}{\sizeone_{\indextwo}}
\right)^{p_{\indextwo}} \sep \indextwo \in \indexsettwo} \proves \valone \typsep 
\Nat^{ \sizesucc{\sizevarone}} \typarrow \subst{\distrtyptwo}{\sizevarone}{\sizesucc{\sizevarone}}$}
\LeftLabel{$\letrecname$ $\quad$}
\UnaryInfC{$\contextsizedone,\,\contextsizedtwo \contextsep \contextdistrone \proves \letrec{\funcone}{\valone} \typsep  
\Nat^{\sizetwo} \typarrow \subst{\distrtyptwo}{\sizevarone}{\sizetwo}$}
 \DisplayProof
 $$
  \vspace{0.2cm}
 \end{minipage}}
} 
 
 \caption{Affine distribution types for $\languageproba$.}
 \label{figure/affine-distribution-type-system}
\end{figure}

\begin{lemma}[Properties of Distribution Types]~
 \label{lemma/dirac-types-values}
 
\begin{varitemize}
\item $\contextsizedone \contextsep \contextdistrone \proves \valone \typsep \distrtypone \ \ \implies\ \ \distrtypone \text{ is Dirac.}$
\item  $\contextsizedone \contextsep \contextdistrone \proves \termone \typsep \distrtypone \ \ \implies\ \ \distrtypone$ is proper.
\end{varitemize}
\end{lemma}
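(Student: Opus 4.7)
The plan is to prove both claims simultaneously by induction on the structure of the typing derivation, strengthened by the invariant that every distribution type appearing anywhere in a valid derivation---be it in $\contextsizedone$, in $\contextdistrone$, in the conclusion, or nested inside a sized type---is proper. This strengthening is needed because the App rule extracts its conclusion $\distrtypone$ from inside a function type $\typone \typarrow \distrtypone$, and the bare statement would give no handle on the inner $\distrtypone$.

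For the first claim (values are Dirac), all value-introducing rules (Var, Var', Zero, Succ, $\lambda$, $\letrecname$) assign a single sized type $\typone$, which as a distribution type is $\distrelts{\typone^1}$ and hence Dirac. The only non-trivial case is Sub: given $\distrelts{\typone^1} \subtypeleq \distrelts{\typtwo_{\indextwo}^{p'_{\indextwo}} \sep \indextwo \in \indexsettwo}$ via a map $\funcone : \{*\} \to \indexsettwo$, the second subtyping premise forces $p'_{\funcone(*)} \geq 1$, and combining this with the constraint $\sum_\indextwo p'_\indextwo \leq 1$ yields $p'_{\funcone(*)} = 1$ and $p'_\indextwo = 0$ elsewhere, so the target has singleton support of mass $1$, which is Dirac.

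For the second claim, since Dirac implies proper, the value case follows from the first. For non-values, we proceed by cases on the last rule. Sub with proper source yields a proper target by a symmetric computation: summing $\sum_{\indexone \in \funcone^{-1}(\indextwo)} p_\indexone \leq p'_\indextwo$ over $\indextwo$ gives $\sum_\indextwo p'_\indextwo \geq \sum_\indexone p_\indexone = 1$, and the constraint $\sum_\indextwo p'_\indextwo \leq 1$ forces equality. Choice forms $p \cdot \distrtypone + (1-p) \cdot \distrtyptwo$ from proper summands, of total mass $1$. Let combines an outer distribution $\distrelts{\typone_\indexone^{p_\indexone} \sep \indexone \in \indexsetone}$ of mass $1$ (by IH) with proper $\distrtypone_\indexone$'s, so $\sum_\indexone p_\indexone \cdot \distrtypone_\indexone$ has mass $\sum_\indexone p_\indexone = 1$. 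Case reads its output $\distrtypone$ from a value premise $\valthree \typsep \distrtypone$, so $\distrtypone$ is Dirac, hence proper, by the first claim.

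The main obstacle is App: its conclusion $\distrtypone$ sits nested inside the Dirac sized type $\typone \typarrow \distrtypone$ of a value premise. Diracness of the whole function type is silent about nested distribution types, so the strengthened invariant is essential: applied to the premise of App, it delivers properness of the inner $\distrtypone$. A routine check confirms that the size-variable substitution used in the $\letrecname$ rule preserves properness of all nested distribution types, since it does not touch probabilities, which closes both inductions.
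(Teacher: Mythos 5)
Your overall strategy is sound, and it is considerably more explicit than the paper's own proof, which is literally the one-liner ``immediate inspection of the rules''. Your case analysis for the value rules, the arithmetic for Sub in both directions, and the identification of App as the one rule where properness of a \emph{nested} distribution type is needed are all correct and match what a careful reading of Figure~\ref{figure/affine-distribution-type-system} requires.

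However, the strengthened invariant is stated too strongly, and as written the induction would break before it ever reaches App. You claim that every distribution type occurring in the distribution context $\contextdistrone$ is proper, but the type system deliberately produces \emph{sub-proper} distribution contexts: the probabilistic sum of contexts is defined so that $\left(\varone\typsep \distrtypone\right)\choice_p\emptyset = \varone\typsep p\cdot\distrtypone$, which has mass $p<1$ whenever the recursion variable occurs in only one branch (this is exactly what happens for $\termnat$), and the $\letrecname$ rule likewise assigns $\funcone$ a distribution $\distrelts{(\Nat^{\sizeone_\indextwo}\typarrow\cdots)^{p_\indextwo}}$ with $\sum_\indextwo p_\indextwo$ possibly strictly below $1$. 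So the Choice and $\letrecname$ cases of your induction cannot re-establish the invariant for $\contextdistrone$. Relatedly, the clause of your invariant concerning $\contextsizedone$ (and distribution types nested inside the sized types it assigns) is not something the induction can \emph{prove}, since contexts are given rather than derived; it has to be taken as a well-formedness hypothesis on the initial context --- and indeed without some such hypothesis the second bullet of the lemma is false as stated (Var followed by App on a context entry $\varone\typsep\Nat\typarrow\distrelts{(\Nat)^{1/2}}$ yields a non-proper conclusion). The repair is straightforward: drop $\contextdistrone$ from the invariant entirely (App never consults it) and demote the $\contextsizedone$ clause to an assumption on the root context; with that weakening your argument goes through.
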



\longv{
\begin{proof}
 Immediate inspection of the rules.
\end{proof}
}
\longv{
\subsection{Proof of Proposition~\ref{proposition/decidability-ast}}
\label{section:proof-oc-mdp}
We now prove Proposition~\ref{proposition/decidability-ast} by reducing sized walks to
deterministic one-counter Markov decision processes (DOC-MDPs), and using then a result 
of~\cite{bradzdil-et-al:one-counter-markov-decision-processes} to conclude. 
Please note that in~\cite{bradzdil-et-al:one-counter-markov-decision-processes}
the Markov decision processes are more general, as they allow non-determinism.
They are called one-counter Markov decision processes (OC-MDPs), and contain 
in particular all the DOC-MDPs.
We omit this feature in our presentation.

\begin{definition}[Markov Decision Process]
 A Markov decision process (MDP) is a tuple $\left(\markovvertices,\markovtransition,\markovproba\right)$ such that $\markovvertices$ is a 
 finite-or-countable set of vertices, $\markovtransition \,\subseteq\, \markovvertices \times \markovvertices$ is a total transition relation, 
 and $\markovproba$ is a probability assignment mapping each $\markovverticeone \in \markovvertices$ to a probability distribution associating
 a rational and non-null probability to each edge outgoing of $\markovverticeone$. These distributions are moreover required to sum to 1.
 %
\end{definition}

\begin{definition}[Deterministic One-Counter Markov Decision Process]
 A \emph{deterministic one-counter Markov decision process} (DOC-MDP) is a tuple 
 $\left(\states,\transitionzero,\transitionsupzero,\probatransitionzero,\probatransitionsupzero\right)$
 such that:
 \begin{itemize}
  \item $\states$ is a finite set of states,
  \item $\transitionzero \subseteq \states \times \{0,1\} \times \states$ and 
  $\transitionsupzero \subseteq \states \times \{-1,0,1\} \times \states$ are sets of \emph{zero} and \emph{positive} transitions,
  satisfying that every $q \in \states$ has at least a zero and a positive outgoing transition,
  \item $\probatransitionzero$ (resp. $\probatransitionsupzero$) is a probability assignment mapping every
  $q \in \states$ to a probability distribution over the outgoing transitions of $\transitionzero$
  (resp. $\transitionsupzero$) from $q$. These distributions are required to attribute a non-null, rational probability to every outgoing
  transition, and to sum to 1.
 \end{itemize}

\end{definition}

\begin{definition}[Induced Markov Decision Process]
 A DOC-MDP $\left(\states,\transitionzero,\transitionsupzero,\probatransitionzero,\probatransitionsupzero\right)$
 induces a MDP $\left(\states \times \NN, \markovtransition, \markovproba\right)$ such that, for 
 $q \in \states$ and $n \in \NN$:
 \begin{itemize}
  \item for every state $q'$ such that $(q,m,q') \in \transitionzero$, $(q,0) \markovtransition (q',m)$, and the probability of this transition
  is the one attributed by $\probatransitionzero(q)$ to the transition $(q,m,q')$,
  \item and for every state $q'$ such that $(q,m,q') \in \transitionsupzero$, $(q,n) \markovtransition (q',n+m)$, and the probability of this transition
  is the one attributed by $\probatransitionsupzero(q)$ to the transition $(q,m,q')$,
 \end{itemize}
 This MDP is said to \emph{terminate} when it reaches the value counter $0$ in \emph{any} state $q \in \states$.
\end{definition}

Recall that, by definition, $|m| \leq 1$. This is the only restriction to overcome (using intermediate states)
to encode sized walks in DOC-MDPs, so that the MDP they induce coincide with the original sized walk.
We will then obtain the result of polynomial time decidability of termination with probability $1$
using the following proposition:

\begin{proposition}[\cite{bradzdil-et-al:one-counter-markov-decision-processes}, Theorem 4.1]
 It is decidable in polynomial time whether the MDP induced by an OC-MDP --- and thus by a DOC-MDP --- terminates with probability $1$.
\end{proposition}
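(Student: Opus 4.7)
The plan is to reduce the question to the analogous problem for deterministic one-counter Markov decision processes (DOC-MDPs), and then invoke Brázdil et al.'s polynomial-time decidability result (Theorem 4.1 of~\cite{bradzdil-et-al:one-counter-markov-decision-processes}) as a black box. Recall that a sized walk is specified by a finite $\indexsetone \subseteq_{\mathit{fin}} \NN$ and a family $(p_\indexone)_{\indexone \in \indexsetone}$ with $\sum p_\indexone \leq 1$, and that its only source of non-trivial dynamics occurs at positive counter values: from $s+1$ one jumps to $s+\indexone$ with probability $p_\indexone$, or to $0$ with the residual probability, while $0$ is stationary. The value $0$ plays the role of the ``terminated'' configuration, so asking for AST of the sized walk is exactly asking whether the induced infinite Markov chain reaches counter value $0$ with probability $1$ from every initial state.

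The first step is to present the sized walk as a DOC-MDP $(\states, \transitionzero, \transitionsupzero, \probatransitionzero, \probatransitionsupzero)$ whose induced MDP terminates iff the sized walk is AST. Since DOC-MDPs constrain counter increments to lie in $\{-1,0,+1\}$, while $\indexsetone$ may contain integers larger than $1$, the natural encoding cannot be done in a single state. I would introduce one ``main'' state $\mainstate$ together with, for each $\indexone \in \indexsetone$ with $\indexone \geq 2$, a chain of $\indexone - 1$ intermediate states whose sole purpose is to add $+1$ to the counter deterministically (positive transition of weight $1$) until the desired total increment $\indexone - 1$ (corresponding to a jump from $s+1$ to $s+\indexone$) is reached, at which point control returns to $\mainstate$. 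The jumps $\indexone = 0$ and $\indexone = 1$ are handled directly at $\mainstate$ by positive transitions of weights $-1$ and $0$ respectively; the residual mass $1 - \sum_\indexone p_\indexone$ becomes a positive transition of weight $-1$ going to a sink state equipped with a zero-transition self-loop, which encodes the jump to $0$. The zero-transitions from $\mainstate$ (required by the definition of DOC-MDP) are chosen to be a self-loop of probability $1$, reflecting that $0$ is stationary in the sized walk.

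The second step is to verify that the MDP induced by this DOC-MDP, when projected onto counter values and when the intermediate states are collapsed, coincides exactly with the sized walk: the probability of going from counter $s+1$ to counter $s + \indexone$ via the chain of intermediate states is $p_\indexone \cdot 1 \cdot \cdots \cdot 1 = p_\indexone$, and the counter never touches $0$ during such a detour provided $s \geq 0$ (for $s = 0$ the start $s+1 = 1$ already guarantees enough room, because intermediate states only ever increase the counter). Hence AST of the sized walk is equivalent to termination with probability $1$ of the induced MDP, which is exactly the property decided by Theorem~4.1 of~\cite{bradzdil-et-al:one-counter-markov-decision-processes}. The encoding uses $O\bigl(\sum_{\indexone \in \indexsetone} \indexone\bigr)$ states, which is polynomial in the input size of the sized walk, and the rational transition probabilities are copied verbatim from the $p_\indexone$, so the reduction itself is polynomial.

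The main obstacle is the mismatch between the bounded counter increments allowed by the DOC-MDP definition ($|m| \leq 1$) and the arbitrary positive integers appearing in $\indexsetone$; the intermediate-state gadget described above resolves this, but one has to argue carefully that inserting these auxiliary states does not artificially change termination probabilities, in particular that the counter does not accidentally drop to $0$ while traversing a gadget (which it cannot, since all gadget transitions have weight $+1$) and that the gadget is entered with the correct probability $p_\indexone$. Once this bookkeeping is in place, polynomial-time decidability follows immediately from the cited theorem.
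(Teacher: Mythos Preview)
You have misidentified the statement to be proved. The proposition at hand is a \emph{cited} result: it is Theorem~4.1 of Br\'azdil et al., quoted verbatim, and the paper offers no proof for it --- it is used as a black box. Your proposal does not prove this proposition; in fact, it \emph{invokes} it (``then invoke Br\'azdil et al.'s polynomial-time decidability result (Theorem~4.1 \ldots) as a black box''), which would be circular if this were the target.

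What you have actually sketched is the paper's proof of a \emph{different} statement, namely Proposition~\ref{proposition/decidability-ast} (decidability of AST for \emph{sized walks}). That argument --- encode a sized walk as a DOC-MDP by inserting chains of intermediate $+1$-states to simulate jumps of magnitude $\geq 2$, add a sink state for the residual mass, check that the induced MDP coincides with the sized walk, then apply the cited theorem --- is indeed the reduction the paper carries out in the surrounding subsection. So your write-up is essentially correct and matches the paper's approach, just attached to the wrong proposition.
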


We now encode sized walks as DOC-MDPs:

\begin{definition}[DOC-MDP Corresponding to a Sized Walk]
 Consider the sized walk on $\NN$ associated to $\left(\indexsetone,\left(p_\indexone\right)_{\indexone \in \indexsetone}\right)$.
 We define the corresponding DOC-MDP
 $\left(\states,\transitionzero,\transitionsupzero,\probatransitionzero,\probatransitionsupzero\right)$
 as follows. Let us first consider the following set of states:
 $$
 \states = \{\mainstate,\,\zerostate\}\ \cup\ \left\{q_1,\,\ldots,q_{\indextwo-2} \sep \indextwo = \max\{\indexone \in \indexsetone
 \sep \indexone \geq 2\} \right\}
 $$
 where $\mainstate$ is the ``main'' state of the DOC-MDP and the other ones will be used for encoding purposes.
 We define the transitions of $\transitionsupzero$ as follows:
 \begin{itemize}
  \item we add the transition $(\zerostate,-1,\zerostate)$ with probability $1$,
  \item for every $\indextwo \in \left\{2,\ldots,\max\{\indexone \in \indexsetone
 \sep \indexone \geq 2\} - 2\right\}$, we add the transition $(q_{\indextwo},1,q_{\indextwo -1})$ with probability $1$,
  \item we add the transition $(q_1,1,\mainstate)$ with probability $1$,
  \item for $\indexone \in \indexsetone \cap \{0,1,2\}$, we add the transition $(\mainstate,\indexone -1,\mainstate)$ 
  and attribute it probability $p_{\indexone}$,
  \item for $\indexone \in \indexsetone \setminus \{0,1,2\}$, we add the transition $(\mainstate,1,q_{\indexone -2})$ 
  and attribute it probability $p_{\indexone}$,
  \item if $1 - \left(\sum_{\indexone \in \indexsetone}\ p_{\indexone}\right) > 0$,
  we add the transition $(\mainstate,-1,\zerostate)$ with probability $1 -
    \left(\sum_{\indexone \in \indexsetone}\ p_{\indexone}\right)$.
 \end{itemize}
 Finally, we define $\transitionzero$ as follows: for every state $q \in \states$, we add the transition $(q,0,q)$ and
  attribute it probability $1$.

\end{definition}

It is easily checked that, by construction, these DOC-MDP induce the same Markov decision processes as sized walks:

\begin{proposition}
 The sized walk on $\NN$ associated to $\left(\indexsetone,\left(p_\indexone\right)_{\indexone \in \indexsetone}\right)$
 coincides with the induced MDP of the corresponding DOC-MDP.
\end{proposition}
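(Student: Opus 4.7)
The plan is to establish, state by state, that any trajectory of the sized walk corresponds to a unique trajectory of the induced MDP with the same probability, and vice versa, after absorbing the deterministic auxiliary steps used in the encoding. Formally, I would identify every positive state $n\in\NN$ of the sized walk with the MDP state $(\mainstate,n)$, and identify the absorbing state $0$ with any configuration whose counter is $0$. With this identification in hand, the claim reduces to showing that one probabilistic step of the sized walk from $s+1$ is realised by a \emph{macro-step} of the induced MDP starting at $(\mainstate,s+1)$ and ending either at $(\mainstate,s+j)$ for some $j\in\indexsetone$, or at a counter-zero configuration, with exactly the prescribed probability.

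The verification splits by cases, following the construction of the transitions in $\transitionsupzero$. When $j\in\indexsetone\cap\{0,1,2\}$, the transition $(\mainstate,j-1,\mainstate)$ lifts in the induced MDP to $(\mainstate,s+1)\markovtransition(\mainstate,s+j)$ with probability $p_j$, which matches the sized walk exactly. When $j\in\indexsetone\setminus\{0,1,2\}$, the probabilistic step $(\mainstate,s+1)\markovtransition(q_{j-2},s+2)$ with probability $p_j$ is followed by the deterministic chain $(q_{j-2},s+2)\markovtransition(q_{j-3},s+3)\markovtransition\cdots\markovtransition(q_1,s+j-1)\markovtransition(\mainstate,s+j)$, each transition firing with probability $1$; the composite macro-step therefore has probability $p_j$ and lands in $(\mainstate,s+j)$, as required. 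Finally, in the case where $1-\sum_{\indexone\in\indexsetone}p_\indexone>0$, the transition $(\mainstate,-1,\zerostate)$ lifts to $(\mainstate,s+1)\markovtransition(\zerostate,s)$, and the self-loop $(\zerostate,-1,\zerostate)$ of probability $1$ deterministically decrements the counter down to zero; this macro-path realises the sized walk's instantaneous jump to $0$ with the correct probability.

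I would then observe that from the stationary state $0$ of the sized walk, the induced MDP is also trapped at counter $0$, since $\transitionzero$ only contains self-loops on every $q\in\states$ with probability $1$, so both sides stay at $0$ forever. Summing over all cases gives that, for every initial $n>0$, the distribution over next ``main-state configurations'' reached from $(\mainstate,n)$ in the induced MDP coincides with the one-step distribution of the sized walk from $n$; iterating, the full distributions over trajectories (modulo contraction of auxiliary deterministic segments) agree, and in particular the probabilities of ever reaching $0$ from $n$ coincide.

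The only genuinely delicate point is to make precise what ``coincides'' means, since the induced MDP visits auxiliary states $q_i$ and $\zerostate$ that have no counterpart in the sized walk. I would treat these auxiliary segments as internal, unobservable steps, and phrase the result as a probabilistic bisimulation between the sized walk and the induced MDP quotiented by the equivalence that collapses each maximal deterministic path originating in $(\mainstate,\cdot)$ to its endpoint. Under this reading the proof is a routine case analysis on the transitions of the DOC-MDP; the main obstacle is simply to check that no probability mass is lost or created along the auxiliary paths, which follows because every non-$\mainstate$ and non-counter-zero configuration has a unique outgoing transition of probability $1$.
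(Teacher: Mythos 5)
Your proposal is correct and follows exactly the route the paper intends: the paper itself offers no argument beyond ``it is easily checked by construction,'' and your case analysis on the transitions of $\transitionsupzero$ (direct moves for $\indexone\in\indexsetone\cap\{0,1,2\}$, deterministic chains through the $q_\indextwo$ for larger increments, decrementing through $\zerostate$ for the jump to $0$) is precisely that check. Your extra care in formalising ``coincides'' as a bisimulation after collapsing the probability-$1$ auxiliary segments is a welcome sharpening of the paper's informal statement, not a departure from it.
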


This allows us to deduce from the result of \cite{bradzdil-et-al:one-counter-markov-decision-processes}
the polynomial time decidability of AST for sized walks:

\begin{corollary}[Proposition~\ref{proposition/decidability-ast}]
  It is decidable in polynomial time whether a sized walk is almost-surely terminating.
\end{corollary}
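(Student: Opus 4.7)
The plan is to reduce sized walks to deterministic one-counter Markov decision processes (DOC-MDPs), for which termination with probability $1$ is known to be decidable in polynomial time by Theorem~4.1 of Br\'azdil et al.~\cite{bradzdil-et-al:one-counter-markov-decision-processes}. A DOC-MDP operates on a finite set of control states augmented with a non-negative integer counter; at each step, the counter changes by $-1$, $0$, or $+1$, with different transition distributions depending on whether the counter is zero or strictly positive. Given a sized walk on $\NN$ associated to parameters $\left(\indexsetone,\left(p_\indexone\right)_{\indexone \in \indexsetone}\right)$, I will build a DOC-MDP whose induced infinite Markov chain on (state, counter) pairs coincides, in the probabilistic sense, with the sized walk.

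First, I would introduce a main control state $\mainstate$ (corresponding to non-zero values of the counter in the sized walk) and a stationary state $\zerostate$ implementing the absorbing behaviour at $0$. From $\mainstate$, the positive-counter transition distribution mimics the distribution $\left(p_\indexone\right)_{\indexone \in \indexsetone}$: one step of the sized walk from $n+1$ to $n+\indexone - 1$ is reproduced by moving to $\mainstate$ while updating the counter by $\indexone - 1$. In addition, the residual probability $1 - \sum_{\indexone \in \indexsetone} p_\indexone$ is assigned to a transition to $\zerostate$ with counter update $-1$. At counter $0$, the process loops in $\zerostate$ (resp.\ in $\mainstate$, harmlessly) to ensure totality.

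The main technical obstacle is that sized walks may perform counter jumps of magnitude larger than $1$ in a single step (e.g.\ when $\indexone = 5$, the counter grows by $4$), whereas DOC-MDP transitions only allow $|m| \leq 1$. I would handle this by introducing auxiliary intermediate control states $q_1, \dots, q_{k-2}$, where $k = \max\{\indexone \in \indexsetone \sep \indexone \geq 2\}$, arranged as a deterministic chain $q_j \mapsto q_{j-1}$ each with counter update $+1$ and probability $1$, ending with $q_1 \mapsto \mainstate$. A large jump is thus simulated by entering this chain at the appropriate $q_{\indexone - 2}$. Because these intermediate transitions are deterministic and do not permit any ``escape'' before completing the jump, the induced Markov chain, when projected onto visits to $\mainstate$ (and $\zerostate$), is probabilistically identical to the sized walk.

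Finally, a straightforward check shows that the constructed DOC-MDP has size polynomial in the description of the sized walk (a handful of states per element of $\indexsetone$, rational transition probabilities coming from the $p_\indexone$), and that its induced MDP terminates with probability $1$ in the sense of~\cite{bradzdil-et-al:one-counter-markov-decision-processes} if and only if the sized walk reaches $0$ with probability $1$ from any initial state. Invoking the cited polynomial-time decidability result for AST of OC-MDPs concludes the proof.
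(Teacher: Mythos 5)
Your proposal is correct and follows essentially the same route as the paper: encode the sized walk as a deterministic one-counter MDP with a main state, an absorbing $\zerostate$, and a chain of auxiliary states to simulate counter jumps of magnitude greater than $1$, then invoke Theorem~4.1 of~\cite{bradzdil-et-al:one-counter-markov-decision-processes}. (The only blemish is a harmless off-by-one in your prose — the sized walk moves from $n+1$ to $n+\indexone$, not $n+\indexone-1$ — but the counter update of $\indexone-1$ you actually use is the correct one.)
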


}
\longv{\section{Subject Reduction for Monadic Affine Sized Types.}}
\shortv{\bigbreak

\noindent
\paragraph{Subject Reduction for Monadic Affine Sized Types.}}
The type system enjoys a form of subject reduction which can be understood from the following example.
Remark that the type system allows to derive the sequent
\begin{equation}
\label{eq:example-subject-reduction1}
\emptyset \contextsep \emptyset \proves \natzero \choice \natzero \typsep \distrelts{\left(\Nat^{\sizesucc{\sizeone}}\right)^{\frac{1}{2}},
\left(\Nat^{\sizesucc{\sizesucc{\sizetwo}}}\right)^{\frac{1}{2}}}
\end{equation}
The distribution type typing $\natzero \choice \natzero$ contains information about the types of the two probabilistic
branches of $\natzero \choice \natzero$, which will be separated into two different terms during the reduction, but these
two different terms will not be distinguished by the operational semantics: 
$\semantics{\natzero \choice \natzero} \,=\,\distrelts{\natzero^1}$.
The subject reduction procedure needs to keep track of more information, namely that 
$\natzero \choice \natzero$ reduced to $\natzero$ with type $\Nat^{\sizesucc{\sizeone}}$
in a copy, and again to $\natzero$ but with type $\Nat^{\sizesucc{\sizesucc{\sizetwo}}}$ in the other copy.
To formalize this distinction, we require a few preliminary definitions: the \emph{typed} term $\natzero \choice \natzero$
will reduce to the following \emph{closed distribution of typed terms}:
\begin{equation}
\label{eq:example-subject-reduction2}
\distrelts{\left(\natzero \typsep \Nat^{\sizesucc{\sizeone}}\right)^{\frac{1}{2}},
\left(\natzero \typsep \Nat^{\sizesucc{\sizesucc{\sizetwo}}}\right)^{\frac{1}{2}}}
\end{equation}
which types the \emph{pseudo-representation} $\pseudorep{\natzero^{\frac{1}{2}},\natzero^{\frac{1}{2}}}$
of $\semantics{\natzero \choice \natzero}$.
The quantity which will be preserved during the reduction is the \emph{average} type of (\ref{eq:example-subject-reduction2}):
$$
\frac{1}{2} \cdot \distrelts{\left(\Nat^{\sizesucc{\sizeone}}\right)^1} \ +\ 
\frac{1}{2} \cdot \distrelts{\left(\Nat^{\sizesucc{\sizesucc{\sizetwo}}}\right)^1}
\ \ =\ \ 
\distrelts{\left(\Nat^{\sizesucc{\sizeone}}\right)^{\frac{1}{2}},
\left(\Nat^{\sizesucc{\sizesucc{\sizetwo}}}\right)^{\frac{1}{2}}}
$$
which we call the \emph{expectation type} of (\ref{eq:example-subject-reduction2}), and which coincides with the type of
the initial term (\ref{eq:example-subject-reduction1}).

\begin{definition}[Distributions of Distribution Types, of Typed Terms]
\begin{varitemize}
 \item 
   A \emph{distribution of distribution types} is a distribution
   $\distrone$ over the set of distribution types, and such that
   $\distrtypone,\,\distrtyptwo \in \supp{\distrone}
   \ \Rightarrow\ \underlying{\distrtypone}\,=\,\underlying{\distrtyptwo}$.
 \item 
   A \emph{distribution of typed terms}, or \emph{typed distribution},
   is a distribution of typing sequents which are derivable in the
   monadic, affine sized type system. The representation of such
   a distribution has thus the following form:
   $
   \distrelts{\left(\contextsizedone_{\indexone} \contextsep \contextdistrone_{\indexone} \proves \termone_{\indexone}
     \typsep \distrtypone_{\indexone}\right)^{p_{\indexone}}  \sep \indexone \in \indexsetone}.
   $
   In the sequel, we restrict to the \emph{uniform} case in which all
   the terms appearing in the sequents are typed with distribution
   types of the same fixed underlying type.  \longv{We denote this unique simple type
   $\simpletypone$ as $\underlying{\vec{\distrtypone}}$.}
 \item 
   A \emph{distribution of closed typed terms}, or \emph{closed typed distribution}, is a typed distribution
   in which all contexts are $\emptyset \contextsep \emptyset$.
   In this case, we simply write the representation of the distribution as
   $\distrelts{\left(\termone_{\indexone} \typsep \distrtypone_{\indexone}\right)^{p_{\indexone}} \sep \indexone \in \indexsetone}$,
   or even as  $\left(\termone_{\indexone} \typsep \distrtypone_{\indexone}\right)^{p_{\indexone}}$
   when the indexing is clear from context.
   We write pseudo-representations in a similar way.
  \longv{
  \item The \emph{underlying term distribution} of a closed typed distribution 
  $\distrelts{\left(\termone_{\indexone} \typsep \distrtypone_{\indexone}\right)^{p_{\indexone}}
  \sep \indexone \in \indexsetone}$
  is the distribution $\distrelts{\left(\termone_{\indexone}\right)^{p_{\indexone}}\sep \indexone \in \indexsetone}$.
  }
\end{varitemize}
\end{definition}

\begin{definition}[Expectation Types]
 Let $\left(\termone_{\indexone} \typsep \distrtypone_{\indexone}\right)^{p_{\indexone}}$ be a closed typed distribution.
 We define its \emph{expectation type} as the distribution type 
 $
 \expectype{\left(\termone_{\indexone} \typsep \distrtypone_{\indexone}\right)^{p_{\indexone}}}
 = \sum_{\indexone \in \indexsetone}\ p_{\indexone} \distrtypone_{\indexone}
 $.
\end{definition}

\longv{
\begin{lemma}
 Expectation is linear:
 \begin{itemize}
  \item $\expectype{\left(\termone_{\indexone} \typsep \distrtypone_{\indexone}\right)^{p_{\indexone}} + 
  \left(\termtwo_{\indextwo} \typsep \distrtyptwo_{\indextwo}\right)^{p'_{\indextwo}}}\ =\ 
  \expectype{\left(\termone_{\indexone} \typsep \distrtypone_{\indexone}\right)^{p_{\indexone}}}
  + \expectype{\left(\termtwo_{\indextwo} \typsep \distrtyptwo_{\indextwo}\right)^{p'_{\indextwo}}}$, 
  \item $\expectype{\left(\termone_{\indexone} \typsep \distrtypone_{\indexone}\right)^{pp'_{\indexone}}}\ =\ 
  p \cdot \expectype{\left(\termone_{\indexone} \typsep \distrtypone_{\indexone}\right)^{p'_{\indexone}}}$.
 \end{itemize}

\end{lemma}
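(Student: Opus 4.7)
The plan is to prove both equalities by direct unfolding of the definition of $\expectype{\cdot}$, since the statement is essentially a reflection, at the level of typed distributions, of the elementary linearity of finite weighted sums of distribution types. Recall that for a closed typed distribution $\left(\termone_{\indexone} \typsep \distrtypone_{\indexone}\right)^{p_{\indexone}}$, the expectation type is defined as $\sum_{\indexone \in \indexsetone}\ p_{\indexone}\, \distrtypone_{\indexone}$, where the weighted sum is the operation on distribution contexts/types introduced just above in the definition of weighted sums. The linearity we must check therefore amounts to compatibility of $\expectype{\cdot}$ with the two formal operations on typed distributions, namely disjoint sum of pseudo-representations and scaling of probabilities.

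For the first equation, I would first observe that the sum appearing on the left-hand side is a pseudo-representation of the form $\pseudorep{(\termone_{\indexone} \typsep \distrtypone_{\indexone})^{p_{\indexone}} \sep \indexone \in \indexsetone} + \pseudorep{(\termtwo_{\indextwo} \typsep \distrtyptwo_{\indextwo})^{p'_{\indextwo}} \sep \indextwo \in \indexsettwo}$, which by the multiset-like definition of pseudo-representations is just the single pseudo-representation indexed by the disjoint union $\indexsetone \uplus \indexsettwo$. Applying the definition of $\expectype{\cdot}$ to this combined family, I obtain $\sum_{\indexone \in \indexsetone}\ p_{\indexone}\,\distrtypone_{\indexone} \,+\, \sum_{\indextwo \in \indexsettwo}\ p'_{\indextwo}\,\distrtyptwo_{\indextwo}$, which by definition equals $\expectype{(\termone_{\indexone} \typsep \distrtypone_{\indexone})^{p_{\indexone}}} + \expectype{(\termtwo_{\indextwo} \typsep \distrtyptwo_{\indextwo})^{p'_{\indextwo}}}$.

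For the second equation, I simply unfold $\expectype{(\termone_{\indexone} \typsep \distrtypone_{\indexone})^{pp'_{\indexone}}}$ to $\sum_{\indexone \in \indexsetone}\ (pp'_{\indexone})\, \distrtypone_{\indexone}$ and factorise the scalar $p$ out of the weighted sum, giving $p \cdot \sum_{\indexone \in \indexsetone}\ p'_{\indexone}\,\distrtypone_{\indexone} = p \cdot \expectype{(\termone_{\indexone} \typsep \distrtypone_{\indexone})^{p'_{\indexone}}}$, as desired.

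The only subtle point — not really an obstacle, but something to check carefully — is that both sides of each equation must be well-defined as distribution types, which requires that the pseudo-representations being combined share a common underlying affine type. This is guaranteed by our standing uniformity convention on typed distributions ($\underlying{\distrtypone_\indexone} = \underlying{\distrtyptwo_\indextwo}$ for all indices), so the weighted sums on both sides of each equation are indeed defined, and the two computations above match term-by-term once the pseudo-representations are expanded. Hence the lemma follows.
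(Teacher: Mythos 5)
Your proof is correct and follows the only natural route: the paper in fact states this lemma without proof, treating it as an immediate consequence of the definition of $\expectype{\cdot}$ as a weighted sum, which is precisely the unfolding you carry out. Your added remark about well-definedness via the uniformity convention is a reasonable (if unnecessary) precaution and does not change the argument.
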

}

\longv{

\subsection{Subtyping Probabilistic Sums}

}

%
%
%

%
%
%


\longv{
\begin{lemma}[Subtyping Probabilistic Sums]
\label{lemma/subtyping-prob-sum}
 Suppose that $\distrsum{(\distrtyptwo \choice_p \distrtypthree)}\,=\,1$ and that
 $\distrtyptwo \choice_p \distrtypthree \subtypeleq \distrtypone$. Then there exists $\distrtyptwo'$ and $\distrtypthree'$
 such that $\distrtypone\,=\,\distrtyptwo' \choice_p \distrtypthree'$, $\distrtyptwo \subtypeleq \distrtyptwo'$,
 and that $\distrtypthree \subtypeleq \distrtypthree'$. Note that this implies that
 $\supp{\distrtyptwo'} \cup \supp{\distrtypthree'}\,=\,\supp{\distrtypone}$.
\end{lemma}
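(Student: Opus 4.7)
The strategy is to explicitly split the ``target'' distribution type $\distrtypone$ along the structure provided by the witness of the subtyping $\distrtyptwo \choice_p \distrtypthree \subtypeleq \distrtypone$, then verify the three required properties. First, I would write $\distrtypone = \distrelts{\sigma_k^{r_k} \sep k \in K}$, $\distrtyptwo = \distrelts{\tau_i^{p_i} \sep i \in I}$, $\distrtypthree = \distrelts{\tau'_j^{q_j} \sep j \in J}$, and $\distrtyptwo \choice_p \distrtypthree = \distrelts{\theta_\ell^{s_\ell} \sep \ell \in L}$, where each $s_\ell$ collects $\sum_{i : \tau_i = \theta_\ell} p\cdot p_i + \sum_{j : \tau'_j = \theta_\ell} (1-p)\cdot q_j$. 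The subtyping hypothesis then supplies a map $f\colon L \to K$ such that $\theta_\ell \subtypeleq \sigma_{f(\ell)}$ and $\sum_{\ell \in f^{-1}(k)} s_\ell \leq r_k$.

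The first useful observation is that the inequality in the last display is in fact an equality: since $\distrsum{\distrtyptwo \choice_p \distrtypthree} = \sum_\ell s_\ell = 1$ and $\distrsum{\distrtypone} = \sum_k r_k \leq 1$, summing the subtyping constraints over $k$ forces $\sum_k r_k = 1$ and hence $\sum_{\ell \in f^{-1}(k)} s_\ell = r_k$ for every $k \in K$. Now I would define, for each $k \in K$,
\[
r_k^{(2)} \ = \ \sum_{\ell \in f^{-1}(k)} \ \sum_{i : \tau_i = \theta_\ell} p_i,
\qquad
r_k^{(3)} \ = \ \sum_{\ell \in f^{-1}(k)} \ \sum_{j : \tau'_j = \theta_\ell} q_j,
\]
and set $\distrtyptwo' = \distrelts{\sigma_k^{r_k^{(2)}} \sep k \in K}$ and $\distrtypthree' = \distrelts{\sigma_k^{r_k^{(3)}} \sep k \in K}$. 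A direct calculation shows $p \cdot r_k^{(2)} + (1-p)\cdot r_k^{(3)} = \sum_{\ell \in f^{-1}(k)} s_\ell = r_k$, which is exactly the identity $\distrtypone = \distrtyptwo' \choice_p \distrtypthree'$; moreover $\sum_k r_k^{(2)} = \sum_i p_i \leq 1$ and similarly for $r_k^{(3)}$, so $\distrtyptwo'$ and $\distrtypthree'$ are bona fide distribution types, all over the common underlying simple type $\underlying{\distrtypone}$.

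To prove $\distrtyptwo \subtypeleq \distrtyptwo'$, I would define $g\colon I \to K$ by setting $g(i) = f(\ell)$ where $\ell$ is the unique index with $\tau_i = \theta_\ell$. Then $\tau_i = \theta_\ell \subtypeleq \sigma_{f(\ell)} = \sigma_{g(i)}$, and
\[
\sum_{i \in g^{-1}(k)} p_i \ = \ \sum_{\ell \in f^{-1}(k)} \sum_{i : \tau_i = \theta_\ell} p_i \ = \ r_k^{(2)},
\]
which meets the quantitative condition of the subtyping rule. An identical argument, exchanging the roles of $\distrtyptwo$ and $\distrtypthree$, yields $\distrtypthree \subtypeleq \distrtypthree'$. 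The statement about supports then follows because $\distrtyptwo' $ and $\distrtypthree'$ are indexed by the full $K$ (dropping indices with $r_k^{(2)} = 0$ or $r_k^{(3)} = 0$ if one insists on strictly positive weights, which does not affect the two subtyping witnesses).

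The only subtlety requiring care is the conversion between the pseudo-representation of $\distrtyptwo \choice_p \distrtypthree$ (the formal sum $p\cdot\distrtyptwo + (1-p)\cdot\distrtypthree$) and its representation as an indexed family $\{\theta_\ell^{s_\ell}\}$, since equal types coming from $\distrtyptwo$ and $\distrtypthree$ are merged. I expect this bookkeeping to be the main, if minor, obstacle; everything downstream is a symbol push. The properness hypothesis $\distrsum{(\distrtyptwo \choice_p \distrtypthree)} = 1$ is used exactly once, to upgrade the sub-probability inequalities provided by subtyping into the equality needed for $\distrtyptwo' \choice_p \distrtypthree' = \distrtypone$.
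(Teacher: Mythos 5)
Your proposal is correct and follows essentially the same route as the paper: both use the witness function of the subtyping rule together with the $\distrsum{(\distrtyptwo \choice_p \distrtypthree)}=1$ hypothesis to turn the subtyping inequalities into equalities, and then build $\distrtyptwo'$ and $\distrtypthree'$ by pushing the weights of $\distrtyptwo$ and $\distrtypthree$ along that witness onto the corresponding supertypes of $\distrtypone$. The only cosmetic difference is that you re-index $\distrtyptwo'$ and $\distrtypthree'$ over the support of $\distrtypone$ with aggregated weights, whereas the paper keeps the original index sets of $\distrtyptwo$ and $\distrtypthree$ and merely replaces each type by its assigned supertype — the resulting distributions are the same.
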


}

\longv{
\begin{proof}
Let $\distrtyptwo\ =\ \distrelts{\typone_{\indexone}^{p'_{\indexone}}\sep \indexone \in \indexsetone}$
and $\distrtypthree\ =\ \distrelts{\typtwo_{\indextwo}^{p''_{\indextwo}} \sep \indextwo \in \indexsettwo}$.
We assume, without loss of generality, that $\indexsetone$ and $\indexsettwo$ are chosen in such a way that, setting
$\indexsetthree\,=\,\indexsetone \cap \indexsettwo$, 
$$
\exists \left(\indexone,\indextwo\right) \in \indexsetone \times \indexsettwo,\ \ \typone_{\indexone}\,=\,\typtwo_{\indextwo} \ \ \Longleftrightarrow \ \ 
\indexone\,=\,\indextwo \in \indexsetthree.
$$
It follows that 
$$
\distrtyptwo \choice_p \distrtypthree \ \ =\ \ \distrelts{\typone_{\indexone}^{pp'_{\indexone}}\sep \indexone \in \indexsetone\setminus\indexsetthree}
\ + \ \distrelts{\typtwo_{\indextwo}^{(1-p)p''_{\indextwo}} \sep \indextwo \in \indexsettwo\setminus\indexsetthree}
\ +\ \distrelts{\typone_{\indexone}^{pp'_{\indexone}+(1-p)p''_{\indexone}}\sep \indexone \in \indexsetthree}
$$
Set $\distrtypone\ =\ \distrelts{\typthree_{\indexfour}^{p'''_{\indexfour}}\sep \indexfour \in \indexsetfour}$.
Since $\distrtyptwo \choice_p \distrtypthree \subtypeleq \distrtypone$ and
$\distrsum{(\distrtyptwo \choice_p \distrtypthree)}\,=\,1$, there exists
a decomposition
$$
\distrtypone\ \ =\ \ \pseudorep{\typthree_{\indexone}^{pp'_{\indexone}}\sep \indexone \in \indexsetone\setminus\indexsetthree}
\ + \ \pseudorep{\typthree_{\indextwo}^{(1-p)p''_{\indextwo}}\sep \indextwo \in \indexsettwo\setminus\indexsetthree}
\ + \ \pseudorep{\typthree_{\indexthree}^{pp'_{\indexone}+(1-p)p''_{\indexone}}\sep \indexthree \in \indexsetthree}
$$
(note that the supports of these distributions may have a non-empty intersection), and this decomposition
is such that $\forall \indexone \in \indexsetone,\ \ \typone_{\indexone} \subtypeleq \typthree_{\indexone}$
and $\forall \indextwo \in \indexsettwo,\ \ \typtwo_{\indextwo} \subtypeleq \typthree_{\indextwo}$.
We define
$\distrtyptwo'\ =\ \distrelts{\typthree_{\indexone}^{p'_{\indexone}}\sep \indexone \in \indexsetone}$
and $\distrtypthree'\ =\ \distrelts{\typthree_{\indextwo}^{p''_{\indextwo}} \sep \indextwo \in \indexsettwo}$
which satisfy $\distrtyptwo \subtypeleq \distrtyptwo'$ and $\distrtypthree \subtypeleq \distrtypthree'$
but also, by construction, $\distrtypone\,=\,\distrtyptwo' \choice_p \distrtypthree'$. 
\end{proof}

}

\longv{
\begin{corollary}
 \label{corollary/subtyping-probabilistic-sums}
 Suppose that $\distrtypone\,=\,\sum_{\indexone \in \indexsetone}\ p_{\indexone} \cdot \distrtypone_{\indexone}$ is a distribution such that
 $\distrtypone \subtypeleq \distrtyptwo$
 and that $\distrsum{\distrtypone} \,=\,1$.
 Then there exists a family $\left(\distrtyptwo_{\indexone}\right)_{\indexone \in \indexsetone}$ of distributions such that
 $\distrtyptwo\,=\,\sum_{\indexone \in \indexsetone} p_{\indexone} \cdot \distrtyptwo_{\indexone}$ and that, for all $\indexone \in \indexsetone$,
 $\distrtypone_{\indexone} \subtypeleq \distrtyptwo_{\indexone}$.
\end{corollary}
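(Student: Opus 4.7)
The plan is to prove this by induction on the cardinality of $\indexsetone$, using Lemma~\ref{lemma/subtyping-prob-sum} (the binary case) in the inductive step after renormalising.

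First, I would establish a preliminary observation: since $\distrsum{\distrtypone} = \sum_{\indexone \in \indexsetone} p_\indexone \cdot \distrsum{\distrtypone_\indexone} \leq \sum_\indexone p_\indexone \leq 1$, the equality $\distrsum{\distrtypone} = 1$ forces $\sum_\indexone p_\indexone = 1$ and each $\distrtypone_\indexone$ to be proper. This is what will allow the binary lemma to apply at each step.

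For the base case $|\indexsetone| = 1$, we have $\distrtypone = p_1 \cdot \distrtypone_1$, and by the above observation $p_1 = 1$, so we simply take $\distrtyptwo_1 := \distrtyptwo$. For the inductive step, pick any $\indexone_0 \in \indexsetone$. If $p_{\indexone_0} = 1$, reduce to the base case (all other summands have weight $0$ and can be given arbitrary subtype witnesses). Otherwise, define the renormalised tail
\[
\distrtypthree \;=\; \frac{1}{1 - p_{\indexone_0}} \sum_{\indexone \in \indexsetone \setminus \{\indexone_0\}} p_\indexone \cdot \distrtypone_\indexone,
\]
which is a proper distribution type since $\sum_{\indexone \neq \indexone_0} p_\indexone / (1 - p_{\indexone_0}) = 1$ and each $\distrtypone_\indexone$ is proper. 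Then $\distrtypone = \distrtypone_{\indexone_0} \choice_{p_{\indexone_0}} \distrtypthree$ and $\distrsum{(\distrtypone_{\indexone_0} \choice_{p_{\indexone_0}} \distrtypthree)} = 1$, so Lemma~\ref{lemma/subtyping-prob-sum} applied to $\distrtypone_{\indexone_0} \choice_{p_{\indexone_0}} \distrtypthree \subtypeleq \distrtyptwo$ yields a decomposition $\distrtyptwo = \distrtyptwo_{\indexone_0} \choice_{p_{\indexone_0}} \distrtyptwo'$ with $\distrtypone_{\indexone_0} \subtypeleq \distrtyptwo_{\indexone_0}$ and $\distrtypthree \subtypeleq \distrtyptwo'$.

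Now the induction hypothesis applied to $\distrtypthree = \sum_{\indexone \neq \indexone_0} \frac{p_\indexone}{1 - p_{\indexone_0}} \distrtypone_\indexone \subtypeleq \distrtyptwo'$ (which is proper and indexed by a set of smaller cardinality) produces a family $(\distrtyptwo_\indexone)_{\indexone \neq \indexone_0}$ with $\distrtyptwo' = \sum_{\indexone \neq \indexone_0} \frac{p_\indexone}{1 - p_{\indexone_0}} \distrtyptwo_\indexone$ and $\distrtypone_\indexone \subtypeleq \distrtyptwo_\indexone$. Substituting back gives
\[
\distrtyptwo \;=\; p_{\indexone_0} \cdot \distrtyptwo_{\indexone_0} \;+\; (1 - p_{\indexone_0}) \sum_{\indexone \neq \indexone_0} \frac{p_\indexone}{1 - p_{\indexone_0}} \distrtyptwo_\indexone \;=\; \sum_{\indexone \in \indexsetone} p_\indexone \cdot \distrtyptwo_\indexone,
\]
completing the induction. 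The only mildly delicate point is handling the degenerate cases ($p_{\indexone_0} = 1$, or a $p_\indexone = 0$ summand that the binary lemma does not natively see), but these are dispatched by choosing any type whose underlying simple type matches $\underlying{\distrtyptwo}$; no substantial obstacle arises since the hard combinatorial work is already encapsulated in Lemma~\ref{lemma/subtyping-prob-sum}.
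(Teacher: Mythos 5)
Your proof is correct and follows exactly the route the paper intends: the corollary is stated without proof as a consequence of Lemma~\ref{lemma/subtyping-prob-sum}, and your explicit induction on $|\indexsetone|$, peeling off one summand at a time via the binary lemma after renormalising the tail, is the natural way to make that derivation precise. The degenerate cases you flag ($p_{\indexone}=0$ or $p_{\indexone_0}=1$) are indeed harmless — taking $\distrtyptwo_\indexone := \distrtypone_\indexone$ by reflexivity of $\subtypeleq$ for zero-weight summands suffices — so there is no gap.
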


}

\longv{
Note that the requirement that $\distrsum{\distrtypone} \,=\,1$ is not necessary to obtain this result, although it simplifies the reasoning.

}

\longv{

\subsection{Generation Lemma for Typing}

}
\longv{
\begin{lemma}[Generation Lemma for Typing]~
\label{lemma/generation-typing}
 
 \begin{enumerate}
  \item $\emptyset \contextsep \emptyset \proves \letin{\varone}{\valone}{\termtwo} \typsep \distrtypone \ \ \implies\ \ \exists 
  \left(\distrtyptwo,\typone\right),\ \ 
  \emptyset \contextsep \emptyset \proves \valone \typsep \typone$ and 
  $\varone \typsep \typone \contextsep \emptyset \proves \termtwo \typsep \distrtyptwo$ and $\distrtyptwo \subtypeleq \distrtypone$.
  \item $\emptyset \contextsep \emptyset \ \proves\ \valone\ \valtwo \typsep \distrtypone\ \ \implies\ \ \exists \left(\distrtyptwo,\typone\right),\ \ 
  \emptyset \contextsep \emptyset \ \proves\ \valone\typsep \typone \typarrow\distrtyptwo$ 
  and $\emptyset \contextsep \emptyset \ \proves\ \valtwo \typsep \typone$
  and $\distrtyptwo \subtypeleq \distrtypone$.
  \item $\emptyset \contextsep \emptyset\ \proves\ \abstr{\varone}{\termone} \typsep \typone \typarrow\distrtypone\ \ 
  \implies\ \ \exists \left(\distrtyptwo,\typtwo\right),\ \
  \varone\typsep\typtwo \contextsep \emptyset \ \proves\ 
  \termone \typsep\distrtyptwo$ and $\typone \subtypeleq \typtwo$ and $\distrtyptwo \subtypeleq \distrtypone$.
  \item $\emptyset \contextsep \emptyset \proves \termone\ \choice_p\ \termtwo\typsep \distrtypone\ \ \implies\ \ 
  \exists \left(\distrtyptwo,\distrtypthree\right),\ \ \emptyset \contextsep \emptyset \proves \termone\typsep \distrtyptwo$
   and $\emptyset \contextsep \emptyset \proves \termtwo\typsep \distrtypthree$ with
   $\distrsum{(\distrtyptwo \choice_p \distrtypthree)}\,=\,1$ and
   $\distrtyptwo \choice_p \distrtypthree \subtypeleq \distrtypone$
   and $\underlying{\distrtypone}\,=\,\underlying{\distrtyptwo}\,=\,\underlying{\distrtypthree}$.
   \item $\emptyset \contextsep \emptyset \proves \letin{\varone}{\termone}{\termtwo} \typsep \distrtyptwo
   \ \ \implies\ \ \exists \left(\indexsetone,\left(\typone_{\indexone}\right)_{\indexone \in \indexsetone},
   \left(p_{\indexone}\right)_{\indexone \in \indexsetone}, \left(\distrtypone_{\indexone}\right)_{\indexone \in \indexsetone}\right)$ such that
   \begin{itemize}
    \item $\sum_{\indexone \in \indexsetone}\ p_{\indexone} \cdot \distrtypone_{\indexone} \subtypeleq \distrtyptwo$,
    \item $\distrsum{\left(\sum_{\indexone \in \indexsetone}\ p_{\indexone} \cdot \distrtypone_{\indexone}\right)}\ =\ 1$,
    \item $\emptyset \contextsep \emptyset \proves \termone \typsep \distrelts{\typone_{\indexone}^{p_{\indexone}} \sep \indexone \in \indexsetone}$,
    \item $\forall \indexone \in \indexsetone,\ \ \varone\typsep \typone_{\indexone} \contextsep \emptyset \proves \termtwo \typsep \distrtypone_{\indexone}$.
   \end{itemize}

   \item $\emptyset \contextsep \emptyset \proves \caseof{\valone}{\natsucc \rightarrow \valtwo \smallsep \natzero \rightarrow \valthree} \typsep
   \distrtypone \ \ \implies\ \ \exists \left(\sizeone,\distrtyptwo\right)$ such that
   $\emptyset \contextsep \emptyset \proves \valone \typsep \Nat^{\sizesucc{\sizeone}}$ and 
   $\emptyset \contextsep \emptyset \proves \valtwo \typsep \Nat^{\sizeone} \typarrow \distrtyptwo$ and
   $\emptyset \contextsep \emptyset \proves \valthree \typsep \distrtyptwo$
   with $\distrtyptwo \subtypeleq \distrtypone$.

   \item $\emptyset \contextsep \emptyset \proves \letrec{\funcone}{\valone} \typsep \distrtypone \ \ \implies\ \ \exists\,
   \left(\left(p_{\indextwo}\right)_{\indextwo \in \indexsettwo},\,\left(\sizeone_{\indextwo}\right)_{\indextwo \in \indexsettwo},\,\sizevarone \right)$
   such that 
   \begin{itemize}
    \item $\Nat^{\sizetwo} \typarrow \subst{\distrtyptwo}{\sizevarone}{\sizetwo} \subtypeleq \distrtypone$,
    \item $\forall \indextwo \in \indexsettwo,\ \ \spine{\sizeone_{\indextwo}}\,=\,\sizevarone$,
    \item $\sizevarone \notin \contextone \text{ and } \sizevarone \text{ positive in } \nu$,
    \item $\distrelts{\left(\Nat^{\sizeone_{\indextwo}} \typarrow \subst{\distrtyptwo}{\sizevarone}{\sizeone_{\indextwo}}
\right)^{p_{\indextwo}} \sep \indextwo \in \indexsettwo} \text{induces an AST sized walk}$,
    \item $\emptyset \contextsep \funcone \typsep \distrelts{\left(\Nat^{\sizeone_{\indextwo}} \typarrow 
    \subst{\distrtyptwo}{\sizevarone}{\sizeone_{\indextwo}}
    \right)^{p_{\indextwo}} \sep \indextwo \in \indexsettwo} \proves \valone \typsep 
    \Nat^{ \sizesucc{\sizevarone}} \typarrow \subst{\distrtyptwo}{\sizevarone}{\sizesucc{\sizevarone}}$
   \end{itemize}
 \end{enumerate}
\end{lemma}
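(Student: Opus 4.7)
The plan is to prove each of the seven items by induction on the typing derivation of the given judgment. For any fixed term shape the last rule of a derivation is either the syntax-directed introduction rule for that shape or the non-syntax-directed subtyping rule (Sub). In the first sub-case I can read off the premises directly from the rule and take the ambient subtyping to be reflexive. In the (Sub) sub-case I apply the induction hypothesis to the premise of (Sub) to obtain some $\distrtypthree$ and intermediate data, then compose the subtyping obtained from the induction hypothesis with the one from the (Sub) rule using transitivity of $\subtypeleq$. This pattern handles items 1, 2, 3, 6, 7 almost mechanically, since on top of the syntax-directed rule we just accumulate subtyping witnesses on arrow codomains, successor-sized $\Nat$ types, and the output distribution type of the fixpoint.

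The items that need more care are 4 (Choice) and 5 (Let), because in both the conclusion distribution type is built as a probabilistic, respectively weighted, sum of sub-derivations' types. When the last rule is (Sub), the induction hypothesis produces a decomposition of some $\distrtyptwo \choice_p \distrtypthree$ (respectively $\sum_{\indexone} p_\indexone\cdot \distrtypone_\indexone$) and the (Sub) step enlarges it to $\distrtypone$. To push the subtyping through the sum one invokes Lemma \ref{lemma/subtyping-prob-sum} in case 4 and its generalisation Corollary \ref{corollary/subtyping-probabilistic-sums} in case 5; these give new $\distrtyptwo',\distrtypthree'$ (resp.\ $\distrtyptwo_\indexone$) such that the enlarged type is itself a probabilistic/weighted sum of the primed components and each original branch type subtypes into its primed version. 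Composing with the subtyping provided by the induction hypothesis on each branch then yields the desired premises. Note that the hypothesis that the combined type sums to $1$ needed by those auxiliary lemmas is automatic via Lemma \ref{lemma/dirac-types-values}.

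For item 7 the situation looks heavier but is in fact similar: the fixpoint rule's conclusion is of the form $\Nat^{\sizetwo}\typarrow \subst{\distrtyptwo}{\sizevarone}{\sizetwo}$ and the only way subtyping can weaken this while preserving the underlying affine type is by enlarging the codomain distribution; the side conditions on $\sizevarone$, positivity of $\sizevarone$ in $\distrtyptwo$, spine of each $\sizeone_\indextwo$, and the AST property of the induced sized walk, are copied verbatim from the (letrec) premise and are not affected by (Sub). Items 2 and 3 together require the standard decomposition of $\typone'\typarrow\distrtypone' \subtypeleq \typone\typarrow\distrtypone$ into $\typone \subtypeleq \typone'$ and $\distrtypone' \subtypeleq \distrtypone$, which is immediate from the subtyping rules.

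The main obstacle is clearly the interaction of (Sub) with the probabilistic sum structure in items 4 and 5: once the auxiliary lemma \ref{lemma/subtyping-prob-sum} and its corollary are in hand, each inductive step essentially reduces to bookkeeping. Before starting the induction itself I would therefore first verify those two auxiliary statements, since the whole generation argument rests on being able to factor a subtyping of a probabilistic sum through a branchwise subtyping. Everything else is routine pattern matching on the shape of the last rule, closing the induction for each of the seven items independently.
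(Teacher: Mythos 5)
Your proof is correct and follows the same strategy as the paper's: the only non-syntax-directed rule is (Sub), so each item is obtained by inverting the relevant introduction rule and absorbing any stacked (Sub) steps by transitivity of $\subtypeleq$; the paper also uses Lemma~\ref{lemma/dirac-types-values} exactly as you do to get $\distrsum{\left(\sum_{\indexone} p_{\indexone}\cdot\distrtypone_{\indexone}\right)}=1$ in item~5. The one place where you diverge is in claiming that items 4 and 5 require Lemma~\ref{lemma/subtyping-prob-sum} and Corollary~\ref{corollary/subtyping-probabilistic-sums}. They do not: the conclusions of those items only assert a subtyping of the \emph{whole} probabilistic (resp.\ weighted) sum, namely $\distrtyptwo\choice_p\distrtypthree\subtypeleq\distrtypone$ and $\sum_{\indexone}p_{\indexone}\cdot\distrtypone_{\indexone}\subtypeleq\distrtyptwo$, so in the (Sub) case you may keep the very same branch types produced by the induction hypothesis and simply compose the two sum-level subtypings by transitivity. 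The factoring of a subtyping of a sum into branchwise subtypings is deliberately deferred to the subject reduction proof, where Lemma~\ref{lemma/subtyping-prob-sum} is genuinely needed; invoking it inside the generation lemma is harmless (its sum-equals-one hypothesis does hold, as you note) but adds complexity and mischaracterises what the generation argument "rests on". Everything else — the arrow-subtyping inversion for items 2 and 3, and the observation that the $\letrecname$ side conditions pass through (Sub) untouched in item 7 — is fine.
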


}

\longv{
\begin{proof}
 By inspection of the rules, the key point being that the subtyping rule is the only one which is not syntax-directed, and that
 by transitivity of $\subtypeleq$ we can compose several successive subtyping rules.
 In case (5), we have $\distrsum{\left(\sum_{\indexone \in \indexsetone}\ p_{\indexone} \cdot \distrtypone_{\indexone}\right)}\ =\ 1$
 since it appears that $\emptyset \contextsep \emptyset \proves \letin{\varone}{\termone}{\termtwo} \typsep \sum_{\indexone \in \indexsetone}\ p_{\indexone} \cdot \distrtypone_{\indexone}$. Lemma~\ref{lemma/dirac-types-values} allows then to conclude that this distribution of types has sum 1.
\end{proof}

}

\longv{\subsection{Value Substitutions}}

\longv{
\begin{definition}[Context Extending Another]
 We say that a context $\contextsizedtwo \contextsep \contextdistrtwo$
 extends a context $\contextsizedone \contextsep \contextdistrone$
 when:
 \begin{itemize}
  \item for every $\varone \typsep \typone \in \contextsizedone$, we have $\varone \typsep \typone \in \contextsizedtwo$.
  \item and either $\contextdistrone\,=\,\emptyset$ or $\contextdistrone\,=\,\contextdistrtwo$.
 \end{itemize}
 In other words, $\contextsizedtwo \contextsep \contextdistrtwo$
 extends $\contextsizedone \contextsep \contextdistrone$ when there exists
 $\contextsizedthree$ and $\contextdistrthree$ such that
 $\contextsizedtwo\,=\,\contextsizedone,\,\contextsizedthree$
 and $\contextdistrtwo\,=\,\contextdistrone,\,\contextdistrthree$.
\end{definition}
}

\longv{
\begin{lemma}
\label{lemma:weakening-contexts}
 Let $\termone$ be a closed term such that $\contextsizedone \contextsep \contextdistrone \proves \termone \typsep \distrtypone$.
 Then for every context $\contextsizedtwo \contextsep \contextdistrtwo$ extending
 $\contextsizedone \contextsep \contextdistrone$, we have
 $\contextsizedtwo \contextsep \contextdistrtwo \proves \termone \typsep \distrtypone$.
\end{lemma}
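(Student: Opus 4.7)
The plan is to proceed by induction on the derivation $\pi$ of $\contextsizedone \contextsep \contextdistrone \proves \termone \typsep \distrtypone$. By the definition of context extension, write $\contextsizedtwo = \contextsizedone,\contextsizedthree$ for some sized context $\contextsizedthree$ disjoint from $\contextsizedone$, and either $\contextdistrtwo = \contextdistrone$, or $\contextdistrone = \emptyset$ and $\contextdistrtwo$ is arbitrary. Since $\termone$ is closed, no variable introduced by $\contextsizedthree$ nor any fresh distribution variable of $\contextdistrtwo$ occurs free in $\termone$ or in any subterm appearing in $\pi$; this is what makes the weakening harmless. For a smooth induction, it is natural to slightly generalise the statement to all derivations (not only those typing closed terms), asking only that the variables being added be fresh for $\termone$.

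For the axioms (Var, Var', Zero, Succ), the relevant parts of the context appear as free parameters of the rule, so one simply rewrites the axiom with the extended context. For the unary logical rules ($\lambda$ and Sub), the induction hypothesis is applied to the unique premise and the rule is reapplied, taking care for $\lambda$ to $\alpha$-rename so that the bound variable remains disjoint from $\contextsizedthree$.

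For the multi-premise rules that split the sized context into a shared part $\contextsizedone$ and non-shared parts (App, Let, Case, $\letrecname$), the extra variables $\contextsizedthree$ are placed into one of the non-shared pieces via the induction hypothesis, leaving $\contextsizedone$ itself untouched; this preserves the side-conditions of the form $\underlying{\contextsizedone}=\Nat$ that appear in App, Let, and $\letrecname$. For the distribution context, when $\contextdistrone = \emptyset$ and a fresh assignment $\vartwo\typsep\distrtypfour$ must appear in the conclusion, we route it through exactly one premise---e.g.\ the $\contextdistrone$-branch of Let, or either premise of App---so that the disjoint or weighted combination in the conclusion still carries $\vartwo\typsep\distrtypfour$. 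The $\letrecname$ rule is especially easy here, since its conclusion's distribution context is entirely free: both the sized weakening and the distribution weakening are absorbed directly into the conclusion without touching the premise, and the external side-conditions (AST of the induced sized walk, positivity of $\sizevarone$ in $\distrtyptwo$, spine conditions) depend only on $\distrtyptwo$ and $\sizevarone$ and are therefore preserved verbatim.

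The most delicate case is the Choice rule: the conclusion's distribution context is $\contextdistrone \choice_p \contextdistrtwo$, so trying to add a fresh assignment $\vartwo\typsep\distrtypfour$ through a single premise would scale the attached distribution type by $p$ or by $1-p$. The remedy is to invoke the induction hypothesis on \emph{both} premises, weakening each from $\emptyset$ to $\vartwo\typsep\distrtypfour$, and then use the identity $\distrtypfour \choice_p \distrtypfour = \distrtypfour$ to recover exactly the desired distribution context. Coordinating the weakening across the two sub-derivations of Choice is the main technical obstacle; once it is handled, all remaining cases follow either by direct propagation of the induction hypothesis or by a free choice of where to place the extra variables.
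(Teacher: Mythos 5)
Your proposal is correct and follows essentially the same route as the paper's proof: induction on the typing derivation (the paper phrases it as induction on the term's structure), routing the extra sized/distribution assignments through the free or non-shared parts of each rule, and, for Choice, applying the induction hypothesis to \emph{both} premises and using $(\vartwo\typsep\distrtypfour)\choice_p(\vartwo\typsep\distrtypfour)=\vartwo\typsep\distrtypfour$, which is exactly the identity the paper invokes. Your explicit generalisation to open terms with fresh added variables, and your simpler handling of $\letrecname$ (absorbing the new variables directly into the weakened part of the conclusion rather than splitting them by type as the paper does), are minor refinements that do not change the argument.
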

}

\longv{
\begin{proof}
 We proceed by induction on the structure of $\termone$. We set
 $\contextsizedtwo\,=\,\contextsizedone,\,\contextsizedthree$
 and $\contextdistrtwo\,=\,\contextdistrone,\,\contextdistrthree$.
 \begin{itemize}
  \item If $\termone\,=\,\varone$ is a variable, the result is immediate.
  \item If $\termone\,=\,0$, the result is immediate.
  \item If $\termone\,=\,\natsucc\ \valone$, we have by typing rules that $\typone = \Nat^{\sizesucc{\sizeone}}$
  and that $\contextsizedone \contextsep \contextdistrone \proves \valone \typsep \Nat^{\sizeone}$.
  By induction hypothesis $\contextsizedtwo \contextsep \contextdistrtwo \proves \valone \typsep \Nat^{\sizeone}$
  from which we conclude using the typing rule for $\natsucc$.
  \item If $\termone\,=\,\abstr{\varone}{\termtwo}$, we have $\typone = \typtwo \typarrow \distrtypone$ and 
  $\contextsizedone, \varone \typsep \typtwo\ \contextsep \contextdistrone \proves \termtwo \typsep \distrtypone$.
  By definition, $\contextsizedtwo, \varone \typsep \typtwo\ \contextsep \contextdistrtwo$
  extends $\contextsizedone, \varone \typsep \typtwo\ \contextsep \contextdistrone$
  so that we have $\contextsizedtwo, \varone \typsep \typtwo\ \contextsep \contextdistrtwo \proves \termtwo \typsep \distrtypone$.
  The result follows using the Lambda rule.
  \item If $\termone\,=\,\letrec{\funcone}{\valone}$, the typing rule is of the shape
  $$
 \AxiomC{$\underlying{\contextsizedone_1}\,=\,\Nat$}
 \noLine
 \UnaryInfC{$\sizevarone \notin \contextone_1 \text{ and } \sizevarone \text{ positive in } \nu
 \text{ and } \forall \indextwo \in \indexsettwo,\ \spine{\sizeone_\indextwo}\,=\,\sizevarone$}
 \noLine
 \UnaryInfC{$\distrelts{\left(\Nat^{\sizeone_{\indextwo}} \typarrow \subst{\distrtyptwo}{\sizevarone}{\sizeone_{\indextwo}}
 \right)^{p_{\indextwo}} \sep \indextwo \in \indexsettwo} \text{induces an AST sized walk}$}
 \noLine
 \UnaryInfC{$\contextsizedone_1 \contextsep
 \funcone \typsep \distrelts{\left(\Nat^{\sizeone_{\indextwo}} \typarrow \subst{\distrtyptwo}{\sizevarone}{\sizeone_{\indextwo}}
 \right)^{p_{\indextwo}} \sep \indextwo \in \indexsettwo} \proves \valone \typsep 
 \Nat^{ \sizesucc{\sizevarone}} \typarrow \subst{\distrtyptwo}{\sizevarone}{\sizesucc{\sizevarone}}$}
 \LeftLabel{$\letrecname$ $\quad$}
 \UnaryInfC{$\contextsizedone_1,\,\contextsizedone_2 \contextsep \contextdistrone \proves \letrec{\funcone}{\valone} \typsep  
 \Nat^{\sizetwo} \typarrow \subst{\distrtyptwo}{\sizevarone}{\sizetwo}$}
 \DisplayProof
 $$
 Let $\contextsizedtwo\,=\,\contextsizedtwo_1,\,\contextsizedtwo_2$ with
 $\contextsizedtwo_1$ the maximal subcontext consisting only of variables of affine type $\Nat$.
 Then
 $$
 \contextsizedtwo_1 \contextsep
 \funcone \typsep \distrelts{\left(\Nat^{\sizeone_{\indextwo}} \typarrow \subst{\distrtyptwo}{\sizevarone}{\sizeone_{\indextwo}}
 \right)^{p_{\indextwo}} \sep \indextwo \in \indexsettwo}
 $$
 extends
 $$
 \contextsizedone_1 \contextsep
 \funcone \typsep \distrelts{\left(\Nat^{\sizeone_{\indextwo}} \typarrow \subst{\distrtyptwo}{\sizevarone}{\sizeone_{\indextwo}}
 \right)^{p_{\indextwo}} \sep \indextwo \in \indexsettwo}
 $$
 so that by induction hypothesis $\contextsizedtwo_1 \contextsep
 \funcone \typsep \distrelts{\left(\Nat^{\sizeone_{\indextwo}} \typarrow \subst{\distrtyptwo}{\sizevarone}{\sizeone_{\indextwo}}
 \right)^{p_{\indextwo}} \sep \indextwo \in \indexsettwo} \proves \valone \typsep 
 \Nat^{ \sizesucc{\sizevarone}} \typarrow \subst{\distrtyptwo}{\sizevarone}{\sizesucc{\sizevarone}}$
 so that we can conclude using the $\letrecname$ rule again that
 $$
 \contextsizedtwo_1,\,\contextsizedtwo_2 \contextsep \contextdistrtwo \proves \letrec{\funcone}{\valone} \typsep  
 \Nat^{\sizetwo} \typarrow \subst{\distrtyptwo}{\sizevarone}{\sizetwo}.
 $$
 \item If $\termone\,=\,\valone \ \valtwo$, the typing derivation provides contexts such that
 $\contextsizedone\,=\,\contextsizedone_1,\,\contextsizedone_2,\,\contextsizedone_3$
 and that $\contextdistrone\,=\,\contextdistrone_1,\,\contextdistrone_2$
 with $\contextsizedone_1,\,\contextsizedone_2 \contextsep \contextdistrone_1 
 \proves \valone \typsep \typone \typarrow \distrtypone$
 and $\contextsizedone_1,\,\contextsizedone_3 \contextsep \contextdistrone_2 
 \proves \valtwo \typsep \typone$.
 By induction hypothesis, $\contextsizedone_1,\,\contextsizedone_3,\,\contextsizedthree \contextsep \contextdistrone_2,
 \,\contextdistrthree
 \proves \valtwo \typsep \typone$ from which we conclude using the App rule.
 \item If $\termone\,=\,\letin{\varone}{\termtwo}{\termthree}$, the typing derivation provides contexts such that
 $\contextsizedone\,=\,\contextsizedone_1,\,\contextsizedone_2,\,\contextsizedone_3$
 and that $\contextdistrone\,=\,\contextdistrone_1,\,\sum_{\indexone \in \indexsetone}\ p_\indexone \cdot \contextdistrone_{2,\indexone}$ with 
 $\contextsizedone_1,\,\contextsizedone_2 \contextsep \contextdistrone_1
 \proves \termone \typsep \distrelts{\typone_{\indexone}^{p_{\indexone}} \sep \indexone \in \indexsetone}$
 and
 $\contextsizedone_1,\,\contextsizedone_3,\,\varone\typsep \typone_{\indexone} \contextsep \contextdistrone_{2,\indexone} \proves \termtwo \typsep \distrtypone_{\indexone}$.
 By induction hypothesis, $\contextsizedone_1,\,\contextsizedone_2,\,\contextsizedthree \contextsep \contextdistrone_1,\,\contextdistrthree \proves \termone \typsep \distrelts{\typone_{\indexone}^{p_{\indexone}} \sep \indexone \in \indexsetone}$
 from which we conclude using the Let rule.
 \item If $\termone\,=\termtwo  \choice_p  \termthree$,
 then $\contextdistrone\,=\,\contextdistrone_1 \choice_p \contextdistrone_2$
 with 
 $\contextsizedone \contextsep \contextdistrone_1\ \proves\ \termone \typsep \distrtypone$
 and
 $\contextsizedone \contextsep \contextdistrone_2\ \proves\ \termtwo \typsep \distrtyptwo$.
 By applying induction hypothesis twice, we obtain 
 $\contextsizedone,\,\contextsizedthree \contextsep \contextdistrone_1,\,\contextdistrthree\ \proves\ \termone \typsep \distrtypone$
 and
 $\contextsizedone,\,\contextsizedthree \contextsep \contextdistrone_2,\,\contextdistrthree\ \proves\ \termtwo \typsep \distrtyptwo$.
 We apply the Choice rule; it remains to prove that 
 $\left(\contextdistrone_1,\,\contextdistrthree\right) \choice_p \left(\contextdistrone_2,\,\contextdistrthree\right)
 \ =\ \contextdistrone_1 \choice_p \contextdistrone_2,\,\contextdistrthree$
 which is easily done by definition of $\choice_p$.
 \item If $\termone\,=\,\caseof{\valone}{\natsucc \rightarrow \valtwo \smallsep \natzero \rightarrow \valthree}$,
 the typing derivation provides contexts such that
 $\contextsizedone\,=\,\contextsizedone_1,\,\contextsizedone_2$
 with $\contextsizedone_1 \contextsep \emptyset \proves \valone \typsep \Nat^{\sizesucc{\sizeone}}$
 and 
 $\contextsizedone_2 \contextsep \contextdistrone \proves \valtwo \typsep \Nat^{\sizeone} \typarrow \distrtypone$
 and
 $\contextsizedone_2 \contextsep \contextdistrone \proves \valthree \typsep \distrtypone$.
 By induction hypothesis, 
 $\contextsizedone_2,\,\contextsizedthree \contextsep \contextdistrone,\,\contextdistrthree \proves \valtwo \typsep \Nat^{\sizeone} \typarrow \distrtypone$
 and 
 $\contextsizedone_2,\,\contextsizedthree \contextsep \contextdistrone,\,\contextdistrthree \proves \valthree \typsep \distrtypone$
 from which we conclude using the Case rule.
 \end{itemize}

\end{proof}
}

\longv{
\begin{lemma}[Closed Value Substitution]
 \label{lemma/value-substitution}
 Suppose that $\contextsizedone,\,\varone \typsep \typone \contextsep \contextdistrone \proves \termone \typsep \distrtypone$
 and that $\emptyset \contextsep \emptyset \proves \valone \typsep \typone$.
 Then $\contextsizedone\contextsep \contextdistrone \proves \subst{\termone}{\varone}{\valone} \typsep \distrtypone$.
\end{lemma}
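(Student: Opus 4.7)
The plan is to proceed by induction on the typing derivation of $\contextsizedone,\,\varone \typsep \typone \contextsep \contextdistrone \proves \termone \typsep \distrtypone$, inspecting the last rule applied. For the leaf rules, if it is $\mathrm{Var}$ applied to $\varone$ itself, then $\distrtypone$ is derived from $\typone$ possibly through some $\mathrm{Sub}$ steps; since $\subst{\varone}{\varone}{\valone} = \valone$, it suffices to apply Lemma~\ref{lemma:weakening-contexts} to $\emptyset \contextsep \emptyset \proves \valone \typsep \typone$ to extend its context to $\contextsizedone \contextsep \contextdistrone$, then reapply any $\mathrm{Sub}$. The other leaf cases ($\mathrm{Var}$ on a different variable, $\mathrm{Var'}$, $\mathrm{Zero}$) leave $\termone$ unchanged by the substitution, so the result is immediate after discarding the unused hypothesis $\varone \typsep \typone$ from the context.

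The inductive cases for Lambda, Succ, Sub, and Choice propagate the hypothesis $\varone \typsep \typone$ identically to their premises, so I would recurse on each premise and reassemble the rule in the conclusion. The rules whose conclusion splits the sized context (App, Let, Case, $\letrecname$) are more delicate: according to the premises shown in Figure~\ref{figure/affine-distribution-type-system}, the sized context is split as $\contextsizedone,\,\contextsizedtwo,\,\contextsizedthree$ where the shared prefix $\contextsizedone$ must satisfy $\underlying{\contextsizedone} = \Nat$. I would split the argument according to whether $\underlying{\typone} = \Nat$ or $\underlying{\typone}$ is a higher-order type. In the first case, $\varone \typsep \typone$ may legitimately appear in the shared prefix, hence in several premise contexts, and the induction hypothesis must be applied to each of them. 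In the second, affinity forces $\varone \typsep \typone$ to belong to exactly one of the disjoint pieces $\contextsizedtwo$ or $\contextsizedthree$, and the induction hypothesis is applied only to the corresponding premise, while the others are left unchanged. For the $\letrecname$ rule, the side conditions on positivity, on $\spine{\cdot}$, on $\underlying{\contextsizedone} = \Nat$, and on the induced sized walk concern only the distribution type of $\funcone$ and thus are preserved verbatim; the induction hypothesis is applied to the premise typing the body $\valone$.

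The main obstacle will be the bookkeeping around context splits: verifying that after substitution the resulting sized context is exactly $\contextsizedone$, with no spurious occurrence of $\varone$, and that the splits $\contextsumdisj$ and the disjoint concatenations recombine consistently. This relies essentially on the fact that $\valone$ is closed, so $\FV{\subst{\termone}{\varone}{\valone}} = \FV{\termone} \setminus \{\varone\}$, together with Lemma~\ref{lemma:weakening-contexts} to adjust contexts whenever the substitution introduces $\valone$ inside a subderivation whose local context is strictly smaller than $\contextsizedone \contextsep \contextdistrone$.
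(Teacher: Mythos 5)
Your proposal is correct and follows essentially the same route as the paper: induction on the typing derivation, with the only non-trivial leaf being the Var case on $\varone$ itself, handled by weakening the closed derivation of $\valone$ via Lemma~\ref{lemma:weakening-contexts}. The paper dispatches the context-splitting rules as "straightforward from the induction hypothesis," whereas you spell out the bookkeeping (shared $\Nat$ prefix versus affine disjoint pieces); that extra care is sound and does not change the argument.
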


}

\longv{
\begin{proof}
 As usual, the proof is by induction on the structure of the typing derivation. We proceed by case analysis
 on the last rule:
 \begin{varitemize}
  \item If it is Var, we have two cases.
  \begin{itemize}
   \item If the conclusion is
   $\contextsizedone,\,\varone \typsep \typone \contextsep \contextdistrone \ \proves\ \varone\typsep \typone$
   then $\subst{\varone}{\varone}{\valone}\,=\,\valone$. By Lemma~\ref{lemma:weakening-contexts},
   we obtain that $\contextsizedone\contextsep \contextdistrone \proves \valone \typsep \typone$.
   \item If the conclusion is
   $\contextsizedone,\,\varone \typsep \typone,\,\vartwo \typsep \typtwo \contextsep \contextdistrone \ \proves\ \vartwo\typsep \typtwo$
   then $\subst{\vartwo}{\varone}{\valone}\,=\,\vartwo$ and we obtain 
   $\contextsizedone,\,\vartwo \typsep \typtwo \contextsep \contextdistrone \ \proves\ \vartwo\typsep \typtwo$ using the Var rule.
  \end{itemize}
  \item If it is Var', the situation is similar to the latter case of the previous one. The conclusion is
  $\contextsizedone,\,\varone \typsep \typone \contextsep \vartwo \typsep \typtwo \ \proves\ \vartwo\typsep \typtwo$ and $\subst{\vartwo}{\varone}{\valone}\,=\,\vartwo$
  so that we obtain 
  $\contextsizedone\contextsep \vartwo \typsep \typtwo \ \proves\ \vartwo\typsep \typtwo$
  using the Var' rule.
  \item If it is Succ, then $\termone\,=\,\natsucc\ \valtwo$ and $\distrtypone\,=\,\Nat^{\sizesucc{\sizeone}}$.
  We obtain by induction hypothesis that 
  $\contextsizedone\contextsep \contextdistrone \proves \subst{\valtwo}{\varone}{\valone} \typsep \Nat^{\sizeone}$
  and we conclude using the Succ rule that 
  $\contextsizedone\contextsep \contextdistrone \proves \subst{(\natsucc\ \valtwo)}{\varone}{\valone} \typsep \Nat^{\sizesucc{\sizeone}}$.
  \item If it is Zero, we obtain immediately the result.
  \item If it is $\lambda$, suppose that 
  $\contextsizedone,\,\varone \typsep \typone \contextsep \contextdistrone \proves \abstr{\vartwo}{\termone}
  \typsep \typtwo \typarrow \distrtypone$. 
  This comes from $\contextsizedone,\,\varone \typsep \typone,\,\vartwo \typsep \typtwo \contextsep \contextdistrone \proves \termone \typsep \distrtypone$ to which we apply the induction hypothesis, obtaining that
  $\contextsizedone,\,\vartwo \typsep \typtwo \contextsep \contextdistrone \proves \subst{\termone}{\varone}{\valone} \typsep \distrtypone$. Then applying the $\lambda$ rule gives the expected result.
  \item For all the remaining cases, as for the $\lambda$ rule, the result is obtained in a straightforward way
  from the induction hypothesis.
 \end{varitemize}

\end{proof}

}

\longv{
\begin{lemma}[Substitution for distributions]
 \label{lemma/distrib-substitution}
 Suppose that $\contextsizedone \contextsep \varone \typsep \distrelts{\typone_{\indexone}^{p_{\indexone}} \sep \indexone \in \indexsetone}
 \proves \termone \typsep
 \distrtypone$ and that, for every $\indexone \in \indexsetone$, we have $\emptyset \contextsep \emptyset \proves \valone \typsep \typone_{\indexone}$.
 Then $\contextsizedone \contextsep \emptyset \proves \subst{\termone}{\varone}{\valone} \typsep \distrtypone$.
\end{lemma}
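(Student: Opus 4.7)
The plan is to proceed by induction on the structure of the typing derivation of $\contextsizedone \contextsep \varone \typsep \distrelts{\typone_{\indexone}^{p_{\indexone}} \sep \indexone \in \indexsetone} \proves \termone \typsep \distrtypone$, doing a case analysis on the last rule applied. The general strategy, for each case, is to identify in which subderivation(s) the variable $\varone$ is actually used, to apply the induction hypothesis there, and to rebuild the derivation with the same rule. The base case is Var', where $\termone = \varone$; this forces the distribution context of the conclusion to be Dirac, so $\indexsetone$ is a singleton with $p_1 = 1$ and $\typone_1 = \distrtypone$. The hypothesis then provides $\emptyset \contextsep \emptyset \proves \valone \typsep \typone_1$, and the weakening lemma (Lemma~\ref{lemma:weakening-contexts}) yields $\contextsizedone \contextsep \emptyset \proves \valone \typsep \distrtypone$ as required.

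The rules that pass the distribution context unchanged from premise to conclusion --- Var with $\vartwo \neq \varone$, Zero, Succ, Sub, $\lambda$, and Case --- are dispatched immediately by applying the induction hypothesis to the unique premise carrying $\varone \typsep \distrelts{\typone_\indexone^{p_\indexone}}$ and then reapplying the same rule. For the App and Let rules, the conclusion distribution context is the disjoint union of two subcontexts, so the well-formedness constraint forces $\varone$ to lie in exactly one of them and induction applies on the corresponding premise. In the Let case, when $\varone$ lies inside the weighted sum $\sum_\indexone p_\indexone \cdot \contextdistrtwo_\indexone$, every $\contextdistrtwo_\indexone$ attributes to $\varone$ a distribution type whose support is contained in $\{\typone_\indexthree \sep \indexthree \in \indexsetone\}$; the value-typing hypothesis for $\valone$ therefore transfers, type by type, to each of the right-premises, enabling induction. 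The letrec case is essentially trivial: its conclusion distribution context is added purely by weakening and does not appear in the premise, so $\varone$ is unused in the term and can simply be dropped from the weakening.

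The main obstacle is the Choice rule, where the conclusion carries $\contextdistrone_1 \choice_p \contextdistrtwo_2 = \varone \typsep \distrelts{\typone_\indexone^{p_\indexone}}$. The critical observation is that if both premises attribute a type to $\varone$, say $\distrtypone_1$ and $\distrtypone_2$ with $\distrtypone_1 \choice_p \distrtypone_2 = \distrelts{\typone_\indexone^{p_\indexone}}$, then $\supp{\distrtypone_1} \cup \supp{\distrtypone_2} \subseteq \{\typone_\indexone \sep \indexone \in \indexsetone\}$, so the hypothesis on $\valone$ transfers to every sized type appearing in either $\distrtypone_1$ or $\distrtypone_2$. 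Two applications of the induction hypothesis then yield substituted typings with empty distribution context on both branches, which the Choice rule recombines (reading $\emptyset \choice_p \emptyset$ as $\emptyset$). The degenerate subcases where only one of $\contextdistrone_1$, $\contextdistrtwo_2$ mentions $\varone$ are analogous via the corresponding clauses of $\choice_p$ on distribution contexts.
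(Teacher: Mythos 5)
Your proof is correct and follows essentially the same route as the paper's: induction on the typing derivation, with Var' handled via the forced Dirac shape of the distribution context plus weakening, and letrec handled by observing that the substituted variable cannot occur free in the term so the substitution is vacuous. The only nitpick is that in the Case rule the distribution context occurs in \emph{two} premises (those typing the successor and zero branches), so the induction hypothesis must be invoked on both before reapplying the rule; otherwise your elaboration of the Choice and Let cases just spells out what the paper dismisses as straightforward.
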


}

\longv{
\begin{proof}
 The proof is by induction on the structure of the typing derivation. We proceed by case analysis
 on the last rule:
 \begin{varitemize}
  \item If it is Var, we have $\termone\,=\,\vartwo \neq \varone$ and $\vartwo \in \contextsizedone$. It follows 
  that $\subst{\vartwo}{\varone}{\valone}\,=\,\vartwo$ and we obtain
  $\contextsizedone \contextsep \emptyset \proves \subst{\termone}{\varone}{\valone} \typsep \distrtypone$
  simply by the Var rule.
  \item If it is Var', we have $\termone\,=\,\varone$ so that 
  $\subst{\termone}{\varone}{\valone}\,=\,\valone$. Moreover, the distribution 
  $\distrelts{\typone_{\indexone}^{p_{\indexone}} \sep \indexone \in \indexsetone}$
  must be Dirac; we denote by $\typone$ the unique element of its support. Note that we also
  obtain $\typone = \distrtypone$. As we supposed that
  $\emptyset \contextsep \emptyset \proves \valone \typsep \typone$, Lemma~\ref{lemma:weakening-contexts}
  gives $\contextsizedone \contextsep \emptyset \proves \valone \typsep \typone$ from which we conclude.
  \item If it is LetRec, then $\varone$ does not occur free in $\termone$. It
  follows that $\subst{\termone}{\varone}{\valone}\,=\,\termone$, and we can derive 
  $\contextsizedone \contextsep \emptyset \proves \subst{\termone}{\varone}{\valone} \typsep \distrtypone$
  using a LetRec rule with the same hypothesis.

  \item All others cases are treated straightforwardly using the induction hypothesis.
 \end{varitemize}
\end{proof}

}

\longv{
\begin{lemma}~
\label{lemma:size-constructors-properties}

\begin{enumerate}
 \item $\contextsizedone \contextsep \contextdistrone \proves \natsucc\ \valone \typsep \Nat^{\sizesucc{\sizeone}} \ \ \implies \ \ 
 \contextsizedone \contextsep \contextdistrone \proves \valone \typsep \Nat^{\sizeone}$,
 \item $\contextsizedone \contextsep \contextdistrone \proves \natzero \typsep \Nat^{\sizeone} \ \ \implies\ \ \exists \sizetwo,\ \ \sizeone\,=\,\sizesucc{\sizetwo}$.
 \item $\contextsizedone \contextsep \contextdistrone \proves \natsucc\ \valone \typsep \Nat^{\sizeone} \ \ \implies\ \ \exists \sizetwo,\ \ \sizeone\,=\,\sizesucc{\sizetwo}$.
\end{enumerate}
\end{lemma}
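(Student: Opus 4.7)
The plan is to prove the three items by induction on the typing derivation, in the order (3), then (2), then (1), since (1) relies on the successor-shape information given by (3). The only rules that can conclude a typing judgment of the form $\contextsizedone \contextsep \contextdistrone \proves \dataconstrone\ \vec{\valone} \typsep \Nat^{\sizeone}$ are the syntax-directed constructor rules (Succ or Zero) and the non-syntax-directed subsumption rule Sub, so the case analysis is short. Recall also that by Lemma~\ref{lemma/dirac-types-values}, whenever such a value is typed with some distribution type, that type must be Dirac; so without loss of generality the intermediate type produced by Sub is of the form $\Nat^{\sizetwo}$, and from the definition of $\subtypeleq$ we have $\Nat^{\sizetwo} \subtypeleq \Nat^{\sizeone}$ iff $\sizetwo \sizeleq \sizeone$.

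The key auxiliary fact I would establish first is a \emph{size-inversion} property: if $\sizesucc{\sizetwo} \sizeleq \sizeone$, then (modulo the equivalence $\sizesucc{\sizeinf}\,\sim\,\sizeinf$) $\sizeone$ is of the form $\sizesucc{\sizethree}$ with $\sizetwo \sizeleq \sizethree$. This is proved by induction on the derivation of $\sizesucc{\sizetwo} \sizeleq \sizeone$, inspecting the four rules defining $\sizeleq$ (reflexivity, transitivity, $\sizeone \sizeleq \sizesucc{\sizeone}$, $\sizeone \sizeleq \sizeinf$) and using the grammar of sizes: the only possibilities for $\sizeone$ are a variable, $\sizeinf$, or a successor, and the variable case can be excluded by inspection of how $\sizesucc{\sizetwo} \sizeleq \sizevarone$ could possibly be derived.

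For (3), if the last rule is Succ, then $\sizeone$ is literally of the form $\sizesucc{\sizetwo'}$ and we are done. If the last rule is Sub, we obtain $\natsucc\ \valone \typsep \Nat^{\sizetwo}$ with $\sizetwo \sizeleq \sizeone$; the inductive hypothesis gives $\sizetwo\,=\,\sizesucc{\sizetwo'}$, and the size-inversion lemma gives $\sizeone\,=\,\sizesucc{\sizethree}$. Item (2) is handled symmetrically, with Zero in place of Succ. For (1), if the last rule is Succ the premise directly gives $\contextsizedone \contextsep \contextdistrone \proves \valone \typsep \Nat^{\sizeone}$; if it is Sub, we combine item (3) applied to the intermediate judgment $\natsucc\ \valone \typsep \Nat^{\sizetwo}$ with the inductive hypothesis to obtain $\valone \typsep \Nat^{\sizetwo'}$ where $\sizetwo\,=\,\sizesucc{\sizetwo'}$, and the size-inversion lemma yields $\sizetwo' \sizeleq \sizeone$, so one more application of Sub concludes.

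The main obstacle is the size-inversion lemma and, in particular, the bookkeeping around $\sizeinf \sim \sizesucc{\sizeinf}$: this equivalence is what allows the infinite-size case to still be classified as a successor. Once this structural fact is in place, the three inductions are routine case analyses on the last rule, with the Sub case handled uniformly by extracting a $\Nat^{\sizetwo}$ intermediate type, invoking the inductive hypothesis, and then repairing the size via subsumption.
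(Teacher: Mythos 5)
Your proof is correct and follows essentially the same route as the paper, whose own proof is a one-line remark that all three points are immediate from the rules introducing $\natzero$ and $\natsucc$ together with the identification $\sizesucc{\sizeinf}=\sizeinf$. You simply carry out in full the rule inspection the paper declares immediate, in particular handling the composition with Sub via the size-inversion property, which is exactly the content hidden behind the paper's appeal to $\sizesucc{\sizeinf}=\sizeinf$.
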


}

\longv{
\begin{proof}
 All points are immediate due to the typing rules introducing $\natzero$ and $\natsucc$.
 Recall that by the subtyping rules $\sizesucc{\sizeinf}\,=\,\sizeinf$.
\end{proof}

}

\longv{\subsection{Size Substitutions}}

\longv{ 
\begin{lemma}[Successor and Size Order]
 \label{lemma:successor-size-order}
 Suppose that $\sizeone \sizeleq \sizetwo$. Then $\sizesucc{\sizeone} \sizeleq \sizesucc{\sizetwo}$.
\end{lemma}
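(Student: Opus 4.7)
The plan is to proceed by induction on the derivation of $\sizeone \sizeleq \sizetwo$, and to do a case analysis on the last rule applied. The four rules generating $\sizeleq$ each have a short matching argument, so the proof will consist of unfolding them one by one.

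First, if the derivation ends with reflexivity, then $\sizeone = \sizetwo$, so $\sizesucc{\sizeone} = \sizesucc{\sizetwo}$, and reflexivity again yields $\sizesucc{\sizeone} \sizeleq \sizesucc{\sizetwo}$. If the derivation ends with transitivity via some $\sizethree$ between $\sizeone$ and $\sizetwo$, then applying the induction hypothesis twice gives $\sizesucc{\sizeone} \sizeleq \sizesucc{\sizethree}$ and $\sizesucc{\sizethree} \sizeleq \sizesucc{\sizetwo}$, and transitivity of $\sizeleq$ concludes. If the derivation ends with the successor rule, then $\sizetwo = \sizesucc{\sizeone}$, and the goal $\sizesucc{\sizeone} \sizeleq \sizesucc{\sizesucc{\sizeone}}$ is precisely an instance of the successor rule.

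The only point requiring a moment of care is the infinity case, where the derivation ends with $\sizeone \sizeleq \sizeinf$ and $\sizetwo = \sizeinf$. Here we need to show $\sizesucc{\sizeone} \sizeleq \sizesucc{\sizeinf}$. This is where the equivalence $\sizesucc{\sizeinf} \sizeleq \sizeinf$ and $\sizeinf \sizeleq \sizesucc{\sizeinf}$ mentioned just after the definition of $\sizeleq$ is essential: modulo this identification of $\sizesucc{\sizeinf}$ with $\sizeinf$, the goal becomes $\sizesucc{\sizeone} \sizeleq \sizeinf$, which is an instance of the infinity rule. Since the excerpt explicitly says that sizes are considered modulo this equivalence, no further argument is needed.

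The whole proof is routine, and I do not expect any serious obstacle; the only subtlety is remembering to invoke the $\sizesucc{\sizeinf} \equiv \sizeinf$ identification in the infinity case rather than trying to derive $\sizesucc{\sizeone} \sizeleq \sizesucc{\sizeinf}$ directly from the rules as written.
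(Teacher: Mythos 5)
Your proof is correct, but it takes a different route from the paper's. You argue by induction on the derivation of $\sizeone \sizeleq \sizetwo$, handling each of the four rules (reflexivity, transitivity, successor, infinity) directly; the only delicate point, the infinity case, is correctly discharged by the identification $\sizesucc{\sizeinf}=\sizeinf$ that the paper declares just after the rules (and even without that identification you could conclude by transitivity through $\sizeinf \sizeleq \sizesucc{\sizeinf}$, an instance of the successor rule). The paper instead invokes a normal-form characterization of the order: $\sizeone \sizeleq \sizetwo$ holds iff either $\sizetwo=\sizeinf$ or both sizes are iterated successors $\sizesuccit{\sizevarone}{k}$, $\sizesuccit{\sizevarone}{k'}$ of the same spine variable with $k \leq k'$, from which the conclusion is immediate. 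The paper's argument is shorter but leans on that characterization, which is itself only justified by an implicit induction on derivations of exactly the kind you carry out; your version is therefore more self-contained and elementary, at the cost of a four-way case analysis. Either is acceptable.
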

}

\longv{ 
\begin{proof}
 By definition of $\sizeleq$, if $\sizeone \sizeleq \sizetwo$, 
 there are two cases: either $\sizetwo = \sizeinf$, or $\spine{\sizeone}=\spine{\sizetwo}=\sizevarone$
 with $\sizeone = \sizesuccit{\sizevarone}{k}$, $\sizetwo = \sizesuccit{\sizevarone}{k'}$
 and $k \leq k'$. In both cases the conclusion is immediate.
\end{proof}
}

\longv{
\begin{lemma}[Size Substitutions are Monotonic]~
\label{lemma:size-substitution-monotonic}
\begin{enumerate}
 \item Suppose that $\sizeone \sizeleq \sizetwo$,
  then for any size $\sizethree$ and size variable $\sizevarone$
  we have $\subst{\sizeone}{\sizevarone}{\sizethree} \sizeleq\subst{\sizetwo}{\sizevarone}{\sizethree}$.
 \item Suppose that $\sizeone \sizeleq \sizetwo$,
  then for any size $\sizethree$ and size variable $\sizevarone$
  we have $\subst{\sizethree}{\sizevarone}{\sizeone} \sizeleq\subst{\sizethree}{\sizevarone}{\sizetwo}$.
\end{enumerate}
\end{lemma}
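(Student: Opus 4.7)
The plan is to prove each part separately, with part (1) going by induction on the derivation of $\sizeone \sizeleq \sizetwo$ and part (2) going by structural induction on $\sizethree$, invoking the previously established Lemma~\ref{lemma:successor-size-order} at the successor case.

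For part (1), I would case-split on the last rule used to derive $\sizeone \sizeleq \sizetwo$. In the reflexivity case $\sizeone = \sizetwo$, both substitutions coincide and the conclusion is immediate. In the transitivity case with intermediate size $\sizefour$, two applications of the induction hypothesis combined with another transitivity step suffice. For the rule $\sizeone \sizeleq \sizesucc{\sizeone}$, the definition of substitution gives $\subst{\sizesucc{\sizeone}}{\sizevarone}{\sizethree} = \sizesucc{\subst{\sizeone}{\sizevarone}{\sizethree}}$, and the same axiom applied to $\subst{\sizeone}{\sizevarone}{\sizethree}$ yields the conclusion. For the rule $\sizeone \sizeleq \sizeinf$, the definition of substitution gives $\subst{\sizeinf}{\sizevarone}{\sizethree} = \sizeinf$, so the axiom $\subst{\sizeone}{\sizevarone}{\sizethree} \sizeleq \sizeinf$ concludes.

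For part (2), I would proceed by induction on $\sizethree$. If $\sizethree = \sizevartwo$ with $\sizevartwo \neq \sizevarone$, or $\sizethree = \sizeinf$, the substitution is the identity on $\sizethree$ and we conclude by reflexivity. If $\sizethree = \sizevarone$, then $\subst{\sizethree}{\sizevarone}{\sizeone} = \sizeone$ and $\subst{\sizethree}{\sizevarone}{\sizetwo} = \sizetwo$, so the assumption $\sizeone \sizeleq \sizetwo$ is exactly what is required. If $\sizethree = \sizesucc{\sizethree'}$, then by the induction hypothesis we get $\subst{\sizethree'}{\sizevarone}{\sizeone} \sizeleq \subst{\sizethree'}{\sizevarone}{\sizetwo}$, and applying Lemma~\ref{lemma:successor-size-order} lifts this to their successors, which by definition of substitution are exactly $\subst{\sizethree}{\sizevarone}{\sizeone}$ and $\subst{\sizethree}{\sizevarone}{\sizetwo}$.

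There is essentially no hard step: both parts are routine inductions once the right induction principle is chosen (on the derivation tree for part (1), on the syntactic structure of $\sizethree$ for part (2)). The only subtlety worth flagging is that the axiom $\sizesucc{\sizeinf} \sizeleq \sizeinf$ and its converse are built into the equivalence of sizes, so the substitution definition on $\sizeinf$ is well-defined and stable under these identifications; thus no case in either induction breaks when an argument collapses to $\sizeinf$. The proof is therefore short and mechanical, and can be presented in a handful of lines per part.
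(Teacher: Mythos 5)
Your proposal is correct and follows essentially the same route as the paper: part (1) by induction on the derivation of $\sizeone \sizeleq \sizetwo$ with the same four cases, and part (2) by structural induction on $\sizethree$ using Lemma~\ref{lemma:successor-size-order} in the successor case. No gaps.
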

}

\longv{ 
\begin{proof} 
\begin{enumerate}
 \item
 We proceed by induction on the derivation proving that $\sizeone \sizeleq \sizetwo$, by case analysis
 on the last rule.
 \begin{varitemize}
  \item If it is $\sizeone \sizeleq \sizeone$, then $\sizeone\,=\,\sizetwo$ and the result is immediate.
  \item If it is 
  $$
  \AxiomC{$\sizeone \sizeleq \sizefour$}
  \AxiomC{$\sizefour \sizeleq \sizetwo$}
  \BinaryInfC{$\sizeone \sizeleq \sizetwo$}
  \DisplayProof
  $$
  then by induction hypothesis
  $\subst{\sizeone}{\sizevarone}{\sizethree} \sizeleq\subst{\sizefour}{\sizevarone}{\sizethree}$
  and
  $\subst{\sizefour}{\sizevarone}{\sizethree} \sizeleq\subst{\sizetwo}{\sizevarone}{\sizethree}$
  so that we conclude using this same deduction rule.
  \item If it is $\sizeone \sizeleq \sizesucc{\sizeone}$,
  we have $\sizetwo\,=\,\sizesucc{\sizeone}$ and using the definition of size substitution we obtain
  $\subst{\sizetwo}{\sizevarone}{\sizethree}\,=\,\subst{\sizesucc{\sizeone}}{\sizevarone}{\sizethree}
  \,=\,\sizesucc{\subst{\sizeone}{\sizevarone}{\sizethree}}$. We conclude using the same deduction rule.
  \item If it is $\sizeone \sizeleq \sizeinf$, we have 
  $\subst{\sizeinf}{\sizevarone}{\sizethree}\,=\,\sizeinf$ and we obtain immediately
  $\subst{\sizeone}{\sizevarone}{\sizethree} \sizeleq \sizeinf$.
 \end{varitemize}
\item We proceed by case analysis on $\sizethree$. There are four cases:
 \begin{varitemize}
  \item If $\sizethree\,=\,\sizevarone$, 
  then $\subst{\sizethree}{\sizevarone}{\sizeone} = \sizeone \sizeleq
  \sizetwo = \subst{\sizethree}{\sizevarone}{\sizetwo}$.
  \item If $\sizethree\,=\,\sizevartwo \neq \sizevarone$, 
  then $\subst{\sizethree}{\sizevarone}{\sizeone} = \sizevartwo \sizeleq
  \sizevartwo = \subst{\sizethree}{\sizevarone}{\sizetwo}$.
  \item If $\sizethree\,=\,\sizesucc{\sizefour}$, we have by induction hypothesis
  that $\subst{\sizefour}{\sizevarone}{\sizeone} \sizeleq\subst{\sizefour}{\sizevarone}{\sizetwo}$.
  We conclude using Lemma~\ref{lemma:successor-size-order}.
  \item If $\sizethree\,=\,\sizeinf$, 
  $\subst{\sizethree}{\sizevarone}{\sizeone} = \sizeinf \sizeleq \sizeinf = \subst{\sizethree}{\sizevarone}{\sizetwo}$.
 \end{varitemize}

\end{enumerate}
\end{proof}
}

\longv{
\begin{lemma}[Size Substitutions and Subtyping]~
\label{lemma:size-substitution-subtyping}

\begin{enumerate}
 \item If $\typone \subtypeleq \typtwo$, then for any size $\sizeone$ and size variable $\sizevarone$,
 we have $\subst{\typone}{\sizevarone}{\sizeone} \subtypeleq \subst{\typtwo}{\sizevarone}{\sizeone}$.

 If $\distrtypone \subtypeleq \distrtyptwo$, then for any size $\sizeone$ and size variable $\sizevarone$,
 we have $\subst{\distrtypone}{\sizevarone}{\sizeone} \subtypeleq \subst{\distrtyptwo}{\sizevarone}{\sizeone}$.

 \item If $\positive{\sizevarone}{\typone}$ and $\sizeone \sizeleq \sizetwo$, we have
 $\subst{\typone}{\sizevarone}{\sizeone} \subtypeleq \subst{\typone}{\sizevarone}{\sizetwo}$.
 
  If $\positive{\sizevarone}{\distrtypone}$ and $\sizeone \sizeleq \sizetwo$, we have
 $\subst{\distrtypone}{\sizevarone}{\sizeone} \subtypeleq \subst{\distrtypone}{\sizevarone}{\sizetwo}$.

 \item If $\negative{\sizevarone}{\typone}$ and $\sizeone \sizeleq \sizetwo$, we have
 $\subst{\typone}{\sizevarone}{\sizetwo} \subtypeleq \subst{\typone}{\sizevarone}{\sizeone}$.

 If $\negative{\sizevarone}{\distrtypone}$ and $\sizeone \sizeleq \sizetwo$, we have
 $\subst{\distrtypone}{\sizevarone}{\sizetwo} \subtypeleq \subst{\distrtypone}{\sizevarone}{\sizeone}$.
\end{enumerate}
\end{lemma}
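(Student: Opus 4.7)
The plan is to prove the three items together by simultaneous induction, treating the sized type and distribution type versions of each statement jointly since their mutual dependency through the arrow constructor forces them to appear at the same inductive step. For item (1), I will proceed by induction on the derivation of $\typone \subtypeleq \typtwo$ (respectively $\distrtypone \subtypeleq \distrtyptwo$); for items (2) and (3), I will proceed by mutual induction on the structure of $\typone$ (respectively $\distrtypone$), following the inductive definition of positive/negative occurrences given in the figure on positivity.

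For item (1), the three reflexivity/transitivity rules are routine; the only interesting base case is $\Nat^{\sizetwo_1} \subtypeleq \Nat^{\sizetwo_2}$ obtained from $\sizetwo_1 \sizeleq \sizetwo_2$, where Lemma~\ref{lemma:size-substitution-monotonic}(1) immediately gives $\subst{\sizetwo_1}{\sizevarone}{\sizeone} \sizeleq \subst{\sizetwo_2}{\sizevarone}{\sizeone}$ and hence the required subtyping. For the arrow rule and the distribution-type rule, I apply the induction hypothesis to the premises and reassemble using the same rule, noting that size substitution commutes with the $\typarrow$ constructor and with the formation of distributions $\distrelts{\cdot^{\cdot}}$, and that the witness function $\funcone : \indexsetone \to \indexsettwo$ from the distribution subtyping rule is unchanged.

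For items (2) and (3), I go by mutual induction on the structure of the type. In the base case $\typone = \Nat^{\sizethree}$: if $\positive{\sizevarone}{\Nat^{\sizethree}}$, the conclusion reduces to $\subst{\sizethree}{\sizevarone}{\sizeone} \sizeleq \subst{\sizethree}{\sizevarone}{\sizetwo}$, which is Lemma~\ref{lemma:size-substitution-monotonic}(2); if $\negative{\sizevarone}{\Nat^{\sizethree}}$, then by definition $\sizevarone \notin \sizethree$, so both substitutions leave $\sizethree$ unchanged and reflexivity suffices. In the arrow case $\typone = \typtwo \typarrow \distrtypone$, assume $\positive{\sizevarone}{\typone}$, so $\negative{\sizevarone}{\typtwo}$ and $\positive{\sizevarone}{\distrtypone}$; by the induction hypothesis for item (3) applied to $\typtwo$, we get $\subst{\typtwo}{\sizevarone}{\sizetwo} \subtypeleq \subst{\typtwo}{\sizevarone}{\sizeone}$, and by the induction hypothesis for item (2) applied to $\distrtypone$, we get $\subst{\distrtypone}{\sizevarone}{\sizeone} \subtypeleq \subst{\distrtypone}{\sizevarone}{\sizetwo}$; combining via the arrow subtyping rule yields the claim. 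The negative arrow case is symmetric. For a distribution $\distrtypone = \distrelts{\typone_{\indexone}^{p_{\indexone}} \sep \indexone \in \indexsetone}$ with $\positive{\sizevarone}{\distrtypone}$, I apply the IH to each $\typone_\indexone$ and then invoke the distribution subtyping rule with $\funcone = \mathit{id}_{\indexsetone}$ and the trivial inequalities $p_\indexone \leq p_\indexone$; the negative case is again symmetric.

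The main obstacle is the bookkeeping around polarity flipping in the arrow case: one has to be careful that the mutual induction between items (2) and (3) is well-founded --- which it is, because each recursive appeal strips one type constructor --- and that the rule for distribution subtyping, which is the most complex one, can indeed be instantiated with the identity witness function when both distributions have the same index set and the same probability weights. A minor subtlety is that the result for distribution types in (1) must be explicitly stated and proved alongside the sized-type version, because the arrow rule's premise involves a distribution-type subtyping obligation, so the inductions really have to be interleaved rather than carried out sequentially.
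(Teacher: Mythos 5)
Your proposal is correct and follows essentially the same route as the paper: item (1) by induction on the subtyping derivation using the monotonicity of size substitution for the $\Nat$ base case, and items (2)--(3) by a mutual structural induction that flips polarity across the arrow and uses the identity reindexing for distributions. The only nitpick is that the subtyping relation as defined has no transitivity rule (only reflexivity), so the ``transitivity'' case you mention does not actually arise; this is harmless.
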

}

\longv{
\begin{proof}~

 \begin{enumerate}
  \item We prove both statements at the same time
  by induction on the derivation proving that $\distrtypone \subtypeleq \distrtyptwo$
  (or $\typone \subtypeleq \typtwo$).
  \begin{varitemize}
   \item If the last rule is $\typone \subtypeleq \typone$, then $\distrtypone\,=\,\distrtyptwo\,=\,\typone$
   and the result is immediate.
   \item If the last rule is 
   $$
   \AxiomC{$\sizethree \sizeleq \sizetwo$}
  \UnaryInfC{$\Nat^{\sizethree} \subtypeleq\ \Nat^{\sizetwo}$}
  \DisplayProof
  $$
  then by Lemma~\ref{lemma:size-substitution-monotonic} we have 
  $\subst{\sizethree}{\sizevarone}{\sizeone} \sizeleq \subst{\sizetwo}{\sizevarone}{\sizeone}$
  so that $\subst{\left(\Nat^{\sizethree}\right)}{\sizevarone}{\sizeone}
  \,=\,\Nat^{\subst{\sizethree}{\sizevarone}{\sizeone}}\ \subtypeleq\ 
  \Nat^{\subst{\sizetwo}{\sizevarone}{\sizeone}} \,=\,
  \subst{\left(\Nat^{\sizetwo}\right)}{\sizevarone}{\sizeone}$.
  \item If the last rule is
   $$
   \AxiomC{$\typtwo \subtypeleq \typone$}
   \AxiomC{$\distrtypone \subtypeleq \distrtyptwo$}
   \BinaryInfC{$\typone \typarrow \distrtypone \ \subtypeleq\ \typtwo \typarrow \distrtyptwo$}
   \DisplayProof
   $$
   then by induction hypothesis 
   $\subst{\typtwo}{\sizevarone}{\sizeone} \subtypeleq \subst{\typone}{\sizevarone}{\sizeone}$
   and
   $\subst{\distrtypone}{\sizevarone}{\sizeone} \subtypeleq \subst{\distrtyptwo}{\sizevarone}{\sizeone}$
   from which we conclude using the same rule.

  \item If the last rule is
   $$
   \AxiomC{$\exists \funcone \,:\, \indexsetone \to \indexsettwo,\ \ \left(\forall \indexone \in \indexsetone,\ \   \typone_{\indexone}
   \,\subtypeleq\,\typtwo_{\funcone(\indexone)} \right) \text{ and }
   \left(\forall \indextwo \in \indexsettwo,\ \ \sum_{\indexone \in \funcone^{-1}(\indextwo)}\ p_{\indexone} \leq   p'_{\indextwo} \right)$}
   \UnaryInfC{$\distrelts{\sigma_{\indexone}^{p_{\indexone}} \sep \indexone \in \indexsetone}\ \subtypeleq\ 
   \distrelts{\typtwo_{\indextwo}^{p'_{\indextwo}} \sep \indextwo \in \indexsettwo}$}
   \DisplayProof
   $$
   we obtain by induction hypothesis that for every $\indexone \in \indexsetone$
   $\subst{\typone_{\indexone}}{\sizevarone}{\sizeone}
   \,\subtypeleq\,\subst{\typtwo_{\funcone(\indexone)}}{\sizevarone}{\sizeone}$
   from which we conclude using the same rule.
  \end{varitemize}
  \item We prove (2) and (3) by mutual induction on $\distrtypone$ (or $\typone$).
  Let $\sizeone \sizeleq \sizetwo$.
  \begin{varitemize}
   \item If $\typone\,=\,\Nat^{\sizethree}$,
   \begin{varitemize}
    \item Suppose that $\positive{\sizevarone}{\Nat^{\sizethree}}$.
    Note that this does not assume anything on $\sizethree$.
    Since $\sizeone \sizeleq \sizetwo$,    
    we have $\subst{\left(\Nat^{\sizethree}\right)}{\sizevarone}{\sizeone}
    \ =\ \Nat^{\subst{\sizethree}{\sizevarone}{\sizeone}}
    \ \subtypeleq\  \Nat^{\subst{\sizethree}{\sizevarone}{\sizetwo}}
    \ =\ \subst{\left(\Nat^{\sizethree}\right)}{\sizevarone}{\sizetwo}
    $ where we used the monotonicity of size substitution 
    (Lemma~\ref{lemma:size-substitution-monotonic}).

    \item Suppose that $\negative{\sizevarone}{\Nat^{\sizethree}}$.
    Then $\sizevarone \notin \sizethree$ and 
    $\subst{\left(\Nat^{\sizethree}\right)}{\sizevarone}{\sizeone}
    \ =\ \subst{\left(\Nat^{\sizethree}\right)}{\sizevarone}{\sizetwo}$
    so that we can conclude.
   \end{varitemize}

   \item If $\typone\,=\,\typtwo\typarrow \distrtypone$,
   \begin{varitemize}
    \item Suppose that $\positive{\sizevarone}{\typone}$.
    Then $\negative{\sizevarone}{\typtwo}$ and $\positive{\sizevarone}{\distrtypone}$.
    By induction hypothesis, $\subst{\typtwo}{\sizevarone}{\sizetwo} \subtypeleq \subst{\typtwo}{\sizevarone}{\sizeone}$ and
    $\subst{\distrtypone}{\sizevarone}{\sizeone} \subtypeleq \subst{\distrtypone}{\sizevarone}{\sizetwo}$.
    By the subtyping rules, $\subst{\typone}{\sizevarone}{\sizeone} =
    \subst{\typtwo}{\sizevarone}{\sizeone} \typarrow \subst{\distrtypone}{\sizevarone}{\sizeone}
    \subtypeleq
    \subst{\typtwo}{\sizevarone}{\sizetwo} \typarrow \subst{\distrtypone}{\sizevarone}{\sizetwo}
    = \subst{\typone}{\sizevarone}{\sizetwo}$.
    \item Suppose that $\negative{\sizevarone}{\typone}$. The reasoning is symmetrical.
   \end{varitemize}
   \item If $\distrtypone\,=\,\distrelts{\typone_\indexone^{p_\indexone} \sep \indexone \in \indexsetone}$,
   \begin{varitemize}
    \item Suppose that $\positive{\sizevarone}{\distrtypone}$.
    Then for every $\indexone \in \indexsetone$ we have $\positive{\sizevarone}{\typone_\indexone}$
    and by induction hypothesis
    $\subst{\typone_\indexone}{\sizevarone}{\sizeone} \subtypeleq \subst{\typone_\indexone}{\sizevarone}{\sizetwo}$.
    We obtain that 
    $\subst{\distrtypone}{\sizevarone}{\sizeone} \subtypeleq \subst{\distrtypone}{\sizevarone}{\sizetwo}$
    using the identity as reindexing function.
    \item Suppose that $\negative{\sizevarone}{\distrtypone}$. The reasoning is symmetrical.
   \end{varitemize}   
  \end{varitemize}

 \end{enumerate}
\end{proof}
}

\longv{
\begin{lemma}[Size substitution]
\label{lemma/size-substitution}
If $\contextsizedone \contextsep \contextdistrone \proves \termone \typsep \distrtypone$, then for any size variable $\sizevarone$ and any size $\sizeone$
we have that $\subst{\contextsizedone}{\sizevarone}{\sizeone} \contextsep \subst{\contextdistrone}{\sizevarone}{\sizeone}
\proves \termone \typsep \subst{\distrtypone}{\sizevarone}{\sizeone}$.
\end{lemma}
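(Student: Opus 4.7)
The plan is to proceed by straightforward induction on the typing derivation, doing a case analysis on the last typing rule. Most cases reduce to applying the induction hypothesis and invoking the fact that size substitution commutes with the various operations used to combine contexts and distribution types. In particular, Lemma~\ref{lemma:commuting-size-substitutions-with-operations} gives us that $\subst{(\contextdistrone \choice_p \contextdistrtwo)}{\sizevarone}{\sizeone} = \subst{\contextdistrone}{\sizevarone}{\sizeone} \choice_p \subst{\contextdistrtwo}{\sizevarone}{\sizeone}$ and that substitution commutes with weighted sums of distribution contexts; these handle the Choice and Let rules respectively. For the App and Case rules, one simply observes that $\subst{(\contextsizedone,\,\contextsizedtwo)}{\sizevarone}{\sizeone} = \subst{\contextsizedone}{\sizevarone}{\sizeone},\,\subst{\contextsizedtwo}{\sizevarone}{\sizeone}$, and that the condition $\underlying{\contextsizedone}=\Nat$ is stable under substitution since substitution acts only on the size annotations and leaves the underlying affine type unchanged.

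The rules Var, Var', Zero, and the $\lambda$ rule are immediate. The Succ rule uses that $\subst{\sizesucc{\sizetwo}}{\sizevarone}{\sizeone} = \sizesucc{\subst{\sizetwo}{\sizevarone}{\sizeone}}$ by definition. For the Sub rule, we invoke Lemma~\ref{lemma:size-substitution-subtyping}(1), which ensures that if $\distrtypone \subtypeleq \distrtyptwo$ then $\subst{\distrtypone}{\sizevarone}{\sizeone} \subtypeleq \subst{\distrtyptwo}{\sizevarone}{\sizeone}$, so applying the induction hypothesis and then subtyping gives the conclusion.

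The only genuinely delicate case is the $\letrecname$ rule, where a size variable $\sizevartwo$ gets introduced as the spine of the sizes in the distribution type of $\funcone$ and as the ``decreasing parameter'' of the recursive definition. Here the substituted variable $\sizevarone$ and the bound variable $\sizevartwo$ must be kept distinct: by $\alpha$-renaming of $\sizevartwo$ (which is possible because the rule constrains $\sizevartwo$ only through the side conditions ``$\sizevartwo \notin \contextone$'' and ``$\sizevartwo$ positive in $\nu$'', both preserved by renaming), we may assume $\sizevartwo \neq \sizevarone$ and that $\sizevartwo$ does not appear in $\sizeone$. Under this assumption, substitution commutes with the binding: the sizes $\sizeone_{\indextwo}$ retain $\sizevartwo$ as their spine, positivity of $\sizevartwo$ in $\nu$ is preserved because $\sizevarone \neq \sizevartwo$, and the integer offsets $k_{\indextwo}$ defining the associated sized walk are untouched, so the AST condition on the induced sized walk is identical before and after substitution. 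Applying the induction hypothesis to the premise and then re-applying the $\letrecname$ rule, with $\subst{\sizetwo}{\sizevarone}{\sizeone}$ in place of $\sizetwo$ and using that substitutions on distinct variables commute (so $\subst{\subst{\distrtyptwo}{\sizevartwo}{\sizetwo}}{\sizevarone}{\sizeone} = \subst{\subst{\distrtyptwo}{\sizevarone}{\sizeone}}{\sizevartwo}{\subst{\sizetwo}{\sizevarone}{\sizeone}}$), yields the desired conclusion.

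The main obstacle is thus the $\letrecname$ case, which requires a careful $\alpha$-conversion argument to avoid variable capture and an observation that all the side conditions of the rule (positivity, spine shape, and AST-ness of the sized walk) are invariant under substitution by a variable disjoint from the bound one. All other cases are routine.
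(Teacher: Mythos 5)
Your proposal is correct and follows essentially the same route as the paper: induction on the derivation, commutation of size substitution with $\choice_p$ and weighted sums for Choice/Let, Lemma~\ref{lemma:size-substitution-subtyping} for Sub, and a careful treatment of the bound spine variable in the $\letrecname$ case. The only point worth noting is that the paper implements your ``$\alpha$-renaming of $\sizevartwo$'' formally by applying the induction hypothesis to the premise with the substitution $\subst{}{\sizevartwo}{\sizevarthree}$ for a fresh $\sizevarthree$ (legitimate since the premise has a smaller derivation), which is the precise justification your sketch leaves implicit.
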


}

\longv{

\begin{proof}
 We assume that $\sizevarone \notin \sizeone$, without loss of generality: else we introduce a fresh
 size variable $\sizevartwo$, substitute it with $\sizeone$, and then substitute $\sizevarone$ with
 $\sizevartwo$.
 The proof is by induction on the typing derivation. We proceed by case analysis on the last rule.
 \begin{varitemize}
  \item If it is Var: we have 
  $\contextsizedone,\,\varone \typsep \typone \contextsep \contextdistrone \ \proves\ \varone\typsep \typone$
  and deduce immediately using Var rule again that
  $\subst{\contextsizedone}{\sizevarone}{\sizeone},\,\varone \typsep \subst{\typone}{\sizevarone}{\sizeone} \contextsep \subst{\contextdistrone}{\sizevarone}{\sizeone} \ \proves\ \varone\typsep \subst{\typone}{\sizevarone}{\sizeone}$.\\[0.2cm]
  \item If it is Var': we have 
  $\contextsizedone \contextsep \varone \typsep \typone \ \proves\ \varone\typsep \typone$
  and deduce immediately using Var' rule again that
  $\subst{\contextsizedone}{\sizevarone}{\sizeone} \contextsep \varone \typsep \subst{\typone}{\sizevarone}{\sizeone}
  \ \proves\ \varone\typsep \subst{\typone}{\sizevarone}{\sizeone}$.\\[0.2cm]
  \item If it is Succ: then $\termone\,=\,\natsucc\ \valone$
  and $\distrtypone\,=\,\Nat^{\sizesucc{\sizetwo}}$. By induction hypothesis,
  $\subst{\contextsizedone}{\sizevarone}{\sizeone} \contextsep \subst{\contextdistrone}{\sizevarone}{\sizeone}
  \proves \valone \typsep \subst{\left(\Nat^{\sizetwo}\right)}{\sizevarone}{\sizeone}$.
  But $\subst{\left(\Nat^{\sizetwo}\right)}{\sizevarone}{\sizeone}\,=\,\Nat^{\subst{\sizetwo}{\sizevarone}{\sizeone}}$
  so that by the Succ rule 
  $\subst{\contextsizedone}{\sizevarone}{\sizeone} \contextsep \subst{\contextdistrone}{\sizevarone}{\sizeone}
  \proves \natsucc\ \valone \typsep \Nat^{\sizesucc{\subst{\sizetwo}{\sizevarone}{\sizeone}}}$.
  We use the equality $\Nat^{\sizesucc{\subst{\sizetwo}{\sizevarone}{\sizeone}}}
  \,=\,\subst{\left(\Nat^{\sizesucc{\sizetwo}}\right)}{\sizevarone}{\sizeone}$
  to conclude.\\[0.2cm]
  \item If it is Zero: the result is immediate.\\[0.2cm]
  \item If it is $\lambda$: we have $\termone\,=\,\abstr{\varone}{\termtwo}$ and
  $\distrtypone\,=\,\typone \typarrow \distrtyptwo$. By induction hypothesis,
  $\subst{\contextsizedone}{\sizevarone}{\sizeone},\,\varone\typsep\subst{\typone}{\sizevarone}{\sizeone} \contextsep \subst{\contextdistrone}{\sizevarone}{\sizeone} \ \proves\ \termtwo \typsep\subst{\distrtyptwo}{\sizevarone}{\sizeone}$. By application of the $\lambda$ rule,
  $\subst{\contextsizedone}{\sizevarone}{\sizeone} \contextsep \subst{\contextdistrone}{\sizevarone}{\sizeone} \ \proves\ \abstr{\varone}{\termtwo} \typsep
  \subst{\typone}{\sizevarone}{\sizeone} \typarrow \subst{\distrtyptwo}{\sizevarone}{\sizeone}$.
  We conclude using $\subst{\typone}{\sizevarone}{\sizeone} \typarrow \subst{\distrtyptwo}{\sizevarone}{\sizeone}
  \ =\ \subst{\left(\typone \typarrow \distrtyptwo\right)}{\sizevarone}{\sizeone}$.\\[0.2cm]
  \item If it is Sub: the hypothesis of the rule is 
  $\contextsizedone \contextsep \contextdistrone \proves \termone \typsep \distrtyptwo$
  for $\distrtyptwo \subtypeleq \distrtypone$.
  By induction hypothesis, 
  $\subst{\contextsizedone}{\sizevarone}{\sizeone} \contextsep \subst{\contextdistrone}{\sizevarone}{\sizeone}
  \proves \termone \typsep \subst{\distrtyptwo}{\sizevarone}{\sizeone}$.
  But by Lemma~\ref{lemma:size-substitution-subtyping} we have 
  $\subst{\distrtyptwo}{\sizevarone}{\sizeone} \subtypeleq
  \subst{\distrtypone}{\sizevarone}{\sizeone}$.
  We conclude using the Sub rule.\\[0.2cm]
  \item If it is App, we have $\termone\,=\,\valone\ \valtwo$ and
  $\contextsizedone\,=\,\contextsizedone_1,\,\contextsizedone_2,\,\contextsizedone_3$
  and $\contextdistrone\,=\,\contextdistrone_1,\,\contextdistrone_2$
  with $\underlying{\contextsizedone_1} = \Nat$,
  $\contextsizedone_1,\,\contextsizedone_2 \contextsep \contextdistrone_1 \proves 
  \valone \typsep \typone \typarrow \distrtypone$
  and $\contextsizedone_1,\,\contextsizedone_3 \contextsep \contextdistrone_2 \proves 
  \valtwo \typsep \typone$.
  Applying the induction hypothesis twice gives 
  $\subst{\contextsizedone_1}{\sizevarone}{\sizeone},\,\subst{\contextsizedone_2}{\sizevarone}{\sizeone} \contextsep 
  \subst{\contextdistrone_1}{\sizevarone}{\sizeone} \proves 
  \valone \typsep \subst{\left(\typone\typarrow \distrtypone\right)}{\sizevarone}{\sizeone}$
  and $\subst{\contextsizedone_1}{\sizevarone}{\sizeone},\,\subst{\contextsizedone_3}{\sizevarone}{\sizeone}
  \contextsep \subst{\contextdistrone_2}{\sizevarone}{\sizeone} \proves \valtwo \typsep 
  \subst{\typone}{\sizevarone}{\sizeone}$.
  Since  $\subst{\typone}{\sizevarone}{\sizeone} \typarrow \subst{\distrtypone}{\sizevarone}{\sizeone}
  \ =\ \subst{\left(\typone\typarrow \distrtypone\right)}{\sizevarone}{\sizeone}$, we can use the Application rule
  to conclude.\\[0.2cm]
  \item If it is Choice, then $\termone\,=\,\termtwo \choice_p \termthree$ and
  $\distrtypone\,=\,\distrtyptwo \choice_p \distrtypthree$ and
  $\contextdistrone\,=\,\contextdistrone_1 \choice_p \contextdistrone_2$
  with $\contextsizedone \contextsep \contextdistrone_1 \proves \termtwo \typsep \distrtyptwo$
  and $\contextsizedone \contextsep \contextdistrone_2 \proves \termthree \typsep \distrtypthree$
  and $\underlying{\distrtyptwo}=\underlying{\distrtypthree}$.
  The induction hypothesis, applied twice, gives 
  $\subst{\contextsizedone}{\sizevarone}{\sizeone} \contextsep \subst{\contextdistrone_1}{\sizevarone}{\sizeone}
  \proves \termtwo \typsep \subst{\distrtyptwo}{\sizevarone}{\sizeone}$
  and $\subst{\contextsizedone}{\sizevarone}{\sizeone} \contextsep \subst{\contextdistrone_2}{\sizevarone}{\sizeone}
  \proves \termthree \typsep \subst{\distrtypthree}{\sizevarone}{\sizeone}$
  from which we conclude using the Choice rule again and the equality
  $\subst{\distrtyptwo}{\sizevarone}{\sizeone} \choice_p 
  \subst{\distrtypthree}{\sizevarone}{\sizeone}\,=\,
  \subst{\left(\distrtyptwo \choice_p\distrtypthree\right)}{\sizevarone}{\sizeone}$
  from Lemma~\ref{lemma:commuting-size-substitutions-with-operations}.\\[0.2cm]
  \item If it is Let, then $\termone\,=\,\left(\letin{\varone}{\termtwo}{\termthree}\right)$
  and $\distrtypone\,=\,\sum_{\indexone\in\indexsetone}\ p_\indexone \cdot \distrtyptwo_\indexone$
  and $\contextsizedone\,=\,\contextsizedone_1,\,\contextsizedone_2,\,\contextsizedone_3$
  and $\contextdistrone\,=\,\contextdistrone_1,\,\sum_{\indexone \in \indexsetone}\ 
  \contextdistrone_{2,\indexone}$
  with $\contextsizedone_1,\,\contextsizedone_2 \contextsep \contextdistrone_1
  \proves \termtwo \typsep \distrelts{\typone_\indexone^{p_\indexone}\sep \indexone \in \indexsetone}$
  and, for every $\indexone \in \indexsetone$,
  $\contextsizedone_1,\,\contextsizedone_3 \contextsep \contextdistrone_{2,\indexone}
  \proves \termthree \typsep \distrtyptwo_\indexone$
  and $\underlying{\contextsizedone_1} = \Nat$.
  By repeated applications of the induction hypothesis,
  $\subst{\contextsizedone_1}{\sizevarone}{\sizeone},\,\subst{\contextsizedone_2}{\sizevarone}{\sizeone}
  \contextsep \subst{\contextdistrone_1}{\sizevarone}{\sizeone}
  \proves \termtwo \typsep 
  \subst{\distrelts{\typone_\indexone^{p_\indexone}\sep \indexone \in \indexsetone}}{\sizevarone}{\sizeone}$
  and, for every $\indexone \in \indexsetone$,
  $\subst{\contextsizedone_1}{\sizevarone}{\sizeone},\,\subst{\contextsizedone_3}{\sizevarone}{\sizeone}
  \contextsep \subst{\contextdistrone_{2,\indexone}}{\sizevarone}{\sizeone}
  \proves \termthree \typsep \subst{\distrtyptwo_\indexone}{\sizevarone}{\sizeone}$.
  We use in a first time the equality
  $\subst{\distrelts{\typone_\indexone^{p_\indexone}\sep \indexone \in \indexsetone}}{\sizevarone}{\sizeone}
  \ =\ \distrelts{\left(\subst{\typone_\indexone}{\sizevarone}{\sizeone}\right)^{p_\indexone}\sep \indexone \in \indexsetone}$
  coming from the definition of size substitutions.
  We conclude using the Let rule again and the equality
  $\subst{\left(\sum_{\indexone\in\indexsetone}\ p_\indexone \cdot \distrtyptwo_\indexone\right)}{\sizevarone}{\sizeone}\ =\ \sum_{\indexone\in\indexsetone}\ p_\indexone \cdot \subst{\distrtyptwo_\indexone}{\sizevarone}{\sizeone}$  
  from Lemma~\ref{lemma:commuting-size-substitutions-with-operations}.\\[0.2cm]
  \item If it is Case, then 
  $\termone\,=\,\casenat{\valone}{\valtwo}{\valthree}$
  and $\contextsizedone\,=\,\contextsizedone_1,\,\contextsizedone_2$
  with $\contextsizedone_1 \contextsep\emptyset \proves \valone \typsep \Nat^{\sizesucc{\sizetwo}}$
  and $\contextsizedone_2 \contextsep \contextdistrone \proves \valtwo \typsep 
  \Nat^{\sizetwo} \typarrow \distrtypone$
  and $\contextsizedone_2 \contextsep \contextdistrone \proves \valthree \typsep \distrtypone$.
  We apply induction hypothesis three times, and obtain
  $\subst{\contextsizedone_1}{\sizevarone}{\sizeone} \contextsep\emptyset 
  \proves \valone \typsep \subst{\left(\Nat^{\sizesucc{\sizetwo}}\right)}{\sizevarone}{\sizeone}$
  and $\subst{\contextsizedone_2}{\sizevarone}{\sizeone} \contextsep 
  \subst{\contextdistrone}{\sizevarone}{\sizeone} \proves \valtwo \typsep 
  \subst{\left(\Nat^{\sizetwo} \typarrow \distrtypone\right)}{\sizevarone}{\sizeone}$
  and $\subst{\contextsizedone_2}{\sizevarone}{\sizeone}
  \contextsep \subst{\contextdistrone}{\sizevarone}{\sizeone} \proves \valthree \typsep 
  \subst{\distrtypone}{\sizevarone}{\sizeone}$.
  We use the equalities 
  $\subst{\left(\Nat^{\sizesucc{\sizetwo}}\right)}{\sizevarone}{\sizeone}\ =\ 
  \Nat^{\sizesucc{\subst{\sizetwo}{\sizevarone}{\sizeone}}}$
  and
  $\subst{\left(\Nat^{\sizetwo} \typarrow \distrtypone\right)}{\sizevarone}{\sizeone}\ =\ 
  \Nat^{\subst{\sizetwo}{\sizevarone}{\sizeone}} \typarrow \subst{\distrtypone}{\sizevarone}{\sizeone}$
  and then the Case rule to conclude.\\[0.2cm]

  \item If it is $\letrecname$, we carefully adapt the proof scheme 
  of~\cite[Lemma 3.8]{barthe-et-al:type-based-termination}.
  We have $\termone\,=\,\letrec{\funcone}{\valone}$ and
  $\distrtypone\,=\,\Nat^{\sizetwo} \typarrow \subst{\distrtyptwo}{\sizevartwo}{\sizetwo}$
  and $\contextsizedone\,=\,\contextsizedone_1,\,\contextsizedone_2$
  with 
  \begin{varitemize}
   \item $\underlying{\contextsizedone_1}\,=\,\Nat$,
   \item $\sizevartwo \notin \contextsizedone_1 \text{ and } \sizevartwo \text{ positive in } \nu
   \text{ and } \forall \indextwo \in \indexsettwo,\ \spine{\sizetwo_\indextwo}\,=\,\sizevartwo$,
   \item $\distrelts{\left(\Nat^{\sizetwo_{\indextwo}} \typarrow 
   \subst{\distrtyptwo}{\sizevartwo}{\sizetwo_{\indextwo}}
   \right)^{p_{\indextwo}} \sep \indextwo \in \indexsettwo}$ induces an AST sized walk,
   \item and 
   \begin{equation}
   \label{eq:size-substitution-letrec1}
   \contextsizedone_1 \contextsep \funcone \typsep \distrelts{\left(\Nat^{\sizetwo_{\indextwo}} \typarrow
   \subst{\distrtyptwo}{\sizevartwo}{\sizetwo_{\indextwo}}
   \right)^{p_{\indextwo}} \sep \indextwo \in \indexsettwo} \proves \valone \typsep 
   \Nat^{\sizesucc{\sizevartwo}} \typarrow \subst{\distrtyptwo}{\sizevartwo}{\sizesucc{\sizevartwo}}
   \end{equation}
  \end{varitemize}
  We suppose, without loss of generality as this can be easily obtained by renaming
  $\sizevartwo$ to a fresh variable,
  that $\sizevarone \neq \sizevartwo$ and that $\sizevartwo \notin \sizeone$.
  Let $\sizevarthree$ be a fresh size variable; it follows in particular
  that $\sizevarthree \notin \contextsizedone_1,\,\contextsizedone_2,\,\distrtyptwo,\,\sizeone$.
  We apply the induction hypothesis to (\ref{eq:size-substitution-letrec1}) and obtain
  $$
  \subst{\contextsizedone_1}{\sizevartwo}{\sizevarthree} \contextsep \funcone \typsep \subst{\left(\distrelts{\left(\Nat^{\sizetwo_{\indextwo}} \typarrow
   \subst{\distrtyptwo}{\sizevartwo}{\sizetwo_{\indextwo}}
   \right)^{p_{\indextwo}} \sep \indextwo \in \indexsettwo}\right)}{\sizevartwo}{\sizevarthree}
   \proves \valone \typsep 
   \subst{\left(\Nat^{\sizesucc{\sizevartwo}} \typarrow 
   \subst{\distrtyptwo}{\sizevartwo}{\sizesucc{\sizevartwo}}\right)}{\sizevartwo}{\sizevarthree}
  $$
  which, after applying a series of equalities and using the fact that
  $\sizevartwo \notin \contextsizedone_1$, coincides with
  $$
  \contextsizedone_1 \contextsep 
  \funcone \typsep \left(\distrelts{\left(\Nat^{\subst{\sizetwo_{\indextwo}}{\sizevartwo}{\sizevarthree}} \typarrow
   \subst{\distrtyptwo}{\sizevartwo}{\subst{\sizetwo_{\indextwo}}{\sizevartwo}{\sizevarthree}}
   \right)^{p_{\indextwo}} \sep \indextwo \in \indexsettwo}\right)
   \proves \valone \typsep 
   \Nat^{\sizesucc{\sizevarthree}} \typarrow 
   \subst{\subst{\distrtyptwo}{\sizevartwo}{\sizesucc{\sizevartwo}}}{\sizevartwo}{\sizevarthree}
  $$
  but also with
  $$
  \contextsizedone_1 \contextsep 
  \funcone \typsep \left(\distrelts{\left(\Nat^{\subst{\sizetwo_{\indextwo}}{\sizevartwo}{\sizevarthree}} \typarrow
   \subst{\subst{\distrtyptwo}{\sizevartwo}{\sizevarthree}}{\sizevarthree}{\subst{\sizetwo_{\indextwo}}{\sizevartwo}{\sizevarthree}}
   \right)^{p_{\indextwo}} \sep \indextwo \in \indexsettwo}\right)
   \proves \valone \typsep 
   \Nat^{\sizesucc{\sizevarthree}} \typarrow 
   \subst{\subst{\distrtyptwo}{\sizevartwo}{\sizevarthree}}{\sizevarthree}{\sizesucc{\sizevarthree}}
  $$
  We can apply the induction hypothesis again, and obtain after rewriting
  $$
  \subst{\contextsizedone_1}{\sizevarone}{\sizeone} \contextsep 
  \funcone \typsep \left(\distrelts{\left(\Nat^{\subst{\sizetwo_{\indextwo}}{\sizevartwo}{\sizevarthree}} \typarrow
   \subst{\subst{\subst{\distrtyptwo}{\sizevartwo}{\sizevarthree}}{\sizevarthree}{\subst{\sizetwo_{\indextwo}}{\sizevartwo}{\sizevarthree}}}{\sizevarone}{\sizeone}
   \right)^{p_{\indextwo}} \sep \indextwo \in \indexsettwo}\right)
   \proves \valone \typsep 
   \Nat^{\sizesucc{\sizevarthree}} \typarrow 
   \subst{\subst{\subst{\distrtyptwo}{\sizevartwo}{\sizevarthree}}{\sizevarthree}{\sizesucc{\sizevarthree}}}{\sizevarone}{\sizeone}
  $$
  where we used the fact that 
  $\forall \indextwo \in \indexsettwo,\ \spine{\sizetwo_\indextwo}\,=\,\sizevartwo \neq \sizevarone$
  so that $\subst{\left(\Nat^{\subst{\sizetwo_{\indextwo}}{\sizevartwo}{\sizevarthree}}\right)}{\sizevarone}{\sizeone}
  \,=\,\Nat^{\subst{\sizetwo_{\indextwo}}{\sizevartwo}{\sizevarthree}}$.
  Since $\sizevarthree \notin \sizeone$, we can exchange
  $\subst{}{\sizevarthree}{\sizesucc{\sizevarthree}}$
  and $\subst{}{\sizevarone}{\sizeone}$.
  For every $\indextwo \in \indexsettwo$, we can also exchange
  $\subst{}{\sizevarone}{\sizeone}$ and 
  $\subst{}{\sizevarthree}{\subst{\sizetwo_{\indextwo}}{\sizevartwo}{\sizevarthree}}$
  since $\spine{\subst{\sizetwo_\indextwo}{\sizevartwo}{\sizevarthree}}\,=\,\sizevarthree \neq \sizevarone$
  and $\sizevarthree \notin \sizeone$. We obtain:
  $$
  \subst{\contextsizedone_1}{\sizevarone}{\sizeone} \contextsep 
  \funcone \typsep \distrelts{\left(\Nat^{\subst{\sizetwo_{\indextwo}}{\sizevartwo}{\sizevarthree}} \typarrow
   \subst{\subst{\subst{\distrtyptwo}{\sizevartwo}{\sizevarthree}}{\sizevarone}{\sizeone}}{\sizevarthree}{\subst{\sizetwo_{\indextwo}}{\sizevartwo}{\sizevarthree}}
   \right)^{p_{\indextwo}} \sep \indextwo \in \indexsettwo}
   \proves \valone \typsep 
   \Nat^{\sizesucc{\sizevarthree}} \typarrow 
   \subst{\subst{\subst{\distrtyptwo}{\sizevartwo}{\sizevarthree}}{\sizevarone}{\sizeone}}{\sizevarthree}{\sizesucc{\sizevarthree}}
  $$
  %
  Additionally, we have:
  \begin{varitemize}
   \item $\underlying{\subst{\contextsizedone_1}{\sizevarone}{\sizeone}}\,=\,\Nat$,
   \item $\sizevarthree \notin \subst{\contextsizedone_1}{\sizevarone}{\sizeone}$,
   \item $\sizevarthree \text{ positive in } \subst{\subst{\nu}{\sizevartwo}{\sizevarthree}}{\sizevarone}{\sizeone}$
   since $\sizevartwo$ was positive in $\distrtyptwo$,
   \item $\forall \indextwo \in \indexsettwo,\ \spine{\subst{\sizetwo_\indextwo}{\sizevartwo}{\sizevarthree}}\,=\,\sizevarthree$
   since $\spine{\sizetwo_\indextwo}\,=\,\sizevartwo$,
   \item and 
   $\distrelts{\left(\Nat^{\subst{\sizetwo_{\indextwo}}{\sizevartwo}{\sizevarthree}} \typarrow
   \subst{\subst{\subst{\distrtyptwo}{\sizevartwo}{\sizevarthree}}{\sizevarone}{\sizeone}}{\sizevarthree}{\subst{\sizetwo_{\indextwo}}{\sizevartwo}{\sizevarthree}}
   \right)^{p_{\indextwo}} \sep \indextwo \in \indexsettwo}$
   induces the same sized walk, which is thus AST, as 
   $\distrelts{\left(\Nat^{\sizetwo_{\indextwo}} \typarrow 
   \subst{\distrtyptwo}{\sizevartwo}{\sizetwo_{\indextwo}}
   \right)^{p_{\indextwo}} \sep \indextwo \in \indexsettwo}$. Indeed, only the spine
   variable changes under the substitution $\subst{}{\sizevartwo}{\sizevarthree}$.
  \end{varitemize}
  Let $\sizethree\,=\,\subst{\sizetwo}{\sizevarone}{\sizeone}$.
  Since all these conditions are met, we can apply the $\letrecname$ rule
  and obtain
  $$
  \subst{\contextsizedone_1}{\sizevarone}{\sizeone},\,\subst{\contextsizedone_2}{\sizevarone}{\sizeone} \contextsep 
  \subst{\contextdistrone}{\sizevarone}{\sizeone}
  \proves \letrec{\funcone}{\valone} \typsep 
   \Nat^{\sizethree} \typarrow 
   \subst{\subst{\subst{\distrtyptwo}{\sizevartwo}{\sizevarthree}}{\sizevarone}{\sizeone}}{\sizevarthree}{\sizethree}
  $$  
  Since $\sizevarone,\,\sizevarthree \notin \sizeone$ and $\sizevarthree \notin \distrtyptwo$,
  we can commute $\subst{}{\sizevarone}{\sizeone}$ and $\subst{}{\sizevarthree}{\sizethree}$
  and compose substitutions to obtain
  $$
  \subst{\contextsizedone}{\sizevarone}{\sizeone} \contextsep 
  \subst{\contextdistrone}{\sizevarone}{\sizeone}
  \proves \letrec{\funcone}{\valone} \typsep 
   \Nat^{\sizethree} \typarrow 
   \subst{\subst{\distrtyptwo}{\sizevartwo}{\sizethree}}{\sizevarone}{\sizeone}
  $$  
  which rewrites to
  $$
  \subst{\contextsizedone}{\sizevarone}{\sizeone} \contextsep 
  \subst{\contextdistrone}{\sizevarone}{\sizeone}
  \proves \letrec{\funcone}{\valone} \typsep 
  \subst{\left(\Nat^{\sizetwo} \typarrow \subst{\distrtyptwo}{\sizevartwo}{\sizetwo}\right)}{\sizevarone}{\sizeone}
  $$
  which allows to conclude.
 \end{varitemize}

\end{proof}
}

\longv{
\subsection{Subject Reduction}}

\noindent
We can now state the main lemma of subject reduction:
\begin{lemma}[Subject Reduction, Fundamental Lemma]
\label{lemma/subject-reduction-one-step}
 Let $\termone \in \setdistrtypedclosedterms{\distrtypone}$ and 
 $\distrone$
 be the unique closed term distribution such that
 $\termone \rcbv \distrone$. Then
 there exists a closed typed distribution $\distrelts{\left(\termthree_{\indextwo} \typsep \distrtyptwo_{\indextwo}\right)^{p_{\indextwo}}
 \sep \indextwo \in \indexsettwo}$
 such that 
 \begin{varitemize}
  \item $\expectype{\left(\termthree_{\indextwo} \typsep \distrtyptwo_{\indextwo}\right)^{p_{\indextwo}}}\ =\ \mu$,
  \item $\pseudorep{\left(\termthree_{\indextwo}\right)^{p_{\indextwo}}\sep \indextwo \in \indexsettwo}$ is a pseudo-representation of
  $\distrone$.
  \end{varitemize}
 Note that the condition on expectations implies that $\bigcup_{\indextwo \in \indexsettwo}\ \supp{\distrtyptwo_{\indextwo}}\ =\ \supp{\distrtypone}$.
\end{lemma}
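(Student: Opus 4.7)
The plan is to proceed by induction on the structure of $\termone$, and in the inductive step by case analysis on which rule of Figure~\ref{fig:def-weak-cbv-distribs} produced the reduction $\termone \rcbv \distrone$. In every case we use the Generation Lemma~\ref{lemma/generation-typing} to unpack the typing derivation of $\termone$ at type $\distrtypone$ (up to a residual subtyping $\distrtyptwo \subtypeleq \distrtypone$), build the candidate closed typed distribution on the right-hand side, verify that its expectation type matches $\distrtypone$, and check that its underlying pseudo-representation matches $\distrone$.

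For the pure $\beta$/let redexes, namely $\letin{\varone}{\valone}{\termone}$ and $(\abstr{\varone}{\termone})\,\valone$, the candidate is the Dirac distribution $\distrelts{\left(\subst{\termone}{\varone}{\valone} \typsep \distrtypone\right)^1}$; typability is obtained by the Closed Value Substitution Lemma~\ref{lemma/value-substitution} followed by the Sub rule. For probabilistic choice $\termone \choice_p \termtwo$, the Generation Lemma yields $\distrtyptwo, \distrtypthree$ with $\distrtyptwo \choice_p \distrtypthree \subtypeleq \distrtypone$ and $\distrsum{(\distrtyptwo \choice_p \distrtypthree)} = 1$; by Lemma~\ref{lemma/subtyping-prob-sum} we refine this into $\distrtypone = \distrtyptwo' \choice_p \distrtypthree'$ with $\distrtyptwo \subtypeleq \distrtyptwo'$ and $\distrtypthree \subtypeleq \distrtypthree'$, and the candidate is $\distrelts{(\termone \typsep \distrtyptwo')^p,\,(\termtwo \typsep \distrtypthree')^{1-p}}$, whose expectation type is exactly $\distrtypone$. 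The Case redexes and the context rule for $\mathsf{let}$ proceed analogously: for $\letin{\varone}{\termone}{\termtwo}$ with $\termone$ not a value, the Generation Lemma gives a decomposition $\distrtypone \supseteq \sum_\indexone p_\indexone \distrtypone_\indexone$ and the inner typing $\emptyset \contextsep \emptyset \proves \termone \typsep \distrelts{\typone_\indexone^{p_\indexone}}$; the induction hypothesis applied to $\termone$ produces a typed distribution $\distrelts{(\termfour_\indexthree \typsep \distrtypfour_\indexthree)^{q_\indexthree}}$ whose expectation is $\distrelts{\typone_\indexone^{p_\indexone}}$, from which we reassemble the typed distribution of $(\letin{\varone}{\termfour_\indexthree}{\termtwo})$'s, using linearity of expectation to recover $\sum_\indexone p_\indexone \distrtypone_\indexone$ and then Sub to reach $\distrtypone$. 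The global reduction rule that fires several redexes inside a distribution is handled by applying the induction hypothesis componentwise and summing with coefficients.

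The main obstacle is the fixpoint unfolding case $\left(\letrec{\funcone}{\valone}\right)(\dataconstrone\ \vec{\valtwo}) \rcbv \distrelts{\left(\subst{\valone}{\funcone}{\letrec{\funcone}{\valone}}\,(\dataconstrone\ \vec{\valtwo})\right)^1}$. Here we first apply the Generation Lemma to the outer application and then to $\letrec{\funcone}{\valone}$, obtaining the premises of the $\letrecname$ rule: a size variable $\sizevarone$ positive in $\nu$, a family $\left(\sizeone_\indextwo\right)_{\indextwo \in \indexsettwo}$ with $\spine{\sizeone_\indextwo}=\sizevarone$, an AST sized walk condition, and a derivation
\[
\emptyset \contextsep \funcone \typsep \distrelts{\left(\Nat^{\sizeone_{\indextwo}} \typarrow \subst{\distrtyptwo}{\sizevarone}{\sizeone_{\indextwo}}\right)^{p_{\indextwo}}} \proves \valone \typsep \Nat^{\sizesucc{\sizevarone}} \typarrow \subst{\distrtyptwo}{\sizevarone}{\sizesucc{\sizevarone}}.
\]
For each $\indextwo \in \indexsettwo$, we reapply the $\letrecname$ rule, instantiating its conclusion with size $\sizeone_\indextwo$ (which is legal because the AST sized walk hypothesis is invariant under size substitution on the conclusion) to derive $\emptyset \contextsep \emptyset \proves \letrec{\funcone}{\valone} \typsep \Nat^{\sizeone_{\indextwo}} \typarrow \subst{\distrtyptwo}{\sizevarone}{\sizeone_{\indextwo}}$. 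The Distribution Substitution Lemma~\ref{lemma/distrib-substitution} then yields $\emptyset \contextsep \emptyset \proves \subst{\valone}{\funcone}{\letrec{\funcone}{\valone}} \typsep \Nat^{\sizesucc{\sizevarone}} \typarrow \subst{\distrtyptwo}{\sizevarone}{\sizesucc{\sizevarone}}$. Finally, the outer application requires $\dataconstrone\ \vec{\valtwo} : \Nat^{\sizesucc{\sizevarone}}$, which we get from the Generation Lemma for $\natzero$/$\natsucc$ (Lemma~\ref{lemma:size-constructors-properties}) combined with the Size Substitution Lemma~\ref{lemma/size-substitution} to specialize $\sizevarone$ to the actual size of the argument, and Sub to match $\distrtypone$. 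The delicate points are coordinating the two substitutions (distribution over $\funcone$ and size over $\sizevarone$) so that $\nu$ ends up substituted with the right size on both sides of the arrow, and tracking that the sized-walk premise is stable under size renaming, as exploited in the $\letrecname$ case of Lemma~\ref{lemma/size-substitution}.
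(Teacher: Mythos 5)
Your proposal is correct and follows essentially the same route as the paper's proof: induction on the term, case analysis via the Generation Lemma, the value/distribution substitution lemmas for the redex cases, the subtyping-of-probabilistic-sums lemma for choice, and, for the fixpoint case, re-deriving $\letrec{\funcone}{\valone}$ at each type $\Nat^{\sizeone_\indextwo}\typarrow\subst{\distrtyptwo}{\sizevarone}{\sizeone_\indextwo}$ before applying the distribution and size substitution lemmas and closing with positivity-based subtyping. The only minor remark is that re-instantiating the $\letrecname$ conclusion at size $\sizeone_\indextwo$ needs no invariance argument, since the rule's conclusion already quantifies over an arbitrary size $\sizetwo$.
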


\longv{
\begin{proof} We proceed by induction on $\termone$.

\begin{itemize}
\item Suppose that $\termone\,=\,\letin{\varone}{\valone}{\termtwo}$,
that $\distrone\,=\,\distrelts{\left(\subst{\termtwo}{\varone}{\valone}\right)^1}$,
and that $\emptyset \contextsep \emptyset \proves \letin{\varone}{\valone}{\termtwo} \typsep \distrtypone$.
By Lemma~\ref{lemma/generation-typing}, there exists $\left(\distrtypthree,\typone\right)$ such that 
$\emptyset \contextsep \emptyset \proves \valone \typsep \typone$ and $\varone \typsep \typone \contextsep \emptyset \proves \termtwo \typsep \distrtypthree$
with $\distrtypthree \subtypeleq \distrtypone$.
By Lemma~\ref{lemma/value-substitution}, $\emptyset \contextsep \emptyset \proves \subst{\termtwo}{\varone}{\valone} \typsep \distrtypthree$, and since
$\distrtypthree \subtypeleq \distrtypone$ we obtain by subtyping that $\emptyset \contextsep \emptyset \proves \subst{\termtwo}{\varone}{\valone} \typsep \distrtypone$.
It follows that $\distrelts{\left(\subst{\termtwo}{\varone}{\valone} \typsep \distrtypone\right)^1}$ is a closed typed distribution
satisfying the requirements of the lemma.

\item Suppose that $\termone \,=\, \left(\abstr{\varone}{\termtwo}\right)\ \valone$,
that $\distrone\,=\,\distrelts{\left(\subst{\termtwo}{\varone}{\valone}\right)^1}$
and that $\emptyset \contextsep \emptyset \proves \left(\abstr{\varone}{\termtwo}\right)\ \valone\typsep \distrtypone$.
Applying Lemma~\ref{lemma/generation-typing} twice, we obtain that
$\varone \typsep \typtwo \contextsep \emptyset \proves \termtwo \typsep \distrtypthree$ and
$\emptyset \contextsep \emptyset \proves \valone \typsep \typone$ with $\typone \subtypeleq \typtwo$ and $\distrtypthree \subtypeleq \distrtypone$.
Applying subtyping to the second judgement gives $\emptyset \contextsep \emptyset \proves \valone \typsep \typtwo$,
and we can apply Lemma~\ref{lemma/value-substitution}
to obtain $\emptyset \contextsep \emptyset \proves \subst{\termtwo}{\varone}{\valone} \typsep \distrtypthree$. Since
$\distrtypthree \subtypeleq \distrtypone$ we obtain by weakening that 
$\emptyset \contextsep \emptyset\proves \subst{\termtwo}{\varone}{\valone} \typsep \distrtypone$.
It follows that $\distrelts{\left(\subst{\termtwo}{\varone}{\valone} \typsep \distrtypone\right)^1}$ is a closed typed distribution
satisfying the requirements of the lemma.

\item Suppose that $\termone \,=\, \termtwo\ \choice_p\ \termthree$,
that $\distrone\,=\,\pseudorep{\termtwo^p,\,\termthree^{1-p}}$
and that $\emptyset \contextsep \emptyset \proves \termtwo\ \choice_p\ \termthree\typsep \distrtypone$.
By Lemma~\ref{lemma/generation-typing}, there exists $\left(\distrtypthree,\distrtypfour\right)$
such that $\emptyset\contextsep \emptyset \proves \termtwo \typsep \distrtypthree$ and 
$\emptyset\contextsep \emptyset \proves \termthree \typsep \distrtypfour$
with $\distrtypthree \choice_p \distrtypfour \subtypeleq \distrtypone$
and $\distrsum{(\distrtypthree \choice_p \distrtypfour)}\,=\,1$.
By Lemma~\ref{lemma/subtyping-prob-sum}, there exists $\left(\distrtypthree',\distrtypfour'\right)$
such that $\distrtypone\,=\,\distrtypthree' \choice_p \distrtypfour'$, $\distrtypthree \subtypeleq \distrtypthree'$
and $\distrtypfour \subtypeleq \distrtypfour'$. By subtyping, $\emptyset \contextsep \emptyset\proves \termtwo \typsep \distrtypthree'$
and $\emptyset \contextsep \emptyset\proves \termthree \typsep \distrtypfour'$. 
We consider the closed typed distribution of pseudo-representation $\pseudorep{\left(\termtwo \typsep \distrtypthree'\right)^p,\,
\left(\termthree \typsep \distrtypfour'\right)^{1-p}}$
which satisfies the requirements of the lemma since its expectation type is 
$p \cdot \distrtypthree' + (1-p) \cdot \distrtypfour' \,=\,\distrtypthree' \choice_p \distrtypfour'\,=\,\distrtypone$.
Note that we use a pseudo-representation to cope with the very specific case in which $\termtwo\,=\,\termthree$ and $\distrtypthree'\,=\,\distrtypfour'$,
in which the representation of the closed typed distribution is $\distrelts{\left(\termtwo \typsep \distrtypthree'\right)^1}$.

\item Suppose that $\termone \,=\,\letin{\varone}{\termtwo}{\termthree}$, that
 $\distrone\,=\,\distrelts{\left(\letin{\varone}{\termfour_{\indextwo}}{\termthree}\right)^{p'_{\indextwo}} \sep \indextwo \in \indexsettwo}$
 and that $\emptyset\contextsep \emptyset \proves \letin{\varone}{\termtwo}{\termthree} \typsep \distrtypone$.
 By Lemma~\ref{lemma/generation-typing}, there exists 
 $\left(\indexsetone,\left(\typone_{\indexone}\right)_{\indexone \in \indexsetone},
   \left(p_{\indexone}\right)_{\indexone \in \indexsetone}, \left(\distrtypthree_{\indexone}\right)_{\indexone \in \indexsetone}\right)$ such that
   \begin{itemize}
    \item $\sum_{\indexone \in \indexsetone}\ p_{\indexone} \cdot \distrtypthree_{\indexone} \subtypeleq \distrtypone$,
    \item $\emptyset\contextsep \emptyset \proves \termtwo \typsep \distrelts{\typone_{\indexone}^{p_{\indexone}} \sep \indexone \in \indexsetone}$,
    \item $\forall \indexone \in \indexsetone,\ \ \varone\typsep \typone_{\indexone}\contextsep \emptyset
    \proves \termthree \typsep \distrtypthree_{\indexone}$.
   \end{itemize}

    This reduction comes, by definition of $\rcbv$, from 
 $\termtwo\, \rcbv\, \distrelts{\termfour_{\indextwo}^{p'_{\indextwo}} \sep \indextwo \in \indexsettwo}$,
 to which we can apply the induction hypothesis: there exists a closed typed distribution 
 $$
 \distrelts{\left(\termfive_{\indexthree} \typsep \distrtypfour_{\indexthree} \right)^{p''_{\indexthree}} \sep \indexthree \in \indexsetthree}
 $$
 which is such that
 $$
 \distrelts{\typone_{\indexone}^{p_{\indexone}} \sep \indexone \in \indexsetone}\ \ =\ \ \sum_{\indexthree \in \indexsetthree}\ 
 p''_{\indexthree} \cdot \distrtypfour_{\indexthree}
 $$
 and that $\pseudorep{\left(\termfive_{\indexthree}  \right)^{p''_{\indexthree}} \sep \indexthree \in \indexsetthree}$
 is a pseudo-representation of $\distrelts{\termfour_{\indextwo}^{p'_{\indextwo}} \sep \indextwo \in \indexsettwo}$.
 It follows that, for every $\indexthree \in \indexsetthree$, we can write
 $\distrtypfour_{\indexthree}$ as the pseudo-representation 
 $\pseudorep{\typone_{\indexone}^{p'''_{\indexthree\indexone}} \sep \indexone \in \indexsetone}$
 where some of the $p'''_{\indexthree\indexone}$ (but not all of them) may be worth $0$.
 This implies that, for all $\indexone \in \indexsetone$,
 $$
 p_{\indexone}\ =\ \sum_{\indexthree \in \indexsetthree}\ p''_{\indexthree} p'''_{\indexthree \indexone}
 $$
 Now, for every $\indexthree \in \indexsetthree$, we can derive 
 $\emptyset\contextsep \emptyset \proves \letin{\varone}{\termfive_{\indexthree}}{\termthree} \typsep 
 \sum_{\indexone \in \indexsetone} p'''_{\indexthree \indexone} \cdot \distrtypthree_{\indexone}$
 from the rule
 $$
 \AxiomC{$\emptyset\contextsep \emptyset \proves
 \termfive_{\indexthree} \typsep \distrelts{\typone_{\indexone}^{p'''_{\indexthree\indexone}} \sep \indexone \in \indexsetone}$}
 \AxiomC{$\varone\typsep \typone_{\indexone}\contextsep \emptyset
 \proves \termthree \typsep \distrtypthree_{\indexone}\ \ \ \left(\forall \indexone \in \indexsetone\right)$}
 \BinaryInfC{$ \emptyset\contextsep \emptyset \proves \letin{\varone}{\termfive_{\indexthree}}{\termthree} \typsep 
 \sum_{\indexone \in \indexsetone} p'''_{\indexthree \indexone} \cdot \distrtypthree_{\indexone}$}
 \DisplayProof
 $$
 so that 
 $\pseudorep{\left(\letin{\varone}{\termfive_{\indexthree}}{\termthree} \typsep 
 \sum_{\indexone \in \indexsetone} p'''_{\indexthree \indexone} \cdot \distrtypthree_{\indexone}\right)^{p''_{\indexthree}}
 \sep \indexthree \in \indexsetthree}$ 
 is a pseudo-representation of a closed typed distribution, whose expectation is
 $$
 \sum_{\indexthree \in \indexsetthree}\ p''_{\indexthree}\ \sum_{\indexone \in \indexsetone} p'''_{\indexthree \indexone} \cdot \distrtypthree_{\indexone}
 \ \ =\ \ 
 \sum_{\indexone \in \indexsetone}\ \left(\sum_{\indexthree \in \indexsetthree}\ p''_{\indexthree}  p'''_{\indexthree \indexone}\right) \cdot \distrtypthree_{\indexone}
  \ \ =\ \ \sum_{\indexone \in \indexsetone}\ p_{\indexone} \cdot \distrtypthree_{\indexone}
 $$
 By Lemma~\ref{lemma/generation-typing}, the sum of $\sum_{\indexone \in \indexsetone}\ p_{\indexone} \cdot \distrtypthree_{\indexone}$ is $1$,
 and it follows that $\distrsum{\distrtypone}\,=\,1$ as well.
 Since $\sum_{\indexone \in \indexsetone}\ p_{\indexone} \cdot \distrtypthree_{\indexone} \subtypeleq \distrtypone$,
 applying Corollary~\ref{corollary/subtyping-probabilistic-sums}
 gives us a family $\left(\distrtyptwo_{\indexone}\right)_{\indexone \in \indexsetone}$
 of distribution types such that, by subtyping, we can derive for every $\indexthree \in \indexsetthree$
 the judgement 
  $\emptyset\contextsep \emptyset \proves \letin{\varone}{\termfive_{\indexthree}}{\termthree} \typsep 
 \sum_{\indexone \in \indexsetone} p'''_{\indexthree \indexone} \cdot \distrtyptwo_{\indexone}$.
 This family $\vec{\distrtyptwo}$ satisfies moreover $\sum_{\indexone \in \indexsetone}\ p_{\indexone} \cdot \distrtyptwo_{\indexone}\ =\ \distrtypone$.
 We therefore consider the closed typed distribution of pseudo-representation
 $$
 \pseudorep{\left(\letin{\varone}{\termfive_{\indexthree}}{\termthree} \typsep 
 \sum_{\indexone \in \indexsetone} p'''_{\indexthree \indexone} \cdot \distrtyptwo_{\indexone}\right)^{p''_{\indexthree}} \sep \indexthree
 \in \indexsetthree}
 $$
 and of expectation type
 $$
 \sum_{\indexthree \in \indexsetthree}\ p''_{\indexthree} \sum_{\indexone \in \indexsetone} p'''_{\indexthree \indexone} \cdot \distrtyptwo_{\indexone}
 \ \ =\ \ \sum_{\indexone \in \indexsetone}\ p_{\indexone} \cdot \distrtyptwo_{\indexone}\ =\ \distrtypone
 $$
 Since $\pseudorep{\left(\termfive_{\indexthree} \typsep \distrtypfour_{\indexthree} \right)^{p''_{\indexthree}} \sep \indexthree \in \indexsetthree}$
 is a pseudo-representation of $\distrelts{\termfour_{\indextwo}^{p'_{\indextwo}} \sep \indextwo \in \indexsettwo}$, we have that
 $\pseudorep{\left(\letin{\varone}{\termfive_{\indexthree}}{\termthree} \typsep \distrtypfour_{\indexthree} \right)^{p''_{\indexthree}} \sep \indexthree \in \indexsetthree}$
 is a pseudo-representation of
 $\distrelts{\left(\letin{\varone}{\termfour_{\indextwo}}{\termthree}\right)^{p'_{\indextwo}} \sep \indextwo \in \indexsettwo}$
 which allows us to conclude.
 
\item Suppose that $\termone \,=\,\caseof{\natsucc \ \valone}{\natsucc \rightarrow \valtwo \smallsep
 \natzero \rightarrow \valthree}$, that
 $\distrone\,=\,\distrelts{\left(\valtwo\ \valone\right)^1}$ and that
 $\emptyset \contextsep \emptyset\proves \caseof{\natsucc \ \valone}{\natsucc \rightarrow \valtwo \smallsep
 \natzero \rightarrow \valthree} \typsep \distrtypone$.
 By Lemma~\ref{lemma/generation-typing}, there exists $\sizeone$ and $\distrtypthree$ such that
    $\emptyset\contextsep \emptyset \proves \natsucc\ \valone \typsep \Nat^{\sizesucc{\sizeone}}$ and 
   $\emptyset\contextsep \emptyset \proves \valtwo \typsep \Nat^{\sizeone} \typarrow \distrtypthree$
   with $\distrtypthree \subtypeleq \distrtypone$.
   Lemma~\ref{lemma:size-constructors-properties} implies that 
   $\emptyset\contextsep \emptyset \proves \valone \typsep \Nat^{\sizeone}$. Using an Application rule, we obtain that
   $\emptyset\contextsep \emptyset \proves \valtwo \ \valone \typsep \distrtypthree$, and subtyping gives 
   $\emptyset\contextsep \emptyset \proves \valtwo \ \valone \typsep \distrtypone$, allowing us to conclude for 
   $\distrelts{\left(\valtwo\ \valone \typsep \distrtypone\right)^1}$.

\item Suppose that $\termone \,=\,\caseof{\natzero}{\natsucc \rightarrow \valtwo \smallsep
 \natzero \rightarrow \valthree}$, that
 $\distrone\,=\,\distrelts{\left(\valthree\right)^1}$ and that
 $\emptyset\contextsep \emptyset \proves \caseof{\natzero}{\natsucc \rightarrow \valtwo \smallsep
 \natzero \rightarrow \valthree} \typsep \distrtypone$.
 By Lemma~\ref{lemma/generation-typing}, there exists $\distrtypthree$ with $\distrtypthree \subtypeleq \distrtypone$ and such that
 $\emptyset\contextsep \emptyset \proves \valthree \typsep \distrtypthree$. By subtyping, 
 $\emptyset\contextsep \emptyset \proves \valthree \typsep \distrtypone$ which allows to conclude for 
 $\distrelts{\left(\valthree \typsep \distrtypone\right)^1}$.

\item Suppose that $\termone\,=\,\left(\letrec{\funcone}{\valone}\right)\ \left(\dataconstrone\ \termvector{\valtwo}\right)$, that
 $\distrone\,=\, \distrelts{\left(\subst{\valone}{\funcone}{\left(\letrec{\funcone}{\valone}\right)}\ \left(\dataconstrone\
 \termvector{\valtwo}\right)\right)^1}$ and that 
 $\emptyset\contextsep \emptyset \proves \left(\letrec{\funcone}{\valone}\right)\ \left(\dataconstrone\ \termvector{\valtwo}\right) \typsep \distrtypone$.
 We apply again Lemma~\ref{lemma/generation-typing}, but this time we rather depict the derivation typing $\termone$ with $\distrtypone$
 it induces, for the sake of clarity. This derivation is of the form (modulo composition of subtyping rules):
 \begin{adjustwidth}{-2cm}{-2cm}
 $$
 \AxiomC{$\pi_1$}
 \noLine
 \UnaryInfC{$\vdots$}
 \noLine
 \UnaryInfC{$\mathit{Hyp}$}
 \noLine
 \UnaryInfC{$\emptyset\contextsep \funcone \typsep \distrelts{\left(\Nat^{\sizefour_{\indextwo}} \typarrow \subst{\distrtypthree}{\sizevarone}{\sizefour_{\indextwo}}
\right)^{p_{\indextwo}} \sep \indextwo \in \indexsettwo} \proves \valone \typsep 
\Nat^{ \sizesucc{\sizevarone}} \typarrow \subst{\distrtypthree}{\sizevarone}{\sizesucc{\sizevarone}}$}
 \UnaryInfC{$\emptyset\contextsep \emptyset \proves \letrec{\funcone}{\valone} \typsep \Nat^{\sizethree} 
 \typarrow \subst{\distrtypthree}{\sizevarone}{\sizethree}$}
 \UnaryInfC{$\emptyset\contextsep \emptyset \proves \letrec{\funcone}{\valone} \typsep \Nat^{\sizesucc{\sizeone}} \typarrow \distrtypone$}
 \AxiomC{$\pi_2$}
 \noLine
 \UnaryInfC{$\vdots$}
 \noLine
 \UnaryInfC{$\emptyset\contextsep \emptyset \proves \dataconstrone\ \termvector{\valtwo} \typsep \Nat^{\sizesucc{\sizetwo}}$}
 \UnaryInfC{$\emptyset\contextsep \emptyset \proves \dataconstrone\ \termvector{\valtwo} \typsep \Nat^{\sizesucc{\sizeone}}$}
 \BinaryInfC{$\emptyset\contextsep \emptyset \proves \left(\letrec{\funcone}{\valone}\right)\
 \left(\dataconstrone\ \termvector{\valtwo}\right) \typsep \distrtypone$}
 \DisplayProof
 $$
 \end{adjustwidth}
 where the two sizes appearing in the types for $\dataconstrone\ \termvector{\valtwo}$ are successors due to Lemma~\ref{lemma:size-constructors-properties},
 and where
 \begin{itemize}
  \item $\mathit{Hyp}$ denotes the additional premises of the $\letrecname$ rule, and contains notably $\positive{\sizevarone}{\distrtypthree}$,
  \item $\sizetwo \sizeleq \sizesucc{\sizetwo} \sizeleq \sizesucc{\sizeone} \sizeleq \sizethree$,
  \item $\subst{\distrtypthree}{\sizevarone}{\sizethree} \subtypeleq \distrtypone$.
 \end{itemize}
 It follows that, for every $\indextwo \in \indexsettwo$, we can deduce that the closed value $\letrec{\funcone}{\valone}$
 has type $\Nat^{\sizefour_{\indextwo}} \typarrow \subst{\distrtypthree}{\sizevarone}{\sizefour_{\indextwo}}$, as proved by the derivation
 $$
 \AxiomC{$\pi_1$}
 \noLine
 \UnaryInfC{$\vdots$}
 \noLine
 \UnaryInfC{$\mathit{Hyp}$}
 \noLine
 \UnaryInfC{$ \emptyset\contextsep\funcone \typsep \distrelts{\left(\Nat^{\sizefour_{\indextwo}} \typarrow \subst{\distrtypthree}{\sizevarone}{\sizefour_{\indextwo}}\right)^{p_{\indextwo}} \sep \indextwo \in \indexsettwo} \proves \valone \typsep 
\Nat^{ \sizesucc{\sizevarone}} \typarrow \subst{\distrtypthree}{\sizevarone}{\sizesucc{\sizevarone}}$}
 \UnaryInfC{$\emptyset\contextsep \emptyset \proves \letrec{\funcone}{\valone} \typsep \Nat^{\sizefour_{\indextwo}} \typarrow 
 \subst{\distrtypthree}{\sizevarone}{\sizefour_{\indextwo}}$}
 \DisplayProof
 $$
 Since 
 $$
 \emptyset\contextsep\funcone \typsep \distrelts{\left(\Nat^{\sizefour_{\indextwo}} \typarrow \subst{\distrtypthree}{\sizevarone}{\sizefour_{\indextwo}}
\right)^{p_{\indextwo}} \sep \indextwo \in \indexsettwo} \proves \valone \typsep 
\Nat^{ \sizesucc{\sizevarone}} \typarrow \subst{\distrtypthree}{\sizevarone}{ \sizesucc{\sizevarone}}
 $$
 we obtain by Lemma~\ref{lemma/distrib-substitution} that 
 $$
 \emptyset\contextsep \emptyset \proves \subst{\valone}{\funcone}{\left(\letrec{\funcone}{\valone}\right)} \typsep 
 \Nat^{ \sizesucc{\sizevarone}} \typarrow \subst{\distrtypthree}{\sizevarone}{ \sizesucc{\sizevarone}}
 $$
 We now apply Lemma~\ref{lemma/size-substitution} to
  $ \emptyset \contextsep \emptyset\proves \subst{\valone}{\funcone}{\left(\letrec{\funcone}{\valone}\right)} \typsep 
 \Nat^{ \sizesucc{\sizevarone}} \typarrow \subst{\distrtypthree}{\sizevarone}{ \sizesucc{\sizevarone}}$
 with the substitution $\subst{}{\sizevarone}{\sizetwo}$
 and we obtain that 
 $ \emptyset\contextsep \emptyset \proves \subst{\valone}{\valone}{\left(\letrec{\funcone}\right)}{\funcone} \typsep 
 \Nat^{ \sizesucc{\sizetwo}} \typarrow \subst{\distrtypthree}{\sizevarone}{ \sizesucc{\sizetwo}}$.
 Using the Application rule, we derive $\emptyset \contextsep \emptyset\proves \subst{\valone}{\funcone}{\left(\letrec{\funcone}{\valone}\right)}
 \ \left(\dataconstrone\ \termvector{\valtwo}\right) \typsep \subst{\distrtypthree}{\sizevarone}{ \sizesucc{\sizetwo}}$.
 Since $\positive{\sizevarone}{\distrtypthree}$ and $\sizesucc{\sizetwo} \sizeleq \sizethree$, by
 Lemma~\ref{lemma:size-substitution-subtyping}, we get that 
  $\subst{\distrtypthree}{\sizevarone}{\sizesucc{\sizetwo}} \subtypeleq \subst{\distrtypthree}{\sizevarone}{\sizethree}$.
 By transitivity of $\subtypeleq$, $\subst{\distrtypthree}{\sizevarone}{\sizesucc{\sizetwo}} \subtypeleq \distrtypone$
 which allows us to conclude by subtyping that 
  $\emptyset\contextsep \emptyset \proves \subst{\valone}{\funcone}{\left(\letrec{\funcone}{\valone}\right)}\ \left(\dataconstrone\
 \termvector{\valtwo}\right) \typsep \distrtypone$.
  The result follows, for $\distrelts{\left(\subst{\valone}{\funcone}{\left(\letrec{\funcone}{\valone}\right)}\ \left(\dataconstrone\
 \termvector{\valtwo}\right) \typsep \distrtypone\right)^1}$.

 %

\end{itemize}
\end{proof}
}

\longv{
\begin{theorem}[Subject Reduction for $\rcbv$]
\label{theorem/subject-reduction}
 Let $n \in \NN$, and $\distrelts{\left(\termone_{\indexone} \typsep \distrtypone_{\indexone}\right)^{p_{\indexone}} \sep \indexone \in \indexsetone}$
 be a closed typed distribution. Suppose that 
 $ \distrelts{\left(\termone_{\indexone} \right)^{p_{\indexone}} \sep \indexone \in \indexsetone}\ \ \rcbv^n\ \ 
 \distrelts{\left(\termtwo_{\indextwo} \right)^{p'_{\indextwo}} \sep \indextwo \in \indexsettwo}
 $
 then there exists a closed typed distribution $\distrelts{\left(\termthree_{\indexthree} \typsep \distrtyptwo_{\indexthree} \right)^{p''_{\indexthree}} \sep \indexthree \in \indexsetthree}$
 such that
 \begin{itemize}
  \item $\expectype{\left(\termone_{\indexone} \typsep \distrtypone_{\indexone}\right)^{p_{\indexone}}}\ \ =\ \ 
  \expectype{\left(\termthree_{\indexthree}\typsep \distrtyptwo_{\indexthree} \right)^{p''_{\indexthree}}}$,
  \item and that $\pseudorep{\left(\termthree_{\indexthree} \right)^{p''_{\indexthree}} \sep \indexthree \in \indexsetthree}$ is a pseudo-representation of
  $\distrelts{\left(\termtwo_{\indextwo} \right)^{p'_{\indextwo}} \sep \indextwo \in \indexsettwo}$.
 \end{itemize}
\end{theorem}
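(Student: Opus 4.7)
The plan is to proceed by induction on $n$, using Lemma~\ref{lemma/subject-reduction-one-step} as the key building block for the inductive step. The base case $n=0$ is trivial: the reduction is the identity on distributions, so the original closed typed distribution $\distrelts{(\termone_\indexone \typsep \distrtypone_\indexone)^{p_\indexone} \sep \indexone \in \indexsetone}$ already witnesses the statement, since it is a pseudo-representation of its own underlying term distribution.

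For the inductive step, suppose the claim holds for $n$ and consider a reduction of length $n+1$, which we factor as
\[
\distrelts{(\termone_\indexone)^{p_\indexone} \sep \indexone \in \indexsetone} \ \rcbv^n\ \distrone \ \rcbv\ \distrelts{(\termtwo_\indextwo)^{p'_\indextwo} \sep \indextwo \in \indexsettwo}.
\]
By the induction hypothesis, there exists a closed typed distribution $\distrelts{(\termthree_\indexthree \typsep \distrtyptwo_\indexthree)^{q_\indexthree} \sep \indexthree \in \indexsetthree}$ whose expectation type coincides with $\expectype{(\termone_\indexone \typsep \distrtypone_\indexone)^{p_\indexone}}$ and whose underlying family is a pseudo-representation of $\distrone$. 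Split the index set as $\indexsetthree = \indexsetfour \uplus \indexsetfive$, where $\termthree_\indexthree$ is a value for $\indexthree \in \indexsetfour$ and a non-value for $\indexthree \in \indexsetfive$. The values contribute unchanged to $\distrtwo$, since the parallel reduction rule on distributions leaves values untouched. For each non-value index $\indexthree \in \indexsetfive$, the term $\termthree_\indexthree$ is a closed typed term, so Lemma~\ref{lemma/subject-reduction-one-step} applied to it produces a closed typed distribution $\distrelts{(\termfour_{\indexthree,\indexfour} \typsep \distrtypthree_{\indexthree,\indexfour})^{r_{\indexthree,\indexfour}} \sep \indexfour \in \indexsetsix_\indexthree}$ whose expectation type is $\distrtyptwo_\indexthree$ and whose underlying family is a pseudo-representation of the unique $\distrone_\indexthree$ with $\termthree_\indexthree \rcbv \distrone_\indexthree$.

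I would then assemble the desired witness as the pseudo-representation
\[
\pseudorep{(\termthree_\indexthree \typsep \distrtyptwo_\indexthree)^{q_\indexthree} \sep \indexthree \in \indexsetfour} \ +\ \pseudorep{(\termfour_{\indexthree,\indexfour} \typsep \distrtypthree_{\indexthree,\indexfour})^{q_\indexthree \cdot r_{\indexthree,\indexfour}} \sep \indexthree \in \indexsetfive,\ \indexfour \in \indexsetsix_\indexthree}.
\]
Linearity of expectation then gives
\[
\sum_{\indexthree \in \indexsetfour} q_\indexthree \cdot \distrtyptwo_\indexthree \ +\ \sum_{\indexthree \in \indexsetfive} q_\indexthree \sum_{\indexfour \in \indexsetsix_\indexthree} r_{\indexthree,\indexfour} \cdot \distrtypthree_{\indexthree,\indexfour} \ =\ \sum_{\indexthree \in \indexsetthree} q_\indexthree \cdot \distrtyptwo_\indexthree \ =\ \expectype{(\termone_\indexone \typsep \distrtypone_\indexone)^{p_\indexone}},
\]
using the one-step expectation identity for each $\indexthree \in \indexsetfive$, and the induction hypothesis for the final equality. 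For the pseudo-representation property, I match the $\rcbv$ rule for distributions: the value part $\distroneval$ (which is a pseudo-represented by the $\indexfour$-indexed family) is preserved, while the non-value part $\distroneterm$ is weighted and redistributed exactly as prescribed by the parallel reduction on non-values.

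The main obstacle I expect is the careful bookkeeping of pseudo-representations, in particular tracking how the parallel reduction on distributions interacts with collisions between different typed branches that share the same underlying term. The freedom afforded by pseudo-representations (as opposed to representations) is precisely what allows the combination step to succeed without needing to identify such collisions, but verifying in full detail that the weighted sum $q_\indexthree \cdot r_{\indexthree,\indexfour}$ reconstitutes the coefficients $p'_\indextwo$ of $\distrelts{(\termtwo_\indextwo)^{p'_\indextwo}}$ --- via the decomposition inherited from the inductive step and the one-step lemma --- is where the argument becomes most delicate.
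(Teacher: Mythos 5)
Your proposal is correct and follows essentially the same route as the paper's proof: induction on $n$, factoring the $(n{+}1)$-step reduction through an intermediate distribution, applying the induction hypothesis, splitting the resulting typed pseudo-representation into value and non-value components, invoking Lemma~\ref{lemma/subject-reduction-one-step} on each non-value term, and recombining by the weighted sum $\sum_\indexthree q_\indexthree \cdot \closedtypeddistrone_\indexthree$ with linearity of expectation. The paper handles the value case by formally assigning each value the Dirac one-step "reduct" so both cases are recombined uniformly, but this is only a cosmetic difference from your explicit split of the index set.
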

}

\longv{ 
\begin{proof} 
 The proof is by induction on $n$. For $n = 0$, $\rcbv^0$ is the identity relation
 and the result is immediate.
 For $n+1$, we have 
 $$
 \distrelts{\left(\termone_{\indexone} \right)^{p_{\indexone}} \sep \indexone \in \indexsetone}\ \ \rcbv^n\ \ 
 \distrelts{\left(\termfour_{\indexfour} \right)^{p''_{\indexfour}} \sep \indexfour \in \indexsetfour}
 \ \ \rcbv\ \ \distrelts{\left(\termtwo_{\indextwo} \right)^{p'_{\indextwo}} \sep \indextwo \in \indexsettwo}
 $$
 We apply the induction hypothesis and obtain a closed typed distribution
 $\distrelts{\left(\termfive_{\indexfive} \typsep \distrtypthree_{\indexfive} \right)^{p^{(3)}_{\indexfive}} \sep \indexfive \in \indexsetfive}$
 satisfying $\expectype{\left(\termone_{\indexone} \typsep \distrtypone_{\indexone}\right)^{p_{\indexone}}}\ =\ 
 \expectype{\left(\termfive_{\indexfive} \typsep \distrtypthree_{\indexfive} \right)^{p^{(3)}_{\indexfive}}}$ and 
 such that $\pseudorep{\left(\termfive_{\indexfive}\right)^{p^{(3)}_{\indexfive}} \sep \indexfive \in \indexsetfive}$
 is a pseudo-representation of 
 $\distrelts{\left(\termfour_{\indexfour} \right)^{p''_{\indexfour}} \sep \indexfour \in \indexsetfour}$.
 For every $\indexfive \in \indexsetfive$:
 \begin{varitemize}
  \item if $\termfive_\indexfive$ is a value, we set
  $\distrone_\indexfive \,=\,\distrelts{\termfive^1}$ and 
  $\closedtypeddistrone_\indexfive$ to be the closed 
  typed distribution $\closedtypeddistrone_\indexfive\,=\,
  \distrelts{\left(\termsix_\indexsix \typsep \distrtypfour_\indexsix\right)^{p^{(4)}_\indexsix} \sep
  \indexsix \in \indexsetsix_\indexfive}\,=\,
  \left(\termfive_\indexfive\typsep\distrtypthree_\indexfive\right)^1$,
  \item else $\termfive_\indexfive \rcbv \distrone_\indexfive$.
  We apply Lemma~\ref{lemma/subject-reduction-one-step} and obtain a closed 
  typed distribution $\closedtypeddistrone_\indexfive\,=\,
  \distrelts{\left(\termsix_\indexsix \typsep \distrtypfour_\indexsix\right)^{p^{(4)}_\indexsix} \sep
  \indexsix \in \indexsetsix_\indexfive}$ such that
  $\expectype{\left(\termsix_\indexsix \typsep \distrtypfour_\indexsix\right)^{p^{(4)}_\indexsix}}\,=\,
  \distrtypthree_\indexfive$
  and that $\pseudorep{\left(\termsix_\indexsix\right)^{p^{(4)}_\indexsix} \sep
  \indexsix \in \indexsetsix_\indexfive}$ is a pseudo-representation of $\distrone_\indexfive$.
 \end{varitemize}
 We claim that the closed typed distribution defined as
 $$
 \distrelts{\left(\termthree_{\indexthree} \typsep \distrtyptwo_{\indexthree} \right)^{p''_{\indexthree}} \sep \indexthree \in \indexsetthree}
 \ \ =\ \
 \sum_{\indexfive\in\indexsetfive}\ p^{(3)}_\indexfive \cdot \closedtypeddistrone_\indexfive
 $$
 satisfies the required conditions. Indeed, the expectation type is preserved:
 $$
 \begin{array}{rcl}
 \expectype{\left(\termone_{\indexone} \typsep \distrtypone_{\indexone}\right)^{p_{\indexone}}}
 &\ \  =\ \ & 
 \expectype{\left(\termfive_{\indexfive} \typsep \distrtypthree_{\indexfive} \right)^{p^{(3)}_{\indexfive}}}\\
 &\ \  =\ \ & 
 \sum_{\indexfive\in\indexsetfive}\ p^{(3)}_{\indexfive} \cdot \distrtypthree_{\indexfive}\\
 &\ \  =\ \ & 
 \sum_{\indexfive\in\indexsetfive}\ p^{(3)}_{\indexfive} \cdot \expectype{\left(\termsix_\indexsix \typsep \distrtypfour_\indexsix\right)^{p^{(4)}_\indexsix}}\\
 &\ \  =\ \ &
 \expectype{\sum_{\indexfive\in\indexsetfive}\ p^{(3)}_\indexfive \cdot \closedtypeddistrone_\indexfive}\\
 & \ \ =\ \ &
 \expectype{\distrelts{\left(\termthree_{\indexthree} \typsep \distrtyptwo_{\indexthree} \right)^{p''_{\indexthree}} \sep \indexthree \in \indexsetthree}}\\
 \end{array}
 $$
 Moreover, by definition of the family $\left(\distrone_\indexfive\right)_{\indexfive \in \indexsetfive}$,
 $$
 \distrelts{\left(\termfour_{\indexfour} \right)^{p''_{\indexfour}} \sep \indexfour \in \indexsetfour}
 \ \ =\ \ \sum_{\indexfive \in \indexsetfive}\ p^{(3)}_\indexfive \cdot\distrelts{\left(\termfive_\indexfive\right)^1}
 \ \ \rcbv\ \ \distrelts{\left(\termtwo_{\indextwo} \right)^{p'_{\indextwo}} \sep \indextwo \in \indexsettwo}
 \ \ =\ \ \sum_{\indexfive \in \indexsetfive}\ p^{(3)}_\indexfive \cdot\distrone_\indexfive
 $$
 The result follows from the fact that $\pseudorep{\left(\termsix_\indexsix\right)^{p^{(4)}_\indexsix} \sep
  \indexsix \in \indexsetsix_\indexfive}$ is a pseudo-representation of $\distrone_\indexfive$ for every
  $\indexfive \in \indexsetfive$.
 
\end{proof}
}

\longv{
\subsection{Subject Reduction for $\redval$}
}

\longv{
Recall that there is an order $\distrleq$ on distributions, defined pointwise.
}

\longv{
\begin{lemma}
 Suppose that $\termone \redval \distrelts{\valone_{\indexone}^{p_{\indexone}} \sep \indexone \in \indexsetone}$
 and that $\termone \in \setdistrtypedclosedterms{\distrtypone}$.
 Then there exists a closed typed distribution $\distrelts{\left(\valtwo_{\indextwo} \typsep \typone_{\indextwo}\right)^{p'_{\indextwo}}
 \sep \indextwo \in \indexsettwo}$ such that 
 \begin{itemize}
  \item $\expectype{\left(\valtwo_{\indextwo} \typsep \typone_{\indextwo}\right)^{p'_{\indextwo}}}\ \distrleq\ \distrtypone$,
  \item and that $\pseudorep{\left(\valtwo_{\indextwo}\right)^{p'_{\indextwo}}  \sep \indextwo \in \indexsettwo}$
  is a pseudo-representation of $\distrelts{\left(\valone_{\indexone}\right)^{p_{\indexone}} \sep \indexone \in \indexsetone}$.
 \end{itemize}
\end{lemma}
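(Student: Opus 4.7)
The plan is to reduce this statement to the subject reduction theorem for $\rcbv^n$ (Theorem~\ref{theorem/subject-reduction}), and then extract the ``value portion'' of the resulting closed typed distribution. The required expectation bound $\distrleq$ (rather than equality) already anticipates that passing from $\rcbv$ to $\redval$ throws away the non-value part, whose expected type contributes non-negatively to $\distrtypone$.

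First, I would unfold the definition of $\redval^n$ (the excerpt's $\redval$ should be read as $\redval^n$ for the implicit number of steps): by definition, $\termone \redval^n \distrone_{|V}$ means there exists a distribution $\distrone$ with $\termone \rcbv^n \distrone$ whose value decomposition $\distrone \valdec \distrone_{|V} + \distrone_{|T}$ has value-part $\distrelts{\valone_\indexone^{p_\indexone} \sep \indexone \in \indexsetone}$. Next, apply Theorem~\ref{theorem/subject-reduction} to the singleton closed typed distribution $\distrelts{(\termone \typsep \distrtypone)^1}$, which has expectation type $\distrtypone$. This produces a closed typed distribution $\distrelts{(\termfive_\indexthree \typsep \distrtypthree_\indexthree)^{q_\indexthree} \sep \indexthree \in \indexsetthree}$ with expectation type exactly $\distrtypone$, and whose underlying pseudo-representation $\pseudorep{(\termfive_\indexthree)^{q_\indexthree} \sep \indexthree \in \indexsetthree}$ represents $\distrone$.

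Then I would partition the index set $\indexsetthree = \indexsetthree_V \uplus \indexsetthree_T$ along the dichotomy value/non-value. By Lemma~\ref{lemma/dirac-types-values}, for every $\indexthree \in \indexsetthree_V$ the distribution type $\distrtypthree_\indexthree$ is Dirac, so it can be identified with a single sized type $\typone_\indexthree$. Reindexing the value part by $\indextwo$ yields the candidate typed distribution $\pseudorep{(\valtwo_\indextwo \typsep \typone_\indextwo)^{p'_\indextwo} \sep \indextwo \in \indexsettwo}$. Because the original pseudo-representation represents $\distrone$ and the split $\distrone \valdec \distrone_{|V} + \distrone_{|T}$ is precisely the split along the value/non-value dichotomy, the restriction to $\indexsetthree_V$ is a pseudo-representation of $\distrone_{|V}$. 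For the expectation bound, one computes
$$
\distrtypone \ =\ \sum_{\indexthree \in \indexsetthree_V} q_\indexthree \cdot \distrtypthree_\indexthree \ +\ \sum_{\indexthree \in \indexsetthree_T} q_\indexthree \cdot \distrtypthree_\indexthree \ =\ \expectype{(\valtwo_\indextwo \typsep \typone_\indextwo)^{p'_\indextwo}} \ +\ \sum_{\indexthree \in \indexsetthree_T} q_\indexthree \cdot \distrtypthree_\indexthree,
$$
so $\expectype{(\valtwo_\indextwo \typsep \typone_\indextwo)^{p'_\indextwo}} \distrleq \distrtypone$ in the pointwise order, since distribution types are pointwise non-negative.

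The main obstacle is the bookkeeping for pseudo-representations: the same value $\valone_\indexone$ may appear with several distinct typings $\typone_\indextwo$ in the output of Theorem~\ref{theorem/subject-reduction}, so the ``value-part'' must be expressed as a pseudo-representation rather than a strict representation, and the partition $\indexsetthree_V \uplus \indexsetthree_T$ must be shown to align with the $\valdec$ decomposition at the level of multisets. This alignment is essentially by construction (since $\valdec$ is defined term-by-term according to whether each element of the support is a value), but it has to be made explicit to justify that summing the $q_\indexthree$ over $\indexthree \in \indexsetthree_V$ with a fixed underlying value recovers the coefficient $p_\indexone$ of that value in $\distrone_{|V}$.
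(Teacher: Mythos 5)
Your proof is correct and follows essentially the same route as the paper's: apply subject reduction to the Dirac typed distribution $\distrelts{(\termone \typsep \distrtypone)^1}$, then discard the non-value indices of the resulting pseudo-representation and observe that the expectation of the surviving value part is pointwise below the full expectation $\distrtypone$. The only cosmetic difference is that the paper reads the hypothesis as a single step and invokes the one-step Lemma~\ref{lemma/subject-reduction-one-step} directly, whereas you read it as $\redval^n$ and invoke Theorem~\ref{theorem/subject-reduction}; since the latter is obtained by iterating the former, the arguments coincide in substance.
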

}

\longv{
\begin{proof}
 We have $\termone \rcbv \distrone \valdec \distroneterm + 
 \distrelts{\valone_{\indexone}^{p_{\indexone}} \sep \indexone \in \indexsetone}$.
 By Lemma~\ref{lemma/subject-reduction-one-step}, there exists a closed typed distribution
  $\distrelts{\left(\termthree_{\indexthree} \typsep \distrtyptwo_{\indexthree}\right)^{p''_{\indexthree}}
 \sep \indexthree \in \indexsetthree}$
 such that $\expectype{\left(\termthree_{\indexthree} \typsep \distrtyptwo_{\indexthree}\right)^{p''_{\indexthree}}}
 \ =\ \mu$ and that $\pseudorep{\left(\termthree_{\indexthree}\right)^{p''_{\indexthree}}\sep \indexthree \in \indexsetthree}$ is a pseudo-representation of $\distrone$.
 We consider the pseudo-representation
 $\pseudorep{\left(\valtwo_{\indextwo}\right)^{p'_{\indextwo}}  \sep \indextwo \in \indexsettwo}$
 obtained from $\pseudorep{\left(\termthree_{\indexthree}\right)^{p''_{\indexthree}}\sep \indexthree \in \indexsetthree}$ by removing all the terms which are not values. 
 Note that $\indexsettwo \subseteq \indexsetthree$.
 We obtain in this way a pseudo-representation
 of $\distrelts{\valone_{\indexone}^{p_{\indexone}} \sep \indexone \in \indexsetone}$
 which induces a closed typed representation
 $\distrelts{\left(\valtwo_{\indextwo} \typsep \distrtyptwo_{\indextwo}\right)^{p'_{\indextwo}}
 \sep \indextwo \in \indexsettwo}$ such that $\expectype{\left(\valtwo_{\indextwo} \typsep \distrtyptwo_{\indextwo}\right)^{p'_{\indextwo}}}\ \distrleq\ \distrtypone$. 
\end{proof}
}

\shortv{The proof of this result, and its generalization to the iterated reduction of closed typed distributions, appear in
the long version. They allow us to deduce the following property on the operational semantics of
$\languageproba$-terms:}

\begin{theorem}[Subject Reduction]
\label{th:subject-reduction-semantics}
  Let $\termone \in \setdistrtypedclosedterms{\distrtypone}$. 
  Then there exists a closed typed distribution $\distrelts{\left(\valtwo_{\indextwo} \typsep \typone_{\indextwo}\right)^{p_{\indextwo}}
 \sep \indextwo \in \indexsettwo}$ such that 
 \begin{varitemize}
  \item $\expectype{\left(\valtwo_{\indextwo} \typsep \typone_{\indextwo}\right)^{p_{\indextwo}}}\ \distrleq \ \distrtypone$,
\item and that $\pseudorep{\left(\valtwo_{\indextwo}\right)^{p_{\indextwo}}  \sep \indextwo \in \indexsettwo}$
  is a pseudo-representation of $\semantics{\termone}$.
 \end{varitemize}
\end{theorem}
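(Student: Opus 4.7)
The strategy is to lift the finite-iterate subject reduction (Theorem~\ref{theorem/subject-reduction} and the $\redval$ corollary immediately preceding) to the limit distribution $\semantics{\termone}$ by exploiting the $\omega$-CPO structure on distributions and the monotonicity of $\redval^n$ established in Lemma~\ref{lemma/semantics-is-computed-monotonically2}. Concretely, for each $n\in\NN$, let $\distrone_n$ be the unique distribution of values such that $\termone\redval^n\distrone_n$. By the preceding lemma, one obtains for every $n$ a closed typed distribution $\closedtypeddistrone_n=\distrelts{(\valtwo^{(n)}_\indexone\typsep\typone^{(n)}_\indexone)^{p^{(n)}_\indexone}\sep\indexone\in\indexsetone_n}$ whose underlying value distribution pseudo-represents $\distrone_n$ and which satisfies $\expectype{\closedtypeddistrone_n}\distrleq\distrtypone$.

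The first real step is to ensure that the family $(\closedtypeddistrone_n)_{n\in\NN}$ can be chosen \emph{coherently}, i.e.\ as a $\distrleq$-increasing chain in the $\omega$-CPO of distributions over typed values. Rather than invoking the $\redval$ lemma independently at each $n$, I would build the family inductively: given $\closedtypeddistrone_n$, apply the one-step Lemma~\ref{lemma/subject-reduction-one-step} only to the non-value part obtained from the $\rcbv$-reduct of the pseudo-representation underlying $\closedtypeddistrone_n$. Because values are stable under $\rcbv$ (the value-decomposition in the definition of $\redval$ only accumulates new values and never modifies the existing ones), the typed witnesses already committed to values at stage $n$ persist unchanged at stage $n+1$, and the newly produced typed values from the non-value fragment simply extend $\closedtypeddistrone_n$. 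This yields $\closedtypeddistrone_n\distrleq\closedtypeddistrone_{n+1}$ as distributions on $(\mathrm{value},\mathrm{type})$-pairs, while the expectation-type bound $\expectype{\closedtypeddistrone_{n+1}}\distrleq\distrtypone$ is preserved by linearity of $\expectype{\cdot}$ combined with the preservation statement of the one-step lemma.

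Once the chain is in place, I would define $\closedtypeddistrone_\infty\,=\,\supremumnat{\closedtypeddistrone_n}$ in the $\omega$-CPO $(\distribs,\distrleq)$. Its underlying value distribution is the supremum of the value distributions of the $\closedtypeddistrone_n$'s, which by construction pseudo-represent $\distrone_n$, so the value distribution of $\closedtypeddistrone_\infty$ pseudo-represents $\supremumnat{\distrone_n}=\semantics{\termone}$. The expectation-type bound lifts to the limit by pointwise continuity: for every sized type $\typone$,
$$
\expectype{\closedtypeddistrone_\infty}(\typone)\ =\ \sup_{n\in\NN}\,\expectype{\closedtypeddistrone_n}(\typone)\ \leq\ \distrtypone(\typone),
$$
so $\expectype{\closedtypeddistrone_\infty}\distrleq\distrtypone$ as required.

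The main obstacle is the coherence step of the second paragraph: the one-step lemma produces \emph{some} typed witness, not a canonical one, so care is needed to show that the inductive construction can keep the witnesses already assigned to earlier values fixed. This rests on the fact that the $\valdec$ decomposition separates the reduct into its value part and term part, and that Lemma~\ref{lemma/subject-reduction-one-step} need only be applied to the term part, leaving the typed values accumulated so far untouched. Granting this, the continuity of $\expectype{\cdot}$ and the $\omega$-CPO property give the result immediately.
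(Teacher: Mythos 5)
Your proposal is correct and follows essentially the same route as the paper: the paper likewise builds a $\distrleq$-increasing chain of closed typed distributions whose value parts approximate $\semantics{\termone}$ (keeping the typed witnesses of already-reached values fixed and applying the one-step Lemma~\ref{lemma/subject-reduction-one-step} only to the non-value fragment, exactly as in the proof of Theorem~\ref{theorem/subject-reduction}), and then takes the supremum in the $\omega$-CPO, with the expectation bound passing to the limit by monotone convergence. The coherence point you flag is indeed the only delicate step, and your resolution via the $\valdec$ decomposition is the intended one.
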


\noindent
Note that $\expectype{\left(\valtwo_{\indextwo} \typsep \typone_{\indextwo}\right)^{p_{\indextwo}}}\ \distrleq \ \distrtypone$
since the semantics of a term may not be a proper distribution at this stage.
In fact, it will follow from the soundness theorem of Section~\ref{sect:reducibility} that
the typability of $\termone$ implies that $\distrsum{\semantics{\termone}}=1$
and thus that the previous statement is an equality.

\drop{
\begin{proof}
 If $\termone$ is a value, the result is immediate. Consider thus that 
 $\termone$ is not a value.
 We prove that there exists a chain of closed typed distributions
 $\closedtypeddistrone_0\,=\,\emptyset \distrleq \closedtypeddistrone_1 \distrleq \closedtypeddistrone_2 
 \distrleq \cdots$ whose supremum satisfies these requirements.

 REPRENDRE
 First, we write
 $
 \semantics{\termone}\ \ =\ \ \sum_{\indexone \in \NN}\ \distrone_{\indexone}
 $
 with $\termone \redval^n \sum_{\indexone =0}^n\ \distrone_{\indexone}$
 and we introduce two families of distributions of terms (which are not values)
 $\left(\distrtwo_{\indexone}\right)_{\indexone \in \indexsetone}$ and 
 $\left(\distrthree_{\indexone}\right)_{\indexone \in \indexsetone}$
 which are such that $\termone \rcbv^n \sum_{\indexone =0}^n\ \distrone_{\indexone} +
 \distrtwo_n + \distrthree_n$ and $\distrtwo_n \rcbv \distrone_{n+1}$.

 We shall now define by induction three families $\left(\closedtypeddistrone_\indexone\right)_{\indexone \in \NN}$,
 $\left(\closedtypeddistrtwo_\indexone\right)_{\indexone \in \NN}$ and
 $\left(\closedtypeddistrthree_\indexone\right)_{\indexone \in \NN}$ of closed typed distributions
 such that, for every $n \in \NN$:
 \begin{varitemize}
  \item $\expectype{\closedtypeddistrone_{n-1}} \distrleq \expectype{\closedtypeddistrone_{n}} \distrleq \distrtypone$,
  if $n \geq 1$,
  \item the underlying term distribution of $\closedtypeddistrone_n$
  is $\sum_{\indexone = 0}^n \distrone_\indexone$,
  \item the underlying term distribution of $\closedtypeddistrtwo_n$
  is $\distrtwo_n$,
  \item the underlying term distribution of $\closedtypeddistrthree_n$ is $\distrthree_n$.
 \end{varitemize}

\end{proof}
}

\section{Typability Implies Termination: Reducibility Strikes Again}\label{sect:reducibility}
This section is technically the most advanced one of the paper, and proves
that the typing discipline we have introduced indeed enforces almost sure termination.
As already mentioned, the technique we will employ is a substantial generalisation
of Girard-Tait's reducibility. In particular, reducibility must be made quantitative,
in that terms can be said to be reducible \emph{with a certain probability}. This
means that reducibility sets will be defined as sets parametrised by a real number $p$,
called the \emph{degree of reducibility} of the set.
As Lemma~\ref{lemma:degree-of-reducibility-and-degree-of-termination} will emphasize,
this degree of reducibility ensures that terms contained in a reducibility set parametrised
by $p$ terminate with probability at least $p$. These ``intermediate'' degrees of reducibility
are required to handle the fixpoint construction, and show that recursively-defined terms
that are typable are indeed AST --- that is, that they belong to the appropriate reducibility set, parametrised
by $1$.

%
%
%
%
%


\longv{
\subsection{Reducibility Sets for Closed Terms}
}

The first preliminary notion we need is that of a size environment:
\begin{definition}[Size Environment]
  A \emph{size environment} is any function $\seone$ from $\sizevars$
  to $\NN\cup\{\infty\}$.  Given a size environment $\seone$ and a
  size expression $\sizeone$, there is a naturally defined element of
  $\NN\cup\{\infty\}$, which we indicate as
  $\sesem{\sizeone}{\seone}$:
  \begin{varitemize}
   \item $\sesem{\sizesuccit{\sizevarone}{n}}{\seone}\ =\ \seone(\sizevarone) + n$,
   \item $\sesem{\sizeinf}{\seone}\ =\ \infty$.
  \end{varitemize}
\end{definition}
In other words, the purpose of size environments is to give a semantic
meaning to size expressions. Our reducibility sets will be parametrised
not only on a probability, but also on a size environment.

\begin{definition}[Reducibility Sets]\label{def:redsets}~

\begin{varitemize}
  \item For values of simple type $\Nat$, we define the reducibility sets
  $$
  \setredvals{p}{\Nat^\sizeone,\seone}\ \ =\ \ \left\{\natsucc^n\ \natzero\sep p>0\Longrightarrow n < \sesem{\sizeone}{\seone}\right\}.
  $$

  \medbreak
    
  \item Values of higher-order type are in a reducibility set when their applications
  to appropriate values are reducible terms, with an adequate degree of reducibility:
  $$
  \begin{array}{rl}
   \setredvals{p}{\typone\typarrow\distrtypone,\seone}\ \ =\ \ \left\{\valone \in \settypedclosedvalues{\underlying{\typone \typarrow \distrtypone}} 
    \sep\forall q \in (0,1],\ \ \right.& \forall\valtwo\in\setredvals{q}{\typone,\seone},\\ \ \ 
   & \left. \valone\ \valtwo\in\setredtmsfin{pq}{\distrtypone,\seone}\right\}\\
  \end{array}
  $$
  
  \medbreak
  
   \item Distributions of values are reducible with degree $p$ when they consist of values which are themselves
   globally reducible ``enough''. Formally,   
   $\setreddists{p}{\distrtypone,\seone}$ is the set of finite distributions of values -- in the sense that they have a finite support --
   admitting a pseudo-representation
  $\distrone\,=\,\pseudorep{\left(\valone_{\indexone}\right)^{p_{\indexone}} \sep \indexone \in \indexsetone}$
  such that, setting $\distrtypone\,=\,\distrelts{\left(\typone_\indextwo\right)^{p'_\indextwo} \sep \indextwo \in \indexsettwo}$,
  there exists a family $\left(p_{\indexone \indextwo}\right)_{\indexone\in \indexsetone,\indextwo\in\indexsettwo}
  \in [0,1]^{|\indexsetone|\times|\indexsettwo|}$
  of probabilities and a family
  $\left(q_{\indexone \indextwo}\right)_{\indexone\in \indexsetone,\indextwo\in\indexsettwo}\in [0,1]^{|\indexsetone|\times|\indexsettwo|}$
  of degrees of reducibility, satisfying:
  \begin{enumerate}
   \item $\forall \indexone \in \indexsetone,\ \ \forall\indextwo \in \indexsettwo,\ \ \valone_\indexone \in 
   \setredvals{q_{\indexone\indextwo}}{\typone_\indextwo,\seone}$,
   \item $\forall \indexone \in \indexsetone,\ \ \sum_{\indextwo \in \indexsettwo}\ p_{\indexone\indextwo}\ =\ p_\indexone$,
   \item $\forall \indextwo \in \indexsettwo,\ \ \sum_{\indexone \in \indexsetone}\ p_{\indexone\indextwo}\ =\ \distrtypone(\typone_\indextwo)$,
   \item $p \leq \sum_{\indexone \in \indexsetone}\sum_{\indextwo \in \indexsettwo}\ q_{\indexone\indextwo}p_{\indexone\indextwo}$.
  \end{enumerate}
  %
  Note that $(2)$ and $(3)$ imply that $\distrsum{\distrone}\,=\,\distrsum{\distrtypone}$.
  We say that $\pseudorep{\left(\valone_{\indexone}\right)^{p_{\indexone}} \!\sep\! \indexone \in \indexsetone}$
  \emph{witnesses} that $\distrone \in \setreddists{p}{\distrtypone,\seone}$.
  
  \medbreak
  
 \item A term is reducible with degree $p$ when its finite approximations compute distributions of values
 of degree of reducibility arbitrarily close to $p$:
 $$
 \begin{array}{rl}
  \setredtmsfin{p}{\distrtypone,\seone}\ \ =\ \ \left\{\termone \in \settypedclosedterms{\underlying{\distrtypone}}
 \sep \forall 0 \leq r < p, \right. & \ \ \exists \distrtyptwo_r \distrleq \distrtypone,\ \ \exists n_r \in \NN,\\ 
 &\!\!\!\!\!\!\!\!\!\!\!
 \left.\termone \redval^{n_r} \distrone_r \text{ and } \distrone_r \in \setreddists{r}{\distrtyptwo_r,\seone} \right\}\\
 \end{array}$$
 Note that here, unlike to the case of $\setreddists{}{}$, the fact that $\termone \in \settypedclosedterms{\underlying{\distrtypone}}$
 implies that $\distrtypone$ is proper.
\end{varitemize}
\end{definition}
The first thing to observe about reducibility sets as given in
Definition \ref{def:redsets} is that they only deal with closed
terms, and not with arbitrary terms. As such, we cannot rely
\emph{directly} on them when proving AST for typable terms, at
least if we want to prove it by induction on the structure of
type derivations. We will therefore define in the sequel an extension
of these sets to open terms, which will be based on these sets of 
closed terms, and therefore enjoy similar properties.
\longv{Before embarking in the proof that typability implies reducibility, it is
convenient to prove some fundamental properties of reducibility sets, which inform
us about how these sets are structured, and which will be crucial in the sequel.
This is the purpose of the following subsections.}
\longv{
\subsection{Reducibility Sets and Termination}
}
The following lemma, relatively easy to prove, is crucial for the 
understanding of the reducibility sets, for that it shows that the 
degree of reducibility of a term gives information on the sum of
its operational semantics:
\begin{lemma}[Reducibility and Termination]~

\label{lemma:degree-of-reducibility-and-degree-of-termination}
\begin{varitemize}
 \item Let $\distrone \in \setreddists{p}{\distrtypone,\seone}$. Then $\distrsum{\distrone} \geq p$.
 \item Let $\termone \in \setredtmsfin{p}{\distrtypone,\seone}$. Then $\distrsum{\semantics{\termone}} \geq p$.
\end{varitemize}
\end{lemma}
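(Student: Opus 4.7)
The plan is to prove the two items in order, since the second builds on the first.

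For the first item, I would unfold the definition of $\setreddists{p}{\distrtypone,\seone}$. A distribution $\distrone \in \setreddists{p}{\distrtypone,\seone}$ comes with a witnessing pseudo-representation $\pseudorep{\valone_\indexone^{p_\indexone} \sep \indexone \in \indexsetone}$ and families $(p_{\indexone\indextwo})$ and $(q_{\indexone\indextwo})$ lying in $[0,1]$ satisfying conditions (1)--(4) of Definition~\ref{def:redsets}. Since each $q_{\indexone\indextwo} \leq 1$, condition (4) together with condition (2) gives
\[
p \ \leq\ \sum_{\indexone,\indextwo}\ q_{\indexone\indextwo} p_{\indexone\indextwo} \ \leq\ \sum_{\indexone,\indextwo}\ p_{\indexone\indextwo} \ =\ \sum_{\indexone \in \indexsetone}\ p_\indexone \ =\ \distrsum{\distrone},
\]
which is the desired bound.

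For the second item, the strategy is to combine the first item with the monotonicity of $\redval^n$ (Lemma~\ref{lemma/semantics-is-computed-monotonically2}) and the definition of $\semantics{\termone}$ as the supremum of the $\distrone_n$ such that $\termone \redval^n \distrone_n$. By definition of $\setredtmsfin{p}{\distrtypone,\seone}$, for every $0 \leq r < p$ there exist $n_r \in \NN$ and $\distrtyptwo_r \distrleq \distrtypone$ such that $\termone \redval^{n_r} \distrone_r$ with $\distrone_r \in \setreddists{r}{\distrtyptwo_r,\seone}$. Applying the first item yields $\distrsum{\distrone_r} \geq r$. Since $\distrone_r \distrleq \semantics{\termone}$ (by the definition of the semantics as the supremum of the approximants $\distrone_n$), summation is monotone and we obtain $\distrsum{\semantics{\termone}} \geq \distrsum{\distrone_r} \geq r$. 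Taking the supremum over $r < p$ gives $\distrsum{\semantics{\termone}} \geq p$, as required.

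Neither step involves a genuine obstacle: the first is a direct computation from the four clauses of the definition exploiting that the $q_{\indexone\indextwo}$ are bounded by $1$, and the second is a pure continuity/monotonicity argument on the $\omega$-CPO of distributions. The only point that deserves care is the handling of the pseudo-representations in item~(1), since the indexing set $\indexsetone$ may have repeated values of $\valone_\indexone$; however this does not affect the inequality, which is purely numerical on the coefficients $p_{\indexone\indextwo}$.
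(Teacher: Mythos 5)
Your proposal is correct and follows essentially the same route as the paper: item (1) is the same numerical chain combining conditions (2) and (4) of the definition with the bound $q_{\indexone\indextwo}\leq 1$, and item (2) applies item (1) to the approximants $\distrone_r$, uses $\distrone_r \distrleq \semantics{\termone}$ (the paper's Corollary~\ref{corollary:semantics-is-computed-monotonically}), and takes the supremum over $r<p$. No gaps.
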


\longv{

\begin{proof}~

\begin{itemize}
 \item Let $\distrone \in \setreddists{p}{\distrtypone,\seone}$, then there exists a pseudo-representation
 $\distrone\,=\,\pseudorep{\left(\valone_{\indexone}\right)^{p_{\indexone}} \sep \indexone \in \indexsetone}$
 and families 
 $\left(p_{\indexone \indextwo}\right)_{\indexone\in \indexsetone,\indextwo\in\indexsettwo}$
  and $\left(q_{\indexone \indextwo}\right)_{\indexone\in \indexsetone,\indextwo\in\indexsettwo}$ of reals of $[0,1]$
  such that 
  $\forall \indexone \in \indexsetone,\ \ \sum_{\indextwo \in \indexsettwo}\ p_{\indexone\indextwo}\ =\ p_\indexone$,
  and that $p \leq \sum_{\indexone \in \indexsetone}\sum_{\indextwo \in \indexsettwo}\ q_{\indexone\indextwo}p_{\indexone\indextwo}$.
 We therefore have:
 $$
 \distrsum{\distrone}\ \ =\ \ \sum_{\indexone \in \indexsetone}\ p_\indexone \ \ =\ \ 
 \sum_{\indexone \in \indexsetone}\sum_{\indextwo \in \indexsettwo}\ p_{\indexone\indextwo}
 \ \ \geq\ \  \sum_{\indexone \in \indexsetone}\sum_{\indextwo \in \indexsettwo}\ q_{\indexone\indextwo}p_{\indexone\indextwo}\ \  \geq\ \  p.	
 $$
 \item Since $\termone \in \setredtmsfin{p}{\distrtypone,\seone}$, for every $0 \leq r < p$, there exists 
 $n_r$ with $\termone \redval^{n_r} \distrone_r$ and $\distrone_r \in \setreddists{r}{\distrtyptwo_r,\seone}$.
 From the previous point, we get that $\distrsum{\distrone_r} \geq r$ for every $0 \leq r < p$.
 It follows from Corollary~\ref{corollary:semantics-is-computed-monotonically} that $\distrsum{\semantics{\termone}} \geq r$ for every $0 \leq r < p$
 and, by taking the supremum, $\distrsum{\semantics{\termone}} \geq p$.
\end{itemize}

\end{proof}

}

\noindent
It follows from this lemma that terms with degree of reducibility $1$ are AST:

\begin{corollary}[Reducibility and AST]
\label{corollary:reducibility-and-ast}
Let $\termone \in \setredtmsfin{1}{\distrtypone,\seone}$. Then $\termone$ is AST.
\end{corollary}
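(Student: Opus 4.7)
The plan is essentially to read off the conclusion from the preceding lemma. I would apply the second item of Lemma~\ref{lemma:degree-of-reducibility-and-degree-of-termination} in the special case $p=1$: since $\termone \in \setredtmsfin{1}{\distrtypone,\seone}$, the lemma yields directly that $\distrsum{\semantics{\termone}} \geq 1$. Combined with the fact that $\semantics{\termone}$ is, by construction, a distribution and so satisfies $\distrsum{\semantics{\termone}} \leq 1$, I would conclude that $\distrsum{\semantics{\termone}} = 1$. By the definition of almost-sure termination given in Section~\ref{sect:proba-language}, this is exactly what it means for $\termone$ to be AST.

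No subtleties or obstacles are to be expected here: all the work is in the lemma itself (and, more importantly, in Definition~\ref{def:redsets}, whose quantitative structure has been carefully designed so that degree-$1$ reducibility forces AST). The corollary is really just the packaging of the $p=1$ case in terms of the operational notion introduced earlier, which is why it can be stated immediately after the lemma without further machinery.
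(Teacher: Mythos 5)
Your proof is correct and matches the paper's intended argument exactly: the corollary is presented as an immediate consequence of Lemma~\ref{lemma:degree-of-reducibility-and-degree-of-termination} with $p=1$, combined with the fact that $\distrsum{\semantics{\termone}}\leq 1$ by definition of a distribution. Nothing further is needed.
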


\longv{ 
\subsection{Reducibility Sets and Reducibility Degrees}
We now prove two results related to the reducibility degrees of reducibility
sets. First of all, if the degree of reducibility $p$ is $0$, then no assumption is made
on the probability of termination of terms, distributions or values.
It follows that the three kinds of reducibility sets collapse to the set 
of all affinely simply typable terms, distributions or values:
}
\shortv{
\paragraph{Fundamental Properties.}
Before embarking in the proof that typability implies reducibility, it is
convenient to prove some fundamental properties of reducibility sets, which inform
us about how these sets are structured, and which will be crucial in the sequel.
First of all, if the degree of reducibility $p$ is $0$, then no assumption is made
on the probability of termination of terms, distributions or values.
It follows that the three kinds of reducibility sets collapse to the set 
of all affinely simply typable terms, distributions or values:}
\begin{lemma}[Candidates of Null Reducibility]
\begin{varitemize}
\label{lemma:simple-types-and-candidates-of-proba-zero}
 \item If $\valone \in \settypedclosedvalues{\simpletypone}$, then
  $\valone\in\setredvals{0}{\typone,\seone}$ for every $\typone$ such that $\underlying{\typone} \,=\, \simpletypone$ and every
  size environment $\seone$.
  
 \item Let $\distrone\,=\,\distrelts{\left(\valone_\indexone\right)^{p_\indexone}  \sep \indexone \in \indexsetone}$ be a finite distribution of values.
  If $\forall \indexone \in \indexsetone,\ \ \valone_\indexone \in \settypedclosedvalues{\simpletypone}$, then 
 $\distrone \in \setreddists{0}{\distrtypone,\seone}$
 for every $\distrtypone$ such that $\underlying{\distrtypone}\,=\,\simpletypone$ and $\distrsum{\distrtypone}\,=\,\distrsum{\distrone}$ and every $\seone$.
 
 \item If $\termone \in \settypedclosedterms{\simpletypone}$, then
  $\termone\in\setredtmsfin{0}{\distrtypone,\seone}$ for $\distrtypone$ such that $\underlying{\distrtypone} \,=\,\simpletypone$ and every $\seone$.
\end{varitemize}
\end{lemma}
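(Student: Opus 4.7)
The plan is to establish the three bullets together, proceeding by induction on the structure of the simple type $\simpletypone$. The unifying observation is that each of the three reducibility conditions trivialises when its degree equals $0$, so essentially no substantive content has to be produced.

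The third bullet is immediate: the defining clause of $\setredtmsfin{p}{\distrtypone,\seone}$ begins with a universal quantification over $0 \leq r < p$, which ranges over the empty set when $p = 0$, leaving only the typing hypothesis $\termone \in \settypedclosedterms{\underlying{\distrtypone}}$, which is provided by assumption.

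For the first bullet, I would induct on $\simpletypone$. If $\simpletypone = \Nat$ then $\typone = \Nat^{\sizeone}$ and $\valone = \natsucc^n\ \natzero$; the inequality "$p > 0 \Rightarrow n < \sesem{\sizeone}{\seone}$" is vacuous at $p = 0$. If $\simpletypone = \simpletyptwo \typarrow \simpletypthree$ then $\typone = \typtwo \typarrow \distrtyptwo$ with $\underlying{\typtwo} = \simpletyptwo$ and $\underlying{\distrtyptwo} = \simpletypthree$; for any $q \in (0,1]$ and any $\valtwo \in \setredvals{q}{\typtwo,\seone}$, unfolding Definition~\ref{def:redsets} gives $\valtwo \in \settypedclosedvalues{\simpletyptwo}$, so the affine App rule produces $\valone\ \valtwo \in \settypedclosedterms{\simpletypthree}$, and the third bullet applied at type $\distrtyptwo$ yields $\valone\ \valtwo \in \setredtmsfin{0}{\distrtyptwo,\seone} = \setredtmsfin{0 \cdot q}{\distrtyptwo,\seone}$, as required.

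The second bullet is the only step involving a genuine, if mild, construction. Take the canonical pseudo-representation of $\distrone$ and set $q_{\indexone\indextwo} = 0$ throughout; condition (4) of $\setreddists{}{}$ then collapses to $0 \leq 0$, while condition (1) reduces to $\valone_\indexone \in \setredvals{0}{\typone_\indextwo,\seone}$, which follows from the first bullet since $\underlying{\typone_\indextwo} = \simpletypone$ is precisely the simple type of $\valone_\indexone$. It remains to realise the two marginal constraints (2) and (3); when $\distrsum{\distrone} = 0$ the zero matrix works, and otherwise the product coupling $p_{\indexone\indextwo} = p_\indexone \cdot \distrtypone(\typone_\indextwo) / \distrsum{\distrone}$ has the prescribed row and column sums, since $\distrsum{\distrone} = \distrsum{\distrtypone}$ by hypothesis. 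This coupling is the only non-vacuous ingredient; the lemma itself is meant to be content-free, serving as the $p = 0$ base case on top of which the substantive reducibility arguments of Section~\ref{sect:reducibility} will be built.
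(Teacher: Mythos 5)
Your proof is correct and follows essentially the same route as the paper's: vacuity of the defining conditions at degree $0$ for the term and base-type cases, induction on the simple type for higher-order values (reducing to the term case at the codomain type), and the product coupling $p_{\indexone\indextwo} = p_\indexone\, \distrtypone(\typone_\indextwo)/\distrsum{\distrtypone}$ for distributions. Your explicit handling of the degenerate case $\distrsum{\distrone}=0$ is a small extra care the paper omits, but it does not change the argument.
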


\longv{

\noindent
\textbf{Structure of the proof.}
In this lemma, as for most lemmas proving properties about $\setredvals{}{}$, $\setreddists{}{}$ and $\setredtmsfin{}{}$,
we use a proof by induction on types. As the property is
defined in a mutual way over $\setredvals{}{}$, $\setreddists{}{}$ and $\setredtmsfin{}{}$, we typically prove it for 
$\setredvals{p}{\Nat^\sizeone,\seone}$ for any size $\sizeone$ refining $\Nat$, and then for 
$\setredvals{p}{\typone \typarrow \distrtypone,\seone}$ by using the associated hypothesis on 
$\setredtmsfin{p}{\distrtypone,\seone}$. Then we prove the property for any distribution type
for $\setreddists{p}{\distrtypone,\seone}$ using induction hypothesis on the 
$\setredvals{p}{\typone,\seone}$ for $\typone \in \supp{\distrtypone}$,
and we prove it for $\setredtmsfin{p}{\distrtypone,\seone}$ using induction hypothesis on $\setredvals{p}{\typone,\seone}$.
The point is that these ingredients allow to give a proof by induction on the simple type underlying the sized type of interest.
In the base case, the sized type is necessarily of the form $\Nat^\sizeone$ for some size $\sizeone$: we prove the statement
on $\setredvals{p}{\Nat^\sizeone,\seone}$ for
all these sized types without using any induction-like hypothesis.
Then we prove the statement for distribution types $\distrtypone\,=\,\distrelts{\left(\Nat^{\sizeone_\indexone}\right)^{p_\indexone}
\sep \indexone \in \indexsetone}$, first on $\setreddists{p}{\distrtypone,\seone}$ by using the results for the 
sets $\setredvals{p}{\Nat^{\sizeone_\indexone},\seone}$. Then we prove the result for 
$\setredtmsfin{p}{\distrtypone,\seone}$ typically using the one for $\setreddists{p}{\distrtypone,\seone}$.

We then switch to higher-order types, and give the proof for $\setredvals{p}{\typone \typarrow \distrtypone,\seone}$, which may use
the results for the other sets on types $\typone$ and $\distrtypone$. Typically, only results on $\setredtmsfin{p}{\distrtypone,\seone}$ are used.
Then the proofs for $\setreddists{p}{\typone \typarrow \distrtypone,\seone}$ and $\setredtmsfin{p}{\typone \typarrow \distrtypone,\seone}$
are typically the same as in the case of distributions over sized types refining $\Nat$: therefore we do not write them again.

This proof scheme will become more clear with the proof of this lemma on candidates of null reducibility:

}

\longv{
\begin{proof}~
\begin{itemize}
 \item Let $\valone \in \settypedclosedvalues{\Nat}$. Every $\typone \refines \Nat$ is of the shape $\typone\,=\,\Nat^\sizeone$ for a size $\sizeone$.
 Let $\seone$ be a size environment. By inspection of the grammar of values and of the simple type system, we see that $\valone$
 must be of the shape $\natsucc^n \ \natzero$ for $n \in \NN$. Note that $\valone$ is closed: it cannot be a variable.
 By definition, $\valone \in \setredvals{0}{\typone,\seone}$.

 \item Let $\simpletypone\,=\,\simpletyptwo \typarrow \simpletypthree$ be a higher-order type, with 
 $\typone \refines \simpletyptwo$ and $\distrtypone \refines \simpletypthree$. Let $\seone$ be a size environment, and 
 $\valone \in \settypedclosedvalues{\simpletypone}$.
 Let $q \in (0,1]$ and $\valtwo\in\setredvals{q}{\typone,\seone}$, we need to prove that $\valone\ \valtwo\in\setredtmsfin{0}{\distrtypone,\seone}$.
 But, by definition of $\setredvals{q}{\typone,\seone}$, $\valtwo \in \settypedclosedvalues{\simpletyptwo}$.
 It follows that $\valone\ \valtwo \in \settypedclosedterms{\simpletypthree}$, and we can apply the induction hypothesis to deduce that
 $\valone\ \valtwo \in \setredtmsfin{0}{\distrtypone,\seone}$, so that by definition $\valone \in \setredvals{0}{\typone,\seone}$.

 \item Let $\distrone\,=\,\distrelts{\left(\valone_\indexone\right)^{p_\indexone}  \sep \indexone \in \indexsetone}$ be a distribution of values
 and $\distrtypone\,=\,\distrelts{\left(\typone_\indextwo\right)^{p'_{\indextwo}} \sep \indextwo \in \indexsettwo}
 \refines \simpletypone$ be a distribution type. Suppose that $\forall \indexone \in \indexsetone$, $\valone_\indexone \in \settypedclosedvalues{\simpletypone}$. Let $\seone$ be a size environment. For every $(\indexone,\indextwo) \in \indexsetone \times \indexsettwo$,
 we set $p_{\indexone \indextwo} \,=\,\frac{p_\indexone p'_\indextwo}{\distrsum{\distrtypone}}$ and $q_{\indexone \indextwo} \,=\,0$.
 We consider the canonical pseudo-representation $\distrone\,=\,\pseudorep{\left(\valone_\indexone\right)^{p_\indexone}  \sep \indexone \in \indexsetone}$
 and check the four conditions to be in $\setreddists{0}{\distrtypone,\seone}$:
 \begin{enumerate}
    \item $\forall \indexone \in \indexsetone,\ \ \forall\indextwo \in \indexsettwo,\ \ \valone_\indexone \in 
   \setredvals{q_{\indexone\indextwo}}{\typone_\indextwo,\seone}$: this is obtained by induction hypothesis,
   \item $\forall \indexone \in \indexsetone,\ \ \sum_{\indextwo \in \indexsettwo}\ p_{\indexone\indextwo}\ =\ p_\indexone$:
   let $\indexone \in \indexsetone$, we have $\sum_{\indextwo \in \indexsettwo}\ p_{\indexone\indextwo}\,=\,
   \frac{p_\indexone}{\distrsum{\distrtypone}} \sum_{\indextwo \in \indexsettwo}\ p'_\indextwo
   \,=\,\frac{p_\indexone}{\distrsum{\distrtypone}}\times \distrsum{\distrtypone}
   \,=\,p_\indexone$.
   \item $\forall \indextwo \in \indexsettwo,\ \ \sum_{\indexone \in \indexsetone}\ p_{\indexone\indextwo}\ =\ \distrtypone(\typone_\indextwo)$:
   let $\indextwo \in \indexsettwo$, we have $\sum_{\indexone \in \indexsetone}\ p_{\indexone\indextwo}
   \,=\,\frac{p'_\indextwo}{\distrsum{\distrtypone}} \sum_{\indexone \in \indexsetone}\ p_\indexone 
   \,=\,\frac{p'_\indextwo}{\distrsum{\distrtypone}} \times \distrsum{\distrone}$.
   But $\distrsum{\distrtypone}\,=\,\distrsum{\distrone}$, so that
   the sum equals $p'_\indextwo$ as requested.
   \item $p \leq \sum_{\indexone \in \indexsetone}\sum_{\indextwo \in \indexsettwo}\ q_{\indexone\indextwo}p_{\indexone\indextwo}$:
   this amounts to $0 \leq 0$, which holds.
 \end{enumerate}

 \item Let $\termone \in \settypedclosedterms{\simpletypone}$ and $\distrtypone \refines \simpletypone$. Let $\seone$ be a size environment.
 Then $\termone \in \setredtmsfin{0}{\distrtypone,\seone}$: the condition on $\termone$ in the definition of $\setredtmsfin{0}{\distrtypone,\seone}$
 is for any $0 \leq r < 0$ so that it's an empty condition in this case.
\end{itemize}
\end{proof}
}

\noindent
As $p$ gives us a lower bound on the sum of the semantics of terms, it is
easily guessed that a term having degree of reducibility $p$ must also have
degree of reducibility $q < p$. The following lemma makes this statement precise:
\begin{lemma}[Downward Closure]
\label{lemma/downward-closure-tred}
 Let $\typone$ be a sized type, $\distrtypone$ be a distribution type and $\seone$ be a size environment. Let $0 \leq q < p \leq 1$. Then:
\begin{varitemize}
  \item For any value $\valone$, $\valone \in \setredvals{p}{\typone,\seone} \ \ \implies\ \ \valone \in \setredvals{q}{\typone,\seone}$, 
  \item For any finite distribution of values $\distrone$, $\distrone \in \setreddists{p}{\distrtypone,\seone}
 \ \ \implies\ \ \distrone \in \setreddists{q}{\distrtypone,\seone}$,
  \item For any term $\termone$, $\termone \in \setredtmsfin{p}{\distrtypone,\seone}
 \ \ \implies\ \ \termone \in \setredtmsfin{q}{\distrtypone,\seone}$.
 \end{varitemize}
\end{lemma}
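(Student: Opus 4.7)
The plan is to proceed by mutual induction on the simple type underlying $\typone$ or $\distrtypone$, following the standard scheme described earlier in the paper for properties of the sets $\setredvals{}{}$, $\setreddists{}{}$, and $\setredtmsfin{}{}$. The key observation is that, in all three definitions, the parameter $p$ appears only through \emph{lower-bound} constraints, so replacing $p$ by a smaller $q$ can only weaken them; consequently I expect no serious obstacle, merely a careful case analysis.

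First I would handle values of base type. Given $\valone \in \setredvals{p}{\Nat^{\sizeone},\seone}$, the value must be of the shape $\natsucc^n\,\natzero$. If $q = 0$ the membership in $\setredvals{q}{\Nat^{\sizeone},\seone}$ is immediate; if $q > 0$, then $p > 0$ too, so $n < \sesem{\sizeone}{\seone}$, which is precisely the condition required for $q$. For higher-order values $\valone \in \setredvals{p}{\typone \typarrow \distrtypone,\seone}$, I would take $r \in (0,1]$ and $\valtwo \in \setredvals{r}{\typone,\seone}$; from the hypothesis $\valone\,\valtwo \in \setredtmsfin{pr}{\distrtypone,\seone}$, and since $qr < pr$, the induction hypothesis on $\setredtmsfin{}{}$ at the type $\distrtypone$ (whose underlying simple type is strictly smaller) yields $\valone\,\valtwo \in \setredtmsfin{qr}{\distrtypone,\seone}$, as desired.

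For distributions of values, given $\distrone \in \setreddists{p}{\distrtypone,\seone}$ with witnessing pseudo-representation and families $(p_{\indexone\indextwo})$, $(q_{\indexone\indextwo})$, I would simply reuse the same families to witness $\distrone \in \setreddists{q}{\distrtypone,\seone}$. Conditions $(1)$, $(2)$, $(3)$ of Definition~\ref{def:redsets} do not mention $p$ and therefore hold unchanged, while condition $(4)$ becomes $q \leq \sum_{\indexone,\indextwo} q_{\indexone\indextwo} p_{\indexone\indextwo}$, which follows from $q < p$ and the hypothesis. Finally, for terms $\termone \in \setredtmsfin{p}{\distrtypone,\seone}$, I would note that the universal quantification $\forall 0 \leq r < q$ ranges over a subset of $\forall 0 \leq r < p$, so the same witnesses $n_r$, $\distrtyptwo_r$, $\distrone_r$ furnished by the hypothesis at $p$ show $\termone \in \setredtmsfin{q}{\distrtypone,\seone}$.

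The only mild subtlety will be to make sure the mutual induction is laid out in the right order---values at base type, then distributions and terms at base type, then values at higher-order type, and so on---so that each appeal to the induction hypothesis is on a strictly smaller simple type, as per the scheme sketched after Lemma~\ref{lemma:simple-types-and-candidates-of-proba-zero}. Since neither the case analysis on types nor the inequalities involved are delicate, I do not anticipate any genuine obstacle.
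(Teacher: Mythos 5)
Your proposal is correct and follows essentially the same route as the paper's proof: the base-type and distribution cases reuse the same witnesses since $p$ only appears in lower-bound constraints, the higher-order value case applies the induction hypothesis to $\valone\,\valtwo$ at the smaller type with $qr < pr$, and the term case merely restricts the quantification over $r$. The only cosmetic difference is that the paper dispatches the $q=0$ case once and for all via Lemma~\ref{lemma:simple-types-and-candidates-of-proba-zero}, whereas you absorb it into the individual cases; both work.
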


\longv{
\begin{proof}
  Let $\typone$ be a sized type, $\distrtypone$ be a distribution type and $\seone$ be a size environment.
  If $q = 0$, the result is immediate as a consequence of Lemma~\ref{lemma:simple-types-and-candidates-of-proba-zero}.
  Let $0 < q < p \leq 1$.
 \begin{itemize}
  \item Suppose that $\valone \in \setredvals{p}{\Nat^\sizeone,\seone}$. Since by definition $p,\,q > 0 \ \implies\ 
  \setredvals{p}{\Nat^\sizeone,\seone}\,=\,\setredvals{q}{\Nat^\sizeone,\seone}$, the result holds.
 \item Suppose that $\valone \in \setredvals{p}{\typone \typarrow \distrtypone,\seone}$.
 Then:
 $$
 \begin{array}{rl}
  & \valone \in \setredvals{p}{\typone \typarrow \distrtypone,\seone}\\
  \Longleftrightarrow \ \ & \forall q \in (0,1],\ \ \forall\valtwo\in\setredvals{q}{\typone,\seone},\ \ 
        \valone\valtwo\in\setredtms{pq}{\distrtypone,\seone}\\
  \implies \ \ & \forall q' \in (0,1],\ \ \forall\valtwo\in\setredvals{q'}{\typone,\seone},\ \ 
        \valone\valtwo\in\setredtms{qq'}{\distrtypone,\seone} \qquad \ \ \text{(by IH, since } 0 < qq' < pq \leq 1 \text{)}\\
  \Longleftrightarrow \ \ & \valone \in \setredvals{q}{\typone \typarrow \distrtypone,\seone}\\
 \end{array}
 $$

 \item Suppose that $\distrone \in \setreddists{p}{\distrtypone,\seone}$. Then there exists 
 a pseudo-representation $\distrone\,=\,\pseudorep{\left(\valone_\indexone\right)^{p_\indexone}  \sep \indexone \in \indexsetone}$
 and families of reals
 $\left(p_{\indexone \indextwo}\right)_{\indexone\in \indexsetone,\indextwo\in\indexsettwo}$
  and $\left(q_{\indexone \indextwo}\right)_{\indexone\in \indexsetone,\indextwo\in\indexsettwo}$ 
  satisfying conditions $(1)-(4)$. We have $\distrone \in \setreddists{q}{\distrtypone,\seone}$,
  for the same pseudo-representation,
  since conditions $(1)-(3)$ are the same, and $(4)$ holds as well since $q < p$.
 
 \item Suppose that $\termone \in \setredtmsfin{p}{\distrtypone,\seone}$.
 Then for every $0 \leq r < p$, there exists 
 $\distrtyptwo_r \distrleq \distrtypone$ and $n_r \in \NN$ with $\termone \redval^{n_r} \distrone_r$
 and $\distrone_r \in \setreddists{r}{\distrtyptwo_r,\seone}$.
 So this statement also holds for every $0 \leq r < q$ and $\termone \in \setredtmsfin{q}{\distrtypone,\seone}$.
 \end{itemize}
\end{proof}
}

%
%
%

\longv{
\subsection{Continuity of the Reducibility Sets}
}

\longv{
To prove the lemma of continuity on the reducibility sets, which says that if 
an element is in all the reducibility sets for degrees $q < p$ then it is also in the
set parametrised by the degree $p$, we use the following companion lemma
computing a family of probabilities maximizing the degree of reducibility of a distribution:
}
 
\longv{

\begin{lemma}[Maximizing the Degree of Reducibility of a Distribution]~

 \label{lemma/companion-lemma-continuity-tred}
  Let $\distrone\,=\,\pseudorep{\left(\valone_{\indexone}\right)^{p_{\indexone}} \sep \indexone \in \indexsetone}$ be a finite distribution of values and 
  $\distrtypone\,=\,\distrelts{\left(\typone_\indextwo\right)^{p'_\indextwo} \sep \indextwo \in \indexsettwo}$ be a distribution type.
  Set $q_{\indexone \indextwo}\,=\,\max\left\{q \sep \valone_\indexone \in \setredvals{q}{\typone_\indextwo,\seone} \right\}$
  for every $(\indexone,\indextwo) \in \indexsetone \times \indexsettwo$.
  Then there exists a family $\left(p_{\indexone \indextwo}\right)_{\indexone\in \indexsetone,\indextwo\in\indexsettwo}$
  of reals of $[0,1]$ satisfying:
  \begin{enumerate}
   \item $\forall \indexone \in \indexsetone,\ \ \sum_{\indextwo \in \indexsettwo}\ p_{\indexone\indextwo}\ =\ p_\indexone$,
   \item $\forall \indextwo \in \indexsettwo,\ \ \sum_{\indexone \in \indexsetone}\ p_{\indexone\indextwo}\ =\ \distrtypone(\typone_\indextwo)$,
  \end{enumerate}
  and which maximizes $\sum_{\indexone \in \indexsetone}\sum_{\indextwo \in \indexsettwo}\ q_{\indexone\indextwo}p_{\indexone\indextwo}$.
\end{lemma}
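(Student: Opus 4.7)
The plan is to recognise the problem as a finite-dimensional linear programming problem over a transportation polytope. Indeed, conditions $(1)$ and $(2)$ say that the non-negative family $(p_{\indexone\indextwo})_{(\indexone,\indextwo)\in\indexsetone\times\indexsettwo}$, viewed as a matrix, has prescribed row marginals $(p_\indexone)_{\indexone\in\indexsetone}$ and column marginals $(\distrtypone(\typone_\indextwo))_{\indextwo\in\indexsettwo}$. The implicit compatibility assumption $\distrsum{\distrone}=\distrsum{\distrtypone}$, which will hold in every situation where this companion lemma is invoked in the proof of continuity of reducibility sets, guarantees that the total masses match. Under this assumption the feasible polytope $\mathcal{F}$ is non-empty, as an explicit witness is given by the product assignment $p_{\indexone\indextwo}\,=\,p_\indexone\cdot\distrtypone(\typone_\indextwo)/\distrsum{\distrtypone}$ (with the convention $0/0=0$ in the degenerate case).

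Once feasibility is established, the rest is a direct application of Weierstrass's extreme value theorem: the set $\mathcal{F}$ is bounded and closed in $\RR^{|\indexsetone|\times|\indexsettwo|}$, hence compact, while the objective $(p_{\indexone\indextwo})\mapsto\sum_{\indexone\in\indexsetone}\sum_{\indextwo\in\indexsettwo}q_{\indexone\indextwo}p_{\indexone\indextwo}$ is linear and therefore continuous. Consequently the supremum is attained at some point of $\mathcal{F}$, and this point is the required maximising family.

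The main obstacle will be justifying that the ``$\max$'' appearing in the definition of $q_{\indexone\indextwo}$ is indeed attained, rather than being merely a supremum. By Lemma~\ref{lemma/downward-closure-tred}, the set $\left\{q\in[0,1]\sep\valone_\indexone\in\setredvals{q}{\typone_\indextwo,\seone}\right\}$ is downward closed, so it is an interval of the form $[0,q^*]$ or $[0,q^*)$. To rule out the open case one argues by induction on the underlying simple type: for the base case $\Nat^{\sizeone}$ the reducibility predicate $\natsucc^n\,\natzero\in\setredvals{q}{\Nat^\sizeone,\seone}$ is determined by whether $n<\sesem{\sizeone}{\seone}$, giving $q^*\in\{0,1\}$; for the arrow case, closure from above propagates from $\setredtmsfin{}{}$, whose own definition as a universally quantified condition over $0\leq r<p$ is automatically closed under suprema in $p$, and this in turn feeds back into $\setreddists{}{}$ and $\setredvals{\cdot}{\typone\typarrow\distrtypone,\seone}$ through their definitional clauses. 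Once this structural regularity is in place, the optimisation step summarised above completes the proof.
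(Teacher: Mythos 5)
Your proof is correct and reaches the same conclusion as the paper's, but by a somewhat different route. The paper also casts the problem as a finite linear program, but it establishes existence of the maximum by invoking the general existence theorem for feasible bounded LPs from Schrijver, and it checks feasibility only after relaxing the equality constraints $Ax=b$ to $Ax\leq b$, for which the null vector is trivially admissible. Your argument is more self-contained and, on the feasibility point, more careful: you exhibit an explicit feasible point of the \emph{equality}-constrained transportation polytope (the product coupling $p_{\indexone\indextwo}=p_\indexone\cdot\distrtypone(\typone_\indextwo)/\distrsum{\distrtypone}$), which requires --- and makes explicit --- the compatibility condition $\distrsum{\distrone}=\distrsum{\distrtypone}$. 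That condition is not stated in the lemma, but it does hold at the lemma's only point of use (Corollary~\ref{corollary/continuity-dred}, where $\distrone\in\setreddists{p}{\distrtypone,\seone}$ forces the two sums to coincide), so flagging it as an implicit hypothesis is the right call; note that feasibility of the paper's relaxed inequality system does not by itself give feasibility of the original equality system, so your version is the more watertight of the two. With a nonempty, closed and bounded polytope in hand, Weierstrass replaces the appeal to LP theory. Finally, your last paragraph addresses a point the paper silently assumes, namely that the $\max$ defining $q_{\indexone\indextwo}$ is attained; your induction on the simple type essentially re-proves the value- and term-level halves of Lemma~\ref{lemma/continuity-lemma-vred}, whose proof for those cases does not depend on the present lemma, so you could equally have cited it together with downward closure (Lemma~\ref{lemma/downward-closure-tred}) and null reducibility (Lemma~\ref{lemma:simple-types-and-candidates-of-proba-zero}). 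Both approaches buy the same result; yours is the more elementary and slightly more rigorous one.
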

}

\longv{
\begin{proof}
 We use the theory of linear programming in the finite real vectorial space $\RR^n$, taking~\cite{schrijver:theory-linear-programming} as a reference.
 We stick to the notations of this book. The problem then amounts to showing the existence of
 \begin{equation}
  \label{equation/the-linear-problem}
  \max\left\{cx \sep x \geq \vec{0},\ Ax \,=\, b \right\}
 \end{equation}
 where, supposing that we can index vectors and matrices by $\indexone \times \indextwo$ thanks to a bijection 
 $\indexone \times \indextwo \xrightarrow{} \setoneton$ where $n\,=\,\setcard{\indexsetone \times \indexsettwo}$:
 \begin{itemize}
  \item $x$ is the column vector indexed by the finite set $\indexsetone \times \indexsettwo$, where $x_{\indexone\indextwo}$ plays the role of
  $p_{\indexone\indextwo}$,
  \item $c$ is the row vector indexed by $\indexsetone \times \indexsettwo$, with 
  $c_{\indexone\indextwo}\,=\,\max\left\{q \sep \valone_\indexone \in \setredvals{q}{\typone_\indextwo,\seone} \right\}$,
  \item $\vec{0}$ is the null column vector of size $\setcard{\indexsetone \times \indexsettwo}$,
  \item $A$ is the matrix with columns indexed by $\indexsetone \times \indexsettwo$ and rows indexed by
  $\indexsetone + \indexsettwo$, and such that:
  \begin{itemize}
   \item $a_{\indexone',(\indexone,\indextwo)}\,=\,1$ if and only if $\indexone=\indexone'$, and $0$ else,
   \item and $a_{\indextwo',(\indexone,\indextwo)}\,=\,1$ if and only if $\indextwo=\indextwo'$, and $0$ else.
  \end{itemize}

  \item $b$ is the column vector indexed by $\indexsetone+\indexsettwo$ and such that $b_\indexone\,=\,p_\indexone$ and
  $b_\indextwo\,=\,\distrtypone(\typone_\indextwo)$.
 \end{itemize}
 Following \cite[Section 7.4]{schrijver:theory-linear-programming}, the maximum (\ref{equation/the-linear-problem}) exists if and only if:
 \begin{itemize}
  \item the problem is \emph{feasible}: its constraints admit at least a solution,
  \item and if it is \emph{bounded}: there should be an upper bound over (\ref{equation/the-linear-problem}).
 \end{itemize}
 and, also, its existence is equivalent to the one of the maximum of the following problem:
  \begin{equation}
  \label{equation/the-linear-problem-bis}
  \max\left\{cx \sep x \geq \vec{0},\ Ax \leq b \right\}
 \end{equation}
 This reformulation makes the feasibility immediate, for the null vector $x\,=\,\vec{0}$.
 It is as well immediate to see that the problem is bounded: by construction, all the $q_{\indexone\indextwo} \in [0,1]$,
 and $\sum_{\indexone \in \indexsetone}\sum_{\indextwo\in\indexsettwo}\ p_{\indexone\indextwo}\,=\,1$ so that 
 $\sum_{\indexone \in \indexsetone}\sum_{\indextwo\in\indexsettwo}\ q_{\indexone\indextwo}p_{\indexone\indextwo}\, \leq\, 1$.
 The existence of the maximum (\ref{equation/the-linear-problem}) follows, and the Lemma therefore holds.
\end{proof}
}

\longv{It follows that a distribution has a maximal degree of reducibility: the supremum of the degrees
of reducibility is again a degree of reducibility:
\begin{corollary}[Maximizing the Degree of Reducibility of a Distribution II]
\label{corollary/continuity-dred}
 Let $\distrone$ be a finite distribution of values, $\distrtypone$ be a distribution type and $\seone$ be a size environment.
 Suppose that $\distrone \in \setreddists{p}{\distrtypone,\seone}$ for some real $p \in [0,1]$. Then there exists a maximal real
 $p_{\mathit{max}} \in [p,1]$ such that $\distrone \in \setreddists{p_{\mathit{max}}}{\distrtypone,\seone}$
 and $p' > p_{\mathit{max}} \implies \distrone \notin \setreddists{p'}{\distrtypone,\seone}$.
\end{corollary}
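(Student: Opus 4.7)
The plan is to combine the linear-programming argument of Lemma~\ref{lemma/companion-lemma-continuity-tred} with an auxiliary closure property for values. Write $\distrone = \pseudorep{\valone_{\indexone}^{p_{\indexone}} \sep \indexone \in \indexsetone}$ and $\distrtypone = \distrelts{\typone_{\indextwo}^{p'_{\indextwo}} \sep \indextwo \in \indexsettwo}$. I would first establish, by induction on $\typone$, that for every value $\valone$ the set $\{q \in [0,1] \sep \valone \in \setredvals{q}{\typone, \seone}\}$ attains its supremum. For $\typone = \Nat^{\sizeone}$ this set is either $\{0\}$ or the whole of $[0,1]$, so the claim is immediate. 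For $\typone = \typtwo \typarrow \distrtyptwo$ the argument reduces to the analogous closure for terms, which is directly built into the definition of $\setredtmsfin{}{}$: if $\termone \in \setredtmsfin{p}{\distrtypthree, \seone}$ for every $p < p^*$, then for any $r < p^*$ one may pick $p \in (r, p^*)$, and the approximant witnessing $\termone \in \setredtmsfin{p}{\distrtypthree, \seone}$ at $r$ already witnesses $\termone \in \setredtmsfin{p^*}{\distrtypthree, \seone}$ at $r$. Downward closure (Lemma~\ref{lemma/downward-closure-tred}) then gives closure of the set on the right.

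Consequently, $q_{\indexone\indextwo}^{\max} := \max\{q \in [0,1] \sep \valone_{\indexone} \in \setredvals{q}{\typone_{\indextwo}, \seone}\}$ is well-defined for each $(\indexone,\indextwo) \in \indexsetone\times\indexsettwo$. Invoking Lemma~\ref{lemma/companion-lemma-continuity-tred} with these maxima produces a family $(p_{\indexone\indextwo})$ meeting conditions (2) and (3) of Definition~\ref{def:redsets} and maximizing the sum $\sum_{\indexone,\indextwo}\, q_{\indexone\indextwo}^{\max}\, p_{\indexone\indextwo}$. Setting $p_{\mathit{max}} := \sum_{\indexone,\indextwo}\, q_{\indexone\indextwo}^{\max}\, p_{\indexone\indextwo}$, the witness $(p_{\indexone\indextwo}, q_{\indexone\indextwo}^{\max})$ verifies all four clauses of the definition, so $\distrone \in \setreddists{p_{\mathit{max}}}{\distrtypone, \seone}$. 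That $p_{\mathit{max}} \geq p$ follows by applying the same estimate to any witness $(\tilde{p}_{\indexone\indextwo}, \tilde{q}_{\indexone\indextwo})$ for the hypothesis: since $\tilde{q}_{\indexone\indextwo} \leq q_{\indexone\indextwo}^{\max}$ by definition of the maximum, we have $p \leq \sum \tilde{q}_{\indexone\indextwo}\tilde{p}_{\indexone\indextwo} \leq \sum q_{\indexone\indextwo}^{\max}\, \tilde{p}_{\indexone\indextwo} \leq p_{\mathit{max}}$, the last step using the maximality clause of the companion lemma.

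The maximality of $p_{\mathit{max}}$ among degrees of reducibility now follows from the same estimate: any witness $(p^{\star}_{\indexone\indextwo}, q^{\star}_{\indexone\indextwo})$ for $\distrone \in \setreddists{p'}{\distrtypone, \seone}$ yields $p' \leq \sum q^{\star}_{\indexone\indextwo} p^{\star}_{\indexone\indextwo} \leq \sum q_{\indexone\indextwo}^{\max}\, p^{\star}_{\indexone\indextwo} \leq p_{\mathit{max}}$, so $p' > p_{\mathit{max}}$ is impossible. The main obstacle, absent from the simpler closure properties of values and terms, is the implicit existential quantification over witness families built into the definition of $\setreddists{}{}$: the optimal family realising the supremum need not arise from any single given witness for $p$, and it is precisely this non-trivial global optimisation that the companion lemma resolves via linear programming.
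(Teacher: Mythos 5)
Your proof is correct and follows essentially the same route as the paper's: apply the linear-programming companion lemma with the pointwise maximal degrees $q^{\max}_{\indexone\indextwo}$ and observe that the optimal value $w$ of the programme is exactly $p_{\mathit{max}}$, since any witness for any degree $p'$ of $\distrone$ is a feasible point whose objective value bounds $p'$ from above. The only substantive addition is that you explicitly justify, by induction on types via the term-level closure argument, that the maxima $\max\left\{q \sep \valone_\indexone \in \setredvals{q}{\typone_\indextwo,\seone}\right\}$ are attained --- a point the paper's companion lemma silently assumes.
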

}

\longv{
\begin{proof}
 Let $\distrone\,=\,\pseudorep{\left(\valone_{\indexone}\right)^{p_{\indexone}} \sep \indexone \in \indexsetone}$ be a finite distribution of values and 
  $\distrtypone\,=\,\distrelts{\left(\typone_\indextwo\right)^{p'_\indextwo} \sep \indextwo \in \indexsettwo}$ be a distribution type.
 By Lemma \ref{lemma/companion-lemma-continuity-tred}, setting $q_{\indexone \indextwo}\,=\,\max\left\{q \sep \valone_\indexone \in
 \setredvals{q}{\typone_\indextwo,\seone} \right\}$ for every $(\indexone,\indextwo) \in \indexsetone \times \indexsettwo$,
  there exists a family $\left(p_{\indexone \indextwo}\right)_{\indexone\in \indexsetone,\indextwo\in\indexsettwo}$
  of reals of $[0,1]$ which maximizes $w\,=\,
  \sum_{\indexone \in \indexsetone}\sum_{\indextwo \in \indexsettwo}\ q_{\indexone\indextwo}p_{\indexone\indextwo}$.
  It is immediate to see that any increase of a $q_{\indexone\indextwo}$ to $q'$ is contradictory with 
  $\valone_\indexone \in \setredvals{q'}{\typone_\indextwo,\seone}$, and that any decrease of a $q_{\indexone\indextwo}$
  actually decreases $w$. It follows that $p_{\mathit{max}}\,=\,w$.
\end{proof}

}

To analyse the $\letrecname$ construction, we will prove that, for every $\epsilon \in (0,1]$,
performing enough unfoldings of the fixpoint allows to prove that the recursively-defined term
is in a reducibility set parametrised by $1 -\epsilon$. We will be able to conclude on the 
AST nature of recursive constructions using the following continuity lemma, proved using
the theory of linear programming:
%
\begin{lemma}[Continuity]\label{lemma/continuity-lemma-vred} 
Let $\typone$ be a sized type, $\distrtypone$ be a distribution type
and $\seone$ be a size environment.  Let $p \in (0,1]$. Then:
\begin{varitemize}
 \item 
   $\setredvals{p}{\typone,\seone}\,=\,\bigcap_{0<q<p}\ \setredvals{q}{\typone,\seone}$,
 \item 
   $\setreddists{p}{\distrtypone,\seone}\,=\,\bigcap_{0<q<p}\ \setreddists{q}{\distrtypone,\seone}$,
 \item 
   $\setredtmsfin{p}{\distrtypone,\seone}\,=\,\bigcap_{0<q<p}\ \setredtmsfin{q}{\distrtypone,\seone}$.
\end{varitemize}
\end{lemma}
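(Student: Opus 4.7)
The inclusion $\supseteq$ is the content; the reverse inclusion $\subseteq$ is immediate from Lemma~\ref{lemma/downward-closure-tred}, since any element of $\setredvals{p}{\typone,\seone}$ (resp.\ the other two sets) lies in $\setredvals{q}{\typone,\seone}$ for every $0 < q < p$. So the plan focuses on $\supseteq$, and proceeds by simultaneous induction on the simple type $\underlying{\typone}$ (resp.\ $\underlying{\distrtypone}$), handling at each level first values, then distributions, then terms, to respect the mutual-recursive definition of reducibility.

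For values of base type, $\setredvals{p}{\Nat^{\sizeone},\seone}$ does not depend on the specific value of $p$ as long as $p > 0$, so all the $\setredvals{q}{\Nat^{\sizeone},\seone}$ for $0 < q < p$ coincide with $\setredvals{p}{\Nat^{\sizeone},\seone}$ and the intersection is trivial. For values of higher-order type $\typone \typarrow \distrtypone$, I take $\valone$ in every $\setredvals{q}{\typone\typarrow\distrtypone,\seone}$ with $0<q<p$, fix $r\in(0,1]$ and $\valtwo\in\setredvals{r}{\typone,\seone}$, and observe that $\valone\,\valtwo \in \setredtmsfin{qr}{\distrtypone,\seone}$ for every $0<q<p$. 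Since $\{qr \sep 0 < q < p\}$ is cofinal in $(0,pr)$, the induction hypothesis on terms at type $\distrtypone$ gives $\valone\,\valtwo \in \bigcap_{0 < s < pr} \setredtmsfin{s}{\distrtypone,\seone} = \setredtmsfin{pr}{\distrtypone,\seone}$, hence $\valone \in \setredvals{p}{\typone\typarrow\distrtypone,\seone}$.

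For terms, the statement is essentially a direct consequence of the definition of $\setredtmsfin{p}{\distrtypone,\seone}$, which quantifies over all $0\leq r < p$: given $\termone$ in every $\setredtmsfin{q}{\distrtypone,\seone}$ with $0 < q < p$ and any $0\leq r < p$, I pick $q$ with $\max(r,0) < q < p$, and the witnesses $\distrtyptwo_r \distrleq \distrtypone$, $n_r$ and $\distrone_r \in \setreddists{r}{\distrtyptwo_r,\seone}$ provided by $\termone \in \setredtmsfin{q}{\distrtypone,\seone}$ at the particular value $r$ serve also as witnesses for $\termone \in \setredtmsfin{p}{\distrtypone,\seone}$.

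The main obstacle is the case of distributions. Suppose $\distrone\in\bigcap_{0<q<p}\setreddists{q}{\distrtypone,\seone}$. For the canonical pseudo-representation $\distrone = \pseudorep{\valone_{\indexone}^{p_{\indexone}}\sep\indexone\in\indexsetone}$ and $\distrtypone = \distrelts{\typone_{\indextwo}^{p'_{\indextwo}}\sep\indextwo\in\indexsettwo}$, set $q_{\indexone\indextwo} = \max\{q\sep\valone_\indexone \in \setredvals{q}{\typone_\indextwo,\seone}\}$, whose existence is guaranteed by the induction hypothesis applied to values at each $\typone_\indextwo$ (which ensures that the set of admissible degrees is closed from above, hence admits a maximum in $[0,1]$). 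By Lemma~\ref{lemma/companion-lemma-continuity-tred} there is a family $(p_{\indexone\indextwo})$ of couplings satisfying the marginal constraints (2)--(3) of Definition~\ref{def:redsets} and maximising $w = \sum_{\indexone,\indextwo} q_{\indexone\indextwo} p_{\indexone\indextwo}$; this $w$ is then the maximal degree of reducibility of $\distrone$ at $\distrtypone$ (cf.\ Corollary~\ref{corollary/continuity-dred}). Since $\distrone\in\setreddists{q}{\distrtypone,\seone}$ for every $0<q<p$, necessarily $w \geq q$ for every such $q$, and so $w \geq p$, which via the same witnessing family yields $\distrone \in \setreddists{p}{\distrtypone,\seone}$.
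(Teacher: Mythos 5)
Your proposal is correct and follows essentially the same route as the paper's proof: trivial coincidence of the sets at base type, cofinality of $\{qr \sep 0<q<p\}$ in $(0,pr)$ plus the induction hypothesis on terms for higher-order values, the linear-programming maximal-degree argument (Lemma~\ref{lemma/companion-lemma-continuity-tred} / Corollary~\ref{corollary/continuity-dred}) for distributions, and the direct choice of an intermediate $q$ strictly between $r$ and $p$ for terms. Your explicit remark that the existence of the maxima $q_{\indexone\indextwo}$ rests on the value-level induction hypothesis is a small clarification the paper leaves implicit, but the argument is the same.
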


\longv{
\begin{proof}
Let $\typone$ be a sized type, $\distrtypone$ be a distribution type and $\seone$ be a size environment.
Let $p \in (0,1]$.

\begin{itemize}
 \item If $\typone\,=\,\Nat^{\sizeone}$ for some size $\sizeone$, then for every $0 < q < p$ we have 
 $\setredvals{q}{\typone,\seone}\,=\,\setredvals{p}{\typone,\seone}$ so that 
 $\setredvals{p}{\typone,\seone}\,=\,\bigcap_{0 < q < p}\ \setredvals{q}{\typone,\seone}$.
 \item If $\typone\,=\,\typtwo \typarrow \distrtypone$, we proceed by mutual inclusions.
 \begin{itemize}
 \item $\setredvals{p}{\typone,\seone}\,\subseteq\,\bigcap_{0<q<p}\ \setredvals{q}{\typone,\seone}$ is an immediate consequence
 of Lemma~\ref{lemma/downward-closure-tred}.

 \item Let us prove now that $\bigcap_{0<q<p}\ \setredvals{q}{\typone,\seone} \,\subseteq\, \setredvals{p}{\typone,\seone}$.
 Let $\valone \in \bigcap_{0<q<p}\ \setredvals{q}{\typone,\seone}$, it follows that:
 $$
 \begin{array}{rl}
  & \forall q \in (0,p),\ \ \forall q' \in [0,1],\ \ \forall \valtwo \in \setredvals{q'}{\typone,\seone},\ \ 
  \valone \ \valtwo \in \setredtms{qq'}{\distrtypone,\seone}\\
  \implies & \forall q' \in [0,1],\ \ \forall \valtwo \in \setredvals{q'}{\typone,\seone},\ \ \forall q \in (0,p),\ \ 
  \valone \ \valtwo \in \setredtms{qq'}{\distrtypone,\seone}\\
  \implies & \forall q' \in [0,1],\ \ \forall \valtwo \in \setredvals{q'}{\typone,\seone},\ \ 
  \valone \ \valtwo \in \bigcap_{0<q<p}\  \setredtms{qq'}{\distrtypone,\seone}\\
 \end{array}
 $$
 But
 $$
 \bigcap_{0<q<p}\ \setredtms{qq'}{\distrtypone,\seone}\ \ =\ \ \bigcap_{0<r<pq'}\  \setredtms{r}{\distrtypone,\seone}
 \ \ =\ \  \setredtms{pq'}{\distrtypone,\seone} \qquad \ \ \text{ (by IH)}
 $$
 so that
 $$
 \forall q' \in [0,1],\ \ \forall \valtwo \in \setredvals{q'}{\typone,\seone},\ \ 
  \valone \ \valtwo \in \setredtms{pq'}{\distrtypone,\seone}
 $$
 By definition, $\valone \in \setredvals{p}{\typone,\seone}$.
 \end{itemize}
 \item The inclusion $\setreddists{p}{\distrtypone,\seone} \,\subseteq\,\bigcap_{0<q<p}\ \setreddists{q}{\distrtypone,\seone}$ 
 is again an immediate consequence of Lemma~\ref{lemma/downward-closure-tred}.
 Let $\distrone \in \bigcap_{0<q<p}\ \setreddists{q}{\distrtypone,\seone}$.
 Let $(q_n)_{n \in \NN}$ be an increasing sequence of reals of $[0,p)$ converging to $p$.
 For every $n \in \NN$, $\distrone \in \setreddists{q_n}{\distrtypone,\seone}$ so that by
 Corollary~\ref{corollary/continuity-dred} there exists a real
 $p_{\mathit{max},n} \in [q_n,1]$ such that $\distrone \in \setreddists{p_{\mathit{max},n}}{\distrtypone,\seone}$
 and $p' > p_{\mathit{max},n} \implies \distrone \notin \setreddists{p'}{\distrtypone,\seone}$.
 It follows that all the $p_{\mathit{max},n}$ coincide, and that they are greater than $\sup_{n \in \NN} q_n\,=\,p$.
 So $\distrone \in \setreddists{p}{\distrtypone,\seone}$.

  \item The inclusion $\setredtmsfin{p}{\distrtypone,\seone} \,\subseteq\,\bigcap_{0<q<p}\ \setredtmsfin{q}{\distrtypone,\seone}$ 
 is again an immediate consequence of Lemma~\ref{lemma/downward-closure-tred}.
 Let $\termone \in \bigcap_{0<q<p}\ \setredtmsfin{q}{\distrtypone,\seone}$. We need to prove that 
 $\termone \in \setredtmsfin{p}{\distrtypone,\seone}$, that is, that for every 
 $0 \leq r < p$ there exists $\distrtyptwo_r \distrleq \distrtypone$, $n_r \in \NN$, $\distrone_r$ such that
 $\termone \redval^{n_r} \distrone_r$  and that $\distrone_r \in \setreddists{r}{\distrtyptwo_r,\seone}$.
 Let $r \in [0,p)$. Since $\termone \in \setredtmsfin{\frac{p+r}{2}}{\distrtypone,\seone}$ and that $\frac{p+r}{2} > r$,
 we obtain the desired $\distrtyptwo_r \distrleq \distrtypone$, $n_r \in \NN$, $\distrone_r$ having the properties of interest.
 The result follows.
\end{itemize}

\end{proof}
}

\longv{ 
\subsection{Reducibility Sets and Sizes}}

\longv{In this subsection, we show how the sizes appearing in the (sized or distribution) type 
parametrizing a reducibility set relate with the interpretation of size variables contained
in the size environment which also parametrizes it. We prove first the following lemma,
which will be used as a companion for this result:}

\longv{
\begin{lemma}[Commuting Sizes with Environments]
\label{lemma/size-substitution-and-size-environments}
Let $\sizevarone$ be a size variable, $\sizeone,\,\sizetwo$ be two
sizes, and $\seone$ be a size environment.  Suppose that
$\sizeone\,=\,\sizeinf$ or that $\spine{\sizeone}\neq \sizevarone$.
Then
$\sesem{\subst{\sizetwo}{\sizevarone}{\sizeone}}{\seone}\ =\ \sesem{\sizetwo}{\seone\left[\sizevarone
    \mapsto \sesem{\sizeone}{\seone}\right]}$.
\end{lemma}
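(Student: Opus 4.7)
The plan is to prove the equation by structural induction on $\sizetwo$. Since every size is generated by the grammar $\sizeone \bnf \sizevarone \sep \sizeinf \sep \sizesucc{\sizeone}$, the cases are $\sizetwo = \sizeinf$, $\sizetwo$ a size variable, and $\sizetwo = \sizesucc{\sizethree}$. Before doing so, I observe that the semantics of sizes, defined in the paper only on canonical forms $\sizesuccit{\sizevarone}{n}$ and $\sizeinf$, extends compositionally via the clause $\sesem{\sizesucc{\sigma}}{\seone'} = \sesem{\sigma}{\seone'} + 1$ (with the convention $\infty + 1 = \infty$); this is forced by the canonical-form clauses together with $\sizesucc{\sizeinf} \equiv \sizeinf$, so it is safe to use in the induction step.

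For the base cases: if $\sizetwo = \sizeinf$ then $\subst{\sizeinf}{\sizevarone}{\sizeone} = \sizeinf$, so both sides evaluate to $\infty$. If $\sizetwo = \sizevarone$, then $\subst{\sizevarone}{\sizevarone}{\sizeone} = \sizeone$ and the left-hand side is $\sesem{\sizeone}{\seone}$; on the right-hand side the updated environment sends $\sizevarone$ to $\sesem{\sizeone}{\seone}$, and $\sesem{\sizevarone}{\seone[\sizevarone \mapsto \sesem{\sizeone}{\seone}]} = \sesem{\sizeone}{\seone}$ as well. If $\sizetwo = \sizevartwo$ with $\sizevartwo \neq \sizevarone$, then $\subst{\sizevartwo}{\sizevarone}{\sizeone} = \sizevartwo$, and both sides reduce to $\seone(\sizevartwo)$, since the environment update only touches $\sizevarone$.

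For the inductive case $\sizetwo = \sizesucc{\sizethree}$, the definition of substitution gives $\subst{\sizesucc{\sizethree}}{\sizevarone}{\sizeone} = \sizesucc{\subst{\sizethree}{\sizevarone}{\sizeone}}$. Applying the extended semantics clause on both sides and invoking the induction hypothesis on $\sizethree$, we get
$$\sesem{\sizesucc{\subst{\sizethree}{\sizevarone}{\sizeone}}}{\seone} = \sesem{\subst{\sizethree}{\sizevarone}{\sizeone}}{\seone} + 1 = \sesem{\sizethree}{\seone[\sizevarone \mapsto \sesem{\sizeone}{\seone}]} + 1 = \sesem{\sizesucc{\sizethree}}{\seone[\sizevarone \mapsto \sesem{\sizeone}{\seone}]},$$
closing the induction.

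There is no real obstacle here; the only subtlety is the one noted above about extending the semantics beyond canonical forms. The hypothesis that $\sizeone = \sizeinf$ or $\spine{\sizeone} \neq \sizevarone$ does not play a role in the argument itself — it ensures that $\sesem{\sizeone}{\seone}$ is independent of $\seone(\sizevarone)$, a property useful in later applications of the lemma (typically to rule out self-referential substitutions), but the equation stated here goes through unconditionally by the induction above.
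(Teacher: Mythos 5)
Your proof is correct and follows essentially the same route as the paper's: a direct verification by decomposing $\sizetwo$ into its possible forms, the only cosmetic difference being that you handle the successor compositionally by structural induction where the paper normalizes $\sizetwo$ to a canonical form $\sizesuccit{\sizevartwo}{n}$ or $\sizeinf$ and computes both sides directly. Your closing observation is also accurate: the hypothesis on $\sizeone$ is not needed for the equation itself (the paper's case analysis merely omits the subcase $\spine{\sizeone}=\sizevarone$ because it never arises in its applications), and indeed your base case $\sizetwo=\sizevarone$ goes through for an arbitrary $\sizeone$ since the update $\seone\left[\sizevarone \mapsto \sesem{\sizeone}{\seone}\right]$ evaluates $\sizeone$ in the original environment.
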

}

\longv{
\begin{proof}
 By case analysis.
 \begin{itemize}
  \item If $\sizetwo\,=\,\sizesuccit{\sizevartwo}{n}$ for $\sizevartwo \neq \sizevarone$, then
  $\subst{\sizetwo}{\sizevarone}{\sizeone}\,=\,\sizetwo$ and $\sesem{\sizetwo}{\seone}\,=\,\seone(\sizevartwo) + n\,=\,
  \sesem{\sizetwo}  {\seone\left[\sizevarone \mapsto \sesem{\sizeone}{\seone}\right]}$.
  \item If $\sizetwo\,=\,\sizesuccit{\sizevarone}{n}$, then
    \begin{itemize}
     \item if $\sizeone\,=\,\sizesuccit{\sizevartwo}{m}$ for $\sizevartwo \neq \sizevarone$, then
     $\subst{\sizetwo}{\sizevarone}{\sizeone}\,=\,\sizesuccit{\sizevartwo}{n+m}$ and
     $$
     \sesem{\subst{\sizetwo}{\sizevarone}{\sizeone}}{\seone}\,=\,\seone(\sizevartwo) + n + m 
     \,=\,\sesem{\sizesuccit{\sizevartwo}{m}}{\seone} + n
     \,=\,\sesem{\sizeone}{\seone} + n
     \,=\,\sesem{\sizesuccit{\sizevarone}{n}}{\seone\left[\sizevarone \mapsto \sesem{\sizeone}{\seone}\right]}
     \, =\, \sesem{\sizetwo}{\seone\left[\sizevarone \mapsto \sesem{\sizeone}{\seone}\right]},
     $$
     \item if $\sizeone\,=\,\sizeinf$, then $\subst{\sizetwo}{\sizevarone}{\sizeone}\,=\,\sizeinf$ and 
      $\sesem{\subst{\sizetwo}{\sizevarone}{\sizeone}}{\seone}\ =\ \sizeinf 
      \ =\ \sesem{\sizetwo}{\seone\left[\sizevarone \mapsto \sesem{\sizeone}{\seone}\right]}$.
    \end{itemize}
  \item If $\sizetwo\,=\,\sizeinf$, then $\subst{\sizetwo}{\sizevarone}{\sizeone}\,=\,\sizetwo$ and 
  $\sesem{\sizetwo}{\seone}\,=\,\sizeinf\,=\,\sesem{\sizetwo}{\seone\left[\sizevarone \mapsto \sesem{\sizeone}{\seone}\right]}$.
 \end{itemize}

\end{proof}

}

The last fundamental property about reducibility sets which will be crucial
to treat the recursive case is the following, stating that the sizes appearing
in a sized type may be recovered in the reducibility set by using an appropriate
semantics of the size variables, and conversely:
%
\begin{lemma}[Size Commutation]\label{lemma:exchange-size-size-env}
Let $\sizevarone$ be a size variable, $\sizeone$ be a size such that
$\sizeone\,=\,\sizeinf$ or that $\spine{\sizeone}\neq \sizevarone$ and
$\seone$ be a size environment. Then:
\begin{varitemize}
 \item $\setredvals{p}{\subst{\typone}{\sizevarone}{\sizeone}, \seone}\ =\ \setredvals{p}{\typone, 
 \seone\left[\sizevarone \mapsto \sesem{\sizeone}{\seone}\right]}$,
 \item $\setreddists{p}{\subst{\distrtypone}{\sizevarone}{\sizeone}, \seone}\ =\ \setreddists{p}{\distrtypone, 
 \seone\left[\sizevarone \mapsto \sesem{\sizeone}{\seone}\right]}$,
 \item $\setredtms{p}{\subst{\distrtypone}{\sizevarone}{\sizeone}, \seone}\ =\ \setredtms{p}{\distrtypone, 
 \seone\left[\sizevarone \mapsto \sesem{\sizeone}{\seone}\right]}$.
\end{varitemize}
\end{lemma}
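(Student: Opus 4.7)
The plan is to prove the three equalities simultaneously by mutual induction on the structure of the sized type $\typone$ and the distribution type $\distrtypone$, following exactly the scheme already employed in the paper for Lemma~\ref{lemma/downward-closure-tred} and Lemma~\ref{lemma/continuity-lemma-vred}. Throughout I will abbreviate $\seone' = \seone[\sizevarone \mapsto \sesem{\sizeone}{\seone}]$.

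First I would dispatch the base case $\typone = \Nat^\sizetwo$. Here the reducibility set is defined purely by a bound on the numeric value of the annotation, so the equality reduces to verifying $\sesem{\subst{\sizetwo}{\sizevarone}{\sizeone}}{\seone} = \sesem{\sizetwo}{\seone'}$. This is exactly Lemma~\ref{lemma/size-substitution-and-size-environments} (Commuting Sizes with Environments), whose hypothesis that $\sizeone = \sizeinf$ or $\spine{\sizeone} \neq \sizevarone$ coincides with our assumption on $\sizeone$. For arrow types $\typone = \typtwo \typarrow \distrtyptwo$, I would unfold the arrow clause in the definition: $\valone$ belongs to $\setredvals{p}{\subst{\typone}{\sizevarone}{\sizeone}, \seone}$ iff for every $q \in (0,1]$ and every $\valtwo \in \setredvals{q}{\subst{\typtwo}{\sizevarone}{\sizeone}, \seone}$ one has $\valone\,\valtwo \in \setredtmsfin{pq}{\subst{\distrtyptwo}{\sizevarone}{\sizeone}, \seone}$. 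Applying the induction hypothesis to $\typtwo$ and to $\distrtyptwo$ rewrites both quantifications in terms of $\seone'$, yielding $\valone \in \setredvals{p}{\typone, \seone'}$.

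For distribution types $\distrtypone = \distrelts{\typtwo_\indextwo^{p_\indextwo} \sep \indextwo \in \indexsettwo}$, the key observation is that $\subst{\cdot}{\sizevarone}{\sizeone}$ leaves the probability weights $p_\indextwo$ untouched; only the sized types $\typtwo_\indextwo$ are affected. Hence a pseudo-representation together with families $(p_{\indexone\indextwo})$ and $(q_{\indexone\indextwo})$ witnesses $\distrone \in \setreddists{p}{\subst{\distrtypone}{\sizevarone}{\sizeone}, \seone}$ exactly when the same data witnesses $\distrone \in \setreddists{p}{\distrtypone, \seone'}$: conditions~(2),~(3),~(4) of Definition~\ref{def:redsets} relate only to the preserved probabilities, while condition~(1) invokes $\setredvals{q_{\indexone\indextwo}}{\subst{\typtwo_\indextwo}{\sizevarone}{\sizeone}, \seone} = \setredvals{q_{\indexone\indextwo}}{\typtwo_\indextwo, \seone'}$, which is provided by the induction hypothesis on each $\typtwo_\indextwo$. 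The case of $\setredtms{p}{}$ then follows from the distribution case, using that any $\distrtyptwo \distrleq \subst{\distrtypone}{\sizevarone}{\sizeone}$ has the form $\subst{\distrtyptwo'}{\sizevarone}{\sizeone}$ for some $\distrtyptwo' \distrleq \distrtypone$, so that the existential statement defining $\setredtmsfin{p}{}$ transports between the two environments.

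I expect the only mildly delicate point to be the distribution case, specifically the verification that the pseudo-representation data transfers unchanged between the two sets. But once one observes that the substitution acts purely on the type level and preserves the probabilistic structure of distribution types, everything reduces to componentwise invocation of the inductive hypothesis and presents no real obstacle.
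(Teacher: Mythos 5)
Your proposal is correct and follows essentially the same route as the paper's own proof: the base case via the companion lemma on commuting sizes with environments, the arrow case by unfolding the definition and invoking the induction hypothesis on domain and codomain, the distribution case by observing that the witnessing pseudo-representation and the families $(p_{\indexone\indextwo})$, $(q_{\indexone\indextwo})$ transfer unchanged, and the term case by transporting the existentially quantified sub-distribution. No gaps to report.
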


\longv{
\begin{proof}~

\begin{itemize}
 \item The first case to consider is $\typone\,=\,\Nat^{\sizetwo}$ for some size $\sizetwo$.
Using Lemma~\ref{lemma/size-substitution-and-size-environments}, we have that 
 $$
 \begin{array}{rcl}
 \setredvals{p}{\subst{\left(\Nat^\sizetwo\right)}{\sizevarone}{\sizeone},\seone} 
 & \ \ =\ \  & \setredvals{p}{\Nat^{\subst{\sizetwo}{\sizevarone}{\sizeone}},\seone}\\
 & \ \ =\ \  &
 \left\{\natsucc^n\natzero\sep p>0\Longrightarrow n < \sesem{\subst{\sizetwo}{\sizevarone}{\sizeone}}{\seone}\right\} \\
 & \ \ =\ \  &
 \left\{\natsucc^n\natzero\sep p>0\Longrightarrow n < \sesem{\sizetwo}{\seone\left[\sizevarone \mapsto \sesem{\sizeone}{\seone}\right]}\right\} \\
 & \ \ =\ \  &
 \setredvals{p}{\Nat^\sizetwo,\seone\left[\sizevarone \mapsto \sesem{\sizeone}{\seone}\right]}\\
 \end{array}
$$

\item We then consider the case of the sized type $\typone \typarrow \distrtypone \refines \simpletyptwo \typarrow \simpletypthree$. We have
 \begin{adjustwidth}{-2cm}{-2cm}
 $$
 \begin{array}{rcl}
 \setredvals{p}{\subst{(\typone\typarrow\distrtypone)}{\sizevarone}{\sizeone},\seone}
 & \ \ =\ \ &
  \setredvals{p}{\subst{\typone}{\sizevarone}{\sizeone}\typarrow\subst{\distrtypone}{\sizevarone}{\sizeone},\seone}\\ 
 & \ \ =\ \ &
 \left\{\valone \in \settypedclosedvalues{\underlying{\subst{\typone}{\sizevarone}{\sizeone}\typarrow\subst{\distrtypone}{\sizevarone}{\sizeone}}} 
    \sep\forall q \in (0,1],\ \ \forall\valtwo\in\setredvals{q}{\subst{\typone}{\sizevarone}{\sizeone},\seone},\ \ 
   \valone\ \valtwo\in\setredtms{pq}{\subst{\distrtypone}{\sizevarone}{\sizeone},\seone}\right\}\\
    & \ \ =\ \ &
 \left\{\valone \in \settypedclosedvalues{\underlying{\typone \typarrow \distrtypone}} 
    \sep\forall q \in (0,1],\ \ \forall\valtwo\in\setredvals{q}{\typone,\seone\left[\sizevarone \mapsto \sesem{\sizeone}{\seone}\right]},\ \ 
   \valone\ \valtwo\in\setredtms{pq}{\distrtypone,\seone\left[\sizevarone \mapsto \sesem{\sizeone}{\seone}\right]}\right\}\\
   & \ \ =\ \ &
    \setredvals{p}{\typone\typarrow\distrtypone,\seone \left[\sizevarone \mapsto \sesem{\sizeone}{\seone}\right]}
 \end{array}
 $$
  \end{adjustwidth}
 where we used the induction hypothesis twice, once on $\simpletyptwo$ and the other time on $\simpletypthree$.

\item Let $\distrone$ be a finite distribution of values
and $\distrtypone\,=\,\distrelts{\left(\typone_\indextwo\right)^{p'_\indextwo} \sep \indextwo \in \indexsettwo}$ be a distribution type.
We have that $\subst{\distrtypone}{\sizevarone}{\sizeone}\,=\,\distrelts{\left(\subst{\typone_\indextwo}{\sizevarone}{\sizeone}\right)^{p'_\indextwo} \sep \indextwo \in \indexsettwo}$.
Suppose that $\distrone \in \setreddists{p}{\subst{\distrtypone}{\sizevarone}{\sizeone}, \seone}$.
Then there exist a pseudo-representation
$\distrone\,=\,\pseudorep{\left(\valone_{\indexone}\right)^{p_{\indexone}} \sep \indexone \in \indexsetone}$
and families $\left(p_{\indexone \indextwo}\right)_{\indexone\in \indexsetone,\indextwo\in\indexsettwo}$
  and $\left(q_{\indexone \indextwo}\right)_{\indexone\in \indexsetone,\indextwo\in\indexsettwo}$ of reals of $[0,1]$
  satisfying:
  \begin{enumerate}
   \item $\forall \indexone \in \indexsetone,\ \ \forall\indextwo \in \indexsettwo,\ \ \valone_\indexone \in 
   \setredvals{q_{\indexone\indextwo}}{\subst{\typone_\indextwo}{\sizevarone}{\sizeone},\seone}$,
   \item $\forall \indexone \in \indexsetone,\ \ \sum_{\indextwo \in \indexsettwo}\ p_{\indexone\indextwo}\ =\ p_\indexone$,
   \item $\forall \indextwo \in \indexsettwo,\ \ \sum_{\indexone \in \indexsetone}\ p_{\indexone\indextwo}\ =\ \distrtypone(\typone_\indextwo)$,
   \item $p \leq \sum_{\indexone \in \indexsetone}\sum_{\indextwo \in \indexsettwo}\ q_{\indexone\indextwo}p_{\indexone\indextwo}$.
  \end{enumerate}
  But $(1)$ is equivalent to $\forall \indexone \in \indexsetone,\ \ \forall\indextwo \in \indexsettwo,\ \ \valone_\indexone \in 
   \setredvals{q_{\indexone\indextwo}}{\typone_\indextwo,\seone\left[\sizevarone \mapsto \sesem{\sizeone}{\seone}\right]}$ by induction hypothesis.
   It follows that $\distrone \in \setreddists{p}{\distrtypone, \seone\left[\sizevarone \mapsto \sesem{\sizeone}{\seone}\right]}$.
   The converse direction proceeds in the exact same way.

   \item Then, $\termone \in \setredtmsfin{p}{\subst{\distrtypone}{\sizevarone}{\sizeone},\seone}$ if and only if
  $$
  \termone \in \settypedclosedterms{\underlying{\distrtypone}} \text{ and }
 \forall 0 \leq r < p,\ \ \exists \distrtyptwo_r \distrleq \distrtypone,\ \ \exists n_r \in \NN,\ \ 
 \termone \redval^{n_r} \distrone_r \text{ and } \distrone_r \in \setreddists{r}{\subst{\distrtyptwo_r}{\sizevarone}{\sizeone},\seone}
 $$
 if and only if, by induction hypothesis,
 $$
 \termone \in \settypedclosedterms{\underlying{\distrtypone}} \text{ and }
 \forall 0 \leq r < p,\ \ \exists \distrtyptwo_r \distrleq \distrtypone,\ \ \exists n_r \in \NN,\ \ 
 \termone \redval^{n_r} \distrone_r \text{ and } \distrone_r \in \setreddists{r}{\distrtyptwo_r,
 \seone\left[\sizevarone \mapsto \sesem{\sizeone}{\seone}\right]}
 $$
 that is, if and only if
 $\termone \in \setredtmsfin{p}{\distrtypone,\seone\left[\sizevarone \mapsto \sesem{\sizeone}{\seone}\right]}$.
  \end{itemize}
\end{proof}
}

\longv{ 
\subsection{Reducibility Sets are Stable by Unfoldings}}
\shortv{\paragraph{Unfoldings.}}
The most difficult step in proving all typable terms 
to be reducible is, unexpectedly, proving that terms 
involving \emph{recursion} are reducible whenever their respective \emph{unfoldings}
are. This very natural concept expresses simply that
any term in the form $\letrec{\funcone}{\valtwo}$ is
assumed to compute the fixpoint of the function
defined by $\valtwo$.
\begin{definition}[$n$-Unfolding]
Suppose that $\valone\,=\,\left(\letrec{\funcone}{\valtwo}\right)$ is
closed, then the $n$-\emph{unfolding} of $\valone$ is:
 \begin{varitemize}
 \item 
   $\valone$ if $n=0$;
 \item 
   $\subst{\valtwo}{\funcone}{\valthree}$ if $n=m+1$ and $\valthree$
   is the $m$-unfolding of $\valone$.
 \end{varitemize}
 We write the set of unfoldings of $\valone$ as
 $\unfoldings{\valone}$. Note that if $\valone$ admits a simple type, then
 all its unfoldings have this same simple type as well. In the sequel,
 we implicitly consider that $\valone$ is simply typed.
\end{definition}
Any unfolding of $\valone=\left(\letrec{\funcone}{\valtwo}\right)$ should behave
like $\valone$ itself: all unfoldings of $\valone$ should be
equivalent. This, however, cannot be proved using simply the operational
semantics. It requires some work, and techniques akin to logical relations,
to prove
this behavioural equivalence between a recursive definition and its unfoldings.
\longv{The first lemma is technical and lists the unfoldings of terms defined recursively
as equal to themselves or to a variable:
\begin{lemma}~
\label{lemma:unfolding-of-a-variable}
\begin{itemize}
 \item Let $\valone\,=\,\funcone$ and $\valtwo \in \unfoldings{\letrec{\funcone}{\valone}}$. Then $\valtwo\,=\,\letrec{\funcone}{\valone}$.
 \item Let $\valone\,=\,\varone\neq\funcone$ and $\valtwo \in \unfoldings{\letrec{\funcone}{\valone}}$. Then $\valtwo\,=\,\letrec{\funcone}{\valone}$
 or $\valtwo\,=\,\varone$. More precisely, the $n$-unfoldings for $n \geq 1$ are all $\varone$.
\end{itemize}
\end{lemma}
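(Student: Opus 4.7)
The plan is to prove both items by a straightforward induction on $n$, using the recursive definition of $n$-unfolding and the fact that when the body of the fixpoint is a single variable, the substitution $\subst{\valone}{\funcone}{\valthree}$ collapses to a trivial operation.

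For the first item, where $\valone = \funcone$, I would prove by induction on $n \in \NN$ that every $n$-unfolding of $\letrec{\funcone}{\funcone}$ is equal to $\letrec{\funcone}{\funcone}$. The base case $n=0$ is immediate from the definition. For the inductive step $n = m+1$, let $\valthree$ be the $m$-unfolding; by induction hypothesis $\valthree = \letrec{\funcone}{\funcone}$, and the $(m+1)$-unfolding is then $\subst{\funcone}{\funcone}{\valthree} = \valthree = \letrec{\funcone}{\funcone}$, as required.

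For the second item, where $\valone = \varone \neq \funcone$, I would again proceed by induction on $n$, but distinguishing $n=0$ from $n \geq 1$. The $0$-unfolding is by definition $\letrec{\funcone}{\varone}$. For $n = 1$, let $\valthree$ be the $0$-unfolding, so $\valthree = \letrec{\funcone}{\varone}$; then the $1$-unfolding is $\subst{\varone}{\funcone}{\valthree} = \varone$ since $\varone \neq \funcone$. For $n = m+1$ with $m \geq 1$, by induction hypothesis the $m$-unfolding $\valthree$ equals $\varone$, so the $(m+1)$-unfolding is $\subst{\varone}{\funcone}{\varone} = \varone$, again using $\varone \neq \funcone$.

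There is no real obstacle here: the statement is essentially a bookkeeping lemma clarifying the degenerate cases of the $\unfoldings{\cdot}$ construction, and the proof reduces to unwinding the definition of substitution in the two trivial scenarios. The only subtlety worth mentioning is the case split in the second item between $n = 0$ (which still contains the binder $\letrecname$) and $n \geq 1$ (where the binder has been eliminated by the first unfolding), which is why the statement must allow two possible shapes for $\valtwo$.
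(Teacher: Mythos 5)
Your proof is correct: the paper states this lemma without proof (treating it as immediate from the definition of $n$-unfolding), and your induction on $n$, using that $\subst{\funcone}{\funcone}{\valthree}=\valthree$ in the first case and $\subst{\varone}{\funcone}{\valthree}=\varone$ in the second, is exactly the intended argument. The case split at $n=0$ versus $n\geq 1$ in the second item is the right observation.
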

}

\longv{The next lemma is the technical core of this section. Think of two terms as related when they are of
the shape $\subst{\termone}{\vec{\varone}}{\vec{\valthree}}$ and
$\subst{\termone}{\vec{\varone}}{\vec{\valthree'}}$, where $\vec{\varone}$
is a sequence of ``holes'' in $\termone$, filled with unfoldings from a \emph{same}
recursively-defined term. Then their rewritings by $\rcbv$ form distributions of pairwise related terms.}
\longv{
\begin{lemma}
\label{lemma:one-step-unfolding}
 Let $\valone\,=\,\left(\letrec{\funcone}{\valtwo}\right)$ be a closed value.
 Let $\vec{\varone},\,\vec{\valthree},\,\vec{\valthree'}$ be a vector of variables and two vectors of terms of $\unfoldings{\valone}$,
 all of the same length.
 Let $\termone$ be a simply-typed term of free variables contained in $\vec{\varone}$, all typed with the simple type of $\valone$.
 Suppose that $\subst{\termone}{\vec{\varone}}{\vec{\valthree}} \rcbv \distrone$.
 Then there exists $\termtwo_1,\ldots,\termtwo_n$, a vector of variables $\vec{\vartwo}$ 
 and $\vec{\valthree_1},\ldots,\vec{\valthree_n},\vec{\valthree'_1},\ldots,\vec{\valthree'_n} \in \unfoldings{\valone}$
 of the same length as $\vec{\vartwo}$ and such that 
 $\distrone\ =\ \distrelts{\left(\subst{\termtwo_{\indexone}}{\vec{\vartwo}}{\vec{\valthree_\indexone}}\right)^{p_\indexone}}$
 and moreover $\subst{\termone}{\vec{\varone}}{\vec{\valthree'}} \rcbv \distrtwo\ =\ \distrelts{\left(\subst{\termtwo_{\indexone}}{\vec{\vartwo}}{\vec{\valthree'_\indexone}}\right)^{p_\indexone}}$.
\end{lemma}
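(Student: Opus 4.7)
The plan is to proceed by structural induction on $\termone$, with case analysis on the outermost constructor of $\termone$ and on the reduction rule fired in $\subst{\termone}{\vec{\varone}}{\vec{\valthree}} \rcbv \distrone$. The guiding intuition is that, except for a few critical cases, the reduction rule is determined by the syntactic shape of $\termone$ itself rather than by what gets substituted for the variables in $\vec{\varone}$, so the same rule will fire for $\subst{\termone}{\vec{\varone}}{\vec{\valthree'}}$ and the unfoldings will simply be carried through the reduct. If $\termone$ is a value, the substituted terms are values as well and no reduction applies; that case is vacuous.

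In the straightforward cases, the witnesses are essentially immediate. For $\termone = \termfour \choice_p \termfive$, one takes $n = 2$, $\termtwo_1 = \termfour$, $\termtwo_2 = \termfive$, $\vec{\vartwo} = \vec{\varone}$, and sets $\vec{\valthree_\indexone} = \vec{\valthree}$, $\vec{\valthree'_\indexone} = \vec{\valthree'}$ for each $\indexone$. For $\termone = \letin{\varone}{\valfour}{\termfive}$ with $\valfour$ a value, or $\termone = (\abstr{\varone}{\termfive})\ \valsix$, the reduction is the $\beta$-like step and the reduct is a single-element distribution whose term is a substitution into $\termfive$; the witnesses are obtained by choosing $\vec{\vartwo}$ to be $\vec{\varone}$ together with $\varone$, with $\termtwo_1$ the body in question. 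For $\termone = \caseof{\valfour}{\ldots}$ the scrutinee $\valfour$ cannot be one of the variables in $\vec{\varone}$ (since those have the higher-order type of $\valone$, not $\Nat$), so its shape is preserved under both substitutions and the same branch fires in both reductions.

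For the composite case $\termone = \letin{\varone}{\termfour}{\termfive}$ with $\termfour$ not a value, the reduction must come from a reduction of $\termfour$ itself. Applying the induction hypothesis to $\termfour$ gives a decomposition of the reduct of $\subst{\termfour}{\vec{\varone}}{\vec{\valthree}}$ into contexts $\termtwo'_\indexone$ with holes $\vec{\vartwo}$ and new unfoldings $\vec{\valthree_\indexone}, \vec{\valthree'_\indexone}$. I would then lift these through the outer $\mathsf{let}$ by setting $\termtwo_\indexone = \letin{\varone}{\termtwo'_\indexone}{\termfive}$, with the same $\vec{\vartwo}$ (extended if necessary to include the free variables of $\termfive$ that belong to $\vec{\varone}$) and the same unfolding vectors.

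The main obstacle is the case $\termone = \varone_i\ (\dataconstrone\ \vec{\valseven})$, where $\varone_i \in \vec{\varone}$ gets substituted by an unfolding at the head of an application to a data constructor. Here the reduction rule fired differs depending on the level of the substituted unfolding: the $0$-unfolding $\valone = \letrec{\funcone}{\valtwo}$ triggers the $\letrec$-rule, yielding a result whose head is the $1$-unfolding applied to $\dataconstrone\ \vec{\valseven}$, while higher unfoldings (being lambdas) trigger $\beta$-reduction. The key to reconciling these with a single $\termtwo_\indexone$ is that for each combination of unfolding levels present in $\vec{\valthree}$ and $\vec{\valthree'}$ at position $i$, the corresponding reduct can be exhibited as a substitution of an appropriate unfolding into a common syntactic context, with the choice of the new unfolding depending on the level substituted in. Establishing this invariant, together with the observation that $\unfoldings{\valone}$ is closed under the syntactic operations produced by these reductions, is where the bulk of the syntactic manipulation resides.
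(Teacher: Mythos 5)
Your proposal follows essentially the same route as the paper: structural induction on $\termone$, with the routine cases handled by observing that the fired rule is determined by the shape of $\termone$, and the real work concentrated in the case of a variable from $\vec{\varone}$ in head position applied to a value of type $\Nat$, resolved by matching reducts of different unfolding levels against a common context with an adjusted unfolding substituted in. One small slip: higher unfoldings need not be lambdas --- the $n$-unfolding for $n\geq 1$ is $\subst{\valtwo}{\funcone}{\valthree}$ and so may itself be a $\letrecname$ (or collapse to $\valone$ when $\valtwo=\funcone$), so the paper adds a sub-case analysis on the shape of $\valtwo$; this does not affect your invariant, which survives whichever rule fires.
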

}

\longv{ 
\begin{proof}
 We prove the result by induction on the structure of $\termone$.
 \begin{itemize}
  \item The case where $\termone$ is a variable cannot fit in this setting: either $\termone=\vartwo \notin \vec{\varone}$
  and there is no reduction from $\subst{\termone}{\vec{\varone}}{\vec{\valthree}}$, or $\termone=\varone_\indexone \in \vec{\varone}$
  and there is no reduction either from $\subst{\termone}{\vec{\varone}}{\vec{\valthree}}\,=\,\valthree_\indexone$ since it is a value.
  We can similarly rule out all the cases where $\termone$ is a value.
  \item Suppose that $\termone\,=\,\valone_1\ \valone_2$. We proceed by case exhaustion on $\valone_1$. Three possibilities exist,
  the other ones contradicting the fact that there should be a reduction step from $\termone$:
  \begin{itemize}
   \item If $\valone_1\,=\,\varone_\indexone \in \vec{\varone}$, 
   we distinguish four cases:
   \begin{itemize}
    \item Suppose that $\valthree_\indexone\,=\,\valthree'_\indexone$ are both the $0$-unfolding of $\valone$.
    Then $\subst{\termone}{\vec{\varone}}{\vec{\valthree}}\,=\,\subst{\termone}{\vec{\varone}}{\vec{\valthree'}}$ and the result 
    follows immediately from:
    $$
    \begin{array}{rcl}
    \subst{\termone}{\vec{\varone}}{\vec{\valthree}} & \ \ =\ \ & 
    (\letrec{\funcone}{\valtwo})\ \subst{\valone_2}{\vec{\varone}}{\vec{\valthree}}\\
    & \ \ =\ \ & 
    (\letrec{\funcone}{\valtwo})\ (\natsucc^m\ 0)\\
    & \ \ \rcbv\ \ & 
    \distrelts{\left((\subst{\valtwo}{\funcone}{\letrec{\funcone}{\valtwo}})\ (\natsucc^m\ 0)\right)^1}\\
    \end{array}
    $$
    where in the second line the shape of $\valone_2$ needs to be $\natsucc^m\ \natzero$ by typing constraints.
    Note that $\vec{\vartwo}$ is the empty vector here.
    \item Suppose that $\valthree_\indexone$ is the $n$-unfolding of $\valone$ for $n > 0$, and that 
    $\valthree'_\indexone$ is the $0$-unfolding.
    We have that 
    $$
    \subst{\termone}{\vec{\varone}}{\vec{\valthree}}\ \ =\ \ \subst{\valtwo}{\funcone}{\valthree''}\ \ \subst{\valone_2}{\vec{\varone}}{\vec{\valthree}}
    $$
    where $\valthree''$ is the $(n-1)$-unfolding of $\valone$, and that
    $$
    \begin{array}{rcl}
    \subst{\termone}{\vec{\varone}}{\vec{\valthree'}} & \ \ =\ \ & 
    \left(\letrec{\funcone}{\valtwo}\right)\ \subst{\valone_2}{\vec{\varone}}{\vec{\valthree'}}\\
    & \ \ \rcbv\ \ & \distrelts{\left(\left(\subst{\valtwo}{\funcone}{\letrec{\funcone}{\valtwo}}\right)\
    \subst{\valone_2}{\vec{\varone}}{\vec{\valthree'}}\right)^1}\\
    \end{array}
    $$
    Notice that this reduction is possible since the constraint of simple typing implies that $\valone_2$ is of the shape
    $\natsucc^m\ \natzero$ for some $m \geq 0$. We can therefore rewrite the two terms as
    $$
    \subst{\termone}{\vec{\varone}}{\vec{\valthree}}\ \ =\ \ \subst{\valtwo}{\funcone}{\valthree''}\ \ (\natsucc^m\ \natzero)
    $$
    and%
    $$
    \begin{array}{rcl}
    \subst{\termone}{\vec{\varone}}{\vec{\valthree'}}
    & \ \ \rcbv\ \ & \distrelts{\left(\left(\subst{\valtwo}{\funcone}{\letrec{\funcone}{\valtwo}}\right)\
    (\natsucc^m\ \natzero)\right)^1}\\
    \end{array}
    $$
    We need to distinguish four cases, depending on the structure of $\valtwo$.
    \begin{itemize}
     \item Suppose that $\valtwo$ is a variable different from $\funcone$. Then by Lemma~\ref{lemma:unfolding-of-a-variable}
     there cannot be a step of reduction from $\subst{\termone}{\vec{\varone}}{\vec{\valthree}}$.
     \item Suppose that $\valtwo\,=\,\funcone$. Then by Lemma~\ref{lemma:unfolding-of-a-variable}
     we have $\valthree_\indexone=\valthree'_\indexone=\valthree''$, so that
     $\subst{\termone}{\vec{\varone}}{\vec{\valthree}}\,=\,\subst{\termone}{\vec{\varone}}{\vec{\valthree'}}$
     and the result follows just as for the case where both $\valthree$ and $\valthree'$ were $0$-unfoldings.
     \item Suppose that $\valtwo\,=\,\abstr{\vartwo}{\termthree}$. Then 
    $$
    \begin{array}{rcl}
    \subst{\termone}{\vec{\varone}}{\vec{\valthree}} & \ \ =\ \ & \left(\abstr{\vartwo}{\subst{\termthree}{\funcone}{\valthree''}}\right)\ 
    \left(\natsucc^m\ \natzero\right)\\
    & \ \ \rcbv\ \ & \distrelts{\left(\subst{\subst{\termthree}{\funcone}{\valthree''}}{\vartwo}{\natsucc^m\ \natzero} \right)^1}\\
    & \ \ =\ \ & \distrelts{\left(\subst{\left(\subst{\termthree}{\vartwo}{\natsucc^m\ \natzero}\right)}{\funcone}{\valthree''} \right)^1}\\
    \end{array}
    $$
    Moreover,
    $$
    \begin{array}{rcl}
    \subst{\termone}{\vec{\varone}}{\vec{\valthree'}}
    & \ \ \rcbv\ \ & \distrelts{\left(\left(\subst{\valtwo}{\funcone}{\letrec{\funcone}{\valtwo}}\right)\
    (\natsucc^m\ \natzero)\right)^1}\\
       & \ \ =\ \ & \distrelts{\left(\subst{\left((\abstr{\vartwo}{\termthree}\right)\
    (\natsucc^m\ \natzero))}{\funcone}{\letrec{\funcone}{\valtwo}}\right)^1}\\
    &  \ \ \rcbv\ \ &\distrelts{\left(\subst{\left(\subst{\termthree}{\vartwo}{(\natsucc^m\ \natzero)}\right)}{\funcone}{\letrec{\funcone}{\valtwo}}\right)^1}\\
    \end{array}
    $$
    so that we can conclude with $\vec{y}\,=\,\funcone$ and $\termtwo_1\ =\ \subst{\termthree}{\vartwo}{(\natsucc^m\ \natzero)}$.
    \item Suppose that $\valtwo\ =\ \letrec{\functwo}{\valtwo'}$. Then 
    $$
    \begin{array}{rcl}
    \subst{\termone}{\vec{\varone}}{\vec{\valthree}} & \ \ =\ \ & \left(\subst{(\letrec{\functwo}{\valtwo'})}{\funcone}{\valthree''}\right)\ 
    \left(\natsucc^m\ \natzero\right)\\
    & \ \ \rcbv\ \ & \distrelts{\left(\subst{\subst{\valtwo'}{\functwo}{\letrec{\functwo}{\valtwo'}})}{\funcone}{\valthree''}\ \left(\natsucc^m\ \natzero\right) \right)^1}\\
    & \ \ =\ \ & \distrelts{\left(\subst{\subst{\valtwo'}{\functwo}{\letrec{\functwo}{\valtwo'}} \ \left(\natsucc^m\ \natzero\right))}{\funcone}{\valthree''} \right)^1}\\
    \end{array}
    $$
    Moreover,
    $$
    \begin{array}{rcl}
    \subst{\termone}{\vec{\varone}}{\vec{\valthree'}} & \ \ \rcbv\ \ & \left(\subst{(\letrec{\functwo}{\valtwo'})}{\funcone}{\letrec{\funcone}{\valtwo}}\right)\ 
    \left(\natsucc^m\ \natzero\right)\\
    & \ \ \rcbv\ \ & \distrelts{\left(\subst{\subst{\valtwo'}{\functwo}{\letrec{\functwo}{\valtwo'}})}{\funcone}{\letrec{\funcone}{\valtwo}}\ \left(\natsucc^m\ \natzero\right) \right)^1}\\
    & \ \ =\ \ & \distrelts{\left(\subst{\subst{\valtwo'}{\functwo}{\letrec{\functwo}{\valtwo'}} \ \left(\natsucc^m\ \natzero\right))}{\funcone}{\letrec{\funcone}{\valtwo}} \right)^1}\\
    \end{array}
    $$
    so that we can conclude with $\vec{y}\,=\,\funcone$ and $\termtwo_1\ =\ \subst{\valtwo'}{\functwo}{\letrec{\functwo}{\valtwo'}} \ \left(\natsucc^m\ \natzero\right))$.

    \end{itemize}
     \item Suppose that $\valthree_\indexone$ is the $0$-unfolding of $\valone$, and that $\valthree'_\indexone$ is the $n$-unfolding for $n > 0$.
     Again, the constraint of simple typing implies that $\valone_2$ is of the shape $\natsucc^m\ \natzero$ for some $m \geq 0$.
    We have that 
    $$
    \begin{array}{rcl}
    \subst{\termone}{\vec{\varone}}{\vec{\valthree}}
    & \ \ \rcbv\ \ & \distrelts{\left(\left(\subst{\valtwo}{\funcone}{\letrec{\funcone}{\valtwo}}\right)\
    (\natsucc^m\ \natzero)\right)^1}\\
    & \ \ =\ \ & \distrelts{\left(\subst{(\valtwo\ (\natsucc^m\ \natzero))}{\funcone}{\letrec{\funcone}{\valtwo}})\right)^1}\\
    \end{array}
    $$
    and that
    $$
    \subst{\termone}{\vec{\varone}}{\vec{\valthree'}}\ \ =\ \ \subst{\valtwo}{\funcone}{\valthree''}\ \ (\natsucc^m\ \natzero)
    \ \ =\ \ \subst{(\valtwo\ \ (\natsucc^m\ \natzero))}{\funcone}{\valthree''}
    $$
    where $\valthree''$ is the $(n-1)$-unfolding of $\valone$, so that we can conclude with $\vec{\vartwo} = \funcone$
    and $\termtwo_1\ =\ \valtwo\ \ (\natsucc^m\ \natzero)$.
    \item Suppose that $\valthree_\indexone$ is the $n$-unfolding of $\valone$ for $n > 0$,
    and that $\valthree'_\indexone$ is the $n'$-unfolding for $n' > 0$. We have 
    $$
    \subst{\termone}{\vec{\varone}}{\vec{\valthree}}\ \ =\ \ \subst{\valtwo}{\funcone}{\valthree''}\ \ \subst{\valone_2}{\vec{\varone}}{\vec{\valthree}}
    $$
    where $\valthree''$ is the $(n-1)$-unfolding of $\valone$, and
    $$
    \subst{\termone}{\vec{\varone}}{\vec{\valthree}}\ \ =\ \ \subst{\valtwo}{\funcone}{\valthree'''}\ \ \subst{\valone_2}{\vec{\varone}}{\vec{\valthree}}
    $$
    where $\valthree''$ is the $(n'-1)$-unfolding of $\valone$.
    We proceed by case analysis on $\valtwo$. As we discussed in the case where $\valthree_\indexone$ was a $(n''+1)$-unfolding
    and $\valthree'_\indexone$ a $0$-unfolding, the case where $\valtwo$ is a variable does not lead to a rewriting step.
    It remains to treat two cases:
    \begin{itemize}
     \item Suppose that $\valtwo\,=\,\abstr{\vartwo}{\termthree}$. Then
    $$
    \begin{array}{rcl}
    \subst{\termone}{\vec{\varone}}{\vec{\valthree}}
    & \ \ =\ \ &\subst{\abstr{\vartwo}{\termthree}}{\funcone}{\valthree''}\ \ \subst{\valone_2}{\vec{\varone}}{\vec{\valthree}}\\
     & \ \ \rcbv\ \ & \distrelts{\left(\subst{\subst{\termthree}{\funcone}{\valthree''}}{\vartwo}{\subst{\valone_2}{\vec{\varone}}{\vec{\valthree}}} \right)^1}\\
    & \ \ =\ \ & \distrelts{\left(\subst{\left(\subst{\termthree}{\vartwo}{\subst{\valone_2}{\vec{\varone}}{\vec{\valthree}}}\right)}{\funcone}{\valthree''} \right)^1}\\
    \end{array}
    $$
    and
    $$
    \begin{array}{rcl}
    \subst{\termone}{\vec{\varone}}{\vec{\valthree'}}
    & \ \ =\ \ &\subst{\abstr{\vartwo}{\termthree}}{\funcone}{\valthree'''}\ \ \subst{\valone_2}{\vec{\varone}}{\vec{\valthree}}\\
     & \ \ \rcbv\ \ & \distrelts{\left(\subst{\subst{\termthree}{\funcone}{\valthree'''}}{\vartwo}{\subst{\valone_2}{\vec{\varone}}{\vec{\valthree}}} \right)^1}\\
    & \ \ =\ \ & \distrelts{\left(\subst{\left(\subst{\termthree}{\vartwo}{\subst{\valone_2}{\vec{\varone}}{\vec{\valthree}}}\right)}{\funcone}{\valthree'''} \right)^1}\\
    \end{array}
    $$
    so that we can conclude with $\vec{\vartwo} = \funcone$ and $\termtwo_1\ =\ \subst{\termthree}{\vartwo}{\subst{\valone_2}{\vec{\varone}}{\vec{\valthree}}}$.
    \item Suppose that $\valtwo\,=\,\letrec{\functwo}{\valtwo'}$. Then 
    $$
    \begin{array}{rcl}
    \subst{\termone}{\vec{\varone}}{\vec{\valthree}} & \ \ =\ \ & \left(\subst{(\letrec{\functwo}{\valtwo'})}{\funcone}{\valthree''}\right)\ 
    \subst{\valone_2}{\vec{\varone}}{\vec{\valthree}}\\
    & \ \ \rcbv\ \ & \distrelts{\left(\subst{\subst{\valtwo'}{\functwo}{\letrec{\functwo}{\valtwo'}})}{\funcone}{\valthree''}\ \subst{\valone_2}{\vec{\varone}}{\vec{\valthree}} \right)^1}\\
    & \ \ =\ \ & \distrelts{\left(\subst{\subst{\valtwo'}{\functwo}{\letrec{\functwo}{\valtwo'}} \ \subst{\valone_2}{\vec{\varone}}{\vec{\valthree}})}{\funcone}{\valthree''} \right)^1}\\
    \end{array}
    $$
    where the reduction is possible because the simple typing constraints imply that $\subst{\valone_2}{\vec{\varone}}{\vec{\valthree}}$
    is of the shape $\natsucc^m\ \natzero$ for some $m \in \NN$. Moreover,
    $$
    \begin{array}{rcl}
    \subst{\termone}{\vec{\varone}}{\vec{\valthree'}} & \ \ =\ \ & \left(\subst{(\letrec{\functwo}{\valtwo'})}{\funcone}{\valthree'''}\right)\ 
    \subst{\valone_2}{\vec{\varone}}{\vec{\valthree}}\\
    & \ \ \rcbv\ \ & \distrelts{\left(\subst{(\subst{\valtwo'}{\functwo}{\letrec{\functwo}{\valtwo'}})}{\funcone}{\valthree'''}\ \subst{\valone_2}{\vec{\varone}}{\vec{\valthree}} \right)^1}\\
    & \ \ =\ \ & \distrelts{\left(\subst{(\subst{\valtwo'}{\functwo}{\letrec{\functwo}{\valtwo'}} \ \subst{\valone_2}{\vec{\varone}}{\vec{\valthree}})}{\funcone}{\valthree'''} \right)^1}\\
    \end{array}
    $$
    so that we can conclude with $\vec{\vartwo}=\funcone$ and 
    $\termtwo_1\ =\ \subst{\valtwo'}{\functwo}{\letrec{\functwo}{\valtwo'}} \ \subst{\valone_2}{\vec{\varone}}{\vec{\valthree}}$.
    \end{itemize}
  \end{itemize}

  \item If $\valone_1\ =\ \abstr{\vartwo}{\termthree}$,
  $$
  \begin{array}{rcl}
  \subst{\termone}{\vec{\varone}}{\vec{\valthree}} & \ \ =\ \ &
  \left(\abstr{\vartwo}{\subst{\termthree}{\vec{\varone}}{\vec{\valthree}}\right)}\ \subst{\valone_2}{\vec{\varone}}{\vec{\valthree}}\\
  & \ \ \rcbv\ \ &
  \distrelts{\left(\subst{\subst{\termthree}{\vec{\varone}}{\vec{\valthree}}}{\vartwo}{\subst{\valone_2}{\vec{\varone}}{\vec{\valthree}}}\right)^1}\\
  & \ \ =\ \ &
  \distrelts{\left(\subst{\subst{\termthree}{\vartwo}{\valone_2}}{\vec{\varone}}{\vec{\valthree}}\right)^1}\\
  \end{array}
  $$
  and in the same way
  $$
  \begin{array}{rcl}
  \subst{\termone}{\vec{\varone}}{\vec{\valthree'}} & \ \ \rcbv\ \ &
  \distrelts{\left(\subst{\subst{\termthree}{\vartwo}{\valone_2}}{\vec{\varone}}{\vec{\valthree'}}\right)^1}\\
  \end{array}
  $$
  which allows to conclude with $\termtwo_1\ =\ \subst{\termthree}{\vartwo}{\valone_2}$.
  \item If $\valone_1\ =\ \letrec{\functwo}{\valtwo'}$, by typing constraints $\valone_2\ =\ \natsucc^m\ \natzero$ for some $m \geq 0$.
  It follows that we can reduce $\subst{\termone}{\vec{\varone}}{\vec{\valthree}}$ and
  $\subst{\termone}{\vec{\varone}}{\vec{\valthree'}}$ as follows:
  $$
  \begin{array}{rcl}
  \subst{\termone}{\vec{\varone}}{\vec{\valthree}} & \ \ =\ \ &
  \left(\letrec{\functwo}{\subst{\valtwo'}{\vec{\varone}}{\vec{\valthree}}\right)}\  \natsucc^m\ \natzero\\
  & \ \ \rcbv\ \ &
  \distrelts{\left(\left(\subst{\subst{\valtwo'}{\vec{\varone}}{\vec{\valthree}}}{\functwo}{\letrec{\functwo}{\subst{\valtwo'}{\vec{\varone}}{\vec{\valthree}}}} \right) \  (\natsucc^m\ \natzero)\right)^1}\\
  & \ \ =\ \ &
  \distrelts{\left(\subst{\left(\subst{\valtwo'}{\functwo}{\letrec{\functwo}{\valtwo'}}\right)}{\vec{\varone}}{\vec{\valthree}} \  (\natsucc^m\ \natzero)\right)^1}\\
  & \ \ =\ \ &
  \distrelts{\left(\subst{\left(\subst{\valtwo'}{\functwo}{\letrec{\functwo}{\valtwo'}} \  (\natsucc^m\ \natzero)\right)}{\vec{\varone}}{\vec{\valthree}}\right)^1}\\
  \end{array}
  $$
  and similarly
  $$
  \begin{array}{rcl}
  \subst{\termone}{\vec{\varone}}{\vec{\valthree'}} & \ \ \rcbv\ \ &
  \distrelts{\left(\subst{\left(\subst{\valtwo'}{\functwo}{\letrec{\functwo}{\valtwo'}} \  (\natsucc^m\ \natzero)\right)}{\vec{\varone}}{\vec{\valthree'}}\right)^1}\\
  \end{array}
  $$
  so that we can conclude with $\termtwo_1\ =\ \subst{\valtwo'}{\functwo}{\letrec{\functwo}{\valtwo'}} \  (\natsucc^m\ \natzero)$.
  \end{itemize}
  
  \item Suppose that $\termone\ =\ \letin{\vartwo}{\valfour}{\termfour}$. Then
  $$
  \begin{array}{rcl}
   \subst{\termone}{\vec{\varone}}{\vec{\valthree}} & \ \ =\ \ &
   \letin{\vartwo}{\subst{\valfour}{\vec{\varone}}{\vec{\valthree}}}{\subst{\termfour}{\vec{\varone}}{\vec{\valthree}}} \\
   & \ \ \rcbv\ \ &
   \distrelts{\left(\subst{\subst{\termfour}{\vec{\varone}}{\vec{\valthree}}}{\vartwo}{\subst{\valfour}{\vec{\varone}}{\vec{\valthree}}}\right)^1}\\
   & \ \ =\ \ &
   \distrelts{\left(\subst{\subst{\termfour}{\vartwo}{\valfour}}{\vec{\varone}}{\vec{\valthree}}\right)^1}\\
  \end{array}
  $$
  and similarly $\subst{\termone}{\vec{\varone}}{\vec{\valthree'}} \rcbv \ \distrelts{\left(\subst{\subst{\termfour}{\vartwo}{\valfour}}{\vec{\varone}}{\vec{\valthree'}}\right)^1}$
  from which we can conclude.
  
  \item Suppose that $\termone\ =\ \letin{\vartwo}{\termthree}{\termfour}$ and that 
  $$
  \begin{array}{rcl}
  \subst{\termone}{\vec{\varone}}{\vec{\valthree}} 
  &\ \ =\ \ &
  \letin{\vartwo}{\subst{\termthree}{\vec{\varone}}{\vec{\valthree}}}{\subst{\termfour}{\vec{\varone}}{\vec{\valthree}}}\\
  &\ \ \rcbv \ \ &\distrelts{\left(\letin{\vartwo}{\subst{\termthree'_\indexone}{\vec{\varone}}{\vec{\valthree}}}{\subst{\termfour}{\vec{\varone}}{\vec{\valthree}}}\right)^{p_\indexone}}\\
  &\ \ = \ \ &\distrelts{\left(\letin{\vartwo}{\subst{\termthree''_\indexone}{\vec{\varthree}}{\vec{\valthree}}}{\subst{\termfour}{\vec{\varone}}{\vec{\valthree}}}\right)^{p_\indexone}}\\
    &\ \ = \ \ &\distrelts{\left(\subst{\left(\letin{\vartwo}{\termthree''_\indexone}{\termfour}\right)}{\vec{\varthree},\vec{\varone}}{\vec{\valthree},\vec{\valthree}}\right)^{p_\indexone}}\\
  \end{array}
  $$
  where the third step is obtained by $\alpha$-renaming, and where by definition of $\rcbv$ we have
  $$
  \subst{\termthree}{\vec{\varone}}{\vec{\valthree}} \ \ \rcbv\ \ \distrelts{(\subst{\termthree'_\indexone}{\vec{\varone}}{\vec{\valthree}})^{p_\indexone} \sep \indexone \in \indexsetone}.
  $$
  By induction hypothesis, there exists $\vec{\valthree'_1},\ldots,\vec{\valthree'_n} \in \unfoldings{\valone}$ such that
  $$
  \subst{\termthree}{\vec{\varone}}{\vec{\valthree'}} \ \ \rcbv\ \ \distrelts{(\subst{\termthree'_\indexone}{\vec{\varone}}{\vec{\valthree'_\indexone}})^{p_\indexone} \sep \indexone \in \indexsetone}.
  $$
  Now remark that
  $$
  \begin{array}{rcl}
  \subst{\termone}{\vec{\varone}}{\vec{\valthree'}}& \ \ \rcbv \ \ & 
  \distrelts{\left(\letin{\vartwo}{\subst{\termthree'_\indexone}{\vec{\varone}}{\vec{\valthree'_\indexone}}}{\subst{\termfour}{\vec{\varone}}{\vec{\valthree}}}\right)^{p_\indexone}}\\
  & \ \ = \ \ & 
  \distrelts{\left(\letin{\vartwo}{\subst{\termthree''_\indexone}{\vec{\varthree}}{\vec{\valthree'_\indexone}}}{\subst{\termfour}{\vec{\varone}}{\vec{\valthree}}}\right)^{p_\indexone}}\\
  & \ \ = \ \ & 
  \distrelts{\left(\subst{\left(\letin{\vartwo}{\termthree''_\indexone}{\termfour}\right)}{\vec{\varone},\vec{\varthree}}{\vec{\valthree},\vec{\valthree'_\indexone}} \right)^{p_\indexone}}\\
  \end{array}
  $$
  The result follows for $\vec{\vartwo}\,=\,\vec{\varone},\vec{\varthree}$ and
  $\termtwo_\indexone\ =\ \letin{\vartwo}{\termthree''_\indexone}{\termfour}$.
  
  \item Suppose that $\termone \ =\ \termthree \choice_p \termfour$. Suppose that $\termthree \neq \termfour$. Then 
  $$
  \subst{\termone}{\vec{\varone}}{\vec{\valthree}} \ \ \rcbv\ \ 
  \distrelts{\subst{\termthree}{\vec{\varone}}{\vec{\valthree}}^p,\ \subst{\termfour}{\vec{\varone}}{\vec{\valthree}}^{1-p}}
  $$
  and
  $$
  \subst{\termone}{\vec{\varone}}{\vec{\valthree'}} \ \ \rcbv\ \ 
  \distrelts{\subst{\termthree}{\vec{\varone}}{\vec{\valthree'}}^p,\ \subst{\termfour}{\vec{\varone}}{\vec{\valthree'}}^{1-p}}
  $$
  so that the result holds for $\termtwo_1\,=\,\termthree$ and $\termtwo_2\,=\,\termfour$.
  
  If $\termthree\,=\,\termfour$,%
  $$
  \subst{\termone}{\vec{\varone}}{\vec{\valthree}} \ \ \rcbv\ \ 
  \distrelts{\subst{\termthree}{\vec{\varone}}{\vec{\valthree}}^1}
  $$
  and
  $$
  \subst{\termone}{\vec{\varone}}{\vec{\valthree'}} \ \ \rcbv\ \ 
  \distrelts{\subst{\termthree}{\vec{\varone}}{\vec{\valthree'}}^1}
  $$
  and the result holds as well. Note that the distinction is necessary so as to avoid the use of pseudo-representations in the
  statement of the lemma.
  
  \item Suppose that $\termone\ =\ \casenat{\valone'}{\valfour}{\valfive}$. By typing constraints,
  $\valone'\ =\ \natsucc^m \ \natzero$ or $\valone'\,=\,\vartwo$ is a variable.
  \begin{itemize}
   \item If $\valone'\,=\,\natzero$, 
   $
   \subst{\termone}{\vec{\varone}}{\vec{\valthree}} \ \ \rcbv\ \ \distrelts{\left(\subst{\termfive}{\vec{\varone}}{\vec{\valthree}}\right)^1}
   $
   and
   $
   \subst{\termone}{\vec{\varone}}{\vec{\valthree'}} \ \ \rcbv\ \ \distrelts{\left(\subst{\termfive}{\vec{\varone}}{\vec{\valthree'}}\right)^1}
   $
   so that we can conclude.
   \item If $\valone'\,=\,\natsucc^m\ \natzero$ with $m > 0$, we can conclude in the same way.
   \item In the latter case, there is no reduction from
  $\subst{\termone}{\vec{\varone}}{\vec{\valthree}}$ unless
  $\subst{\valone'}{\vec{\varone}}{\vec{\valthree}}$ is of the shape $\valone'\ =\ \natsucc^m \ \natzero$.
  But this is of type $\Nat$ and cannot therefore be an unfolding of $\valone$, so that this case is impossible.
  \end{itemize}

 \end{itemize}

\end{proof}
}

\longv{This result can be extended to a $n$-step rewriting process; however pseudo-representations are required
to keep the statement true, as we explain in the proof.}

\longv{
\begin{lemma}
\label{lemma:relating-unfoldings-iterated-steps}
 Let $\valone\,=\,\left(\letrec{\funcone}{\valtwo}\right)$ be a closed value.
 Let $\termone$ be a simply-typed term of free variables contained in $\vec{\varone}$, all typed with the simple type of $\valone$.
 Let $\vec{\valthree},\,\vec{\valthree'} \in \unfoldings{\valone}$ and $n \in \NN$.
 Then there exists a distribution of values of pseudo-representation
 $\pseudorep{\valfour_\indexone^{p_\indexone}\sep \indexone \in \indexsetone}$,
 a vector of variables $\vec{\vartwo}$ and families of vectors $\left(\vec{\valthree_\indexone}\right)_{\indexone \in \indexsetone},
 \left(\vec{\valthree'_\indexone}\right)_{\indexone \in \indexsetone}$ of the same length as $\vec{\vartwo}$,
 all such that $\subst{\termone}{\vec{\varone}}{\vec{\valthree}} \redval^n 
 \pseudorep{\left(\subst{\valfour_\indexone}{\vec{\vartwo}}{\vec{\valthree_\indexone}}\right)^{p_\indexone}  \sep \indexone \in \indexsetone}$
 and that $\subst{\termone}{\vec{\varone}}{\vec{\valthree'}} \redval^n \pseudorep{\left(\subst{\valfour_\indexone}{\vec{\vartwo}}{\vec{\valthree'_\indexone}}\right)^{p_\indexone} \sep \indexone \in \indexsetone}$.
\end{lemma}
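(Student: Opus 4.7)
The plan is to lift Lemma~\ref{lemma:one-step-unfolding} from one rewriting step to $n$ steps. Since $\redval^n$ is defined by taking $n$ steps of $\rcbv$ and then restricting to the value part, I would first establish the analogous statement for $\rcbv^n$ acting on distributions of ``paired'' terms of the form $\subst{\termone}{\vec{\varone}}{\vec{\valthree}}$ versus $\subst{\termone}{\vec{\varone}}{\vec{\valthree'}}$, and then obtain the stated lemma by extracting the value subdistribution via the value decomposition $\valdec$. Throughout, pseudo-representations are used in place of representations because the terms indexed by different reduction paths may syntactically coincide once different unfoldings are plugged in.

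The induction is on $n$. For $n=0$, the relation $\redval^0$ is the identity, and we can take $\pseudorep{\left(\subst{\termone}{\vec{\varone}}{\vec{\varone}}\right)^1}$ with $\vec{\vartwo}=\vec{\varone}$, $\vec{\valthree_1}=\vec{\valthree}$ and $\vec{\valthree'_1}=\vec{\valthree'}$, provided $\subst{\termone}{\vec{\varone}}{\vec{\valthree}}$ is already a value; otherwise both sides produce the empty value distribution, which is also trivially matched. For the step $n\to n+1$, applying the induction hypothesis yields two pseudo-representations
\[
\subst{\termone}{\vec{\varone}}{\vec{\valthree}}\ \rcbv^n\ \pseudorep{\left(\subst{\termfive_\indextwo}{\vec{\vartwo}}{\vec{\valthree_\indextwo}}\right)^{p_\indextwo} \sep \indextwo\in\indexsettwo},\qquad
\subst{\termone}{\vec{\varone}}{\vec{\valthree'}}\ \rcbv^n\ \pseudorep{\left(\subst{\termfive_\indextwo}{\vec{\vartwo}}{\vec{\valthree'_\indextwo}}\right)^{p_\indextwo} \sep \indextwo\in\indexsettwo},
\]
sharing the same index set, the same probabilities, and the same template terms $\termfive_\indextwo$, but differing in the unfoldings substituted. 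For each $\indextwo$ such that $\subst{\termfive_\indextwo}{\vec{\vartwo}}{\vec{\valthree_\indextwo}}$ is not already a value, I apply Lemma~\ref{lemma:one-step-unfolding} to that non-value term (using $\termfive_\indextwo$ as the ``body'' and $\vec{\vartwo}$ as the holes), producing finitely many new templates $\termsix_{\indextwo,\indexthree}$ with new probabilities $q_{\indextwo,\indexthree}$ and matched unfolding vectors on both sides. For each $\indextwo$ where the term is already a value, I simply keep it in place. I then combine all the contributions: the new pseudo-representations are obtained by summing over $(\indextwo,\indexthree)$ with weight $p_\indextwo q_{\indextwo,\indexthree}$, concatenating all the hole vectors into a single $\vec{\vartwo'}$ and padding the corresponding unfolding vectors accordingly. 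Taking the value subdistributions of the resulting pseudo-representations finally gives the statement for $\redval^{n+1}$.

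The delicate point, and the main obstacle, is the bookkeeping of pseudo-representations under this iterated process: two indices $\indextwo\neq\indextwo'$ may yield, after one further reduction step, the same underlying template $\termsix$ once the appropriate unfoldings are plugged in, and one must resist the temptation to identify them, since the two branches may carry different unfoldings on the $\vec{\valthree'}$-side (or vice-versa). Keeping the indexing explicit, i.e., using the multiset-like pseudo-representation, ensures that the matched pairing provided by Lemma~\ref{lemma:one-step-unfolding} is preserved step by step and that both final pseudo-representations share the same shape $\valfour_\indexone$, the same probabilities $p_\indexone$, and the same hole vector $\vec{\vartwo}$, as required. The only remaining routine checks concern the handling of already-reduced values (which contribute Dirac factors to the reduction) and the compatibility between the $\rcbv$-based induction and the final extraction of the value part through $\valdec$.
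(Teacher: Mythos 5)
Your proof is correct and follows essentially the same route as the paper, whose argument for this lemma is literally ``by iteration of Lemma~\ref{lemma:one-step-unfolding},'' with the same observation that pseudo-representations are needed because distinct reduction branches may collapse to syntactically equal terms on one side but not the other. Your additional bookkeeping (handling value entries, concatenating hole vectors, and extracting the value part at the end) just makes explicit what the paper leaves implicit.
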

}

\longv{ 
\begin{proof}
By iteration of Lemma~\ref{lemma:one-step-unfolding}. The pseudo-representations come from the fact that some terms
in different reduction branches may converge to the same value, say, in the reduction from
$\subst{\termone}{\vec{\varone}}{\vec{\valthree}}$ but not in the one from $\subst{\termone}{\vec{\varone}}{\vec{\valthree'}}$.
\end{proof}
}


\longv{The following lemma is of technical interest. It states that, given two pseudo-representations of a distribution
-- one of the shape exhibited in the previous lemmas and used for relating terms with unfoldings, the other one
being a pseudo-representation witnessing the belonging to a set $\setreddists{}{}$ -- there exists a third one
which ``combines'' both:}
\longv{
\begin{lemma}
 \label{lemma:cutting-pseudo-representations}
 Suppose that 
 $\distrone_r\ =\ \pseudorep{\left(\subst{\valfour_\indexone}{\vec{\vartwo}}{\vec{\valthree_\indexone}}\right)^{p_\indexone}  \sep \indexone \in \indexsetone} \ =\ \pseudorep{\left(\valfour'_\indextwo\right)^{p'_\indextwo}\sep \indextwo \in \indexsettwo}$.
 Then there exists a set $\indexsetthree$, two applications $\pi_1\,:\,\indexsetthree \rightarrow \indexsetone$ and 
 $\pi_2\,:\,\indexsetthree \rightarrow \indexsettwo$
 and a pseudo-representation
 $\distrone_r \ =\ \pseudorep{\left(\subst{\valfour''_\indexthree}{\vec{\vartwo}}{\vec{\valthree_{\pi_1(\indexthree)}}}\right)^{p''_\indexthree}
  \sep \indexthree \in \indexsetthree}$
  such that
 \begin{itemize}
  \item $\forall \indexthree \in \indexsetthree,\ \ \valfour''_\indexthree\ =\ \valfour_{\pi_1(\indexthree)}$,
  \item  $\forall \indexone \in \indexsetone,\ \ \sum_{\indexthree \in \pi_1^{-1}(\indexone)}\ p''_\indexthree\ =\ p_\indexone$,
  \item $\forall \indexthree \in \indexsetthree,\ \ \subst{\valfour''_\indexthree}{\vec{\vartwo}}{\vec{\valthree''_{\indexthree}}}
  \ =\ \valfour'_{\pi_2(\indexthree)}$,
  \item $\forall \indextwo \in \indexsettwo,\ \ \sum_{\indexthree \in \pi_2^{-1}(\indextwo)}\ p''_\indexthree\ =\ p'_\indextwo$.
 \end{itemize}

\end{lemma}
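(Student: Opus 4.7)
The plan is to view this as a transportation (coupling) problem: each pseudo-representation gives a way of decomposing the mass $\distrone_r(\valone)$ assigned to every value $\valone$ in the support of $\distrone_r$, and we need to find a common refinement of these two decompositions.

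First, I would fix a value $\valone \in \supp{\distrone_r}$ and partition the indices of the two pseudo-representations according to the value they carry. Set $\indexsetone_\valone = \{\indexone \in \indexsetone \sep \subst{\valfour_\indexone}{\vec{\vartwo}}{\vec{\valthree_\indexone}} = \valone\}$ and $\indexsettwo_\valone = \{\indextwo \in \indexsettwo \sep \valfour'_\indextwo = \valone\}$. Since both pseudo-representations represent the same distribution $\distrone_r$, we have
$$
\sum_{\indexone \in \indexsetone_\valone}\ p_\indexone \ \ =\ \ \distrone_r(\valone) \ \ =\ \ \sum_{\indextwo \in \indexsettwo_\valone}\ p'_\indextwo.
$$

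Next, for each such $\valone$, I would invoke a standard coupling argument (for instance, the northwest corner procedure applied to the $|\indexsetone_\valone| \times |\indexsettwo_\valone|$ transportation problem with equal marginals $(p_\indexone)_{\indexone \in \indexsetone_\valone}$ and $(p'_\indextwo)_{\indextwo \in \indexsettwo_\valone}$) to obtain non-negative reals $\left(q_{\indexone,\indextwo}\right)_{\indexone \in \indexsetone_\valone,\,\indextwo \in \indexsettwo_\valone}$ satisfying both marginal constraints: $\sum_{\indextwo \in \indexsettwo_\valone}\ q_{\indexone,\indextwo}\,=\,p_\indexone$ and $\sum_{\indexone \in \indexsetone_\valone}\ q_{\indexone,\indextwo}\,=\,p'_\indextwo$.

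Then I would define $\indexsetthree$ as the set of pairs $(\indexone,\indextwo)$ such that $\indexone \in \indexsetone_\valone$ and $\indextwo \in \indexsettwo_\valone$ for some $\valone \in \supp{\distrone_r}$ (with $q_{\indexone,\indextwo} > 0$, to keep the family reduced). The projections are $\pi_1(\indexone,\indextwo)\,=\,\indexone$ and $\pi_2(\indexone,\indextwo)\,=\,\indextwo$, the weights are $p''_{(\indexone,\indextwo)}\,=\,q_{\indexone,\indextwo}$, and we set $\valfour''_{(\indexone,\indextwo)}\,=\,\valfour_\indexone$. The four required conditions are then essentially immediate: condition (1) is by definition, conditions (2) and (4) follow from the two marginal identities of the transportation solution, and condition (3) follows because for every $(\indexone,\indextwo) \in \indexsetthree$ we have $\indexone \in \indexsetone_\valone$ and $\indextwo \in \indexsettwo_\valone$ for the same $\valone$, hence $\subst{\valfour_\indexone}{\vec{\vartwo}}{\vec{\valthree_\indexone}}\,=\,\valone\,=\,\valfour'_\indextwo$.

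The main obstacle is purely bookkeeping: disentangling the two pseudo-representations value by value and verifying that the transportation solution yields a genuine pseudo-representation of $\distrone_r$. The mathematical content is standard discrete coupling, and no nontrivial induction or reducibility argument is needed at this stage.
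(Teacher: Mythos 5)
Your proof is correct and follows essentially the same route as the paper: the paper also partitions the indices of both pseudo-representations by the value they carry, observes that the marginals agree on each class since both represent $\distrone_r$, and then builds the common refinement by interleaving the partial sums of the two weight sequences on each class --- which is exactly a hand-rolled instance of the northwest-corner transportation construction you invoke. The only cosmetic difference is that the paper indexes $\indexsetthree$ by the subintervals of $[0,\distrone_r(\valone)]$ cut out by both families of partial sums rather than by pairs $(\indexone,\indextwo)$ with positive coupling mass.
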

}

\longv{
\begin{proof}
 Let $\distrone\ =\ \distrelts{\left(\valfive_\indexfour\right)^{p''_\indexfour}\sep \indexfour \in \indexsetfour}$ be the \emph{representation} of $\distrone$. We build $\indexsetthree$, $\pi_1$ and $\pi_2$ as follows. The construction starts from the empty set and the empty maps,
 and is iterated on every $\indexfour \in \indexsetfour$.
 First, we set $\indexsetone_\indexfour \ =\ \left\{\indexone \in \indexsetone \sep \valfive_\indexfour\ =\ \subst{\valfour_\indexone}{\vec{\vartwo}}{\vec{\valthree_\indexone}}\right\}$ and $\indexsettwo_{\indexfour}\ =\ \left\{\indextwo \in \indexsettwo \sep 
 \valfive_\indexfour \ =\ \valfour'_\indextwo\right\}$. We suppose that both these sets are enumerated, and will write
 them $\indexsetone_\indexfour\ =\ \left\{\indexone_0,\ldots,\indexone_{n_\indexfour}\right\}$
 and $\indexsettwo_\indexfour\ =\ \left\{\indextwo_0,\ldots,\indextwo_{m_\indexfour}\right\}$.
 We consider the set of reals 
 $$
 R\ =\ 
 \left\{0,\,p_{\indexone_1},\,p_{\indexone_1} + p_{\indexone_2},\,\ldots,\,\sum_{r=0}^{n_\indexfour}\ p_{\indexone_r} \right\}
 \ \cup\ \left\{0,\,p'_{\indextwo_1},\,p'_{\indextwo_1} + p'_{\indextwo_2},\,\ldots,\,\sum_{r'=0}^{m_\indexfour}\ p'_{\indexone_{r'}} \right\}
 \ \subset \ [0,p''_\indexfour]
 $$
 This set is ordered, as a set of reals, so that we have a maximal enumeration
 $$
 0\ =\ \alpha_0 < \alpha_1 < \cdots < \alpha_s \ =\ p
 $$
 where maximality means that $\beta \in R \implies \exists t,\ \ \beta \,=\,\alpha_t$.
 We add $s$ elements to the set $\indexsetthree$ produced during the examination of previous elements of $\indexsetfour$:
 $\indexsetthree := \indexsetthree \uplus \{0,\ldots,s-1\}$. For every $t \in \{0,\ldots,s-1\}$, we define:
 \begin{itemize}
  \item $p''_t\,=\,\alpha_{t+1} - \alpha_t$,
  \item $\pi_1(t)\ =\ \indexone_k \in \indexsetone_\indexfour$ where $\sum_{r=0}^{k-1}\ p_{\indexone_r} \leq \alpha_t$
  and $\sum_{r=0}^{k}\ p_{\indexone_r} \geq \alpha_t$,
    \item $\pi_2(t)\ =\ \indextwo_k \in \indexsettwo_\indexfour$ where $\sum_{r=0}^{k-1}\ p_{\indextwo_r} \leq \alpha_t$
  and $\sum_{r=0}^{k}\ p_{\indextwo_r} \geq \alpha_t$.
 \end{itemize}
 We claim that the set $\indexsetthree$ resulting of this constructive process satisfies the equalities of the lemma.
\end{proof}
}

\longv{The series of previous lemmas allows to deduce that a term is reducible if and only if
the terms to which it is related are:}
\longv{
\begin{lemma}
\label{lemma:unfolding-stability-tred}
 Let $\valone\,=\,\left(\letrec{\funcone}{\valtwo}\right)$ be a closed value.
 Let $\termone$ be a simply-typed term of free variables contained in $\vec{\varone}$, all typed with the simple type of $\valone$.
 Let $\vec{\valthree},\,\vec{\valthree'} \in \unfoldings{\valone}$.
 Then $\subst{\termone}{\vec{\varone}}{\vec{\valthree}} \in \setredtmsfin{p}{\distrtypone,\seone}$ if and only if 
 $\subst{\termone}{\vec{\varone}}{\vec{\valthree'}} \in \setredtmsfin{p}{\distrtypone,\seone}$.
\end{lemma}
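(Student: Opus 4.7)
The plan is to prove a strengthened statement by simultaneous induction on the simple type $\simpletypone$ underlying the sized or distribution type involved, with three versions at each type:
$(V)$ for every value $\valfour$ whose free variables lie in a vector $\vec{\vartwo}$ typed with the simple type of $\valone$, every $\typone$ with $\underlying{\typone}=\simpletypone$, every $\seone$, every $p$, and every $\vec{\valthree},\vec{\valthree'}\in\unfoldings{\valone}$, we have $\subst{\valfour}{\vec{\vartwo}}{\vec{\valthree}}\in\setredvals{p}{\typone,\seone}$ iff $\subst{\valfour}{\vec{\vartwo}}{\vec{\valthree'}}\in\setredvals{p}{\typone,\seone}$;
$(D)$ for every finite distribution of values of the shape $\pseudorep{(\subst{\valfour_\indexone}{\vec{\vartwo}}{\vec{\valthree_\indexone}})^{p_\indexone}}$ (with each $\vec{\valthree_\indexone}\in\unfoldings{\valone}$) and any parallel distribution $\pseudorep{(\subst{\valfour_\indexone}{\vec{\vartwo}}{\vec{\valthree'_\indexone}})^{p_\indexone}}$, membership in $\setreddists{p}{\distrtypone,\seone}$ is equivalent;
$(T)$ the lemma itself for any $\distrtypone$ refining $\simpletypone$.

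The base case $\simpletypone=\Nat$ is immediate: since variables in $\vec{\vartwo}$ have the higher-order simple type of $\valone$, they cannot occur in a value of type $\Nat$, which must be of the form $\natsucc^n\,\natzero$. Thus both substitutions coincide on such values, giving $(V)$ and hence $(D)$ trivially. Then $(T)$ follows by applying Lemma~\ref{lemma:relating-unfoldings-iterated-steps} to each approximation $\subst{\termone}{\vec{\varone}}{\vec{\valthree}}\redval^{n_r}\distrone_r$, obtaining a parallel reduction from $\subst{\termone}{\vec{\varone}}{\vec{\valthree'}}$ producing a distribution whose values are unchanged under the substitution swap, to which $(D)$ applies.

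For the inductive step at $\simpletypone=\simpletyptwo\typarrow\simpletypthree$, I prove $(V)$, then $(D)$, then $(T)$ in this order. For $(V)$, unfold the definition: $\valfour\in\setredvals{p}{\typtwo\typarrow\distrtypone,\seone}$ iff for every $q\in(0,1]$ and every $\valfive\in\setredvals{q}{\typtwo,\seone}$, one has $\valfour\,\valfive\in\setredtmsfin{pq}{\distrtypone,\seone}$; since $\valfive$ is closed we have $\subst{\valfour}{\vec{\vartwo}}{\vec{\valthree}}\,\valfive=\subst{(\valfour\,\valfive)}{\vec{\vartwo}}{\vec{\valthree}}$, and the term $\valfour\,\valfive$ has simple type $\simpletypthree$, so applying $(T)$ at the strictly smaller type $\simpletypthree$ yields the required equivalence. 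For $(D)$, take a witness pseudo-representation of $\distrone$ together with the families $(p_{\indexone\indextwo})$ and $(q_{\indexone\indextwo})$ certifying $\distrone\in\setreddists{p}{\distrtypone,\seone}$; use Lemma~\ref{lemma:cutting-pseudo-representations} to build a common refinement of this witness and the unfolding-shaped pseudo-representation, then reuse the same families on the $\vec{\valthree'}$-side, transferring each $\subst{\valfour_\indexone}{\vec{\vartwo}}{\vec{\valthree_\indexone}}\in\setredvals{q_{\indexone\indextwo}}{\typone_\indextwo,\seone}$ to $\subst{\valfour_\indexone}{\vec{\vartwo}}{\vec{\valthree'_\indexone}}$ via $(V)$ at $\simpletypone$ (just established). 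Finally, for $(T)$, for every $r<p$ invoke Lemma~\ref{lemma:relating-unfoldings-iterated-steps} on the approximation $\subst{\termone}{\vec{\varone}}{\vec{\valthree}}\redval^{n_r}\distrone_r\in\setreddists{r}{\distrtyptwo_r,\seone}$ to obtain $\subst{\termone}{\vec{\varone}}{\vec{\valthree'}}\redval^{n_r}\distrtwo_r$ with $\distrtwo_r$ in the unfolding-shaped form paired with $\distrone_r$, and conclude $\distrtwo_r\in\setreddists{r}{\distrtyptwo_r,\seone}$ by $(D)$ at $\simpletypone$.

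The chief obstacle is the bookkeeping at step $(D)$: the pseudo-representation furnished by Lemma~\ref{lemma:relating-unfoldings-iterated-steps} exposes the unfolding structure needed to swap $\vec{\valthree}\leftrightarrow\vec{\valthree'}$, while the witnesses for $\setreddists{}{}$ decompose the same distribution along types; reconciling the two via the common refinement produced by Lemma~\ref{lemma:cutting-pseudo-representations} is exactly what permits the families $(p_{\indexone\indextwo}), (q_{\indexone\indextwo})$ to be transferred intact to the other side. The symmetric roles of $\vec{\valthree}$ and $\vec{\valthree'}$ throughout ensure both directions of the equivalence without further argument.
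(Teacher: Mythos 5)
Your proposal is correct and follows essentially the same route as the paper: induction on the underlying simple type, with the base case resolved by observing that $\Nat$-typed values cannot contain the substituted variables, and the arrow case handled by combining Lemma~\ref{lemma:relating-unfoldings-iterated-steps} with Lemma~\ref{lemma:cutting-pseudo-representations} and transferring the witness families across the substitution swap via the definition of $\setredvals{}{}$ at arrow type, which reduces to the induction hypothesis at the return type. The only difference is presentational: you factor the value-level and distribution-level transfers into explicit mutually-inductive statements $(V)$ and $(D)$, whereas the paper establishes them inline within the proof of the term-level statement.
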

}

\longv{
\begin{proof}
 We prove that $\subst{\termone}{\vec{\varone}}{\vec{\valthree}} \in \setredtmsfin{p}{\distrtypone,\seone}$ implies that
 $\subst{\termone}{\vec{\varone}}{\vec{\valthree'}} \in \setredtmsfin{p}{\distrtypone,\seone}$, the converse direction being exactly symmetrical.
 The proof proceeds by induction on the simple type refining $\distrtypone$.\\
 
 Suppose that $\distrtypone \refines \Nat$. 
 Let $r \in [0,p)$.
 Since $\subst{\termone}{\vec{\varone}}{\vec{\valthree}} \in \setredtmsfin{p}{\distrtypone,\seone}$, there exists $n_r$
 and $\distrtyptwo_r \distrleq \distrtypone$ such that 
 $\subst{\termone}{\vec{\varone}}{\vec{\valthree}} \redval^{n_r} \distrone_r$ and that 
 $\distrone_r \in \setreddists{r}{\distrtyptwo_r,\seone}$.
 Lemma~\ref{lemma:relating-unfoldings-iterated-steps} implies that
 there exists a distribution of values of pseudo-representation
 $\pseudorep{\valfour_\indexone^{p_\indexone}\sep \indexone \in \indexsetone}$,
 a vector of variables $\vec{\vartwo}$ and families of vectors $\left(\vec{\valthree_\indexone}\right)_{\indexone \in \indexsetone},
 \left(\vec{\valthree'_\indexone}\right)_{\indexone \in \indexsetone}$ of the same length as $\vec{\vartwo}$ all such that
 $\distrone_r\ =\ \pseudorep{\left(\subst{\valfour_\indexone}{\vec{\vartwo}}{\vec{\valthree_\indexone}}\right)^{p_\indexone}  \sep \indexone \in \indexsetone}$
 and that $\subst{\termone}{\vec{\varone}}{\vec{\valthree'}} \redval^n \distrtwo_r\ =\  \pseudorep{\left(\subst{\valfour_\indexone}{\vec{\vartwo}}{\vec{\valthree'_\indexone}}\right)^{p_\indexone} \sep \indexone \in \indexsetone}$.
 By typing constraints coming from the subject reduction property, all the 
 $\subst{\valfour_\indexone}{\vec{\vartwo}}{\vec{\valthree_\indexone}}$ and
 $\subst{\valfour_\indexone}{\vec{\vartwo}}{\vec{\valthree'_\indexone}}$ have the simple type $\Nat$.
 This implies that all these terms are of the shape $\natsucc^m\ \natzero$ for $m \geq 0$,
 and thus that the $\valfour_\indexone$ cannot contain a variable from $\vec{\vartwo}$, as their simple type is of the shape $\Nat \typarrow \simpletypone$.
 It follows that, for every index $\indexone \in \indexsetone$, 
 $\subst{\valfour_\indexone}{\vec{\vartwo}}{\vec{\valthree_\indexone}}\ =\ 
 \subst{\valfour_\indexone}{\vec{\vartwo}}{\vec{\valthree'_\indexone}}$.
 This implies that $\distrtwo_r\ =\ \distrone_r \in \setreddists{r}{\distrtyptwo_r,\seone}$, and thus that 
 $\subst{\termone}{\vec{\varone}}{\vec{\valthree'}} \in \setredtmsfin{p}{\distrtypone,\seone}$.\\
 
 Suppose that $\distrtypone \refines \simpletypone \typarrow \simpletyptwo$.
 Let $r \in [0,p)$.
 Since $\subst{\termone}{\vec{\varone}}{\vec{\valthree}} \in \setredtmsfin{p}{\distrtypone,\seone}$, there exists $n_r$
 and $\distrtyptwo_r \distrleq \distrtypone$ such that 
 $\subst{\termone}{\vec{\varone}}{\vec{\valthree}} \redval^{n_r} \distrone_r$ and that 
 $\distrone_r \in \setreddists{r}{\distrtyptwo_r,\seone}$.
 Lemma~\ref{lemma:relating-unfoldings-iterated-steps} implies that
 there exists a distribution of values of pseudo-representation
 $\pseudorep{\valfour_\indexone^{p_\indexone}\sep \indexone \in \indexsetone}$,
 a vector of variables $\vec{\vartwo}$ and families of vectors $\left(\vec{\valthree_\indexone}\right)_{\indexone \in \indexsetone},
 \left(\vec{\valthree'_\indexone}\right)_{\indexone \in \indexsetone}$ of the same length as $\vec{\vartwo}$ all such that
 $\distrone_r\ =\ \pseudorep{\left(\subst{\valfour_\indexone}{\vec{\vartwo}}{\vec{\valthree_\indexone}}\right)^{p_\indexone}  \sep \indexone \in \indexsetone}$
 and that $\subst{\termone}{\vec{\varone}}{\vec{\valthree'}} \redval^n \distrtwo_r\ =\  \pseudorep{\left(\subst{\valfour_\indexone}{\vec{\vartwo}}{\vec{\valthree'_\indexone}}\right)^{p_\indexone} \sep \indexone \in \indexsetone}$.
 Since $\distrone_r \in \setreddists{r}{\distrtyptwo_r,\seone}$, 
 there is a pseudo-representation $\distrone_r\ =\ \pseudorep{(\valthree'_{\indextwo})^{p'_\indextwo} \sep \indextwo \in \indexsettwo}$
 witnessing this fact.
 By Lemma~\ref{lemma:cutting-pseudo-representations},
 there exists a pseudo-representation
 $\distrone_r \ =\ \pseudorep{\left(\subst{\valfour''_\indexthree}{\vec{\vartwo}}{\vec{\valthree_{\pi_1(\indexthree)}}}\right)^{p''_\indexthree}
  \sep \indexthree \in \indexsetthree}$ satisfying a series of additional properties.
 These properties ensure two crucial facts for our purpose:
 \begin{itemize}
  \item $\subst{\termone}{\vec{\varone}}{\vec{\valthree}} \redval^n \pseudorep{\left(\subst{\valfour''_\indexthree}{\vec{\vartwo}}{\vec{\valthree_{\pi_1(\indexthree)}}}\right)^{p''_\indexthree} \sep \indexthree \in \indexsetthree}$
  and $\subst{\termone}{\vec{\varone}}{\vec{\valthree'}} \redval^n \distrtwo_r\ =\  \pseudorep{\left(\subst{\valfour''_\indexthree}{\vec{\vartwo}}{\vec{\valthree'_{\pi_1(\indexthree)}}}\right)^{p''_\indexthree} \sep \indexthree \in \indexsetthree}$,
  \item and $\pseudorep{\left(\subst{\valfour''_\indexthree}{\vec{\vartwo}}{\vec{\valthree_{\pi_1(\indexthree)}}}\right)^{p''_\indexthree} \sep \indexthree \in \indexsetthree}$ is a pseudo-distribution witnessing that $\distrone_r \in \setreddists{r}{\distrtyptwo_r,\seone}$.
  Setting $\distrtypone\,=\,\distrelts{(\typone_\indexfour)^{p'''_\indexfour} \sep \indexfour \in \indexsetfour}$,
  there exists therefore families $\left(p''_{\indexthree \indexfour}\right)_{\indexthree\in \indexsetthree,\indexfour\in\indexsetfour}$
  and $\left(q_{\indexthree \indexfour}\right)_{\indexthree\in \indexsetthree,\indexfour\in\indexsetfour}$ of reals of $[0,1]$
  satisfying:
  \begin{enumerate}
   \item $\forall \indexthree \in \indexsetthree,\ \ \forall\indexfour \in \indexsetfour,\ \ \subst{\valfour''_\indexthree}{\vec{\vartwo}}{\vec{\valthree_{\pi_1(\indexthree)}}} \in 
   \setredvals{q_{\indexthree\indexfour}}{\typone_\indexfour,\seone}$,
   \item $\forall \indexthree \in \indexsetthree,\ \ \sum_{\indexfour \in \indexsetfour}\ p''_{\indexthree\indexfour}\ =\ p''_\indexthree$,
   \item $\forall \indexfour \in \indexsetfour,\ \ \sum_{\indexthree \in \indexsetthree}\ p''_{\indexthree\indexfour}\ =\ \distrtypone(\typone_\indexfour)$,
   \item $p \leq \sum_{\indexthree \in \indexsetthree}\sum_{\indexfour \in \indexsetfour}\ q_{\indexthree\indexfour}p''_{\indexthree\indexfour}$.
  \end{enumerate}
 \end{itemize}
 We now prove that 
 $\forall \indexthree \in \indexsetthree,\ \ \forall\indexfour \in \indexsetfour,\ \ \subst{\valfour''_\indexthree}{\vec{\vartwo}}{\vec{\valthree'_{\pi_1(\indexthree)}}} \in 
   \setredvals{q_{\indexthree\indexfour}}{\typone_\indexfour,\seone}$.
 Let $\indexthree \in \indexsetthree$ and $\indexfour \in \indexsetfour$.
 Let $\typone_\indexfour\ =\ \typthree \typarrow \distrtyptwo$.
 $$
 \begin{array}{ll}
  & \subst{\valfour''_\indexthree}{\vec{\vartwo}}{\vec{\valthree_{\pi_1(\indexthree)}}} \in 
   \setredvals{q_{\indexthree\indexfour}}{\typone_\indexfour,\seone}\\
  \Longleftrightarrow \quad & 
  \forall q \in (0,1],\ \ \forall\valfive\in\setredvals{q}{\typthree,\seone},\ \ \subst{\valfour''_\indexthree}{\vec{\vartwo}}{\vec{\valthree_{\pi_1(\indexthree)}}}\ \valfive \in\setredtmsfin{qq_{\indexthree\indexfour}}{\distrtyptwo,\seone}\\
  \Longleftrightarrow \quad & 
  \forall q \in (0,1],\ \ \forall\valfive\in\setredvals{q}{\typthree,\seone},\ \ \subst{\left(\valfour''_\indexthree\ \valfive\right)}{\vec{\vartwo}}{\vec{\valthree_{\pi_1(\indexthree)}}}\  \in\setredtmsfin{qq_{\indexthree\indexfour}}{\distrtyptwo,\seone}\\
  \Longleftrightarrow \quad & 
  \forall q \in (0,1],\ \ \forall\valfive\in\setredvals{q}{\typthree,\seone},\ \ \subst{\left(\valfour''_\indexthree\ \valfive\right)}{\vec{\vartwo}}{\vec{\valthree'_{\pi_1(\indexthree)}}}\  \in\setredtmsfin{qq_{\indexthree\indexfour}}{\distrtyptwo,\seone}
  \quad \ \ \text{by IH}\\
  \Longleftrightarrow \quad & 
  \forall q \in (0,1],\ \ \forall\valfive\in\setredvals{q}{\typthree,\seone},\ \ \subst{\valfour''_\indexthree}{\vec{\vartwo}}{\vec{\valthree'_{\pi_1(\indexthree)}}}\ \valfive  \in\setredtmsfin{qq_{\indexthree\indexfour}}{\distrtyptwo,\seone}
  \\ 
  \Longleftrightarrow & 
  \subst{\valfour''_\indexthree}{\vec{\vartwo}}{\vec{\valthree'_{\pi_1(\indexthree)}}} \in 
   \setredvals{q_{\indexthree\indexfour}}{\typone_\indexfour,\seone}\\
 \end{array}
 $$
  This will implies that $\pseudorep{\left(\subst{\valfour''_\indexthree}{\vec{\vartwo}}{\vec{\valthree'_{\pi_1(\indexthree)}}}\right)^{p''_\indexthree} \sep \indexthree \in \indexsetthree}$ witnesses that $\distrtwo_r \in \setreddists{r}{\distrtyptwo_r,\seone}$,
 for the same families of reals $p''_{\indexthree\indexfour}$ and $q_{\indexthree\indexfour}$.
 Now for every $r \in [0,p)$, there exists $n_r$ and $\distrtyptwo_r \distrleq \distrtypone$ such that 
 $\subst{\termone}{\vec{\varone}}{\vec{\valthree'}} \redval^{n_r} \distrtwo_r$ and that 
 $\distrtwo_r \in \setreddists{r}{\distrtyptwo_r,\seone}$: we have that 
 $\subst{\termone}{\vec{\varone}}{\vec{\valthree'}} \in \setredtmsfin{p}{\distrtypone,\seone}$.

\end{proof}
}

\longv{The following lemma shows that reducible values are reducible terms:}
\longv{
\begin{lemma}[Reducible Values are Reducible Terms]
\label{lemma:values-are-in-tred-iff-in-vred}
 Let $\valone$ be a value. Then $\valone \in \setredtmsfin{p}{\distrelts{\typone^1},\seone}$
 if and only if $\valone \in \setredvals{p}{\typone,\seone}$.
\end{lemma}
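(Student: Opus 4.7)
The plan is to prove both directions separately, exploiting the fact that any value $\valone$ reduces only to the Dirac distribution $\distrelts{\valone^1}$ in zero steps (and in fact in any number of steps, since values are stable under $\rcbv$). The two directions are asymmetric in difficulty: the forward direction is essentially bookkeeping, while the converse requires combining continuity with a convex-combination argument to extract a suitable degree of reducibility.

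For the forward direction, assuming $\valone \in \setredvals{p}{\typone,\seone}$, I would fix $r \in [0,p)$ and exhibit the $0$-step reduction $\valone \redval^0 \distrelts{\valone^1}$. It then suffices to show $\distrelts{\valone^1} \in \setreddists{r}{\distrelts{\typone^1},\seone}$, which I would do by taking the singleton pseudo-representation and choosing $p_{**}=1$, $q_{**}=r$; condition (1) follows from downward closure (Lemma \ref{lemma/downward-closure-tred}) applied to the hypothesis, conditions (2) and (3) are trivial, and condition (4) reduces to $r \leq r$. Membership in $\settypedclosedterms{\typone}$ comes from the fact that $\setredvals{p}{\typone,\seone}$ only contains well-typed closed values.

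For the converse, assume $\valone \in \setredtmsfin{p}{\distrelts{\typone^1},\seone}$. The case $p=0$ is immediate from Lemma~\ref{lemma:simple-types-and-candidates-of-proba-zero}, so assume $p>0$. By the Continuity Lemma (Lemma~\ref{lemma/continuity-lemma-vred}), it suffices to prove $\valone \in \setredvals{q}{\typone,\seone}$ for every $q \in (0,p)$. Fix such a $q$ and pick $r$ with $q<r<p$. The definition of $\setredtmsfin{p}{}{}$ yields some $\distrtyptwo_r \distrleq \distrelts{\typone^1}$ and $n_r$ such that $\valone \redval^{n_r} \distrone_r$ with $\distrone_r \in \setreddists{r}{\distrtyptwo_r,\seone}$. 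Since $\valone$ is a value and $\rcbv$ leaves values unchanged, $\distrone_r = \distrelts{\valone^1}$. A direct inspection of conditions (2) and (3) forces $\distrsum{\distrtyptwo_r} = \distrsum{\distrone_r} = 1$, hence $\distrtyptwo_r = \distrelts{\typone^1}$.

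The key step is then the following convex-combination argument. The witness for $\distrelts{\valone^1} \in \setreddists{r}{\distrelts{\typone^1},\seone}$ provides a pseudo-representation $\pseudorep{\valone^{p_\indexone} \sep \indexone \in \indexsetone}$ together with reducibility degrees $q_\indexone$ (writing $q_\indexone$ for $q_{\indexone *}$, since $\indexsettwo$ is a singleton) satisfying $\valone \in \setredvals{q_\indexone}{\typone,\seone}$, $\sum_\indexone p_\indexone = 1$, and $r \leq \sum_\indexone q_\indexone p_\indexone$. Since this last quantity is a convex combination bounded below by $r$, at least one index $\indexone_0$ must satisfy $q_{\indexone_0} \geq r > q$. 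Applying downward closure to $\valone \in \setredvals{q_{\indexone_0}}{\typone,\seone}$ yields $\valone \in \setredvals{q}{\typone,\seone}$, and continuity concludes the proof. The only mildly delicate point is the pigeonhole-style extraction of $\indexone_0$; once one notices that the witnessing pseudo-representation of the Dirac value distribution necessarily splits the probability mass $1$ across copies all tagged by the same value $\valone$, everything falls into place.
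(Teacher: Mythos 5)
Your proof is correct and follows essentially the same route as the paper's: both directions rely on the observation that a value only reduces to its own Dirac distribution, the forward direction uses the singleton pseudo-representation with $q=r$ plus downward closure, and the converse uses the same pigeonhole argument on the convex combination $r \leq \sum_\indexone p_\indexone q_\indexone$ followed by downward closure and the continuity lemma. Your explicit treatment of the $p=0$ case and of why $\distrtyptwo_r$ must be the full Dirac type are minor points the paper leaves implicit, but the argument is the same.
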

}

\longv{
Note that, conversely, we may have $\valone \in \setredtmsfin{p}{\distrtypone,\seone}$ where $\distrtypone$ is not Dirac.
For instance, $\natzero \in \setredtmsfin{1}{\distrtypone,\seone}$ for
$\distrtypone\,=\,\distrelts{(\Nat^{\sizevarone})^{\frac{1}{2}},(\Nat^{\sizesucc{\sizevarone}})^{\frac{1}{2}}}$.
}

\longv{
\begin{proof}~

\begin{itemize}
 \item Suppose that $\valone \in \setredvals{p}{\typone,\seone}$. Let $r \in [0,p)$. We must prove that
 there exists $n_r$ and $\distrtyptwo_r$ such that $\valone \rcbv^{n_r} \distrelts{\valone^1}$ and that
 $\distrelts{\valone^1} \in \setreddists{r}{\distrtyptwo_r,\seone}$.
 Necessarily $n_r = 0$ and $\distrtyptwo_r\,=\,\distrelts{\typone^1}$.
 Since $\valone \in \setredvals{p}{\typone,\seone}$, $\distrelts{\valone^1} \in \setreddists{r}{\distrtyptwo_r,\seone}$:
 take the canonical pseudo-representation $\pseudorep{\valone^1}$ and $p_{11}=1$, $q_{11}=r$.

 \item Suppose that $\valone \in \setredtmsfin{p}{\distrelts{\typone^1},\seone}$.
 It follows that, for every $r \in [0,p)$, there exists $n_r$ and $\distrtyptwo_r$ such that $\valone \rcbv^{n_r} \distrelts{\valone^1}$ and that
 $\distrelts{\valone^1} \in \setreddists{r}{\distrtyptwo_r,\seone}$. Again, since, $\valone$ is a value, we necessarily have
 $n_r = 0$ and $\distrtyptwo_r\,=\,\distrelts{\typone^1}$.
 Since $\distrelts{\valone^1} \in \setreddists{r}{\distrtyptwo_r,\seone}$, there is a pseudo-representation
 $\pseudorep{\valone^{p_1},\ldots,\valone^{p_n}}$ such that $\sum_{\indexone = 1}^n\ p_\indexone\ =\ 1$,
 and a family $\left(q_{\indexone 1}\right)_{\indexone \in \indexsetone}$ which is such that
 $r \leq \sum_{\indexone \in \indexsetone}\ p_{\indexone 1} q_{\indexone 1}$,
 where $p_{\indexone 1}\,=\,p_{\indexone}$.
 
 Suppose that there is no $q_{\indexone 1}$ greater or equal to $r$. Then
 $\forall \indexone \in \indexsetone,\ \ q_{\indexone 1} < r$ and
 $$
 \sum_{\indexone \in \indexsetone}\ p_{\indexone 1}q_{\indexone 1} \ <\ 
 \sum_{\indexone \in \indexsetone}\ p_{\indexone 1} r \ =\ r \sum_{\indexone \in \indexsetone}\ p_{\indexone 1} = r
 $$
 which is a contradiction. So there exists $q_{\indexone 1} \geq r$ and therefore
 $\valone \in \setredvals{q_{\indexone 1}}{\typone,\seone}$. By Lemma~\ref{lemma/downward-closure-tred},
 $\valone \in \setredvals{r}{\typone,\seone}$.
 Since the result is true for all $r \in [0,p)$, we obtain by Lemma~\ref{lemma/continuity-lemma-vred}
 that $\valone \in \setredvals{p}{\typone,\seone}$.
\end{itemize}

\end{proof}
}

\longv{We finally deduce from the two previous lemmas the proposition of interest, relating the reducibility of a 
recursively-defined term with the one of its unfoldings:}
\begin{proposition}[Reducibility is Stable by Unfolding]
\label{corollary:unfolding-stability-vred}
 Let $n \in \NN$ and $\valone\,=\,\left(\letrec{\funcone}{\valtwo}\right)$ be a closed value. Suppose that $\valthree$ is the $n$-unfolding of $\valone$.
 Then $\valone \in \setredvals{p}{\Nat^\sizeone \typarrow \distrtypone,\seone}$ if and only if 
 $\valthree \in \setredvals{p}{\Nat^\sizeone \typarrow \distrtypone,\seone}$.
\end{proposition}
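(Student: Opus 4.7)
The plan is to derive this proposition as a direct consequence of the term-level unfolding-stability result (Lemma~\ref{lemma:unfolding-stability-tred}), by unpacking the definition of the reducibility set at arrow type. First I would observe that $\valone$ and $\valthree$ share the same underlying simple type, since unfolding preserves simple typing of a closed value. Hence both belong to $\settypedclosedvalues{\underlying{\Nat^\sizeone \typarrow \distrtypone}}$, and one side of the arrow-type reducibility condition is immediate. It then remains to compare the quantified conditions on applications.

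Using Definition~\ref{def:redsets} at arrow type, I would rephrase $\valone \in \setredvals{p}{\Nat^\sizeone \typarrow \distrtypone,\seone}$ as the statement that, for every $q \in (0,1]$ and every $\valfive \in \setredvals{q}{\Nat^\sizeone,\seone}$, one has $\valone\ \valfive \in \setredtmsfin{pq}{\distrtypone,\seone}$, and likewise for $\valthree$. The proposition thus reduces to showing, for each admissible pair $(q,\valfive)$, the equivalence
$$
\valone\ \valfive \in \setredtmsfin{pq}{\distrtypone,\seone} \quad \Longleftrightarrow \quad \valthree\ \valfive \in \setredtmsfin{pq}{\distrtypone,\seone}.
$$

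To establish this equivalence I would introduce a fresh variable $\varone$ of the simple type of $\valone$ and consider the simply-typed term $\termone \,=\, \varone\ \valfive$, whose only free variable is $\varone$, typed as required. Since $\valone$ is trivially its own $0$-unfolding and $\valthree$ is the $n$-unfolding of $\valone$ by hypothesis, both $\valone$ and $\valthree$ lie in $\unfoldings{\valone}$. Moreover, as $\varone$ does not occur in the closed value $\valfive$, we have $\subst{\termone}{\varone}{\valone} \,=\, \valone\ \valfive$ and $\subst{\termone}{\varone}{\valthree} \,=\, \valthree\ \valfive$. Invoking Lemma~\ref{lemma:unfolding-stability-tred} with the single-variable vector $\vec{\varone} = \varone$ and the unfoldings $\vec{\valthree} = \valone$, $\vec{\valthree'} = \valthree$ yields precisely the desired equivalence.

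Since the substantial work — tracking reduction branches of related instantiations of a term by unfoldings of $\valone$, through the one-step analysis of Lemma~\ref{lemma:one-step-unfolding} and its iteration in Lemma~\ref{lemma:relating-unfoldings-iterated-steps}, together with the combinatorial pseudo-representation argument of Lemma~\ref{lemma:cutting-pseudo-representations} — has already been carried out at the level of terms, there is no genuine obstacle here. The only thing to be careful about is the quantifier reshuffling in the arrow case: one must confirm that no additional hypothesis on the witnessing $\valfive$ is needed beyond $\valfive \in \setredvals{q}{\Nat^\sizeone,\seone}$, and that the reducibility degree $pq$ is the same on both sides of the equivalence, so that Lemma~\ref{lemma:unfolding-stability-tred} applies verbatim.
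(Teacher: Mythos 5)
Your proof is correct and follows essentially the same route as the paper: both reduce the statement to the term-level unfolding lemma (Lemma~\ref{lemma:unfolding-stability-tred}). The paper applies that lemma to the trivial term $\termone=\varone$ and then converts between value- and term-reducibility via Lemma~\ref{lemma:values-are-in-tred-iff-in-vred}, whereas you unpack the arrow-type clause of Definition~\ref{def:redsets} and apply the lemma to $\termone=\varone\ \valfive$; the two wrappings are interchangeable and your quantifier bookkeeping is sound.
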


\longv{
\begin{proof}
A direct consequence of Lemma~\ref{lemma:unfolding-stability-tred} and Lemma~\ref{lemma:values-are-in-tred-iff-in-vred}.
\end{proof}
}

\longv{\subsection{Reducibility Sets vs. Reductions and Probabilistic Combinations}}

\longv{If a distribution, obtained as partial approximation of the semantics $\semantics{\termone}$
of a term $\termone$, is reducible for a type $\distrtypone_n$, then all the partial approximations
of $\semantics{\termone}$ obtained by iterating at least as many times the reduction relation
$\redval$ have the same degree of reducibility, for a greater type:
\begin{lemma}
\label{lemma:pumping-dred}
 Suppose that $\termone \redval^n \distrone_n \in \setreddists{p}{\distrtypone_n,\seone}$
 for $\distrtypone_n \distrleq \distrtypone$, with $\distrsum{\distrtypone}=1$.
 Suppose that, for $m \geq n$, $\termone \redval^m \distrone_m$.
 Then there exists $\distrtypone_n \distrleq \distrtypone_m \distrleq \distrtypone$ such that
 $\distrone_m \in \setreddists{p}{\distrtypone_m,\seone}$.
\end{lemma}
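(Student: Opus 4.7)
The plan is to decompose $\distrone_m$ into the already-analyzed part $\distrone_n$ and the additional part $\distrone' := \distrone_m - \distrone_n$. By Lemma~\ref{lemma/semantics-is-computed-monotonically2} applied to $n \leq m$, we have $\distrone_n \distrleq \distrone_m$, so $\distrone'$ is a well-defined finite distribution of values. If $\distrone' = 0$ the conclusion is immediate with $\distrtypone_m := \distrtypone_n$, so we assume $\distrsum{\distrone'} > 0$. I would then construct $\distrtypone_m := \distrtypone_n + \distrtyptwo$ for a suitable $\distrtyptwo$ typing $\distrone'$, and combine the respective witnesses of membership in $\setreddists{}{}$.

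First, I would exhibit $\distrtyptwo$. Conditions $(2)$ and $(3)$ in the definition of $\setreddists{}{}$ force $\distrsum{\distrtypone_n} = \distrsum{\distrone_n}$, so the ``gap'' $\distrtypone - \distrtypone_n$ has total mass $1 - \distrsum{\distrone_n} \geq \distrsum{\distrone_m} - \distrsum{\distrone_n} = \distrsum{\distrone'} > 0$, using the hypothesis $\distrsum{\distrtypone} = 1$. I would take $\distrtyptwo$ to be a rescaling of $\distrtypone - \distrtypone_n$ by the factor $\distrsum{\distrone'} / (1 - \distrsum{\distrone_n})$, producing a valid distribution type of mass $\distrsum{\distrone'}$ with $\underlying{\distrtyptwo} = \underlying{\distrtypone}$ and $\distrtyptwo \distrleq \distrtypone - \distrtypone_n$. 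Setting $\distrtypone_m := \distrtypone_n + \distrtyptwo$ yields $\distrtypone_n \distrleq \distrtypone_m \distrleq \distrtypone$ as required. Since every value in $\distrone'$ has simple type $\underlying{\distrtypone}$ by simple-type preservation under $\rcbv$, Lemma~\ref{lemma:simple-types-and-candidates-of-proba-zero} gives $\distrone' \in \setreddists{0}{\distrtyptwo, \seone}$.

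Next, I would combine the witnessing pseudo-representations. Let the families $\left(p^n_{\indexone\indextwo}\right)$ and $\left(q^n_{\indexone\indextwo}\right)$ witness $\distrone_n \in \setreddists{p}{\distrtypone_n, \seone}$, and let $\left(p'_{\indexone\indextwo}\right)$ together with $q'_{\indexone\indextwo} = 0$ witness $\distrone' \in \setreddists{0}{\distrtyptwo, \seone}$. The pseudo-representation of $\distrone_m$ is the multiset concatenation of those for $\distrone_n$ and $\distrone'$, while the representation of $\distrtypone_m$ is the disjoint union of those of $\distrtypone_n$ and $\distrtyptwo$. On the combined index sets, I would set the entry $p_{\indexone\indextwo}$ equal to the corresponding $p^n$ or $p'$ on the two ``diagonal'' blocks (value and type coming from the same sub-distribution) and to $0$ on the ``cross'' blocks (a value from $\distrone_n$ against a type from $\distrtyptwo$, or vice versa), and similarly for $q_{\indexone\indextwo}$. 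Conditions $(1)$--$(3)$ of the reducibility definition then follow blockwise from the corresponding sub-witnesses, while the double sum in condition $(4)$ is at least $p + 0 = p$, establishing $\distrone_m \in \setreddists{p}{\distrtypone_m, \seone}$.

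The main obstacle is the construction of $\distrtyptwo$ inside the gap $\distrtypone - \distrtypone_n$: this hinges crucially on both the mass-matching enforced by conditions $(2)$--$(3)$ of $\setreddists{}{}$ and the hypothesis $\distrsum{\distrtypone} = 1$, without which there would be no guarantee that the new values in $\distrone'$ admit typings within the remaining budget of $\distrtypone$. The remainder of the argument is routine bookkeeping on pseudo-representations, analogous to what is done in the proof of Lemma~\ref{lemma/subject-reduction-one-step}.
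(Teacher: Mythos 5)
Your proof is correct, and its overall strategy coincides with the paper's: use Lemma~\ref{lemma/semantics-is-computed-monotonically2} to see $\distrone_n\distrleq\distrone_m$, extend the witnessing pseudo-representation of $\distrone_n$ by the newly produced values, give those values degree of reducibility $0$, and observe that condition~$(4)$ is unaffected while the total mass budget $\distrsum{\distrtypone}=1$ guarantees the extension fits under $\distrtypone$. Where you genuinely diverge is in how the new mass is distributed over types. The paper keeps $\supp{\distrtypone_m}=\supp{\distrtypone_n}$ and spreads the new values' probabilities over the \emph{existing} columns, choosing the new entries $p'_{\indexone\indextwo}$ ``arbitrarily'' subject to row-sum equalities and column-sum inequalities (a transportation-style feasibility claim it asserts rather than exhibits). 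You instead allocate the new mass to the complement $\distrtypone-\distrtypone_n$, rescaled to mass $\distrsum{\distrone'}$, and package the new block as an application of Lemma~\ref{lemma:simple-types-and-candidates-of-proba-zero}. Your construction is more explicit — the feasibility is immediate from the rescaling — at the cost of possibly enlarging the support of $\distrtypone_m$ beyond that of $\distrtypone_n$, which is harmless for the statement.

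One small bookkeeping point to tighten: the definition of $\setreddists{}{}$ indexes types by the \emph{representation} of the distribution type, so the types $\typone_\indextwo$ are pairwise distinct. If $\supp{\distrtyptwo}$ intersects $\supp{\distrtypone_n}$, your ``disjoint union of representations'' for $\distrtypone_m=\distrtypone_n+\distrtyptwo$ is not literally a representation; you must merge the duplicated columns, summing the $p_{\indexone\indextwo}$ entries and keeping, for each value, the nonzero $q$ from whichever block it came from. All four conditions survive this merge (the row and column sums are additive, and the double sum in condition~$(4)$ can only grow), so this is a presentational fix rather than a gap.
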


}

\longv{

\begin{proof}
 Let $\distrtypone_n\,=\,\distrelts{\left(\typone_\indextwo\right)^{p'_\indextwo} \sep \indextwo \in \indexsettwo}$.
 Since $\distrone_n \in \setreddists{p}{\distrtypone_n,\seone}$,
 there exists a pseudo-representation $\distrone_n\,=\,\pseudorep{\valone_\indexone^{p_\indexone} \sep \indexone \in \indexsetone}$
 and two families of reals $\left(p_{\indexone \indextwo}\right)_{\indexone\in \indexsetone,\indextwo\in\indexsettwo}$
  and $\left(q_{\indexone \indextwo}\right)_{\indexone\in \indexsetone,\indextwo\in\indexsettwo}$
 such that
 \begin{enumerate}
   \item $\forall \indexone \in \indexsetone,\ \ \forall\indextwo \in \indexsettwo,\ \ \valone_\indexone \in 
   \setredvals{q_{\indexone\indextwo}}{\typone_\indextwo,\seone}$,
   \item $\forall \indexone \in \indexsetone,\ \ \sum_{\indextwo \in \indexsettwo}\ p_{\indexone\indextwo}\ =\ p_\indexone$,
   \item $\forall \indextwo \in \indexsettwo,\ \ \sum_{\indexone \in \indexsetone}\ p_{\indexone\indextwo}\ =\ p'_\indextwo$,
   \item $p \leq \sum_{\indexone \in \indexsetone}\sum_{\indextwo \in \indexsettwo}\ q_{\indexone\indextwo}p_{\indexone\indextwo}$.
  \end{enumerate}
 By Lemma~\ref{lemma/semantics-is-computed-monotonically2}, we have $\distrone_n \distrleq \distrone_m$ so that the distribution $\distrone_m$
 admits a pseudo-representation $\distrone_m\,=\,\pseudorep{\valone_\indexone^{p_\indexone} \sep \indexone \in \indexsetone \uplus \indexsetthree}$
 extending the one of $\distrone_n$.
 We now need to define appropriate families of reals $\left(p'_{\indexone \indextwo}\right)_{\indexone\in \indexsetone \uplus \indexsetthree,\indextwo\in\indexsettwo}$ and $\left(q'_{\indexone \indextwo}\right)_{\indexone\in \indexsetone \uplus \indexsetthree,\indextwo\in\indexsettwo}$.
 We set:
 \begin{itemize}
  \item $\forall \indexone \in \indexsetone,\ \forall \indextwo\in \indexsettwo,\ \ 
  p'_{\indexone \indextwo}\ =\ p_{\indexone \indextwo}$,
  \item $\forall \indexone \in \indexsetone,\ \forall \indextwo\in \indexsettwo,\ \ 
  q'_{\indexone \indextwo}\ =\ q_{\indexone \indextwo}$,
  \item $\forall \indexone \in \indexsetthree,\ \forall \indextwo\in \indexsettwo,\ \ 
  q'_{\indexone \indextwo}\ =\ 0$
 \end{itemize}
 and we choose the $\left(p'_{\indexone \indextwo}\right)_{\indexone \in \indexsetthree,\indextwo\in \indexsettwo}$ arbitrarily in $[0,1]$ 
 under the constraints that $\forall \indexone \in \indexsetthree,\ \ \sum_{\indextwo \in \indexsettwo}\ p'_{\indexone\indextwo}\ =\ p_\indexone$
 and that $\forall \indextwo \in \indexsettwo,\ \sum_{\indexone \in \indexsetone \uplus \indexsetthree}\ p'_{\indexone\indextwo} \leq \distrtypone(\typone_\indextwo)$.
 These constraints are feasible since 
 $\sum_{\indexone \in \indexsetone \uplus \indexsetthree}\ \sum_{\indextwo \in \indexsettwo}\ p'_{\indexone\indextwo}\ =\
 \sum_{\indexone \in \indexsetone \uplus \indexsetthree}\ p_\indexone \leq 1 = \distrsum{\distrtypone}$.
 We then set $\distrtypone_m\,=\,\distrelts{\left(\typone_\indextwo\right)^{\sum_{\indexone \in \indexsetone \uplus \indexsetthree}\ p'_{\indexone\indextwo}} \sep \indextwo \in \indexsettwo} \distrleq \distrtypone$.
 Let us check that $\distrone_m \in \setreddists{p}{\distrtypone_m,\seone}$.
 \begin{enumerate}
  \item $\forall \indexone \in \indexsetone,\ \ \forall\indextwo \in \indexsettwo,\ \ \valone_\indexone \in 
   \setredvals{q_{\indexone\indextwo}}{\typone_\indextwo,\seone}$ and $\forall \indexone \in \indexsetthree,\ \ \forall\indextwo \in \indexsettwo,\ \ \valone_\indexone \in \setredvals{0}{\typone_\indextwo,\seone}$ as this set contains all terms of simple type $\underlying{\typone_\indextwo}$
   by Lemma~\ref{lemma:simple-types-and-candidates-of-proba-zero},
   \item $\forall \indexone \in \indexsetone,\ \ \sum_{\indextwo \in \indexsettwo}\ p'_{\indexone\indextwo}\ =\ p_\indexone$ by definition
   and $\forall \indexone \in \indexsetthree,\ \ \sum_{\indextwo \in \indexsettwo}\ p'_{\indexone\indextwo}\ =\ p_\indexone$ by construction,
   \item $\forall \indextwo \in \indexsettwo,\ \ \sum_{\indexone \in \indexsetone \uplus \indexsetthree}\ p'_{\indexone\indextwo}\ =\ \distrtypone_m(\typone_\indextwo)$ by definition of $\distrtypone_m$,
   \item %
   $$
   \begin{array}{rcl}
   p & \ \leq\ &\sum_{\indexone \in \indexsetone}\sum_{\indextwo \in \indexsettwo}\ q_{\indexone\indextwo}p_{\indexone\indextwo}\\
   & \ =\ & \sum_{\indexone \in \indexsetone}\sum_{\indextwo \in \indexsettwo}\ q'_{\indexone\indextwo}p'_{\indexone\indextwo} + 0\\
   & \ =\ & \sum_{\indexone \in \indexsetone}\sum_{\indextwo \in \indexsettwo}\ q'_{\indexone\indextwo}p'_{\indexone\indextwo} + 
   \sum_{\indexone \in \indexsetthree}\sum_{\indextwo \in \indexsettwo}\ q'_{\indexone\indextwo}p'_{\indexone\indextwo}\\
   & \ =\ & \sum_{\indexone \in \indexsetone \uplus \indexsetthree}\sum_{\indextwo \in \indexsettwo}\ q'_{\indexone\indextwo}p'_{\indexone\indextwo} \\
   \end{array}
   $$
 \end{enumerate}
 So $\distrone_m \in \setreddists{p}{\distrtypone_m,\seone}$.

\end{proof}

}

\longv{When two distributions $\distrone$ and $\distrtwo$ are reducible, 
with respective degrees of reducibility $p'$ and $p''$, 
their probabilistic combination $\distrone \choice_p \distrtwo$
is reducible as well with degree of reducibility $pp' + (1-p)p''$,
for the distribution type computed by $\choice_p$:

\begin{lemma}
\label{lemma:backtracking-dred}
 Suppose that $\underlying{\distrtypone}\,=\,\underlying{\distrtyptwo}$, that 
 $\distrone \in \setreddists{p'}{\distrtypone,\seone}$
 and that $\distrtwo \in \setreddists{p''}{\distrtyptwo,\seone}$.
 Then $p\distrone + (1-p) \distrtwo \in \setreddists{pp' + (1-p)p''}{\distrtypone \choice_p \distrtyptwo,\seone}$.
\end{lemma}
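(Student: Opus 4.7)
The plan is to combine the witnessing data for $\distrone$ and $\distrtwo$ into witnessing data for the mixture. Unpacking the hypotheses, from $\distrone \in \setreddists{p'}{\distrtypone,\seone}$ we obtain a pseudo-representation $\distrone = \pseudorep{\valone_\indexone^{p_\indexone} \sep \indexone \in \indexsetone}$, a canonical representation $\distrtypone = \distrelts{(\typone_\indextwo)^{p'_\indextwo} \sep \indextwo \in \indexsettwo}$, and families $(p_{\indexone\indextwo})$ and $(q_{\indexone\indextwo})$ satisfying conditions (1)--(4) of Definition~\ref{def:redsets}. Similarly, from $\distrtwo \in \setreddists{p''}{\distrtyptwo,\seone}$ we obtain $\distrtwo = \pseudorep{\valtwo_\indexthree^{r_\indexthree} \sep \indexthree \in \indexsetthree}$, $\distrtyptwo = \distrelts{(\typtwo_\indexfour)^{r'_\indexfour} \sep \indexfour \in \indexsetfour}$, and families $(r_{\indexthree\indexfour})$ and $(s_{\indexthree\indexfour})$.

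Next, I would take as pseudo-representation of $p\distrone + (1-p)\distrtwo$ the natural juxtaposition $\pseudorep{\valone_\indexone^{p \cdot p_\indexone} \sep \indexone \in \indexsetone} + \pseudorep{\valtwo_\indexthree^{(1-p)\cdot r_\indexthree} \sep \indexthree \in \indexsetthree}$, indexed by $\indexsetone \uplus \indexsetthree$. For the canonical representation of $\distrtypone \choice_p \distrtyptwo = p\distrtypone + (1-p)\distrtyptwo$, indexed by some set $\indexsetfive$ whose support is contained in the union of the supports of $\distrtypone$ and $\distrtyptwo$, I would define combined families as follows: for each $\indexfive \in \indexsetfive$ with associated type $\typthree_\indexfive$, set $P_{\indexone\indexfive} = p \cdot p_{\indexone\indextwo}$ and $Q_{\indexone\indexfive} = q_{\indexone\indextwo}$ whenever $\typthree_\indexfive = \typone_\indextwo$ for some (necessarily unique) $\indextwo \in \indexsettwo$, and $P_{\indexthree\indexfive} = (1-p)\cdot r_{\indexthree\indexfour}$ and $Q_{\indexthree\indexfive} = s_{\indexthree\indexfour}$ when $\typthree_\indexfive = \typtwo_\indexfour$; set the remaining entries to $0$.

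Verifying the four conditions of Definition~\ref{def:redsets} is then mostly bookkeeping. Condition (2) splits along the disjoint union: for $\indexone \in \indexsetone$, $\sum_\indexfive P_{\indexone\indexfive} = p\sum_\indextwo p_{\indexone\indextwo} = p \cdot p_\indexone$, which matches the weight of $\valone_\indexone$ in the juxtaposed pseudo-representation, and symmetrically for $\indexsetthree$. Condition (3) collects the contributions coming from each side of the mixture and recovers $p\cdot \distrtypone(\typthree_\indexfive) + (1-p)\cdot \distrtyptwo(\typthree_\indexfive) = (\distrtypone \choice_p \distrtyptwo)(\typthree_\indexfive)$. Condition (4) is immediate by linearity: $\sum P_{\cdot\indexfive}Q_{\cdot\indexfive} = p\sum_{\indexone,\indextwo} p_{\indexone\indextwo}q_{\indexone\indextwo} + (1-p)\sum_{\indexthree,\indexfour} r_{\indexthree\indexfour}s_{\indexthree\indexfour} \geq pp' + (1-p)p''$.

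The delicate step, and really the only one deserving care, is condition (1): at indices $(\indexone,\indexfive)$ where a value from one side is paired with a type coming only from the other, one has $P_{\indexone\indexfive} = Q_{\indexone\indexfive} = 0$, and we must check that $\valone_\indexone$ (or $\valtwo_\indexthree$) lies in $\setredvals{0}{\typthree_\indexfive,\seone}$. This is precisely where the hypothesis $\underlying{\distrtypone} = \underlying{\distrtyptwo}$ matters: all types of $\distrtypone \choice_p \distrtyptwo$ share the common underlying simple type $\underlying{\distrtypone}$, and all values $\valone_\indexone, \valtwo_\indexthree$ carry this same simple type by condition (1) applied to the original witnessing data; Lemma~\ref{lemma:simple-types-and-candidates-of-proba-zero} then places them in every null-reducibility set of that simple type. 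This completes the verification and yields the claim.
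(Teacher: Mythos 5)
Your proposal is correct and follows essentially the same route as the paper's proof: juxtapose the two pseudo-representations with weights $p$ and $1-p$, extend the two witnessing families by zeros on the cross terms, verify conditions (2)--(4) by linearity, and discharge condition (1) on the zero entries via Lemma~\ref{lemma:simple-types-and-candidates-of-proba-zero} using the common underlying simple type. The only difference is cosmetic bookkeeping: the paper indexes the combined type distribution by a disjoint sum of the two index sets (arranged so that shared indices denote equal types), whereas you work directly with the collapsed representation of $\distrtypone \choice_p \distrtyptwo$; both are sound.
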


}

\longv{

\begin{proof}
Let $\distrtypone\,=\,\distrelts{\left(\typone_\indextwo\right)^{p'_\indextwo} \sep \indextwo \in \indexsettwo}$.
 Since $\distrone \in \setreddists{p'}{\distrtypone,\seone}$,
 there exists a pseudo-representation $\distrone\,=\,\pseudorep{\valone_\indexone^{p_\indexone} \sep \indexone \in \indexsetone}$
 and two families of reals $\left(p_{\indexone \indextwo}\right)_{\indexone\in \indexsetone,\indextwo\in\indexsettwo}$
  and $\left(q_{\indexone \indextwo}\right)_{\indexone\in \indexsetone,\indextwo\in\indexsettwo}$
 such that
 \begin{enumerate}
   \item $\forall \indexone \in \indexsetone,\ \ \forall\indextwo \in \indexsettwo,\ \ \valone_\indexone \in 
   \setredvals{q_{\indexone\indextwo}}{\typone_\indextwo,\seone}$,
   \item $\forall \indexone \in \indexsetone,\ \ \sum_{\indextwo \in \indexsettwo}\ p_{\indexone\indextwo}\ =\ p_\indexone$,
   \item $\forall \indextwo \in \indexsettwo,\ \ \sum_{\indexone \in \indexsetone}\ p_{\indexone\indextwo}\ =\ p'_\indextwo$,
   \item $p' \leq \sum_{\indexone \in \indexsetone}\sum_{\indextwo \in \indexsettwo}\ q_{\indexone\indextwo}p_{\indexone\indextwo}$.
  \end{enumerate}
 Let $\distrtyptwo\,=\,\distrelts{\left(\typtwo_\indexfour\right)^{p'''_\indexfour} \sep \indexfour \in \indexsetfour}$. 
 Since $\distrtwo \in \setreddists{p''}{\distrtyptwo,\seone}$,
 there exists a pseudo-representation $\distrtwo\,=\,\pseudorep{\valtwo_\indexthree^{p''_\indexthree} \sep \indexthree \in \indexsetthree}$
 and two families of reals $\left(p'_{\indexthree \indexfour}\right)_{\indexthree\in \indexsetthree,\indexfour\in\indexsetfour}$
  and $\left(q'_{\indexthree \indexfour}\right)_{\indexthree\in \indexsetthree,\indexfour\in\indexsetfour}$
 such that
 \begin{enumerate}
   \item $\forall \indexthree \in \indexsetthree,\ \ \forall\indexfour \in \indexsetfour,\ \ \valtwo_\indexthree \in 
   \setredvals{q'_{\indexthree\indexfour}}{\typtwo_\indexfour,\seone}$,
   \item $\forall \indexthree \in \indexsetthree,\ \ \sum_{\indexfour \in \indexsetfour}\ p'_{\indexthree\indexfour}\ =\ p''_\indexthree$,
   \item $\forall \indexfour \in \indexsetfour,\ \ \sum_{\indexthree \in \indexsetthree}\ p'_{\indexthree\indexfour}\ =\ p'''_\indexfour$,
   \item $p'' \leq \sum_{\indexthree \in \indexsetthree}\sum_{\indexfour \in \indexsetfour}\ q'_{\indexthree\indexfour}p'_{\indexthree\indexfour}$.
  \end{enumerate}
  We suppose that $\indexsetone$ and $\indexsetthree$ are disjoint, and that $\indextwo \in \indexsettwo \cap \indexsetfour \Leftrightarrow 
  \typone_\indextwo = \typtwo_\indextwo$.
  To prove that $p\distrone + (1-p) \distrtwo \in \setreddists{pp' + (1-p)p''}{\distrtypone \choice_p \distrtyptwo,\seone}$, we consider the pseudo-representation
  \begin{equation}
  \label{eq:lemma-probabilistic-dred-combination1}
  p\distrone + (1-p) \distrtwo\ \ =\ \ \pseudorep{\valone_\indexone^{pp_\indexone} \sep \indexone \in \indexsetone}
  + \pseudorep{\valtwo_\indexthree^{(1-p)p''_\indexthree} \sep \indexthree \in \indexsetthree}
  \end{equation}
  and we write the distribution type $\distrtypone \choice_p \distrtyptwo$ as
  $$
  \distrelts{\left(\typone_\indextwo\right)^{pp'_\indextwo} \sep \indextwo \in \indexsettwo\setminus(\indexsettwo \cap \indexsetfour)}
  \ +\ \distrelts{\left(\typone_\indextwo\right)^{pp'_\indextwo+(1-p)p'''_\indextwo} \sep \indextwo \in \indexsettwo \cap \indexsetfour}
  \ +\ \distrelts{\left(\typtwo_\indexfour\right)^{(1-p)p'''_\indexfour} \sep \indexfour \in \indexsetfour\setminus(\indexsettwo \cap \indexsetfour)}
  $$
 We set $\indexsetfive\,=\,\indexsetone + \indexsetthree$ and $\indexsetsix\,=\,\indexsettwo+ \indexsetfour$.
  We now need to define appropriate families of reals $\left(p''_{\indexfive\indexsix}\right)_{\indexfive \in \indexsetfive,\indexsix \in \indexsetsix}$
  and $\left(q''_{\indexfive\indexsix}\right)_{\indexfive \in \indexsetfive,\indexsix \in \indexsetsix}$. We proceed as follows:
  \begin{itemize}
   \item if $\indexfive \in \indexsetone$ and $\indexsix \in \indexsettwo$,
   $p''_{\indexfive\indexsix}\,=\,pp_{\indexfive\indexsix}$ and $q''_{\indexfive\indexsix}\,=\,q_{\indexfive\indexsix}$,
   \item if $\indexfive \in \indexsetone$ and $\indexsix \in \indexsetfour$,
   $p''_{\indexfive\indexsix}\,=\,0$ and $q''_{\indexfive\indexsix}\,=\,0$,
   \item if $\indexfive \in \indexsetthree$ and $\indexsix \in \indexsettwo$,
   $p''_{\indexfive\indexsix}\,=\,0$ and $q''_{\indexfive\indexsix}\,=\,0$,
   \item if $\indexfive \in \indexsetthree$ and $\indexsix \in \indexsetfour$,
   $p''_{\indexfive\indexsix}\,=\,(1-p)p'_{\indexfive\indexsix}$ and $q''_{\indexfive\indexsix}\,=\,q'_{\indexfive\indexsix}$.
  \end{itemize}
  Let us prove that (\ref{eq:lemma-probabilistic-dred-combination1}) together with these two families provide a witness that 
  $p\distrone + (1-p) \distrtwo \in \setreddists{pp' + (1-p)p''}{\distrtypone \choice_p \distrtyptwo,\seone}$ by checking the four usual conditions.
  We write $\valthree_\indexfive$ either for $\valone_\indexone$ or $\valtwo_\indexthree$, depending on the context.
  We write similarly $\typthree_\indexsix$ for $\typone_\indextwo$ or $\typtwo_\indexfour$.
  \begin{enumerate}
   \item $\forall \indexfive \in \indexsetfive,\ \ \forall\indexsix \in \indexsetsix,\ \ \valthree_\indexfive \in 
   \setredvals{q''_{\indexfive\indexsix}}{\typthree_\indexsix,\seone}$ is proved by case exhaustion:
   \begin{itemize}
    \item $\forall \indexfive \in \indexsetone,\ \ \forall\indexsix \in \indexsettwo,\ \ \valone_\indexfive \in 
   \setredvals{q_{\indexfive\indexsix}}{\typone_\indexsix,\seone}$ since $\distrone \in \setreddists{p'}{\distrtypone,\seone}$,
   \item $\forall \indexfive \in \indexsetthree,\ \ \forall\indexsix \in \indexsetfour,\ \ \valtwo_\indexfive \in 
   \setredvals{q'_{\indexfive\indexsix}}{\typtwo_\indexsix,\seone}$ since $\distrtwo \in \setreddists{p''}{\distrtyptwo,\seone}$,
   \item in the two remaining cases, $q''_{\indexfive\indexsix}\,=\,0$ and by Lemma~\ref{lemma:simple-types-and-candidates-of-proba-zero}
   the result holds.
   \end{itemize}
   \item We proceed again by case exhaustion.
   \begin{itemize}
    \item If $\indexfive \in \indexsetone$, $\sum_{\indexsix \in \indexsetsix}\ p''_{\indexfive\indexsix}\ =\ 
    \sum_{\indexsix \in \indexsettwo}\ p''_{\indexfive\indexsix} + \sum_{\indexsix \in \indexsetfour}\ p''_{\indexfive\indexsix}\ =\ 
    \sum_{\indexsix \in \indexsettwo}\ pp_{\indexfive\indexsix}\ =\ pp_\indexfive$.
    \item If $\indexfive \in \indexsetthree$, $\sum_{\indexsix \in \indexsetsix}\ p''_{\indexfive\indexsix}\ =\ 
    \sum_{\indexsix \in \indexsetfour}\ (1-p)p'_{\indexfive\indexsix}\ =\ (1-p)p''_\indexfive$.
   \end{itemize}

   \item  We proceed again by case exhaustion.
   \begin{itemize}
    \item Suppose that $\indexsix \in \indexsettwo \setminus (\indexsettwo \cap\indexsetfour)$.
    Then $\sum_{\indexfive \in \indexsetfive}\ p''_{\indexfive\indexsix}\ =\ 
    \sum_{\indexfive \in \indexsetone}\ p''_{\indexfive\indexsix}\ =\ \sum_{\indexfive \in \indexsetone}\ pp_{\indexfive\indexsix} 
    \ =\ pp'_\indexfive$.
    \item Suppose that $\indexsix \in \indexsetfour \setminus (\indexsettwo \cap\indexsetfour)$.
    Then $\sum_{\indexfive \in \indexsetfive}\ p''_{\indexfive\indexsix}\ =\ 
    \sum_{\indexfive \in \indexsetthree}\ p''_{\indexfive\indexsix}\ =\ \sum_{\indexfive \in \indexsetthree}\ (1-p)p'_{\indexfive\indexsix} 
    \ =\ (1-p)p'''_\indexfive$.
    \item Suppose that $\indexsix \in \indexsettwo \cap\indexsetfour$.
    Then $\sum_{\indexfive \in \indexsetfive}\ p''_{\indexfive\indexsix}\ =\ 
    \sum_{\indexfive \in \indexsetone}\ p''_{\indexfive\indexsix}\ +\  \sum_{\indexfive \in \indexsetthree}\ p''_{\indexfive\indexsix}
    \ =\ pp'_\indexfive + (1-p)p'''_\indexfive$.
   \end{itemize}
   \item 
   $$
   \begin{array}{rl}
   & \sum_{\indexfive \in \indexsetfive}\sum_{\indexsix \in \indexsetsix}\ q''_{\indexfive\indexsix}p''_{\indexfive\indexsix} \\
    \ =\ &
   \sum_{\indexfive \in \indexsetone}\sum_{\indexsix \in \indexsettwo}\ q''_{\indexfive\indexsix}p''_{\indexfive\indexsix} 
   + \sum_{\indexfive \in \indexsetthree}\sum_{\indexsix \in \indexsetfour}\ q''_{\indexfive\indexsix}p''_{\indexfive\indexsix}\\
    \ =\ &
   \sum_{\indexfive \in \indexsetone}\sum_{\indexsix \in \indexsettwo}\ q_{\indexfive\indexsix}pp_{\indexfive\indexsix} 
   + \sum_{\indexfive \in \indexsetthree}\sum_{\indexsix \in \indexsetfour}\ q'_{\indexfive\indexsix}(1-p)p'_{\indexfive\indexsix}\\ 
   \ =\ &
   p \sum_{\indexfive \in \indexsetone}\sum_{\indexsix \in \indexsettwo}\ q_{\indexfive\indexsix}p_{\indexfive\indexsix} 
   + (1-p) \sum_{\indexfive \in \indexsetthree}\sum_{\indexsix \in \indexsetfour}\ q'_{\indexfive\indexsix}p'_{\indexfive\indexsix}\\
    \ \geq\ & pp' + (1-p)p''\\
   \end{array}
   $$
   It follows that $p\distrone + (1-p) \distrtwo \in \setreddists{pp' + (1-p)p''}{\distrtypone \choice_p \distrtyptwo,\seone}$.

  \end{enumerate}

\end{proof}

}

\longv{This lemma generalizes to the $n$-ary case of a weighted sum of distributions:
\begin{lemma}
\label{lemma:backtracking-dred-multiple-instances}
 Let $\left(\distrtypone_\indexone\right)_{\indexone \in \indexsetone}$ be a family of distribution types of same underlying type.
 For every $\indexone \in \indexsetone$, let $\distrone_\indexone \in \setreddists{q_\indexone}{\distrtypone_\indexone,\seone}$.
 Let $(p_\indexone)_{\indexone \in \indexsetone}$ be a family of reals of $[0,1]$ such that 
 $\sum_{\indexone \in \indexsetone}\ p_\indexone \leq 1$.
 Then $\sum_{\indexone \in \indexsetone}\ p_\indexone \distrone_\indexone \in \setreddists{\sum_{\indexone \in \indexsetone}\ p_\indexone q_\indexone}{\sum_{\indexone \in \indexsetone}\ p_\indexone \distrtypone_\indexone,\seone}$.
\end{lemma}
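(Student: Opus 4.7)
The natural approach is to prove the statement directly from the definition of $\setreddists{}{}$, by combining witnessing pseudo-representations of each $\distrone_\indexone \in \setreddists{q_\indexone}{\distrtypone_\indexone,\seone}$ into a single witness for the weighted sum. (Alternatively, one could iterate the binary combination Lemma~\ref{lemma:backtracking-dred}, but handling normalizations and boundary cases where $\sum_\indexone p_\indexone < 1$ is more cumbersome.) The key intuition is that, since all the $\distrtypone_\indexone$ share the same underlying simple type, values from one component can be matched to types of another component at degree $0$, thanks to Lemma~\ref{lemma:simple-types-and-candidates-of-proba-zero}.

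Concretely, for each $\indexone \in \indexsetone$, I would fix a pseudo-representation $\distrone_\indexone = \pseudorep{\valone_{\indexone,\indextwo}^{r_{\indexone,\indextwo}}\sep \indextwo \in \indexsettwo_\indexone}$ and a decomposition $\distrtypone_\indexone = \distrelts{\typone_{\indexone,\indexthree}^{r'_{\indexone,\indexthree}}\sep \indexthree \in \indexsetthree_\indexone}$ together with families $(p_{\indexone,\indextwo,\indexthree})$ and $(q_{\indexone,\indextwo,\indexthree})$ witnessing the four conditions of Definition~\ref{def:redsets}. Take the index sets of the combined witness to be the disjoint unions $\indexsetfour = \biguplus_\indexone \indexsettwo_\indexone$ and $\indexsetfive = \biguplus_\indexone \indexsetthree_\indexone$, with canonical pseudo-representations
\[
\sum_\indexone p_\indexone \distrone_\indexone \ =\ \pseudorep{\valone_{\indexone,\indextwo}^{p_\indexone r_{\indexone,\indextwo}} \sep (\indexone,\indextwo) \in \indexsetfour},\qquad
\sum_\indexone p_\indexone \distrtypone_\indexone \ =\ \pseudorep{\typone_{\indexone,\indexthree}^{p_\indexone r'_{\indexone,\indexthree}} \sep (\indexone,\indexthree) \in \indexsetfive}.
\]
Extend the probability families to pairs of indices coming from possibly different components by setting $\hat{p}_{(\indexone,\indextwo),(\indexone',\indexthree)} = 0$ and $\hat{q}_{(\indexone,\indextwo),(\indexone',\indexthree)} = 0$ whenever $\indexone \neq \indexone'$, and $\hat{p}_{(\indexone,\indextwo),(\indexone,\indexthree)} = p_\indexone p_{\indexone,\indextwo,\indexthree}$, $\hat{q}_{(\indexone,\indextwo),(\indexone,\indexthree)} = q_{\indexone,\indextwo,\indexthree}$ otherwise.

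It then remains to verify the four defining conditions. Condition $(1)$ holds on the diagonal by hypothesis on each $\distrone_\indexone$; off the diagonal, $\hat{q}=0$ and one invokes Lemma~\ref{lemma:simple-types-and-candidates-of-proba-zero} together with the assumption that all $\distrtypone_\indexone$ share the same underlying simple type, so $\valone_{\indexone,\indextwo}$ belongs to every $\setredvals{0}{\typone_{\indexone',\indexthree},\seone}$. Conditions $(2)$ and $(3)$ follow by straightforward marginalization: the off-diagonal zeros collapse the double sum to the ones already provided by each component witness, multiplied by the appropriate $p_\indexone$. Finally, condition $(4)$ gives
\[
\sum_{(\indexone,\indextwo),(\indexone',\indexthree)} \hat{q}\,\hat{p} \ =\ \sum_\indexone p_\indexone \sum_{\indextwo,\indexthree} q_{\indexone,\indextwo,\indexthree} p_{\indexone,\indextwo,\indexthree} \ \geq\ \sum_\indexone p_\indexone q_\indexone,
\]
which is precisely the required lower bound.

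The main obstacle is purely notational: keeping track of the triple indexing and making sure the off-diagonal extension is coherent with the underlying-type assumption. No new semantic insight beyond Lemma~\ref{lemma:simple-types-and-candidates-of-proba-zero} is needed.
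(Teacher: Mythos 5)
Your proposal is correct and follows essentially the same route as the paper: the paper proves this lemma by noting it is ``similar to'' the binary Lemma~\ref{lemma:backtracking-dred}, whose proof is exactly your construction --- disjoint unions of the witnessing index sets, diagonal blocks scaled by $p_\indexone$, zeros off the diagonal justified by Lemma~\ref{lemma:simple-types-and-candidates-of-proba-zero}, and a direct check of the four conditions. The only detail the paper's binary proof handles more carefully is that the \emph{representation} of the combined distribution type must merge types occurring in several components (condition $(3)$ refers to $\distrtypone(\typone_\indextwo)$, so distinct indices on the type side should carry distinct types), which your disjoint-union pseudo-representation of the type side silently glosses over.
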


}

\longv{

\begin{proof}
Similar to the proof of Lemma~\ref{lemma:backtracking-dred}. 
\end{proof}

}

\longv{$\setredtms{}{}$ is closed by anti-reduction for Dirac distributions, but also in the case corresponding 
to the reduction of a choice operator:}

\longv{
\begin{lemma}[Reductions and Sets of Candidates]~
 \label{lemma:reduction-and-candidates}
\begin{itemize}
 \item Suppose that $\termone \rcbv \distrelts{\termtwo^1}$ and that $\termtwo \in \setredtmsfin{p}{\distrtypone,\seone}$.
 Then $\termone \in \setredtmsfin{p}{\distrtypone,\seone}$.
 \item Suppose that $\termone \rcbv \distrelts{\termtwo^p,\termthree^{1-p}}$, that $\termtwo \in \setredtmsfin{p'}{\distrtypone,\seone}$
 and that $\termthree \in \setredtmsfin{p''}{\distrtyptwo,\seone}$. Then 
 $\termone \in \setredtmsfin{pp'+(1-p)p''}{\distrtypone \choice_p \distrtyptwo,\seone}$.
\end{itemize}
\end{lemma}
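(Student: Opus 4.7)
The plan is to unfold the definition of $\setredtmsfin{p}{\distrtypone,\seone}$ and, for every $r \in [0,p)$, lift a witnessing reduction sequence of the subterm(s) to a reduction sequence of $\termone$ with degree of reducibility at least $r$, using the pumping Lemma~\ref{lemma:pumping-dred} to handle the fact that the value distribution may grow along the extra reduction steps and Lemma~\ref{lemma:backtracking-dred} to combine two reducible distributions. Throughout, one also relies on the fact (coming from Lemma~\ref{lemma/semantics-is-computed-monotonically2}) that prepending finitely many $\rcbv$-steps in front of a reduction sequence preserves, and can only enlarge, the value subdistribution obtained.

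For the first item, fix $r \in [0,p)$. Since $\termtwo \in \setredtmsfin{p}{\distrtypone,\seone}$, there exist $n_r \in \NN$, $\distrtyptwo_r \distrleq \distrtypone$ and $\distrone_r$ with $\termtwo \redval^{n_r} \distrone_r \in \setreddists{r}{\distrtyptwo_r,\seone}$. From $\termone \rcbv \distrelts{\termtwo^1}$ and the definition of $\redval^{n_r+1}$, one obtains that $\termone \redval^{n_r+1} \distrtwo_r$ for some $\distrtwo_r$ with $\distrone_r \distrleq \distrtwo_r$. Applying Lemma~\ref{lemma:pumping-dred} yields a distribution type $\distrtyptwo_r' \distrleq \distrtypone$ with $\distrtwo_r \in \setreddists{r}{\distrtyptwo_r',\seone}$, so $\termone \in \setredtmsfin{p}{\distrtypone,\seone}$, as required.

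For the second item, fix $r \in [0,\,pp'+(1-p)p'')$. Pick $r' \in [0,p')$ and $r'' \in [0,p'')$ close enough to $p'$ and $p''$ respectively that $pr' + (1-p)r'' > r$; this is possible by elementary continuity of affine combinations. The hypotheses give $n', n''$, types $\distrtypone' \distrleq \distrtypone$ and $\distrtyptwo' \distrleq \distrtyptwo$, and distributions $\distrone'$, $\distrone''$ such that $\termtwo \redval^{n'} \distrone' \in \setreddists{r'}{\distrtypone',\seone}$ and $\termthree \redval^{n''} \distrone'' \in \setreddists{r''}{\distrtyptwo',\seone}$. Set $n = \max(n',n'')$; Lemma~\ref{lemma:pumping-dred} allows us to enlarge $\distrone'$ and $\distrone''$ along the unused $\rcbv$-steps into $\distrtwo'$ and $\distrtwo''$ with $\termtwo \redval^{n} \distrtwo' \in \setreddists{r'}{\distrtypone'',\seone}$ and $\termthree \redval^{n} \distrtwo'' \in \setreddists{r''}{\distrtyptwo'',\seone}$ for some $\distrtypone'' \distrleq \distrtypone$ and $\distrtyptwo'' \distrleq \distrtyptwo$. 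By linearity of $\rcbv$ on probabilistic mixtures, the reduction $\termone \rcbv \distrelts{\termtwo^p,\termthree^{1-p}}$ followed by $n$ further $\rcbv$-steps gives $\termone \redval^{n+1} p \distrtwo' + (1-p)\distrtwo''$. Lemma~\ref{lemma:backtracking-dred} then shows that $p\distrtwo' + (1-p)\distrtwo'' \in \setreddists{pr'+(1-p)r''}{\distrtypone'' \choice_p \distrtyptwo'',\seone}$, and since $\distrtypone'' \choice_p \distrtyptwo'' \distrleq \distrtypone \choice_p \distrtyptwo$ and $pr' + (1-p)r'' > r$, downward closure (Lemma~\ref{lemma/downward-closure-tred}) yields the desired witness for degree $r$. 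As $r$ was arbitrary, this establishes $\termone \in \setredtmsfin{pp'+(1-p)p''}{\distrtypone \choice_p \distrtyptwo,\seone}$.

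The main technical obstacle is the second item: it is not enough to glue the two reduction sequences of $\termtwo$ and $\termthree$ directly, because they may have different lengths and the value subdistributions must be combined at a common reduction horizon while preserving their individual reducibility degrees. The pumping lemma together with the monotonicity of $\redval$ is precisely what makes this alignment possible without degrading the witnessing data.
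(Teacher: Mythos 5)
Your proof is correct and follows essentially the same route as the paper's: decompose the target degree as $pr'+(1-p)r''$, align the two reduction horizons via Lemma~\ref{lemma:pumping-dred}, and combine with Lemma~\ref{lemma:backtracking-dred}. The only (harmless) difference is in the first item, where the paper observes that the value distribution after $n_r+1$ steps from $\termone$ is \emph{exactly} $\distrone_r$, so no pumping is needed there at all.
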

}

\longv{
\begin{proof}~

 \begin{itemize}
  \item Since $\termtwo \in \setredtmsfin{p}{\distrtypone,\seone}$, for every 
  $0 \leq r < p$ there exists $\distrtyptwo_r \distrleq \distrtypone$ and $n_r \in \NN$
  such that $\termtwo \redval^{n_r} \distrone_r \in \setreddists{r}{\distrtyptwo_r,\seone}$.
  Recall that $\redval^{n_r +1}\,=\,\rcbv \circ \redval^{n_r}$. It follows that 
  $\termone \redval^{n_r +1} \distrone_r$ which has the required properties, so that 
  $\termone \in \setredtmsfin{p}{\distrtypone,\seone}$.
  \item
  Let $0 \leq r < pp'+(1-p)p''$. Let $(r',r'')$ be such that $r = pr' + (1-p)r''$,
  $0 \leq r' < p'$ and $0 \leq r'' < p''$.
  Since $\termtwo \in \setredtmsfin{p'}{\distrtypone,\seone}$, there exists $n_{r'}$
  and $\distrtypone_{r'} \distrleq \distrtypone$ such that $\termtwo \redval^{n_{r'}} \distrone_{r'} \in \setreddists{r'}{\distrtypone_{r'},\seone}$.
  Since $\termthree \in \setredtmsfin{p''}{\distrtyptwo,\seone}$, there exists $m_{r''}$
  and $\distrtyptwo_{r''} \distrleq \distrtyptwo$ such that $\termthree \redval^{m_{r''}} \distrtwo_{r''} \in \setreddists{r''}{\distrtyptwo_{r''},\seone}$.
  Suppose that $n_{r'} \leq m_{r''}$, the dual case being exactly symmetrical.
  By Lemma~\ref{lemma:pumping-dred}, by denoting $\distrone_{r''}$ the distribution such that $\termtwo \redval^{m_{r''}} \distrone_{r''}$,
  there exists $\distrtypone_{r'} \distrleq\distrtypone_{r''} \distrleq \distrtypone$ such that
  $\distrone_{r''} \in \setreddists{r'}{\distrtypone_{r''},\seone}$.
  Now $\termone \redval^{m_{r''}+1} p\distrone_{r''} + (1-p) \distrtwo_{r''}$,
  and by Lemma~\ref{lemma:backtracking-dred} we have 
  $p\distrone_{r''} + (1-p) \distrtwo_{r''} \in \setreddists{pr'+(1-p)r''}{\distrtypone_{r''} \choice_p \distrtyptwo_{r''},\seone}$.
  Since by construction $\distrtypone_{r''} \choice_p \distrtyptwo_{r''} \distrleq \distrtypone \choice_p \distrtyptwo$,
  we can conclude that $\termone \in \setredtmsfin{pp'+(1-p)p''}{\distrtypone \choice_p \distrtyptwo,\seone}$.
  %
  %
  %
 \end{itemize}

\end{proof}
}

\longv{\subsection{Subtyping Soundness}}

\longv{Reducibility sets are monotonic with respect to the subtyping order $\subtypeleq$:}

\longv{ 
\begin{lemma}[Subtyping Soundness]~
 \label{lemma:subtyping-and-candidates}
 
 \begin{itemize}
  \item Suppose that $\typone \subtypeleq \typtwo$. Then, for every $p \in [0,1]$ and $\seone$, $\setredvals{p}{\typone,\seone} \subseteq \setredvals{p}{\typtwo,\seone}$.
  \item Suppose that $\distrtypone \subtypeleq \distrtyptwo$ and that $\distrsum{\distrtypone}\,=\,\distrsum{\distrtyptwo}$. Then, for every $p \in [0,1]$ and $\seone$, $\setreddists{p}{\distrtypone,\seone} \subseteq
  \setreddists{p}{\distrtyptwo,\seone}$.
  \item Suppose that $\distrtypone \subtypeleq \distrtyptwo$.
  Then, for every $p \in [0,1]$ and $\seone$, $\setredtmsfin{p}{\distrtypone,\seone} \subseteq \setredtmsfin{p}{\distrtyptwo,\seone}$.
 \end{itemize}

\end{lemma}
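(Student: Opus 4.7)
The plan is to prove the three statements simultaneously by mutual induction on the simple type underlying the sized or distribution type of interest, following the same structural scheme used for the other lemmas about reducibility sets (such as Lemma~\ref{lemma:simple-types-and-candidates-of-proba-zero}). The case $p = 0$ of each statement is handled directly by Lemma~\ref{lemma:simple-types-and-candidates-of-proba-zero} combined with the fact that subtyping preserves the underlying simple type, so from now on I focus on $p \in (0,1]$.

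For the base case $\typone = \Nat^{\sizeone} \subtypeleq \Nat^{\sizetwo} = \typtwo$, the subtyping derivation gives $\sizeone \sizeleq \sizetwo$. A simple induction on the derivation of $\sizeone \sizeleq \sizetwo$ (using the definition of $\sesem{\cdot}{\seone}$) shows that $\sesem{\sizeone}{\seone} \leq \sesem{\sizetwo}{\seone}$, so any $\natsucc^n\,\natzero$ with $n < \sesem{\sizeone}{\seone}$ also satisfies $n < \sesem{\sizetwo}{\seone}$, giving the inclusion on $\setredvals{p}{\cdot,\seone}$. For the higher-order case $\typone \typarrow \distrtypone \subtypeleq \typtwo \typarrow \distrtyptwo$, the subtyping rule gives contravariantly $\typtwo \subtypeleq \typone$ and covariantly $\distrtypone \subtypeleq \distrtyptwo$. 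Given $\valone \in \setredvals{p}{\typone \typarrow \distrtypone,\seone}$, $q \in (0,1]$, and $\valtwo \in \setredvals{q}{\typtwo,\seone}$, by induction hypothesis on the domain $\valtwo \in \setredvals{q}{\typone,\seone}$, so $\valone\,\valtwo \in \setredtms{pq}{\distrtypone,\seone}$, and then by induction hypothesis on the codomain $\valone\,\valtwo \in \setredtms{pq}{\distrtyptwo,\seone}$. Thus $\valone \in \setredvals{p}{\typtwo \typarrow \distrtyptwo,\seone}$.

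The statement on terms is then easy: if $\termone \in \setredtms{p}{\distrtypone,\seone}$ and $\distrtypone \subtypeleq \distrtyptwo$, then for each $r < p$ the same $n_r$ and reduction $\termone \redval^{n_r} \distrone_r \in \setreddists{r}{\distrtyptwo_r,\seone}$ witnesses membership in $\setredtms{p}{\distrtyptwo,\seone}$, provided we can promote the ``small'' distribution type $\distrtyptwo_r \distrleq \distrtypone$ to some $\distrtyptwo_r' \distrleq \distrtyptwo$ without losing the reducibility degree $r$. For this we apply the distribution case of the lemma, but to do so we need to reestablish the hypothesis $\distrsum{\distrtyptwo_r} = \distrsum{\distrtyptwo_r'}$; this is achieved by choosing $\distrtyptwo_r'$ to be obtained from $\distrtyptwo_r$ and the subtyping witness $\funcone$ of Definition of $\subtypeleq$ on distribution types so that masses are preserved exactly. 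The underlying type constraint on the support of $\distrtyptwo_r$ makes this promotion canonical.

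The main obstacle, and the technical heart of the proof, is the distribution case. Here $\distrtypone \subtypeleq \distrtyptwo$ comes with a reindexing function $\funcone \,:\, \indexsetone \to \indexsettwo$ such that $\typone_\indexone \subtypeleq \typtwo_{\funcone(\indexone)}$ in $\distrtypone = \distrelts{\typone_\indexone^{p'_\indexone}}$ and $\distrtyptwo = \distrelts{\typtwo_\indextwo^{p''_\indextwo}}$, with mass inequalities $\sum_{\indexone \in \funcone^{-1}(\indextwo)} p'_\indexone \leq p''_\indextwo$; the extra hypothesis $\distrsum{\distrtypone} = \distrsum{\distrtyptwo}$ forces these inequalities to be equalities. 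Given $\distrone \in \setreddists{p}{\distrtypone,\seone}$ witnessed by a pseudo-representation $\pseudorep{\valone_\indexthree^{p_\indexthree}}$ together with families $(p_{\indexthree\indexone})$ and $(q_{\indexthree\indexone})$ satisfying the four conditions of Definition~\ref{def:redsets}, I construct the desired witness for $\distrtyptwo$ by setting $p'_{\indexthree\indextwo} = \sum_{\indexone \in \funcone^{-1}(\indextwo)} p_{\indexthree\indexone}$ and $q'_{\indexthree\indextwo} = \min\{q_{\indexthree\indexone} \sep \indexone \in \funcone^{-1}(\indextwo),\,p_{\indexthree\indexone}>0\}$ (or any reducibility degree compatible with all sub-indices; using the induction hypothesis $\valone_\indexthree \in \setredvals{q_{\indexthree\indexone}}{\typone_\indexone,\seone} \subseteq \setredvals{q_{\indexthree\indexone}}{\typtwo_{\funcone(\indexone)},\seone}$, together with downward closure from Lemma~\ref{lemma/downward-closure-tred}). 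A straightforward verification, aided by the equality $\sum_{\indexone \in \funcone^{-1}(\indextwo)} p'_\indexone = p''_\indextwo$ coming from the mass equality, shows that the four conditions still hold; in particular $\sum_{\indexthree,\indextwo} q'_{\indexthree\indextwo}p'_{\indexthree\indextwo} \geq \sum_{\indexthree,\indexone} q_{\indexthree\indexone}p_{\indexthree\indexone} \geq p$ by a careful regrouping using the infimum choice of the $q'_{\indexthree\indextwo}$. This concludes the inductive step and thereby the lemma.
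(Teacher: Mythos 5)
Your overall strategy — mutual induction on the underlying simple type, with the $\Nat$ case handled via monotonicity of $\sesem{\cdot}{\seone}$, the arrow case by contravariance/covariance and the induction hypotheses, and the term case by promoting the approximant type $\distrtyptwo_r \distrleq \distrtypone$ to a mass-preserving $\distrtyptwo'_r \distrleq \distrtyptwo$ via the reindexing function so that the distribution case applies — is exactly the paper's. The problem is in the distribution case, which you correctly identify as the technical heart, and where your construction fails.

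You merge the columns $\indexone \in \funcone^{-1}(\indextwo)$ by setting the new degree of reducibility to the \emph{minimum} $q'_{\indexthree\indextwo} = \min\{q_{\indexthree\indexone} \sep \indexone \in \funcone^{-1}(\indextwo)\}$ and then claim that $\sum_{\indexthree,\indextwo} q'_{\indexthree\indextwo}p'_{\indexthree\indextwo} \geq \sum_{\indexthree,\indexone} q_{\indexthree\indexone}p_{\indexthree\indexone} \geq p$ ``by a careful regrouping using the infimum choice.'' That inequality goes the wrong way: since $q'_{\indexthree\indextwo} \leq q_{\indexthree\indexone}$ for every $\indexone \in \funcone^{-1}(\indextwo)$ contributing positive mass, regrouping gives $\sum_{\indextwo} q'_{\indexthree\indextwo} p'_{\indexthree\indextwo} = \sum_{\indextwo}\sum_{\indexone \in \funcone^{-1}(\indextwo)} q'_{\indexthree\indextwo} p_{\indexthree\indexone} \leq \sum_{\indexone} q_{\indexthree\indexone}p_{\indexthree\indexone}$, so condition (4) of Definition~\ref{def:redsets} may be lost. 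Concretely, if $\funcone^{-1}(\indextwo) = \{1,2\}$ with $q_{\indexthree 1}=1$, $q_{\indexthree 2}=0$ and $p_{\indexthree 1}=p_{\indexthree 2}=\tfrac12$, your construction yields $q'_{\indexthree\indextwo}=0$ and hence a witnessed degree of $0$, whereas the original witnessed degree was $\tfrac12$. The paper instead takes the \emph{maximum}, $q'_{\indexone\indexthree} = \max_{\indextwo \in \funcone^{-1}(\indexthree)} q_{\indexone\indextwo}$: condition (4) then follows because each $q_{\indexone\indextwo}$ is bounded above by $q'_{\indexone\funcone(\indextwo)}$, and condition (1) still holds \emph{without} any appeal to downward closure, since the max is attained at some $\indextwo^*$ for which the induction hypothesis already gives $\valone_\indexone \in \setredvals{q_{\indexone\indextwo^*}}{\typtwo_{\funcone(\indextwo^*)},\seone}$, which is precisely $\setredvals{q'_{\indexone\indexthree}}{\typtwo_\indexthree,\seone}$. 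Your parenthetical fallback (``any reducibility degree compatible with all sub-indices'' plus downward closure) points in the same wrong direction, since downward closure only lets you \emph{decrease} degrees, which is the opposite of what condition (4) needs. Replacing the min by the max, and the appeal to downward closure by the argmax argument, repairs the proof; the remaining bookkeeping (conditions (2) and (3), using the mass equalities forced by $\distrsum{\distrtypone}=\distrsum{\distrtyptwo}$) is as you describe.
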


}

\longv{ 
\begin{proof} The proof is by mutual induction on the statements following the shape of the simple type refined 
by $\typone$ and $\distrtypone$, as earlier.
 \begin{itemize}
  \item Suppose that $\typone \refines \Nat$. Then $\typone \,=\,\Nat^{\sizeone}$ and $\typtwo \,=\,\Nat^{\sizetwo}$
  with $\sizeone \sizeleq \sizetwo$. Let $\valone \in \setredvals{p}{\typone,\seone}$. There are three possibilities:
  \begin{itemize}
  \item Either $\sizeone\,=\,\sizesuccit{\sizevarone}{k}$
  and $\sizetwo\,=\,\sizesuccit{\sizevarone}{k'}$ with $k \leq k'$.
  Then $\valone$ is of the shape $\natsucc^n\ \natzero$. If $p=0$ the result is immediate. Else we have
  $n < \sesem{\sizeone}{\seone}\,=\,\seone(\sizevarone)+k \leq 
  \seone(\sizevarone)+k' \,=\,\sesem{\sizetwo}{\seone}$ so that $\valone \in \setredvals{p}{\typtwo,\seone}$.
  \item Or $\sizeone\,=\,\sizesuccit{\sizevarone}{k}$ and $\sizetwo\,=\,\infty$. In this case $\valone$ is of the shape $\natsucc^n\ \natzero$
  and therefore $\valone \in \setredvals{p}{\typtwo,\seone}$.
  \item Or $\sizeone\,=\,\sizetwo\,=\,\sizeinf$. In this case $\typone\,=\,\typtwo$ and the result is immediate.
  \end{itemize}
  \item Suppose that $\typone\,=\,\typone' \typarrow \distrtypone$ and that $\typtwo\,=\,\typtwo' \typarrow \distrtyptwo$.
  Let $p \in [0,1]$, $\seone$ be a size environment, and $\valone \in \setredvals{p}{\typone,\seone}$.
  We have that $\typtwo' \subtypeleq \typone'$ and $\distrtypone \subtypeleq \distrtyptwo$.
  It follows, by induction hypothesis, that $\setredvals{p'}{\typtwo',\seone} \subseteq \setredvals{p'}{\typone',\seone}$
  and that $\setredtmsfin{p'}{\distrtypone,\seone} \subseteq \setredtmsfin{p'}{\distrtyptwo,\seone}$
  for every $p' \in [0,1]$.
  Since $\valone \in \setredvals{p}{\typone,\seone}$, for every $q \in (0,1]$ and $\valtwo \in \setredvals{q}{\typone',\seone}$,
  $\valone\ \valtwo \in \setredtmsfin{pq}{\distrtypone,\seone} \subseteq \setredtmsfin{pq}{\distrtyptwo,\seone}$.
  As $\setredvals{q}{\typtwo',\seone} \subseteq \setredvals{q}{\typone',\seone}$, $\valone \in \setredvals{p}{\typtwo,\seone}$.
  \item Suppose that $\distrtypone\,=\,\distrelts{\typone_{\indextwo}^{p'_{\indextwo}} \sep \indextwo \in \indexsettwo}$
  and that $\distrtyptwo\,=\,\distrelts{\typtwo_{\indexthree}^{p''_{\indexthree}} \sep \indexthree \in \indexsetthree}$.
  By definition of subtyping, there exists $\funcone \,:\, \indexsettwo \to \indexsetthree$ such that for all $\indextwo \in \indexsettwo$, $\typone_{\indextwo}\,\subtypeleq\,\typtwo_{\funcone(\indextwo)}$ and that 
  for all $\indexthree \in \indexsetthree,$ $\sum_{\indextwo \in \funcone^{-1}(\indexthree)}\ p'_{\indextwo} \leq p''_{\indexthree}$.
  Note that since $\distrsum{\distrtypone}\,=\,\distrsum{\distrtyptwo}$, this is in fact an equality.
  Let $\distrone \in \setreddists{p}{\distrtypone,\seone}$, then there exists a pseudo-representation
  $\distrone\,=\,\pseudorep{\left(\valone_{\indexone}\right)^{p_{\indexone}} \sep \indexone \in \indexsetone}$
  and families $\left(p_{\indexone \indextwo}\right)_{\indexone\in \indexsetone,\indextwo\in\indexsettwo}$
  and $\left(q_{\indexone \indextwo}\right)_{\indexone\in \indexsetone,\indextwo\in\indexsettwo}$ of reals of $[0,1]$
  satisfying:
  \begin{enumerate}
   \item $\forall \indexone \in \indexsetone,\ \ \forall\indextwo \in \indexsettwo,\ \ \valone_\indexone \in 
   \setredvals{q_{\indexone\indextwo}}{\typone_\indextwo,\seone}$,
   \item $\forall \indexone \in \indexsetone,\ \ \sum_{\indextwo \in \indexsettwo}\ p_{\indexone\indextwo}\ =\ p_\indexone$,
   \item $\forall \indextwo \in \indexsettwo,\ \ \sum_{\indexone \in \indexsetone}\ p_{\indexone\indextwo}\ =\ p'_\indextwo$,
   \item $p \leq \sum_{\indexone \in \indexsetone}\sum_{\indextwo \in \indexsettwo}\ q_{\indexone\indextwo}p_{\indexone\indextwo}$.
  \end{enumerate}
  By induction hypothesis, for every $\indextwo \in \indexsettwo$, $\setredvals{q_{\indexone\indextwo}}{\typone_\indextwo,\seone}
  \subseteq \setredvals{q_{\indexone\indextwo}}{\typtwo_{f(\indextwo)},\seone}$.
  We now prove that $\pseudorep{\left(\valone_{\indexone}\right)^{p_{\indexone}} \sep \indexone \in \indexsetone}$
  witnesses that $\distrone \in \setreddists{p}{\distrtyptwo,\seone}$. We need to define families
  of reals $\left(p'_{\indexone \indexthree}\right)_{\indexone\in \indexsetone,\indexthree\in\indexsetthree}$
  and $\left(q'_{\indexone \indexthree}\right)_{\indexone\in \indexsetone,\indexthree\in\indexsetthree}$ satisfying the four usual conditions.
  To this end, for every $\indexone\in \indexsetone,\ \indexthree\in\indexsetthree$,  we set
  $$
  p'_{\indexone\indexthree}\ \ =\ \ \sum_{\indextwo \in f^{-1}(\indexthree)}\ p_{\indexone\indextwo}
  $$
  and
  $$
  q'_{\indexone\indexthree}\ \ =\ \ \max_{\indextwo \in f^{-1}(\indexthree)}\ q_{\indexone \indextwo}
  $$
  Let us check that the four conditions hold.
  \begin{enumerate}
   \item $\forall \indexone \in \indexsetone,\ \ \forall\indexthree \in \indexsetthree,\ \ \valone_\indexone \in 
   \setredvals{q'_{\indexone\indexthree}}{\typtwo_{f(\indextwo)},\seone}$ by induction hypothesis and by definition of 
   $q'_{\indexone\indexthree}$,

   \item $\forall \indexone \in \indexsetone,\ \ \sum_{\indexthree \in \indexsetthree}\ p'_{\indexone\indexthree}\ =\ 
   \sum_{\indexthree \in \indexsetthree}\ \sum_{\indextwo \in f^{-1}(\indexthree)}\ p_{\indexone\indextwo}\ =\
   \sum_{\indextwo \in \indexsettwo}\ p_{\indexone\indextwo} \ =\ 
   p_\indexone$,

   \item $\forall \indexthree \in \indexsetthree,\ \ \sum_{\indexone \in \indexsetone}\ p'_{\indexone\indexthree} \ =\ 
   \sum_{\indexone \in \indexsetone}\ \sum_{\indextwo \in f^{-1}(\indexthree)}\ p_{\indexone\indextwo}
   \ =\  \sum_{\indextwo \in f^{-1}(\indexthree)}\ \sum_{\indexone \in \indexsetone}\ p_{\indexone\indextwo}
   \ =\  \sum_{\indextwo \in f^{-1}(\indexthree)}\ p'_\indextwo
   \ =\ p''_\indexthree$,
   
   \item 
   $$
   \begin{array}{rcl}
   p & \ \leq \  &\sum_{\indexone \in \indexsetone}\ \sum_{\indextwo \in \indexsettwo}\ q_{\indexone\indextwo}p_{\indexone\indextwo}\\
   & \ = \  &\sum_{\indexone \in \indexsetone}\ \sum_{\indexthree \in \indexsetthree}\ \sum_{\indextwo \in f^{-1}(\indexthree)}\ q_{\indexone\indextwo}p_{\indexone\indextwo}\\
   & \ \leq \  &\sum_{\indexone \in \indexsetone}\ \sum_{\indexthree \in \indexsetthree}\ \sum_{\indextwo \in f^{-1}(\indexthree)}\ q'_{\indexone f(\indextwo)}p_{\indexone\indextwo}\\
   & \ = \  &\sum_{\indexone \in \indexsetone}\ \sum_{\indexthree \in \indexsetthree}\ q'_{\indexone\indexthree} \sum_{\indextwo \in f^{-1}(\indexthree)}\ p_{\indexone\indextwo}\\
   & \ =\ & \sum_{\indexone \in \indexsetone}\sum_{\indexthree \in \indexsetthree}\ q'_{\indexone\indexthree}p'_{\indexone\indexthree}\\
   \end{array}
   $$
  \end{enumerate}
  It follows that $\distrone \in \setreddists{p}{\distrtyptwo,\seone}$.

 \item Suppose that $\distrtypone\,=\,\distrelts{\typone_{\indextwo}^{p'_{\indextwo}} \sep \indextwo \in \indexsettwo}$
  and that $\distrtyptwo\,=\,\distrelts{\typtwo_{\indexthree}^{p''_{\indexthree}} \sep \indexthree \in \indexsetthree}$.
  By definition of subtyping, there exists $\funcone \,:\, \indexsettwo \to \indexsetthree$ such that for all $\indextwo \in \indexsettwo$, $\typone_{\indextwo}\,\subtypeleq\,\typtwo_{\funcone(\indextwo)}$ and that 
  for all $\indexthree \in \indexsetthree,$ $\sum_{\indextwo \in \funcone^{-1}(\indexthree)}\ p'_{\indextwo} \leq p''_{\indexthree}$.
  Let $\termone \in \setredtmsfin{p}{\distrtypone,\seone}$. Then, for every $0 \leq r < p$, there exists
  $\distrtypone'_r \distrleq \distrtypone$ and $n_r$ such that $\termone \redval^{n_r} \distrone_r \in \setreddists{r}{\distrtypone'_r,\seone}$.
  By definition of $\distrtypone'_r \distrleq \distrtypone$,
  $\distrtypone'_r\ =\ \pseudorep{\typone_{\indextwo}^{q'_{\indextwo}} \sep \indextwo \in \indexsettwo}$
  with $q'_\indextwo \leq p'_\indextwo$ for every $\indextwo \in \indexsettwo$.
  We set $\distrtyptwo'_r\ =\ \pseudorep{\typtwo_{f(\indextwo)}^{q'_{\indextwo}} \sep \indextwo \in \indexsettwo}$
  which is such that $\distrsum{\distrtypone'_r} = \distrsum{\distrtyptwo'_r}$ and, by construction,
  $\distrtypone'_r \subtypeleq \distrtyptwo'_r$ so that we can apply the induction hypothesis and obtain
  that $\termone \redval^{n_r} \distrone_r \in \setreddists{r}{\distrtyptwo'_r,\seone}$. The result follows,
  since by construction $\distrtyptwo'_r \distrleq \distrtyptwo$.
 \end{itemize}

\end{proof}

}

\longv{\subsection{Reducibility Sets for Open Terms}}
\shortv{\paragraph{Extension to Open Terms.}}
We are now ready to extend the notion of reducibility
set from the realm of \emph{closed} terms to the one of \emph{open}
terms. This turns out to be subtle. The guiding intuition is that one would
like to define a term $\termone$ with free variables in $\vec{\varone}$
to be reducible iff any closure $\subst{\termone}{\vec{\varone}}{\vec{\valone}}$
is itself reducible in the sense of Definition \ref{def:redsets}. What happens, however,
to the underlying degree of reducibility $p$? How do we relate the degrees of reducibility
of $\vec{\valone}$ with the one of $\subst{\termone}{\vec{\varone}}{\vec{\valone}}$?
The answer is contained in the following definition:
\begin{definition}[Reducibility Sets for Open Terms]
  Suppose that $\contextsizedone$ is a sized context in the form
  $\varone_1\typsep\typone_1,\ldots,\varone_n\typsep\typone_n$,
  and that $\vartwo$ is a variable distinct from $\varone_1,\ldots,\varone_n$.
  Then we define the following sets of terms and values:
  $$
  \begin{array}{rcll}
    \setredopentms{\contextone \contextsep \emptyset}{\distrtypone,\seone}
    & \ \ =\ \ &
    \left\{\termone \sep \right.
    & \forall (q_\indexone)_\indexone \in [0,1]^n,\ \ \forall \left(\valone_1,\,\ldots,\,\valone_n\right) \in \prod_{\indexone=1}^n \ 
    \setredvals{q_\indexone}{\typone_\indexone,\seone},\ \ \\[0.15cm]
    & & &
    \qquad \qquad \qquad \qquad \qquad \quad
    \left. \subst{\termone}{\vec{\varone}}{\vec{\valone}}\in\setredtmsfin{\prod_{\indexone=1}^{n}q_\indexone}{\distrtypone,\seone}\right\}\\[0.5cm]
    \setredopenvals{\contextone \contextsep \emptyset}{\distrtypone,\seone}
    & \ \ =\ \ &
    \left\{\valtwo \sep \right.
    & \forall (q_\indexone)_\indexone \in [0,1]^n,\ \ \forall \left(\valone_1,\,\ldots,\,\valone_n\right) \in \prod_{\indexone=1}^n \ 
    \setredvals{q_\indexone}{\typone_\indexone,\seone},\ \ \\[0.15cm]
    & & &
    \qquad \qquad \qquad \qquad \qquad \quad 
    \left. \subst{\valtwo}{\vec{\varone}}{\vec{\valone}}\in\setredvals{\prod_{\indexone=1}^{n}q_\indexone}{\distrtypone,\seone}\right\}\\[0.5cm]
    \end{array}
    $$
    $$
    \begin{array}{rcll}
    
    \setredopentms{\contextone \contextsep \vartwo\typsep\{\typtwo_\indextwo^{p_\indextwo}\}_{\indextwo\in \indexsettwo}}{\distrtypone,\seone}
    & \ \ =\ \ &
    \left\{\termone \sep \right.
    & \forall (q_\indexone)_\indexone \in [0,1]^n,\ \ \forall \vec{\valone} \in \prod_{\indexone=1}^n \ 
    \setredvals{q_\indexone}{\typone_\indexone,\seone},\ \ \\[0.15cm]
    & & & \forall \left(q'_\indextwo\right)_{\indextwo} \in [0,1]^\indexsettwo,\ \ \forall\valtwo\in \bigcap_{\indextwo \in \indexsettwo}\ 
    \setredvals{q'_\indextwo}{\typone_\indextwo,\seone},\ \ \\[0.15cm]
    & & & \qquad \qquad\qquad \ \ 
    \left. \subst{\termone}{\vec{\varone},\vartwo}{\vec{\valone},\valtwo}\in\setredtmsfin{\alpha}{\distrtypone,\seone}\right\}\\[0.5cm]
    \setredopenvals{\contextone \contextsep \vartwo\typsep\{\typtwo_\indextwo^{p_\indextwo}\}_{\indextwo\in \indexsettwo}}{\distrtypone,\seone}
    & \ \ =\ \ &
    \left\{\valthree \sep \right.
    & \forall (q_\indexone)_\indexone \in [0,1]^n,\ \ \forall \vec{\valone} \in \prod_{\indexone=1}^n \ 
    \setredvals{q_\indexone}{\typone_\indexone,\seone},\ \ \\[0.15cm]
   & & & \forall \left(q'_\indextwo\right)_{\indextwo} \in [0,1]^\indexsettwo,\ \ \forall\valtwo\in \bigcap_{\indextwo \in \indexsettwo}\ 
    \setredvals{q'_\indextwo}{\typone_\indextwo,\seone},\ \ \\[0.15cm]
    & & &  \qquad \qquad\qquad \ \ 
    \left. \subst{\valthree}{\vec{\varone},\vartwo}{\vec{\valone},\valtwo}\in\setredvals{\alpha}{\distrtypone,\seone}\right\}\\[0.5cm]
  \end{array}
  $$
  where
  \shortv{$
  \alpha\ \ =\ \ \left(\prod_{\indexone=1}^n q_\indexone \right)
      \left(\left(\sum_{\indextwo\in \indexsettwo} p_\indextwo q'_\indextwo\right) + 1 - \left(\sum_{\indextwo\in\indexsettwo}p_\indextwo\right)\right)
  $}
  is called the degree of reducibility.
  \longv{the degree of reducibility $\alpha$ is defined as
  $$
  \alpha\ \ =\ \ \left(\prod_{\indexone=1}^n q_\indexone \right)
      \left(\left(\sum_{\indextwo\in \indexsettwo} p_\indextwo q'_\indextwo\right) + 1 - \left(\sum_{\indextwo\in\indexsettwo}p_\indextwo\right)\right).
  $$}
  \longv{
  Note that this contains:
    $$
  \begin{array}{rcll}
    \setredopentms{\emptyset\contextsep\emptyset}{\distrtypone,\seone}
    & \ \ =\ \ & \setredtmsfin{1}{\distrtypone,\seone} \!\!\!\!\!\!\!\!\!\!\!\!\!\!\!\!\!\!\!\!\!\!\!
    \!\!\!\!\!\!\!& \\[0.2cm]
    \setredopenvals{\emptyset\contextsep\emptyset}{\typone,\seone}
    & \ \ =\ \ & \setredvals{1}{\typone,\seone} \!\!\!\!\!\!\!\!\!\!\!\!\!\!\!\!\! & \\[0.2cm]
    \setredopentms{\emptyset \contextsep \vartwo\typsep\{\typtwo_i^{p_i}\}_{i\in I}}{\distrtypone,\seone}
    & \ \ =\ \ &
    \left\{\termone \sep \right.&
    \forall \left(q_\indexone\right)_{\indexone} \in [0,1]^\indexsetone,\ \ \forall\valone\in \bigcap_{\indexone \in \indexsetone}\ 
    \setredvals{q_i}{\typone_i,\seone},\ \ \\[0.15cm]
    & & &
    \qquad \qquad \qquad \qquad 
    \left. \subst{\termone}{\vartwo}{\valone}\in\setredtmsfin{\sum_{i\in I}p_iq_i + 1 - \left(\sum_{\indextwo\in\indexsettwo}p_\indextwo\right)}{\distrtypone,\seone}\right\}\\[0.5cm]
    \setredopenvals{\emptyset \contextsep \vartwo\typsep\{\typtwo_i^{p_i}\}_{i\in I}}{\distrtypone,\seone}
    & \ \ =\ \ &
    \left\{\valtwo \sep \right.&
    \forall \left(q_\indexone\right)_{\indexone} \in [0,1]^\indexsetone,\ \ \forall\valone\in \bigcap_{\indexone \in \indexsetone}\ 
    \setredvals{q_i}{\typone_i,\seone},\ \ \\[0.15cm]
    & & &
    \qquad \qquad \qquad \qquad 
    \left. \subst{\valtwo}{\vartwo}{\valone}\in\setredvals{\sum_{i\in I}p_iq_i + 1 - \left(\sum_{\indextwo\in\indexsettwo}p_\indextwo\right)}{\distrtypone,\seone}\right\}\\[0.5cm]
  \end{array}
  $$}
  \shortv{$\!\!$}Note \longv{also}that these sets extend the ones for closed terms: in particular,
  $\setredopentms{\emptyset \contextsep \emptyset}{\distrtypone,\seone}\,=\,\setredtms{1}{\distrtypone,\seone}$.

\end{definition}


\longv{As for closed terms (Lemma~\ref{lemma:values-are-in-tred-iff-in-vred}), reducible values are reducible terms:}

\begin{lemma}[Reducible Values are Reducible Terms]
\label{lemma/inclusion-redvals-redterms}
 For every $\contextsizedone,\, \contextdistrone,\,\typone$ and $\seone$,
 $\valone \in \setredopenvals{\contextsizedone\contextsep \contextdistrone}{\typone,\seone}$
 if and only if $\valone \in \setredopentms{\contextsizedone \contextsep \contextdistrone}{\distrelts{\typone^1},\seone}$.
 An immediate consequence is that
  $\setredopenvals{\contextsizedone\contextsep \contextdistrone}{\typone,\seone}\ 
 \subseteq\ \setredopentms{\contextsizedone \contextsep \contextdistrone}{\distrelts{\typone^1},\seone}$.
 
\end{lemma}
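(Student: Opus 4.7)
The plan is to reduce this to the already-established closed-term version, namely Lemma~\ref{lemma:values-are-in-tred-iff-in-vred}, by peeling off the universally-quantified substitutions in the definitions of the open reducibility sets. The key observation is purely syntactic: substituting values for variables in a value $\valone$ yields again a value, so the closures $\subst{\valone}{\vec{\varone},\vartwo}{\vec{\valtwo},\valthree}$ involved on both sides of the equivalence are values, and the closed-term lemma applies pointwise.

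Concretely, I would unfold both defining conditions. A value $\valone$ lies in $\setredopenvals{\contextsizedone\contextsep\contextdistrone}{\typone,\seone}$ iff, for every family of closing values $\vec{\valtwo},\valthree$ drawn from the appropriate reducibility sets of the contexts (with associated probability parameters $(q_\indexone)_\indexone$ and $(q'_\indextwo)_\indextwo$), the closure $\subst{\valone}{\vec{\varone},\vartwo}{\vec{\valtwo},\valthree}$ belongs to $\setredvals{\alpha}{\typone,\seone}$, where $\alpha$ is the reducibility degree displayed in the definition. Since $\valone$ is a value, so is this closure. Similarly, $\valone \in \setredopentms{\contextsizedone\contextsep\contextdistrone}{\distrelts{\typone^1},\seone}$ iff the same closure lies in $\setredtmsfin{\alpha}{\distrelts{\typone^1},\seone}$. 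By Lemma~\ref{lemma:values-are-in-tred-iff-in-vred}, for any value $\valfour$ we have $\valfour \in \setredvals{\alpha}{\typone,\seone}$ iff $\valfour \in \setredtmsfin{\alpha}{\distrelts{\typone^1},\seone}$. Applying this equivalence at each choice of $\vec{\valtwo},\valthree$ and each tuple of degrees yields the desired equivalence between the two open sets.

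The final sentence, asserting the inclusion $\setredopenvals{\contextsizedone\contextsep\contextdistrone}{\typone,\seone} \subseteq \setredopentms{\contextsizedone\contextsep\contextdistrone}{\distrelts{\typone^1},\seone}$, is then an immediate consequence of the equivalence just proved. There is no real obstacle here: the proof is essentially bureaucratic, the only thing to be careful about being that the closure of a value by a value substitution remains a value so that Lemma~\ref{lemma:values-are-in-tred-iff-in-vred} is legitimately applicable.
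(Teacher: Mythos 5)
Your proposal is correct and follows exactly the paper's own route: the paper proves this lemma as a direct corollary of Lemma~\ref{lemma:values-are-in-tred-iff-in-vred} together with the definitions of the open reducibility sets, which is precisely the pointwise reduction you describe. The one observation you flag — that a value closed under value substitution remains a value, so the closed-term equivalence is applicable at each instantiation — is indeed the only substantive point.
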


\longv{
\begin{proof}
 Corollary of Lemma~\ref{lemma:values-are-in-tred-iff-in-vred} and of the definitions of the candidates for open sets.
\end{proof}
}

\longv{The following easy lemma relates the reducibility of natural numbers, and will be used to 
treat the case of the rules Succ and Zero in the proof of typing soundness:}

\longv{
\begin{lemma}
\label{lemma:successor-on-redval}
\begin{itemize}
 \item $\valone \in \setredvals{p}{\Nat^{\sizeone},\seone} \ \ \implies\ \ \natsucc\ \valone \in \setredvals{p}{\Nat^{\sizesucc{\sizeone}},\seone}$
 \item For every size $\sizeone$, $\natzero \in \setredvals{p}{\Nat^{\sizesucc{\sizeone}},\seone}$.
\end{itemize}
\end{lemma}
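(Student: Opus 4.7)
The plan is to unfold directly the definition of the reducibility set $\setredvals{p}{\Nat^\sizeone,\seone}$ given in Definition~\ref{def:redsets}, which stipulates that its elements are exactly numerals $\natsucc^n\ \natzero$ satisfying the constraint $p>0 \Longrightarrow n < \sesem{\sizeone}{\seone}$. Both statements are then a matter of checking that this inequality is preserved, respectively promoted, by the $\natsucc$ constructor and the $\natzero$ constant, so no induction or auxiliary lemma beyond the definition of $\sesem{\cdot}{\cdot}$ on size expressions should be required.

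For the first item, I would start from $\valone \in \setredvals{p}{\Nat^\sizeone,\seone}$, observe that necessarily $\valone = \natsucc^n\ \natzero$ for some $n \in \NN$, and split on whether $p=0$ or $p > 0$. In the former case the goal is immediate since the implication defining membership in $\setredvals{p}{\Nat^{\sizesucc{\sizeone}},\seone}$ has a false premise. In the latter case the hypothesis gives $n < \sesem{\sizeone}{\seone}$, and since $\natsucc\ \valone = \natsucc^{n+1}\ \natzero$, I would need $n+1 < \sesem{\sizesucc{\sizeone}}{\seone}$. The definition of $\sesem{\cdot}{\seone}$ yields $\sesem{\sizesucc{\sizeone}}{\seone} = \sesem{\sizeone}{\seone} + 1$ when $\sizeone$ is of the form $\sizesuccit{\sizevarone}{k}$, and $\sesem{\sizesucc{\sizeinf}}{\seone} = \sizeinf$ in the remaining case; the inequality $n+1 < \sesem{\sizesucc{\sizeone}}{\seone}$ follows immediately in both cases.

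For the second item, I would note that $\natzero = \natsucc^0\ \natzero$, so membership in $\setredvals{p}{\Nat^{\sizesucc{\sizeone}},\seone}$ reduces, when $p>0$, to checking that $0 < \sesem{\sizesucc{\sizeone}}{\seone}$. Again by unfolding, $\sesem{\sizesucc{\sizeone}}{\seone}$ is either $\sesem{\sizeone}{\seone}+1$ (at least $1$) or $\sizeinf$, so it is strictly positive in every case; when $p=0$ nothing needs to be checked. Since no real obstacle arises, the ``hardest'' part of the argument is simply the routine case split between $\sizeone = \sizeinf$ and $\sizeone = \sizesuccit{\sizevarone}{k}$ in the interpretation of $\sizesucc{\cdot}$, which should be dispatched in one line.
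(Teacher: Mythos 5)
Your proof is correct and follows essentially the same route as the paper's: unfold the definition of $\setredvals{p}{\Nat^\sizeone,\seone}$, case split on $p=0$ versus $p>0$, and then on whether the size is $\sizeinf$ or of the form $\sizesuccit{\sizevarone}{k}$, using $\sesem{\sizesucc{\sizeone}}{\seone}=\sesem{\sizeone}{\seone}+1$ (resp.\ $\infty$) to conclude. No gaps.
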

}

\longv{ 
\begin{proof}
First point:
 \begin{itemize}
  \item Suppose that $\valone \in \setredvals{p}{\Nat^{\sizesuccit{\sizevarone}{k}},\seone}$ and that $p > 0$.
  Then $\valone\,=\,\natsucc^n\ \natzero$ for some $ n < \sesem{\sizeone}{\seone}$.
  Then $\natsucc\ \valone\,=\,\natsucc^{n+1}\ \natzero$ satisfies $n+1 < \sesem{\sizesucc{\sizeone}}{\seone}= \sesem{\sizeone}{\seone}+1$,
  so that $\natsucc\ \valone \in \setredvals{p}{\Nat^{\sizesuccit{\sizevarone}{k+1}},\seone}$.
  \item Suppose that $\valone \in \setredvals{p}{\Nat^{\sizeinf},\seone}$ or that $p=0$. 
  By definition, $\valone \,=\,\natsucc^n\ \natzero$ for $n \in \NN$. It follows that 
  $\natsucc\ \valone \in \setredvals{p}{\Nat^{\sizesucc{\sizeone}},\seone}$.
 \end{itemize}
 
 Second point:
 \begin{itemize}
  \item Suppose that $p=0$. Then $\natzero \in \setredvals{p}{\Nat^{\sizesucc{\sizeone}},\seone}$, by definition.
  \item Else we need to prove that $\sesem{\sizeone}{\seone} > 0$. But $\sizesucc{\sizeone}$ is either
  $\sizeinf$, in which case $\sesem{\sizeone}{\seone}\,=\,\infty$, or it is of the shape 
  $\sizesuccit{\sizevarone}{k}$ for $k > 0$, and $\sesem{\sizesuccit{\sizevarone}{k}}{\seone}\ =\ \seone(\sizevarone) + k > 0$.
 \end{itemize}

\end{proof}
}

\longv{\subsection{Reducibility and Sized Walks}}
\shortv{\paragraph*{Reducibility and Sized Walks.}}
To handle the fixpoint rule, we need to relate the notion of sized walk
which guards it with the reducibility sets, and in particular with the degrees
of reducibility we can attribute to recursively-defined terms.
\begin{definition}[Probabilities of Convergence in Finite Time]
 Let us consider a sized walk. We define the associated
\emph{probabilities of convergence in finite time}
$\left(\probaconv{n}{m}\right)_{n \in \NN,m\in\NN}$ as follows: $
\forall n \in \NN,\ \ \forall m \in \NN, $
the real number $\probaconv{n}{m}$ is defined as the probability
that, starting from $m$, the sized walk reaches $0$ in
\emph{at most} $n$ steps.
\end{definition}
The point is that, for an AST sized walk, the more we iterate, the closer we get
to reaching 0 in finite time $n$ with probability $1$.
\begin{lemma}[Finite Approximations of AST]
\label{lemma:sized-walks-get-arbitrarily-close-to-proba-one-in-finite-time}
Let $m \in \NN$ and $\epsilon \in (0,1]$. Consider a sized walk, and its associated probabilities of convergence in finite time
$\left(\probaconv{n}{m}\right)_{n \in \NN,m\in\NN}$. If the sized walk is AST, there exists $n \in \NN$ such that
$\probaconv{n}{m} \geq 1 - \epsilon$. 
\end{lemma}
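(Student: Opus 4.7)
The plan is to exploit the fact that the events ``reach $0$ in at most $n$ steps starting from $m$'' form an increasing sequence in $n$, whose union is the event ``reach $0$ in finite time starting from $m$''. Monotonicity of probability measures will turn this into a limit statement, and AST will pin that limit down to $1$, from which the existence of the desired $n$ follows by the standard characterisation of suprema.

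First I would make precise the event associated to $\probaconv{n}{m}$: fix the sized walk and consider the canonical probability space on trajectories starting at $m$. Let $A_n$ be the event ``the trajectory reaches state $0$ at some step $k \leq n$''. By construction $\probaconv{n}{m} = \mathrm{Pr}(A_n)$, and clearly $A_n \subseteq A_{n+1}$, so $\left(\probaconv{n}{m}\right)_{n \in \NN}$ is non-decreasing and bounded above by $1$. Hence it converges, and moreover $\bigcup_{n \in \NN} A_n = A_\infty$, where $A_\infty$ is the event ``the trajectory eventually reaches $0$''. By $\sigma$-continuity of $\mathrm{Pr}$ from below,
\[
\lim_{n\to\infty} \probaconv{n}{m} \ =\ \mathrm{Pr}(A_\infty).
\]

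Now I would invoke the AST hypothesis. By definition, an AST sized walk reaches $0$ with probability $1$ from every initial state, hence in particular from $m$, so $\mathrm{Pr}(A_\infty) = 1$. Combined with the limit above, this gives $\lim_{n\to\infty} \probaconv{n}{m} = 1$. The conclusion is then a routine application of the definition of limit: given $\epsilon \in (0,1]$, there exists $n \in \NN$ with $|1 - \probaconv{n}{m}| < \epsilon$, equivalently $\probaconv{n}{m} > 1 - \epsilon \geq 1 - \epsilon$, which is the desired inequality.

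No step is truly difficult; the only care needed is in setting up the measurable events $A_n$ and justifying the use of $\sigma$-continuity, which is entirely standard for a discrete-time Markov chain on the countable state space $\NN$. In particular, no special property of sized walks beyond their being AST is used, so this lemma is really a generic fact about discrete-time Markov chains that terminate almost surely.
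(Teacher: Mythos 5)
Your proof is correct and follows essentially the same route as the paper's: both arguments rest on the observation that $\lim_{n\to\infty}\probaconv{n}{m}$ equals the probability of eventually reaching $0$, which AST forces to be $1$, whence the desired $n$ exists. The only difference is presentational — the paper argues by contradiction and leaves the identification of the limit implicit, whereas you justify it explicitly via $\sigma$-continuity from below, which is a harmless (indeed slightly more careful) elaboration of the same idea.
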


\longv{ 
\begin{proof}
 Suppose, by contradiction, that there exists $\epsilon \in (0,1]$ such that there is no $n \in \NN$
 with $\probaconv{n}{m} \geq 1 - \epsilon$. Then $\lim_{n \in \NN} \probaconv{n}{m} \leq 1 - \epsilon$. But this limit 
 should be worth 1 as we supposed the sized walk to be AST.
 
\end{proof}
}

\longv{The following lemma allows to treat the base case of Lemma~\ref{lemma:sized-walk-argument-for-letrec}:
\begin{lemma}
\label{lemma/base-case-letrec}
 Suppose that $\valone$ is a closed value of simple type $\Nat \typarrow \simpletypone$.
 Then, for every $\Nat^\sizevarone \typarrow \distrtypone \refines \Nat \typarrow \simpletypone$,
 and for every size environment $\seone$ such that $\seone(\sizevarone) = 0$,
 we have $\valone \in \setredvals{1}{\Nat^\sizevarone \typarrow \distrtypone,\seone}$.
\end{lemma}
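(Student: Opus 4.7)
The plan is to observe that the hypothesis $\seone(\sizevarone) = 0$ makes the reducibility set $\setredvals{q}{\Nat^{\sizevarone},\seone}$ empty for every $q > 0$, so that the universal condition in the definition of $\setredvals{1}{\Nat^{\sizevarone} \typarrow \distrtypone,\seone}$ is vacuously true. The only non-trivial thing to check is then the underlying-typing side-condition.

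Concretely, I would proceed in three short steps. First, unfold the definition of $\setredvals{1}{\Nat^{\sizevarone} \typarrow \distrtypone,\seone}$: an element of this set is any closed value $\valone \in \settypedclosedvalues{\underlying{\Nat^{\sizevarone} \typarrow \distrtypone}}$ such that for all $q \in (0,1]$ and all $\valtwo \in \setredvals{q}{\Nat^{\sizevarone},\seone}$ we have $\valone\ \valtwo \in \setredtmsfin{q}{\distrtypone,\seone}$. Second, check the typing side-condition: since $\underlying{\Nat^{\sizevarone} \typarrow \distrtypone} = \Nat \typarrow \underlying{\distrtypone} = \Nat \typarrow \simpletypone$ by the assumption $\Nat^{\sizevarone} \typarrow \distrtypone \refines \Nat \typarrow \simpletypone$, and since by hypothesis $\valone$ is a closed value of simple type $\Nat \typarrow \simpletypone$, the typing side-condition is satisfied.

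Third, verify that the universal quantifier is vacuous. By the definition of reducibility sets at base type,
\[
\setredvals{q}{\Nat^{\sizevarone},\seone} \ \ =\ \ \left\{\natsucc^n\ \natzero \sep q > 0 \Longrightarrow n < \sesem{\sizevarone}{\seone}\right\}.
\]
Since $\sesem{\sizevarone}{\seone} = \seone(\sizevarone) = 0$ and $n \in \NN$, the condition $n < 0$ is unsatisfiable, so $\setredvals{q}{\Nat^{\sizevarone},\seone} = \emptyset$ for every $q \in (0,1]$. Hence the inner universal quantification over $\valtwo$ is vacuous, and the required membership $\valone \in \setredvals{1}{\Nat^{\sizevarone} \typarrow \distrtypone,\seone}$ follows immediately.

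There is essentially no obstacle here: the proof is a direct unfolding of the definitions, relying on the degenerate nature of the sized type $\Nat^{\sizevarone}$ under an environment sending $\sizevarone$ to $0$. This lemma is exactly the ``base case'' that is used to start the induction on the number of unfoldings in the reducibility argument for $\letrecname$, and it crucially exploits the fact that no integer fits strictly below $0$.
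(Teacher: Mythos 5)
Your proof is correct and follows exactly the paper's own argument: the set $\setredvals{q}{\Nat^{\sizevarone},\seone}$ is empty when $\seone(\sizevarone)=0$ because no $n\in\NN$ satisfies $n<0$, so the defining condition of $\setredvals{1}{\Nat^{\sizevarone}\typarrow\distrtypone,\seone}$ holds vacuously. Your additional explicit check of the underlying-type side-condition is a harmless elaboration of what the paper leaves implicit.
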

}

\longv{
\begin{proof}
 To prove that $\valone \in \setredvals{1}{\Nat^\sizevarone \typarrow \distrtypone,\seone}$, we need to show that
 for every $q \in (0,1]$ and every $\valtwo \in \setredvals{q}{\Nat^{\sizevarone},\seone}$
 we have that $\valone\ \valtwo \in \setredtmsfin{q}{\distrtypone,\seone}$.
 This is always the case, as $\setredvals{q}{\Nat^{\sizevarone},\seone}$ is the empty set by definition:
 there is no term of the shape $\natsucc^n\ \natzero$ with $n < \seone(\sizevarone) = 0$.
\end{proof}

}

The following lemma is the crucial result relating sized walks with
the reducibility sets. It proves that, when the sized walk is AST, and
after substitution of the variables of the context by reducible values
in the recursively-defined term, we can prove the degree of
reducibility to be any probability $\probaconv{n}{m}$ of convergence
in finite time.
\begin{lemma}[Convergence in Finite Time and $\letrecname$]
\label{lemma:sized-walk-argument-for-letrec}
Consider the distribution type
$\distrtypone\ = \distrelts{\left(\Nat^{\sizeone_{\indextwo}} \typarrow \subst{\distrtyptwo}{\sizevarone}{\sizeone_{\indextwo}}
\right)^{p_{\indextwo}} \sep \indextwo \in \indexsettwo}$.
Let $\contextsizedone$ be the sized context $\varone_1\typsep\Nat^{\sizetwo_1},\ldots,\,\varone_l\typsep\Nat^{\sizetwo_l}$.
Suppose that
$
\contextsizedone \contextsep
\funcone \typsep \distrtypone \proves \valone \typsep \Nat^{\sizesucc{\sizevarone}} \typarrow \subst{\distrtyptwo}{\sizevarone}{\sizesucc{\sizevarone}}
$
and that $\distrtypone$ induces an AST sized walk. 
Denote $\left(\probaconv{n}{m}\right)_{n \in \NN,m\in\NN}$ its associated probabilities of convergence in finite time.
Suppose that $\valone\in\setredopenvals{\contextsizedone \contextsep \funcone \typsep \distrtypone}{\Nat^{ \sizesucc{\sizevarone}} \typarrow \subst{\distrtyptwo}{\sizevarone}{\sizesucc{\sizevarone}},\seone}$ for every $\seone$.
%
Let $\vec{\valtwo} \in \prod_{\indexone=1}^l\ \setredvals{1}{\Nat^{\sizetwo_i},\seone}$, then for every $(n,m) \in \NN^2$, we have that
$$
\letrec{\funcone}{\subst{\valone}{\vec{\varone}}{\vec{\valtwo}}}\ \ \in\ \ \setredvals{\probaconv{n}{m}}{\Nat^\sizevarone \typarrow \distrtyptwo,\seone\left[\sizevarone \mapsto m\right]}
$$
 
\end{lemma}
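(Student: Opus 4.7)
The plan is to proceed by induction on $n$, keeping $m$ arbitrary throughout. For the base case $n=0$, I split on $m$. If $m = 0$, then $\probaconv{0}{0} = 1$ and the conclusion is Lemma~\ref{lemma/base-case-letrec}, because $\setredvals{q}{\Nat^{\sizevarone},\seone[\sizevarone \mapsto 0]}$ is empty and the arrow condition is vacuous. If $m > 0$, then $\probaconv{0}{m} = 0$ and the claim reduces to Lemma~\ref{lemma:simple-types-and-candidates-of-proba-zero}, since $\letrec{\funcone}{\subst{\valone}{\vec{\varone}}{\vec{\valtwo}}}$ is a closed, simply-typed value of the appropriate type.

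For the inductive step, I assume the statement holds at $n$ for every $m$. The case $m = 0$ is handled as in the base case. Fix $m = m'+1$ and write $\valfive = \letrec{\funcone}{\subst{\valone}{\vec{\varone}}{\vec{\valtwo}}}$. By the definition of the arrow reducibility set, I must show that whenever $q \in (0,1]$ and $\valthree \in \setredvals{q}{\Nat^\sizevarone, \seone[\sizevarone \mapsto m'+1]}$, so that $\valthree = \natsucc^j \natzero$ with $j < m'+1$, the application $\valfive\ \valthree$ lies in $\setredtmsfin{q \cdot \probaconv{n+1}{m'+1}}{\distrtyptwo, \seone[\sizevarone \mapsto m'+1]}$. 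The call-by-value rule for $\letrecname$ fires on this numeral, yielding $\valfive\ \valthree \rcbv \distrelts{(\subst{\subst{\valone}{\vec{\varone}}{\vec{\valtwo}}}{\funcone}{\valfive}\ \valthree)^1}$.

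The heart of the argument is then to apply the openvals hypothesis on $\valone$ at $\seone' = \seone[\sizevarone \mapsto m']$. I would certify $\valfive \in \setredvals{\probaconv{n}{m' + k_\indextwo}}{\Nat^{\sizeone_\indextwo} \typarrow \subst{\distrtyptwo}{\sizevarone}{\sizeone_\indextwo}, \seone'}$ for each $\indextwo \in \indexsettwo$, writing $\sizeone_\indextwo = \sizesuccit{\sizevarone}{k_\indextwo}$. Since $\spine{\sizeone_\indextwo} = \sizevarone$ blocks a direct use of Lemma~\ref{lemma:exchange-size-size-env}, I would first rename $\sizevarone$ to a fresh $\sizevarthree$ in the arrow type and only then substitute $\sizeone_\indextwo$ for $\sizevarthree$: because $\sizevarthree$ has no occurrence in $\sizeone_\indextwo$, the lemma applies and rewrites the target set as $\setredvals{\probaconv{n}{m' + k_\indextwo}}{\Nat^\sizevarone \typarrow \distrtyptwo, \seone[\sizevarone \mapsto m' + k_\indextwo]}$, which is exactly what the inductive hypothesis delivers. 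Taking the substitutes $\vec{\valtwo}$ for $\vec{\varone}$ with degrees $q_\indexone = 1$ (available at $\seone'$ too, since the side condition $\sizevarone \notin \contextsizedone$ of the $\letrecname$ rule ensures that no $\sizetwo_\indexone$ mentions $\sizevarone$) and $\valfive$ for $\funcone$ with $q'_\indextwo = \probaconv{n}{m' + k_\indextwo}$, the openvals hypothesis then produces $\subst{\subst{\valone}{\vec{\varone}}{\vec{\valtwo}}}{\funcone}{\valfive} \in \setredvals{\alpha}{\Nat^{\sizesucc{\sizevarone}} \typarrow \subst{\distrtyptwo}{\sizevarone}{\sizesucc{\sizevarone}}, \seone'}$ with $\alpha = \sum_{\indextwo} p_\indextwo \probaconv{n}{m' + k_\indextwo} + 1 - \sum_{\indextwo} p_\indextwo$. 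Since $0$ is absorbing for the sized walk, this quantity is precisely $\probaconv{n+1}{m'+1}$.

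To conclude, I apply this reducible value to $\valthree$. Since $j < m' + 1 = \sesem{\sizesucc{\sizevarone}}{\seone'}$, the numeral $\valthree$ lies in $\setredvals{q}{\Nat^{\sizesucc{\sizevarone}}, \seone'}$, and unfolding the arrow reducibility set yields $\subst{\subst{\valone}{\vec{\varone}}{\vec{\valtwo}}}{\funcone}{\valfive}\ \valthree \in \setredtmsfin{q \cdot \probaconv{n+1}{m'+1}}{\subst{\distrtyptwo}{\sizevarone}{\sizesucc{\sizevarone}}, \seone'}$. A second fresh-variable application of Lemma~\ref{lemma:exchange-size-size-env} rewrites this as $\setredtmsfin{q \cdot \probaconv{n+1}{m'+1}}{\distrtyptwo, \seone[\sizevarone \mapsto m'+1]}$, and anti-reduction for Dirac distributions (Lemma~\ref{lemma:reduction-and-candidates}) transfers this bound from the unfolded term back to $\valfive\ \valthree$. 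I expect the principal obstacle to be the meticulous bookkeeping of size environments, and in particular the fresh-variable maneuvers needed whenever Lemma~\ref{lemma:exchange-size-size-env} is invoked with a size whose spine clashes with the variable being substituted; once these are in place, the quantitative match between the $\alpha$ produced by the openvals hypothesis and the one-step recurrence for $\probaconv{n+1}{m'+1}$ makes the induction close cleanly.
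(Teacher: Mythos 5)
Your proposal is correct, and its quantitative core is the same as the paper's: induction on $n$, the same two base cases (Lemma~\ref{lemma/base-case-letrec} for $m=0$, Lemma~\ref{lemma:simple-types-and-candidates-of-proba-zero} for $\probaconv{0}{m}=0$), the one-step recurrence $\probaconv{n+1}{m'+1}=\sum_{\indextwo}p_\indextwo\,\probaconv{n}{m'+k_\indextwo}+1-\sum_{\indextwo}p_\indextwo$, the inductive hypothesis transported by Size Commutation (Lemma~\ref{lemma:exchange-size-size-env}) into $\bigcap_{\indextwo}\setredvals{\probaconv{n}{m'+k_\indextwo}}{\Nat^{\sizeone_{\indextwo}}\typarrow\subst{\distrtyptwo}{\sizevarone}{\sizeone_{\indextwo}},\seone[\sizevarone\mapsto m']}$, and the openvals hypothesis on $\valone$ yielding degree $\probaconv{n+1}{m'+1}$ for the one-step unfolding. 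Where you genuinely diverge is the concluding step. The paper transfers reducibility from the unfolding $\subst{\valone}{\vec{\varone},\funcone}{\vec{\valtwo},\letrec{\funcone}{\subst{\valone}{\vec{\varone}}{\vec{\valtwo}}}}$ back to the recursive value by invoking Proposition~\ref{corollary:unfolding-stability-vred}, whose proof is the heaviest technical block of the section (the chain of unfolding lemmas starting at Lemma~\ref{lemma:one-step-unfolding}). You instead unfold the definition of the arrow reducibility set on the recursive value itself: the argument is forced to be a numeral, the $\letrecname$ rule of $\rcbv$ fires, and closure under anti-reduction for Dirac distributions (Lemma~\ref{lemma:reduction-and-candidates}) pulls the bound back to $\valfive\ \valthree$. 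This is more elementary, is the standard way fixpoints are handled in logical-relations arguments, and — since unfolding stability is used nowhere else but in this proof — would let one dispense with that machinery entirely. Two points you rightly flag and should keep explicit: the reducibility of $\vec{\valtwo}$ at $\seone[\sizevarone\mapsto m']$ (justified because $\sizevarone$ does not occur in $\contextsizedone$ at the unique call site of the lemma), and the fresh-variable renaming before each application of Lemma~\ref{lemma:exchange-size-size-env} when the spine of the substituted size clashes with the substituted variable — a subtlety the paper's own proof silently glosses over.
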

%
%
\shortv{\begin{proof} We give a sketch of the proof, to be found in the long version~\cite{dal-lago-grellois:monadic-affine-sized-types-full}.
The proof is by recurrence on $n$. The main case relies on the decomposition
$
  \probaconv{n+1}{m'+1}\ \ =\ \ \sum_{\indextwo \in \indexsettwo}\ p_\indextwo \probaconv{n}{m' + k_\indextwo}\ 
  \ +\ \ 1 - \left(\sum_{\indextwo \in \indexsettwo}\ p_{\indextwo}\right)
$.
The induction hypothesis allows then to state that for every $\indextwo \in \indexsettwo$
we have $\letrec{\funcone}{\subst{\valone}{\vec{\varone}}{\vec{\valtwo}}}\ \ \in\ \ \setredvals{\probaconv{n}{m' + k_\indextwo}}{\Nat^\sizevarone \typarrow \distrtyptwo,\seone\left[\sizevarone \mapsto m' + k_\indextwo\right]}$.
We use the Size Commutation lemma (Lemma~\ref{lemma:exchange-size-size-env})
to obtain that $\letrec{\funcone}{\subst{\valone}{\vec{\varone}}{\vec{\valtwo}}}$ is in an appropriate intersection
of reducibility sets, and the hypothesis that $\valone\in\setredopenvals{\contextsizedone \contextsep \funcone \typsep \distrtypone}{\Nat^{ \sizesucc{\sizevarone}} \typarrow \subst{\distrtyptwo}{\sizevarone}{\sizesucc{\sizevarone}},\seone\left[\sizevarone \mapsto m'\right]}$ then implies
that 
$\subst{\valone}{\vec{\varone},\funcone}{\vec{\valtwo},\letrec{\funcone}{\subst{\valone}{\vec{\varone}}{\vec{\valtwo}}}}
  \ \in\ \setredvals{\probaconv{n+1}{m'+1}}{\Nat^{\sizevarone} \typarrow \distrtyptwo,\seone\left[\sizevarone \mapsto m'+1\right]}$,
using the Size Commutation lemma once again.
As this term is an unfolding of $\letrec{\funcone}{\subst{\valone}{\vec{\varone}}{\vec{\valtwo}}}$, we conclude using 
Proposition~\ref{corollary:unfolding-stability-vred}.\qed
\end{proof}
}

\longv{ 
\begin{proof}
We prove the statement by induction on $n$.
\begin{itemize}
 \item If $n=0$, we have two cases.
 \begin{itemize}
  \item If $m = 0$, then Lemma~\ref{lemma/base-case-letrec} implies that
  $\letrec{\funcone}{\subst{\valone}{\vec{\varone}}{\vec{\valtwo}}}\ \ \in\ \ \setredvals{1}{\Nat^\sizevarone \typarrow \distrtyptwo,\seone\left[\sizevarone \mapsto 0\right]}$
  so that by downward closure (Lemma~\ref{lemma/downward-closure-tred}) we obtain
  $\letrec{\funcone}{\subst{\valone}{\vec{\varone}}{\vec{\valtwo}}}\ \ \in\ \ \setredvals{\probaconv{n}{0}}{\Nat^\sizevarone \typarrow \distrtyptwo,\seone\left[\sizevarone \mapsto 0\right]}$.
  \item If $m \neq 0$, then $\probaconv{n}{m} = 0$. The hypothesis of the lemma imply that
  $\letrec{\funcone}{\subst{\valone}{\vec{\varone}}{\vec{\valtwo}}} \refines \Nat \typarrow \underlying{\distrtyptwo}$, and we conclude using
  Lemma~\ref{lemma:simple-types-and-candidates-of-proba-zero}.
 \end{itemize}
 \item Suppose that $n \geq 1$.
 \begin{itemize}
  \item If $m = 0$, the result is immediate as in the previous case.

  \item Suppose that $m >0$. Then $m = m' +1$.
  By definition, $\sizeone_\indextwo$ must be of the shape
$\sizesuccit{\sizevarone}{k_\indextwo}$ with $k_{\indextwo} \geq 0$
for every $\indextwo \in \indexsettwo$.  We set
$\indexsetone\ =\ \left\{k_\indextwo \sep \indextwo \in
\indexsettwo\right\}$ and $q_{k_\indextwo}\,=\,p_\indextwo$ for
every $\indextwo \in \indexsettwo$.  The sized walk induced by the
distribution type $\distrtypone$ is then the sized walk associated to
$\left(\indexsetone,\left(q_\indexone)_{\indexone \in
  \indexsetone}\right)\right)$, which from the state $m' +1 \in\NN\setminus\{0\}$ moves:
  \begin{varitemize}
  \item 
    to the state $m' + k_\indextwo$ with probability $p_\indextwo$, 
    for every $\indextwo \in \indexsettwo$;
  \item 
    to $0$ with probability $1 -
    \left(\sum_{\indextwo \in \indexsettwo}\ p_{\indextwo}\right)$.
  \end{varitemize}
  It follows that
  \begin{equation}
  \label{eq:decomposition-proba}
  \probaconv{n+1}{m'+1}\ \ =\ \ \sum_{\indextwo \in \indexsettwo}\ p_\indextwo \probaconv{n}{m' + k_\indextwo}\ 
  \ +\ \ 1 - \left(\sum_{\indextwo \in \indexsettwo}\ p_{\indextwo}\right)
  \end{equation}
  For every $\indextwo \in \indexsettwo$, let us apply the induction hypothesis and obtain
  $$
  \letrec{\funcone}{\subst{\valone}{\vec{\varone}}{\vec{\valtwo}}}\ \ \in\ \ \setredvals{\probaconv{n}{m'+k_\indextwo}}{\Nat^\sizevarone \typarrow \distrtyptwo,\seone\left[\sizevarone \mapsto m'+k_\indextwo\right]}
  $$
  By Lemma~\ref{lemma:exchange-size-size-env},
  $$
  \letrec{\funcone}{\subst{\valone}{\vec{\varone}}{\vec{\valtwo}}}\ \ \in\ \ \setredvals{\probaconv{n}{m'+k_\indextwo}}{\Nat^{\sizesuccit{\sizevarone}{k_\indextwo}} \typarrow \subst{\distrtyptwo}{\sizevarone}{\sizesuccit{\sizevarone}{k_\indextwo}},\seone\left[\sizevarone \mapsto m'\right]}
  \ =\ 
  \setredvals{\probaconv{n}{m'+k_\indextwo}}{\Nat^{\sizeone_{\indextwo}} \typarrow \subst{\distrtyptwo}{\sizevarone}{\sizeone_{\indextwo}},\seone\left[\sizevarone \mapsto m'\right]}
  $$
  Since this is valid for every $\indextwo \in \indexsettwo$, we have that
  $$
  \letrec{\funcone}{\subst{\valone}{\vec{\varone}}{\vec{\valtwo}}}\ \ \in\ \ 
  \bigcap_{\indextwo \in \indexsettwo}
  \setredvals{\probaconv{n}{m'+k_\indextwo}}{\Nat^{\sizeone_{\indextwo}} \typarrow \subst{\distrtyptwo}{\sizevarone}{\sizeone_{\indextwo}},\seone\left[\sizevarone \mapsto m'\right]}
  $$
  and since $\valone\in\setredopenvals{\contextsizedone \contextsep \funcone \typsep \distrtypone}{\Nat^{ \sizesucc{\sizevarone}} \typarrow \subst{\distrtyptwo}{\sizevarone}{\sizesucc{\sizevarone}},\seone\left[\sizevarone \mapsto m'\right]}$
  we obtain
  $$
  \subst{\valone}{\vec{\varone},\funcone}{\vec{\valtwo},\letrec{\funcone}{\subst{\valone}{\vec{\varone}}{\vec{\valtwo}}}}
  \ \in\ \setredvals{\sum_{\indextwo \in \indexsettwo}\ p_\indextwo \probaconv{n}{m + k_\indextwo}\ 
  \ +\ \ 1 - \left(\sum_{\indextwo \in \indexsettwo}\ p_{\indextwo}\right)}{\Nat^{ \sizesucc{\sizevarone}} \typarrow \subst{\distrtyptwo}{\sizevarone}{\sizesucc{\sizevarone}},\seone\left[\sizevarone \mapsto m'\right]}
  $$
  which, by (\ref{eq:decomposition-proba}) and by Lemma~\ref{lemma:exchange-size-size-env}, gives
  $$
  \subst{\valone}{\vec{\varone},\funcone}{\vec{\valtwo},\letrec{\funcone}{\subst{\valone}{\vec{\varone}}{\vec{\valtwo}}}}
  \ \in\ \setredvals{\probaconv{n+1}{m'+1}}{\Nat^{\sizevarone} \typarrow \distrtyptwo,\seone\left[\sizevarone \mapsto m'+1\right]}
  $$
  But this term is an unfolding of $\letrec{\funcone}{\subst{\valone}{\vec{\varone}}{\vec{\valtwo}}}$, so that by Corollary~\ref{corollary:unfolding-stability-vred} we obtain 
  $$
 \letrec{\funcone}{\subst{\valone}{\vec{\varone}}{\vec{\valtwo}}}\ \ \in\ \ \setredvals{\probaconv{n}{m}}{\Nat^\sizevarone \typarrow \distrtyptwo,\seone\left[\sizevarone \mapsto m\right]}
 $$
 \end{itemize}

\end{itemize}

\end{proof}
}

\longv{\subsection{Size Environments Mapping Sizes to Infinity}}
\shortv{When $m=\infty$, the previous lemma does not allow to conclude, and an additional argument is required. 
Indeed, it does not make sense to consider
a sized walk beginning from $\infty$: the meaning of this size is in fact \emph{any integer}, not 
the ordinal $\omega$. The following lemma justifies this vision by proving that, if a term is in a reducibility set for any finite interpretation
of a size, then it is also in the set where the size is interpreted as $\sizeinf$.}
\longv{When $m=\infty$, the previous lemma does not allow to conclude, and an additional argument is required. 
Indeed, it does not make sense to consider
a sized walk beginning from $\infty$: the meaning of this size is in fact \emph{any integer}, not 
the ordinal $\omega$. Before we justify this understanding, we need the following companion lemma.}

\longv{ 
\begin{lemma}
\label{lemma:positivity-and-inclusion-of-candidates}
 If $\positive{\sizevarone}{\typone}$, then 
 \begin{itemize}
  \item $\setredvals{p}{\typone,\seone\left[\sizevarone \mapsto n\right]}
 \subseteq \setredvals{p}{\typone,\seone\left[\sizevarone \mapsto \sizeinf\right]}$,
  \item $\setreddists{p}{\distrtypone,\seone\left[\sizevarone \mapsto n\right]}
 \subseteq \setreddists{p}{\distrtypone,\seone\left[\sizevarone \mapsto \sizeinf\right]}$,
 \item $\setredtms{p}{\distrtypone,\seone\left[\sizevarone \mapsto n\right]}
 \subseteq \setredtms{p}{\distrtypone,\seone\left[\sizevarone \mapsto \sizeinf\right]}$.
 \end{itemize}
\end{lemma}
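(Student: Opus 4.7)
The plan is to prove the lemma by simultaneous structural induction on the sized type $\typone$ and the distribution type $\distrtypone$, strengthening the inductive statement to cover negative occurrences as well: namely, if $\negative{\sizevarone}{\typone}$ (resp.\ $\distrtypone$), then the \emph{reverse} inclusion $\setredvals{p}{\typone,\seone[\sizevarone\mapsto\sizeinf]}\subseteq\setredvals{p}{\typone,\seone[\sizevarone\mapsto n]}$ holds (and similarly for $\setreddists{}{}$ and $\setredtms{}{}$). This symmetric strengthening is necessary because the arrow type-constructor is contravariant in its domain, so polarity is flipped when we descend on the left of an arrow.

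First I would handle the base case $\typone = \Nat^{\sizetwo}$. Positivity puts no constraint on $\sizetwo$, so if $\valone = \natsucc^m\natzero$ belongs to $\setredvals{p}{\Nat^{\sizetwo},\seone[\sizevarone\mapsto n]}$ with $p>0$, then $m < \sesem{\sizetwo}{\seone[\sizevarone\mapsto n]}\le\sesem{\sizetwo}{\seone[\sizevarone\mapsto\sizeinf]}$, the latter inequality following from a routine case analysis on whether $\sizetwo$ is $\sizesuccit{\sizevarone}{k}$, $\sizesuccit{\sizevartwo}{k}$ with $\sizevartwo\ne\sizevarone$, or $\sizeinf$. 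Negativity forces $\sizevarone\notin\sizetwo$, in which case the two interpretations coincide and the sets are equal, giving the reverse inclusion for free.

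Next, I would treat the arrow case $\typone = \typtwo\typarrow\distrtyptwo$ with $\positive{\sizevarone}{\typone}$, so that $\negative{\sizevarone}{\typtwo}$ and $\positive{\sizevarone}{\distrtyptwo}$. Take $\valone\in\setredvals{p}{\typone,\seone[\sizevarone\mapsto n]}$ and arbitrary $q\in(0,1]$ and $\valtwo\in\setredvals{q}{\typtwo,\seone[\sizevarone\mapsto\sizeinf]}$. The induction hypothesis on $\typtwo$, in its \emph{negative} form, yields $\valtwo\in\setredvals{q}{\typtwo,\seone[\sizevarone\mapsto n]}$, hence $\valone\,\valtwo\in\setredtms{pq}{\distrtyptwo,\seone[\sizevarone\mapsto n]}$. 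The \emph{positive} induction hypothesis on $\distrtyptwo$ then pushes this into $\setredtms{pq}{\distrtyptwo,\seone[\sizevarone\mapsto\sizeinf]}$, which is exactly what is needed. The dual negative-arrow case is entirely symmetric. For distribution types $\distrtypone = \distrelts{\typone_\indextwo^{p_\indextwo}\sep\indextwo\in\indexsettwo}$, a witnessing pseudo-representation of $\distrone\in\setreddists{p}{\distrtypone,\seone[\sizevarone\mapsto n]}$ together with families $(p_{\indexone\indextwo})$ and $(q_{\indexone\indextwo})$ transports directly: positivity propagates componentwise to each $\typone_\indextwo$, and applying the induction hypothesis pointwise lifts each $\valone_\indexone\in\setredvals{q_{\indexone\indextwo}}{\typone_\indextwo,\seone[\sizevarone\mapsto n]}$ to the analogous membership with $\sizeinf$ in place of $n$; the same families then witness $\distrone\in\setreddists{p}{\distrtypone,\seone[\sizevarone\mapsto\sizeinf]}$. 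The term case is immediate: the reductions $\termone\redval^{n_r}\distrone_r$ are operational and independent of the size environment, so the distribution-case inclusion transfers the required reducibility witnesses.

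The main obstacle is the polarity bookkeeping in the mutual induction: one must formulate and prove the positive and negative statements strictly together, otherwise the arrow case cannot close. Beyond this, all manipulations are straightforward, reflecting the intuitive fact that when $\sizevarone$ occurs positively in $\typone$, enlarging its interpretation to $\sizeinf$ only enlarges the reducibility set at type $\typone$, while occurring negatively has the opposite monotonicity behaviour.
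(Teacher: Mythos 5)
Your proof is correct, but it follows a genuinely different route from the paper's. The paper does not run a fresh induction over types: it writes $n$ as $\sesem{\sizesuccit{\sizevarone}{n}}{\seone[\sizevarone\mapsto 0]}$ and $\sizeinf$ as $\sesem{\sizeinf}{\seone[\sizevarone\mapsto 0]}$, uses the Size Commutation lemma (Lemma~\ref{lemma:exchange-size-size-env}) to rewrite the two sets as $\setredvals{p}{\subst{\typone}{\sizevarone}{\sizesuccit{\sizevarone}{n}},\seone[\sizevarone\mapsto 0]}$ and $\setredvals{p}{\subst{\typone}{\sizevarone}{\sizeinf},\seone[\sizevarone\mapsto 0]}$, derives $\subst{\typone}{\sizevarone}{\sizesuccit{\sizevarone}{n}}\subtypeleq\subst{\typone}{\sizevarone}{\sizeinf}$ from $\positive{\sizevarone}{\typone}$ and $\sizesuccit{\sizevarone}{n}\sizeleq\sizeinf$ via Lemma~\ref{lemma:size-substitution-subtyping}, and concludes by Subtyping Soundness (Lemma~\ref{lemma:subtyping-and-candidates}); the distribution and term cases are then handled by transporting witnesses exactly as you do. Your mutual induction with the strengthened negative statement is essentially an inlining, at the semantic level, of the polarity induction the paper has already performed syntactically in Lemma~\ref{lemma:size-substitution-subtyping} combined with the type induction behind Subtyping Soundness; the contravariant flip in the arrow case that you handle explicitly is the same one hidden in those two lemmas. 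The paper's route buys brevity given its existing infrastructure (and avoids restating the positivity/negativity rules), while yours is self-contained and arguably more transparent about where monotonicity in the size interpretation actually comes from; as a minor bonus, your argument does not need to substitute a size whose spine variable is $\sizevarone$ itself, a point on which the paper's appeal to Size Commutation is formally a little loose.
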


\begin{proof}~
\begin{itemize}
 \item  Let $\sizeone = \sizesuccit{\sizevarone}{n}$. We have $\sesem{\sizeone}{\seone\left[\sizevarone \mapsto 0\right]}\,=\,n$.
 Using Lemma~\ref{lemma:exchange-size-size-env}, we obtain
 $$
  \setredvals{p}{\subst{\typone}{\sizevarone}{\sizeone},\seone\left[\sizevarone \mapsto 0\right]}
  \ \ =\ \ \setredvals{p}{\typone,\seone\left[\sizevarone \mapsto \sesem{\sizeone}{\seone\left[\sizevarone \mapsto 0\right]}\right]}
  \ \ =\ \ \setredvals{p}{\typone,\seone\left[\sizevarone \mapsto n\right]}
 $$
 By the same lemma,
 $$
  \setredvals{p}{\subst{\typone}{\sizevarone}{\sizeinf},\seone\left[\sizevarone \mapsto 0\right]}
  \ \ =\ \ \setredvals{p}{\typone,\seone\left[\sizevarone \mapsto \sesem{\sizeinf}{\seone\left[\sizevarone \mapsto 0\right]}\right]}
  \ \ =\ \ \setredvals{p}{\typone,\seone\left[\sizevarone \mapsto \sizeinf\right]}
 $$
 Since $\positive{\sizevarone}{\typone}$ and $\sizeone \sizeleq \sizeinf$, Lemma~\ref{lemma:size-substitution-subtyping}
 implies that $\subst{\typone}{\sizevarone}{\sizeone} \subtypeleq \subst{\typone}{\sizevarone}{\sizeinf}$.
 By Lemma~\ref{lemma:subtyping-and-candidates}, we obtain
 $
 \setredvals{p}{\subst{\typone}{\sizevarone}{\sizeone},\seone\left[\sizevarone \mapsto 0\right]} \ \subseteq
 \ \setredvals{p}{\subst{\typone}{\sizevarone}{\sizeinf},\seone\left[\sizevarone \mapsto 0\right]}
 $
 and thus $\setredvals{p}{\typone,\seone\left[\sizevarone \mapsto n\right]}
 \subseteq \setredvals{p}{\typone,\seone\left[\sizevarone \mapsto \sizeinf\right]}$.
 \item Let $\distrone \in \setreddists{p}{\distrtypone,\seone\left[\sizevarone \mapsto n\right]}$.
 It follows that $\distrone\,=\,\pseudorep{\left(\valone_{\indexone}\right)^{p_{\indexone}} \sep \indexone \in \indexsetone}$ and that,
 setting $\distrtypone\,=\,\distrelts{\left(\typone_\indextwo\right)^{p'_\indextwo} \sep \indextwo \in \indexsettwo}$,
  there exists families $\left(p_{\indexone \indextwo}\right)_{\indexone\in \indexsetone,\indextwo\in\indexsettwo}$
  and $\left(q_{\indexone \indextwo}\right)_{\indexone\in \indexsetone,\indextwo\in\indexsettwo}$ of reals of $[0,1]$
  satisfying:
  \begin{enumerate}
   \item $\forall \indexone \in \indexsetone,\ \ \forall\indextwo \in \indexsettwo,\ \ \valone_\indexone \in 
   \setredvals{q_{\indexone\indextwo}}{\typone_\indextwo,\seone\left[\sizevarone \mapsto n\right]}$,
   \item $\forall \indexone \in \indexsetone,\ \ \sum_{\indextwo \in \indexsettwo}\ p_{\indexone\indextwo}\ =\ p_\indexone$,
   \item $\forall \indextwo \in \indexsettwo,\ \ \sum_{\indexone \in \indexsetone}\ p_{\indexone\indextwo}\ =\ \distrtypone(\typone_\indextwo)$,
   \item $p \leq \sum_{\indexone \in \indexsetone}\sum_{\indextwo \in \indexsettwo}\ q_{\indexone\indextwo}p_{\indexone\indextwo}$.
  \end{enumerate}
   Since 
   $\forall \indexone \in \indexsetone,\ \ \forall\indextwo \in \indexsettwo,\ \ \valone_\indexone \in 
   \setredvals{q_{\indexone\indextwo}}{\typone_\indextwo,\seone\left[\sizevarone \mapsto n\right]}
   \subseteq \setredvals{q_{\indexone\indextwo}}{\typone_\indextwo,\seone\left[\sizevarone \mapsto \sizeinf\right]}$,
   we obtain that $\distrone \in \setreddists{p}{\distrtypone,\seone\left[\sizevarone \mapsto \sizeinf\right]}$
   using the same witnesses.
 
 \item Let $\termone \in \setredtms{p}{\distrtypone,\seone\left[\sizevarone \mapsto n\right]}$.
 It follows that for every $0 \leq r < p$, there exists $\distrtyptwo_r \distrleq \distrtypone$ and $n_r \in \NN$
 such that $\termone \redval^{n_r} \distrone_r$ and $\distrone_r \in \setreddists{r}{\distrtyptwo_r,\seone\left[\sizevarone \mapsto n\right]}$.
 But $\setreddists{r}{\distrtyptwo_r,\seone\left[\sizevarone \mapsto n\right]} \subseteq
 \setreddists{r}{\distrtyptwo_r,\seone\left[\sizevarone \mapsto \sizeinf\right]}$,
 so that $\termone \in \setredtms{p}{\distrtypone,\seone\left[\sizevarone \mapsto \sizeinf\right]}$.
 \end{itemize}
\end{proof}
}

\longv{The following lemma proves that $\sizeinf$ stands for ``every integer''.
It proves indeed that, if a term is in a reducibility set for any finite interpretation
of a size, then it is also in the set where the size is interpreted as $\sizeinf$:}
\begin{lemma}[Reducibility for Infinite Sizes]
\label{lemma:letrec-infinite-size-case}
 Suppose that $\positive{\sizevarone}{\distrtyptwo}$ and that
 $\valtwo$ is the value $\letrec{\funcone}{\valone}$. 
 If $\valtwo \in\ \setredvals{p}{\Nat^\sizevarone \typarrow
     \distrtyptwo,\seone\left[\sizevarone \mapsto n\right]}$  
     for every $n \in \NN$, then
   $\valtwo \in\ \setredvals{p}{\Nat^\sizevarone \typarrow
     \distrtyptwo,\seone\left[\sizevarone \mapsto \sizeinf\right]}$.
\end{lemma}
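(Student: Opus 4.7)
The plan is to unfold the definition of the reducibility set for the arrow type on the conclusion and then reduce the claim to a finite-size instance, exploiting the positivity of $\sizevarone$ in $\distrtyptwo$ to transport the conclusion from $\seone[\sizevarone \mapsto n]$ to $\seone[\sizevarone \mapsto \sizeinf]$. Concretely, fix $q \in (0,1]$ and $\valthree \in \setredvals{q}{\Nat^\sizevarone,\seone[\sizevarone \mapsto \sizeinf]}$; it suffices to show that $\valtwo\ \valthree \in \setredtms{pq}{\distrtyptwo,\seone[\sizevarone \mapsto \sizeinf]}$.

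First I would use the definition of the reducibility set at base type: since $\valthree \in \setredvals{q}{\Nat^\sizevarone,\seone[\sizevarone \mapsto \sizeinf]}$ and $q>0$, $\valthree$ must be of the form $\natsucc^m\ \natzero$ for some $m \in \NN$. The essential observation is that the constraint $m < \sesem{\sizevarone}{\seone[\sizevarone \mapsto \sizeinf]} = \infty$ is vacuous, yet for any $n > m$ (say $n = m+1$) we still have $m < n = \sesem{\sizevarone}{\seone[\sizevarone \mapsto n]}$, so $\valthree \in \setredvals{q}{\Nat^\sizevarone,\seone[\sizevarone \mapsto n]}$. (The $q=0$ case is immediate from Lemma~\ref{lemma:simple-types-and-candidates-of-proba-zero}.)

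Second, I would apply the hypothesis at this finite $n$. Since $\valtwo \in \setredvals{p}{\Nat^\sizevarone \typarrow \distrtyptwo,\seone[\sizevarone \mapsto n]}$ and $\valthree \in \setredvals{q}{\Nat^\sizevarone,\seone[\sizevarone \mapsto n]}$, by definition of $\setredvals{p}{\typone \typarrow \distrtypone,\cdot}$ we obtain $\valtwo\ \valthree \in \setredtms{pq}{\distrtyptwo,\seone[\sizevarone \mapsto n]}$.

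Finally, I would invoke the positivity hypothesis $\positive{\sizevarone}{\distrtyptwo}$ together with Lemma~\ref{lemma:positivity-and-inclusion-of-candidates} (the inclusion $\setredtms{pq}{\distrtyptwo,\seone[\sizevarone \mapsto n]} \subseteq \setredtms{pq}{\distrtyptwo,\seone[\sizevarone \mapsto \sizeinf]}$), which concludes that $\valtwo\ \valthree \in \setredtms{pq}{\distrtyptwo,\seone[\sizevarone \mapsto \sizeinf]}$, as required. The proof does not in fact rely on the specific letrec shape of $\valtwo$, which is only carried as context coming from the usage of this lemma in Section~\ref{sect:reducibility}. The key conceptual point, and the only subtlety, is that $\sizeinf$ is semantically just ``any natural number'' rather than a genuine transfinite ordinal: every concrete numeral $\natsucc^m\ \natzero$ is already captured by some finite size $n > m$, so the infinite case is obtained as the union of the finite cases, and positivity ensures the union is compatible with the construction of $\distrtyptwo$.
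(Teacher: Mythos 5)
Your proposal is correct and follows essentially the same route as the paper's proof: instantiate the arrow-type definition with an argument $\natsucc^m\,\natzero$, observe it already lives in the finite-size set for $n=m+1$, apply the hypothesis there, and transport the result to $\sizeinf$ via positivity (Lemma~\ref{lemma:positivity-and-inclusion-of-candidates}). If anything, your version is slightly more careful in tracking the degree $q$ of the argument separately from $p$, and your closing remark that the letrec shape of $\valtwo$ is not actually used matches the structure of the paper's argument.
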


\longv{ 
\begin{proof}
Suppose that $\positive{\sizevarone}{\distrtyptwo}$ and that, for every $n \in \NN$, 
$\letrec{\funcone}{\valone}\ \ \in\ \ \setredvals{p}{\Nat^\sizevarone \typarrow \distrtyptwo,\seone\left[\sizevarone \mapsto n\right]}$.
Let $\valtwo \in \setredvals{p}{\Nat^\sizevarone,\seone\left[\sizevarone \mapsto \sizeinf\right]}$.
Then $\valtwo = \natsucc^m\ \natzero$ for some $m \in \NN$. It follows that 
$\valtwo \in \setredvals{p}{\Nat^\sizevarone,\seone\left[\sizevarone \mapsto m+1\right]}$. But 
$\letrec{\funcone}{\valone}\ \ \in\ \ \setredvals{p}{\Nat^\sizevarone \typarrow \distrtyptwo,\seone\left[\sizevarone \mapsto m+1\right]}$,
so that 
$$
\left(\letrec{\funcone}{\valone}\right)\ \valtwo \in \setredtms{p}{\distrtyptwo,\seone\left[\sizevarone \mapsto m+1\right]}
$$
By Lemma~\ref{lemma:positivity-and-inclusion-of-candidates}, since $\positive{\sizevarone}{\distrtyptwo}$, we obtain that 
$$
\left(\letrec{\funcone}{\valone}\right)\ \valtwo \in \setredtms{p}{\distrtyptwo,\seone\left[\sizevarone \mapsto \sizeinf\right]}
$$
It follows that 
$$
\letrec{\funcone}{\valone}\ \ \in\ \ \setredvals{p}{\Nat^\sizevarone \typarrow \distrtyptwo,\seone\left[\sizevarone \mapsto \sizeinf\right]}
$$

\end{proof}
}

\longv{\subsection{A Last Technical Lemma}}

\longv{The following technical lemma will allow us to deal with the Let rule in the proof of typing soundness.
\begin{lemma}
 \label{lemma:arithmetic-pumping}
 Let $(q_\indexone)_\indexone \in (0,1]^n$, $(q'_\indextwo)_\indextwo \in (0,1]^m$ and $(q''_\indexthree)_\indexthree \in (0,1]^l$.
 Let $\indexsetfour$ and $\indexsetfive$ be two sets of indexes.
 Let $0 \leq r'' < \left(\prod_{\indexone=1}^n\ q_\indexone\right)\left(\prod_{\indextwo=1}^m\ q'_\indextwo\right)\left(\prod_{\indexthree=1}^l\ q''_\indexthree\right)$.
 Suppose that, for every $0 \leq r < \prod_{\indextwo=1}^m\ q'_\indextwo$, there exists two families
 $(p^r_{\indexfour\indexfive})_{\indexfour \in \indexsetfour,\indexfive \in \indexsetfive}$ and
 $(q^r_{\indexfour\indexfive})_{\indexfour \in \indexsetfour,\indexfive \in \indexsetfive}$ of reals of $[0,1]$
 satisfying
 \begin{equation}
  \label{eq:lemma-arith1}
  r \leq \sum_{\indexfour\in\indexsetfour}\ \sum_{\indexfive \in \indexsetfive}\ p^r_{\indexfour\indexfive} q^r_{\indexfour\indexfive}
 \end{equation}
 and 
 \begin{equation}
  \label{eq:lemma-arith2}
  \sum_{\indexfour\in\indexsetfour}\ \sum_{\indexfive \in \indexsetfive}\ p^r_{\indexfour\indexfive} \leq 1
 \end{equation}
 Then there exists $0 \leq r < \prod_{\indextwo=1}^m\ q'_\indextwo$ and
 a family $(r'_{\indexfour\indexfive})_{\indexfour \in \indexsetfour,\indexfive \in \indexsetfive}$
 satisfying
 \begin{equation}
  \label{eq:lemma-arith3}
  \forall \indexfour \in \indexsetfour,\ \ \forall \indexfive \in \indexsetfive,\ \ 
  0 \leq r'_{\indexfour\indexfive} < \left(\prod_{\indexone=1}^n\ q_\indexone\right)\left(\prod_{\indexthree=1}^l\ q''_\indexthree\right)q^r_{\indexfour\indexfive}
 \end{equation}
 and
 \begin{equation}
  \label{eq:lemma-arith4}
  r'' \leq \sum_{\indexfour\in\indexsetfour}\ \sum_{\indexfive \in \indexsetfive}\ p^r_{\indexfour\indexfive} r'_{\indexfour\indexfive}
 \end{equation}
\end{lemma}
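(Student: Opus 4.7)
The proof is essentially a rescaling argument. Write $Q = \prod_{\indexone=1}^n q_\indexone$, $Q' = \prod_{\indextwo=1}^m q'_\indextwo$ and $Q'' = \prod_{\indexthree=1}^l q''_\indexthree$; all three lie in $(0,1]$ since each factor does. The hypothesis $r'' < Q Q' Q''$ then leaves strict room to choose a scaling constant $\lambda \in \RR$ with
$$ \frac{r''}{Q Q' Q''} \;<\; \lambda \;<\; 1.$$

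The plan is to use $\lambda$ twice: once to pick $r$, and once to scale $q^r$ into $r'$. Concretely, I would set $r := \dfrac{r''}{\lambda Q Q''}$. Then $r \geq 0$, and $\lambda > r''/(Q Q' Q'')$ yields $r < Q'$, so the hypothesis applies at this $r$ and produces families $(p^r_{\indexfour\indexfive})$ and $(q^r_{\indexfour\indexfive})$ satisfying (\ref{eq:lemma-arith1}) and (\ref{eq:lemma-arith2}). Next, define $r'_{\indexfour\indexfive} := \lambda Q Q'' q^r_{\indexfour\indexfive}$. Since $\lambda < 1$, this gives the strict upper bound in (\ref{eq:lemma-arith3}) whenever $q^r_{\indexfour\indexfive} > 0$; for the remaining indices $q^r_{\indexfour\indexfive} = 0$, the corresponding summand contributes nothing to (\ref{eq:lemma-arith4}) and can be discarded from the families (or the strict inequality interpreted vacuously), so this case is harmless.

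The required inequality (\ref{eq:lemma-arith4}) is then a one-line computation: using (\ref{eq:lemma-arith1}),
$$ \sum_{\indexfour\in\indexsetfour}\sum_{\indexfive\in\indexsetfive} p^r_{\indexfour\indexfive}\, r'_{\indexfour\indexfive} \;=\; \lambda Q Q'' \sum_{\indexfour\in\indexsetfour}\sum_{\indexfive\in\indexsetfive} p^r_{\indexfour\indexfive}\, q^r_{\indexfour\indexfive} \;\geq\; \lambda Q Q''\, r \;=\; r''. $$

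The only conceptual point is the choice of $\lambda$, which must be taken strictly below $1$ so that the $r'_{\indexfour\indexfive}$'s satisfy the strict inequality in (\ref{eq:lemma-arith3}), but strictly above $r''/(Q Q' Q'')$ so that $r$ is a legal input to the hypothesis. The strict inequality $r'' < Q Q' Q''$ in the hypothesis is exactly what allows these two constraints on $\lambda$ to be compatible, so I expect no real obstacle beyond bookkeeping; the edge case $q^r_{\indexfour\indexfive} = 0$ is the only subtlety and is dealt with by the observation above.
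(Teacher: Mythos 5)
Your proof is correct, but it takes a different route from the paper's. The paper argues additively: it fixes a global $\epsilon>0$ together with one $\epsilon_{\indexfour\indexfive}>0$ per index pair, subject to $\epsilon+\sum_{\indexfour,\indexfive}\epsilon_{\indexfour\indexfive}$ being smaller than the slack $\left(\prod_\indexone q_\indexone\right)\left(\prod_\indextwo q'_\indextwo\right)\left(\prod_\indexthree q''_\indexthree\right)-r''$, picks $r$ in $\left(\prod_\indextwo q'_\indextwo-\epsilon,\ \prod_\indextwo q'_\indextwo\right)$, picks each $r'_{\indexfour\indexfive}$ just below $\left(\prod_\indexone q_\indexone\right)\left(\prod_\indexthree q''_\indexthree\right)q^r_{\indexfour\indexfive}$, and then propagates the error terms through a chain of estimates using (\ref{eq:lemma-arith1}) and (\ref{eq:lemma-arith2}). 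You instead use a single multiplicative parameter $\lambda$ strictly between $r''/(QQ'Q'')$ and $1$, which makes the final inequality an exact one-line computation and never needs (\ref{eq:lemma-arith2}) or the finiteness of $\indexsetfour\times\indexsetfive$ (which the paper's choice of a summable family of positive $\epsilon_{\indexfour\indexfive}$ implicitly relies on). Your version is cleaner and slightly more general. The one point you flag — that $q^r_{\indexfour\indexfive}=0$ makes the strict inequality in (\ref{eq:lemma-arith3}) unsatisfiable by any $r'_{\indexfour\indexfive}\geq 0$ — is a defect of the lemma's statement rather than of your argument: the paper's own proof chooses $r'_{\indexfour\indexfive}$ in the interval $\bigl(\left(\prod_\indexone q_\indexone\right)\left(\prod_\indexthree q''_\indexthree\right)q^r_{\indexfour\indexfive}-\epsilon_{\indexfour\indexfive},\ \left(\prod_\indexone q_\indexone\right)\left(\prod_\indexthree q''_\indexthree\right)q^r_{\indexfour\indexfive}\bigr)$, which in that degenerate case contains only negative reals, so it silently has the same problem; your explicit workaround (set $r'_{\indexfour\indexfive}=0$ and note its contribution to (\ref{eq:lemma-arith4}) vanishes) is the right fix for how the lemma is actually used.
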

}

\longv{
\begin{proof}
Since  $r'' < \left(\prod_{\indexone=1}^n\ q_\indexone\right)\left(\prod_{\indextwo=1}^m\ q'_\indextwo\right)\left(\prod_{\indexthree=1}^l\
q''_\indexthree\right)$, there exists $\epsilon > 0$ and $\forall \indexfour\in\indexsetfour,\ \forall \indexfive\in\indexsetfive,\ 
\epsilon_{\indexfour\indexfive} > 0$ satisfying
\begin{equation}
 \label{eq:lemma-arith5}
0 < \epsilon + \sum_{\indexfour\in\indexsetfour}\ \sum_{\indexfive \in \indexsetfive}\ \epsilon_{\indexfour\indexfive} < 
\left(\prod_{\indexone=1}^n\ q_\indexone\right)\left(\prod_{\indextwo=1}^m\ q'_\indextwo\right)\left(\prod_{\indexthree=1}^l\
q''_\indexthree\right) - r''
\end{equation}
We pick $r$ such that 
\begin{equation}
 \label{eq:lemma-arith6}
 \prod_{\indextwo=1}^m\ q'_\indextwo - \epsilon < r < \prod_{\indextwo=1}^m\ q'_\indextwo 
\end{equation}
and this induces families $(p^r_{\indexfour\indexfive})_{\indexfour \in \indexsetfour,\indexfive \in \indexsetfive}$ and
 $(q^r_{\indexfour\indexfive})_{\indexfour \in \indexsetfour,\indexfive \in \indexsetfive}$ of reals of $[0,1]$
 satisfying (\ref{eq:lemma-arith1}) and (\ref{eq:lemma-arith2}).
 We choose a family $(r'_{\indexfour\indexfive})_{\indexfour \in \indexsetfour,\indexfive \in \indexsetfive}$ such that 
 $$
   \forall \indexfour \in \indexsetfour,\ \ \forall \indexfive \in \indexsetfive,\ \ 
  \left(\prod_{\indexone=1}^n\ q_\indexone\right)\left(\prod_{\indexthree=1}^l\ q''_\indexthree\right)q^r_{\indexfour\indexfive} - \epsilon_{\indexfour\indexfive} < r'_{\indexfour\indexfive} < \left(\prod_{\indexone=1}^n\ q_\indexone\right)\left(\prod_{\indexthree=1}^l\ q''_\indexthree\right)q^r_{\indexfour\indexfive}
  $$
 By (\ref{eq:lemma-arith2}) and since $\left(\prod_{\indexone=1}^n\ q_\indexone\right)\left(\prod_{\indexthree=1}^l\ q''_\indexthree\right)$,
 we obtain from (\ref{eq:lemma-arith5}) that
 $$
 \left(\prod_{\indexone=1}^n\ q_\indexone\right)\left(\prod_{\indexthree=1}^l\ q''_\indexthree\right) \epsilon + \sum_{\indexfour\in\indexsetfour}\ \sum_{\indexfive \in \indexsetfive}\ p^r_{\indexfour\indexfive} \epsilon_{\indexfour\indexfive} < 
 \left(\prod_{\indexone=1}^n\ q_\indexone\right)\left(\prod_{\indextwo=1}^m\ q'_\indextwo\right)\left(\prod_{\indexthree=1}^l\
 q''_\indexthree\right) - r''
 $$
 Thus
 $$
  r''  < 
   \left(\prod_{\indexone=1}^n\ q_\indexone\right)\left(\prod_{\indexthree=1}^l\ q''_\indexthree\right)
 \left(\left(\prod_{\indextwo=1}^m\ q'_\indextwo\right) - \epsilon\right)
 - \sum_{\indexfour\in\indexsetfour}\ \sum_{\indexfive \in \indexsetfive}\ p^r_{\indexfour\indexfive} \epsilon_{\indexfour\indexfive}
 $$
 By (\ref{eq:lemma-arith6}) and then (\ref{eq:lemma-arith1}):
  $$
  r''  < 
   \left(\prod_{\indexone=1}^n\ q_\indexone\right)\left(\prod_{\indexthree=1}^l\ q''_\indexthree\right)
   \left(\sum_{\indexfour\in\indexsetfour}\ \sum_{\indexfive \in \indexsetfive}\ p^r_{\indexfour\indexfive} q^r_{\indexfour\indexfive}\right)
 - \sum_{\indexfour\in\indexsetfour}\ \sum_{\indexfive \in \indexsetfive}\ p^r_{\indexfour\indexfive} \epsilon_{\indexfour\indexfive}
 $$
 which rewrites to
  $$
  r''  < 
   \sum_{\indexfour\in\indexsetfour}\ \sum_{\indexfive \in \indexsetfive}\ p^r_{\indexfour\indexfive}\left(\left(\prod_{\indexone=1}^n\ q_\indexone\right)\left(\prod_{\indexthree=1}^l\ q''_\indexthree\right)
    q^r_{\indexfour\indexfive} - \epsilon_{\indexfour\indexfive}\right)
 $$
 and by definition of $(r'_{\indexfour\indexfive})_{\indexfour \in \indexsetfour,\indexfive \in \indexsetfive}$ we obtain
  $$
  r''  < 
   \sum_{\indexfour\in\indexsetfour}\ \sum_{\indexfive \in \indexsetfive}\ p^r_{\indexfour\indexfive}r'_{\indexfour\indexfive}
 $$
 as requested.
\end{proof}
}


\longv{\subsection{Typing Soundness}}

All these fundamental lemmas allow us to prove the following proposition, which expresses
that all typable terms are reducible and is the key step towards the fact that typability implies AST:
\begin{proposition}[Typing Soundness]
\label{prop:typing-soundness}
  If $\contextsizedone \contextsep \contextdistrone \proves\termone\typsep\distrtypone$, then
  $\termone\in\setredopentms{\contextsizedone \contextsep \contextdistrone}{\distrtypone,\seone}$ for every $\seone$.
  Similarly, if $\contextsizedone \contextsep \contextdistrone\proves\valone\typsep\typone$, then
  $\valone\in\setredopenvals{\contextsizedone \contextsep \contextdistrone}{\typone,\seone}$ for every $\seone$.
\end{proposition}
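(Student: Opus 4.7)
The plan is to proceed by mutual induction on the typing derivation, dispatching one case per typing rule. The axiom rules Var and Var', together with Zero, Succ, Lambda and Sub, are essentially bookkeeping: Zero and Succ rely on Lemma~\ref{lemma:successor-on-redval}, Sub on the monotonicity provided by Lemma~\ref{lemma:subtyping-and-candidates}, Lambda unpacks the definition of $\setredopenvals{}{}$ and moves between values and terms via Lemma~\ref{lemma/inclusion-redvals-redterms}, while Var and Var' are discharged directly by substituting the designated closing value for the free variable. In each case the degree of reducibility of the resulting closed term is exactly the product of the degrees assigned to the closing values, as required by the definition of the open reducibility sets.

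For the compound rules App, Let, Choice and Case the strategy is uniform: fix $\seone$, substitute closing values $\vec{\valtwo}$ drawn from the relevant reducibility sets (and, when the distribution context is non-empty, from the appropriate intersection for the distribution variable), apply the induction hypotheses to the premises, and glue the resulting closed-term witnesses. The affine splitting of contexts is exactly what makes such substitution well-defined. Choice combines Lemma~\ref{lemma:reduction-and-candidates} with Lemma~\ref{lemma:backtracking-dred} to produce the probabilistic combination of degrees. Let is the most delicate of the non-recursive cases: approximations of the intermediate value distribution at each $r < p$ must be composed with approximations of the continuation, which is precisely what Lemma~\ref{lemma:arithmetic-pumping} together with Lemma~\ref{lemma:operational-decomposition-let} enable. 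Case is easier, since typing forces the scrutinee to reduce to some $\natsucc^n\natzero$ and both branches have been typed with the same distribution type.

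The main obstacle is the $\letrecname$ rule. For a conclusion of the form $\contextsizedone,\,\contextsizedtwo \contextsep \contextdistrone \proves \letrec{\funcone}{\valone} \typsep \Nat^{\sizetwo} \typarrow \subst{\distrtyptwo}{\sizevarone}{\sizetwo}$, after fixing $\seone$ and substituting closing values $\vec{\valtwo}$ for the free variables in the sized context, the Size Commutation Lemma~\ref{lemma:exchange-size-size-env} reduces the goal to showing that $\letrec{\funcone}{\subst{\valone}{\vec{\varone}}{\vec{\valtwo}}}$ belongs to $\setredvals{1}{\Nat^\sizevarone \typarrow \distrtyptwo,\seone[\sizevarone \mapsto k]}$ for every $k \in \NN \cup \{\sizeinf\}$. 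For finite $k$, the induction hypothesis applied to the premise typing $\valone$ supplies precisely the assumption needed by Lemma~\ref{lemma:sized-walk-argument-for-letrec}, which yields membership at degree $\probaconv{n}{k}$ for every $n \in \NN$; since the sized walk is AST, Lemma~\ref{lemma:sized-walks-get-arbitrarily-close-to-proba-one-in-finite-time} drives these probabilities arbitrarily close to $1$, and the continuity Lemma~\ref{lemma/continuity-lemma-vred} then raises the degree to $1$ in the limit. The case $k = \sizeinf$ is handled by Lemma~\ref{lemma:letrec-infinite-size-case}, whose applicability rests on the positivity of $\sizevarone$ in $\distrtyptwo$, a side condition imposed precisely by the $\letrecname$ rule. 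Reassembling these witnesses across all choices of closing values delivers the required open-term reducibility statement and closes the induction.
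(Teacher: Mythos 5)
Your proposal is correct and follows essentially the same route as the paper's proof: induction on the typing derivation, with the same lemmas deployed in the same roles for each rule (in particular, Lemma~\ref{lemma:arithmetic-pumping} together with Lemma~\ref{lemma:operational-decomposition-let} for Let, and the chain Size Commutation / Lemma~\ref{lemma:sized-walk-argument-for-letrec} / finite approximation of AST / continuity / Lemma~\ref{lemma:letrec-infinite-size-case} for $\letrecname$). The only minor imprecision is that in the axiom cases the degree of the closed term is bounded below by, rather than exactly equal to, the product of the closing degrees, which is why downward closure is invoked there; this does not affect the argument.
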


\begin{proof}
We proceed by induction on the derivation of the sequent $\contextsizedone \contextsep \contextdistrone \proves\termone\typsep\distrtypone$.
When $\termone\,=\,\valone$ is a value, we know by Lemma~\ref{lemma/dirac-types-values} that $\distrtypone\,=\,\distrelts{\typone^1}$; and we prove that
$\valone\in\setredopenvals{\contextsizedone \contextsep \contextdistrone}{\typone,\seone}$ for every $\seone$.
By Lemma~\ref{lemma/inclusion-redvals-redterms} we obtain that
$\valone\in\setredopentms{\contextsizedone \contextsep \contextdistrone}{\distrtypone,\seone}$ for every $\seone$.
We proceed by case analysis on the last rule of the derivation:

\longv{
We suppose in the following that $\contextsizedone$ is a sized context which can be enumerated in the form
  $\varone_1\typsep\typone_1,\ldots,\varone_n\typsep\typone_n$,
  and that $\vartwo$ is a variable distinct from $\varone_1,\ldots,\varone_n$.
We proceed accordingly to the last rule of the derivation:
}
\begin{varitemize}
 \longv{
 \item \textbf{Var:} Suppose that $\contextsizedone,\,\vartwo \typsep \typtwo \contextsep \contextdistrone \ \proves\ \vartwo\typsep \typtwo$.
 Let $(q_\indexone)_\indexone \in [0,1]^{n+1}$ and 
 $\left(\valone_1,\,\ldots,\,\valone_n,\,\valtwo\right) \in \left(\prod_{\indexone=1}^n \ 
    \setredvals{q_\indexone}{\typone_\indexone,\seone}\right) \times \setredvals{q_{n+1}}{\typtwo,\seone}$.
 \begin{itemize}
  \item If $\contextdistrone\,=\, \emptyset$, we need to prove that $\subst{\vartwo}{\vec{\varone},\vartwo}{\vec{\valone},\valtwo}
  \ =\ \valtwo \in 
  \setredvals{\prod_{\indexone=1}^{n+1} q_{\indexone}}{\typtwo,\seone}$. This is immediate since
  $\prod_{\indexone=1}^{n+1} q_{\indexone} \leq q_{n+1}$, using Lemma~\ref{lemma/downward-closure-tred}.

\item If $\contextdistrone\,=\, \varthree\typsep\distrelts{\typthree_\indextwo^{p_\indextwo} \sep \indextwo\in \indexsettwo}$, let 
 $\left(q'_\indextwo\right)_{\indextwo \in \indexsettwo} \in [0,1]^\indexsettwo$ and
 $\valthree\in \bigcap_{\indextwo \in \indexsettwo}\ \setredvals{q'_\indextwo}{\typone_\indextwo,\seone}$.
 We need to prove that $\subst{\vartwo}{\vec{\varone},\vartwo,\varthree}{\vec{\valone},\valtwo,\valthree} \ =\ \valtwo \in 
  \setredvals{\left(\prod_{\indexone=1}^{n+1} q_\indexone \right)
    \left(\sum_{\indextwo\in \indexsettwo} p_\indextwo q'_\indextwo\right)}{\typtwo,\seone}$.
   But, again, $\left(\prod_{\indexone=1}^{n+1} q_\indexone \right)
    \left(\sum_{\indextwo\in \indexsettwo} p_\indextwo q'_\indextwo\right) \leq q_{n+1}$
    since $q_\indexone \leq 1$ for every $\indexone$, $q'_\indextwo \leq 1$ for every $\indextwo$ and
    $\sum_{\indextwo \in \indexsettwo}\ p_\indextwo \,=\,1$.
    We conclude using Lemma~\ref{lemma/downward-closure-tred}.
   \end{itemize}
   
   \bigbreak
  }

 \longv{
 \item \textbf{Var':} Suppose that $\contextsizedone \contextsep \vartwo \typsep \distrelts{\typtwo^1} \ \proves\ \vartwo\typsep \typtwo$.
 Let $(q_\indexone)_\indexone \in [0,1]^{n+1}$ and 
 $\left(\valone_1,\,\ldots,\,\valone_n,\,\valtwo\right) \in \left(\prod_{\indexone=1}^n \ \setredvals{q_\indexone}{\typone_\indexone,\seone}\right)
 \times \setredvals{q_{n+1}}{\typtwo,\seone}$.
 We need to prove that $\subst{\vartwo}{\vec{\varone},\vartwo}{\vec{\valone},\valtwo}
  \ =\ \valtwo \in 
  \setredvals{\prod_{\indexone=1}^{n+1} q_{\indexone}}{\typtwo,\seone}$.
  This is immediate since
  $\prod_{\indexone=1}^{n+1} q_{\indexone} \leq q_{n+1}$, using Lemma~\ref{lemma/downward-closure-tred}.
 
 \bigbreak
 }
 
 \longv{ 
 \item \textbf{Succ:} Suppose that $\contextsizedone \contextsep \contextdistrone\ \proves\ \natsucc\ \valone \typsep \Nat^{\sizesucc{\sizeone}}$.
 Suppose moreover that $\contextdistrone\,=\,\emptyset$. 
 Let $(q_\indexone)_\indexone \in [0,1]^{n}$ and 
 $\left(\valtwo_1,\,\ldots,\,\valtwo_n\right) \in \prod_{\indexone=1}^n \ \setredvals{q_\indexone}{\typone_\indexone,\seone}$.
 We need to prove that $\subst{\left(\natsucc\ \valone\right)}{\vec{\varone}}{\vec{\valtwo}} \in \setredvals{\prod_{\indexone=1}^n q_\indexone}{\Nat^{\sizesucc{\sizeone}},\seone}$. But 
 $\subst{\left(\natsucc\ \valone\right)}{\vec{\varone}}{\vec{\valtwo}}\ =\ \natsucc\ \left(\subst{\valone}{\vec{\varone}}{\vec{\valtwo}}\right)$
 and, by induction hypothesis, $\subst{\valone}{\vec{\varone}}{\vec{\valtwo}} \in \setredvals{\prod_{\indexone=1}^n q_\indexone}{\Nat^{\sizeone},\seone}$.
 By Lemma~\ref{lemma:successor-on-redval}, 
 $\subst{\left(\natsucc\ \valone\right)}{\vec{\varone}}{\vec{\valtwo}} \in \setredvals{\prod_{\indexone=1}^n q_\indexone}{\Nat^{\sizesucc{\sizeone}},\seone}$ and we can conclude.
 The case where $\contextdistrone\neq \emptyset$ is similar.

 \bigbreak
 }
 
 \longv{ 
 \item \textbf{Zero:} Suppose that $\contextsizedone \contextsep \contextdistrone\ \proves\ \natzero \typsep \Nat^{\sizesucc{\sizeone}}$.
  Suppose moreover that $\contextdistrone\,=\,\emptyset$. 
 Let $(q_\indexone)_\indexone \in [0,1]^{n}$ and 
 $\left(\valone_1,\,\ldots,\,\valone_n\right) \in \prod_{\indexone=1}^n \ \setredvals{q_\indexone}{\typone_\indexone,\seone}$.
 By Lemma~\ref{lemma:successor-on-redval}, $\subst{\natzero}{\vec{\varone}}{\vec{\valone}}\ =\ 0 \in \setredvals{\prod_{\indexone=1}^n q_\indexone}{\Nat^{\sizesucc{\sizeone}},\seone}$.
 The case where $\contextdistrone\neq \emptyset$ is similar.
 
 \bigbreak
 }
 
 \longv{
 \item \textbf{$\lambda$:} Suppose that 
 $\contextsizedone \contextsep \contextdistrone\ \proves\ \abstr{\vartwo}{\termone} \typsep \typone \typarrow\distrtypone$,
 with $\contextdistrone\ =\ \varthree\typsep \distrelts{\left(\typtwo_\indextwo\right)^{p_\indextwo} \sep \indextwo \in \indexsettwo}$.
 Let $(q_\indexone)_\indexone \in [0,1]^{n}$ and 
 $\left(\valone_1,\,\ldots,\,\valone_n\right) \in \prod_{\indexone=1}^n \ \setredvals{q_\indexone}{\typone_\indexone,\seone}$.
 Let  $\left(q'_\indextwo\right)_{\indextwo \in \indexsettwo} \in [0,1]^\indexsettwo$ and
 $\valtwo\in \bigcap_{\indextwo \in \indexsettwo}\ \setredvals{q'_\indextwo}{\typone_\indextwo,\seone}$.
 We need to prove that
 $$
 \subst{\left(\abstr{\vartwo}{\termone}\right)}{\vec{\varone},\varthree}{\vec{\valone},\valtwo}
 \ =\ \abstr{\vartwo}{\subst{\termone}{\vec{\varone},\varthree}{\vec{\valone},\valtwo}}
 \in \setredvals{\left(\prod_{\indexone=1}^n q_\indexone \right)
    \left(\sum_{\indextwo\in \indexsettwo} p_\indextwo q'_\indextwo\right)}{\typone \typarrow \distrtypone,\seone}
 $$
 Therefore, let $q'' \in (0,1]$ and $\valthree \in \setredvals{q''}{\typone,\seone}$.
 We now have to prove that 
 \begin{equation}
  \label{eq:reducibility-lambda1}
  \left(\abstr{\vartwo}{\subst{\termone}{\vec{\varone},\varthree}{\vec{\valone},\valtwo}}\right)\ \valthree \ \in\ 
 \setredtmsfin{q''\left(\prod_{\indexone=1}^n q_\indexone \right)
    \left(\sum_{\indextwo\in \indexsettwo} p_\indextwo q'_\indextwo\right)}{\distrtypone,\seone}
 \end{equation}
 But
 $$
 \left(\abstr{\vartwo}{\subst{\termone}{\vec{\varone},\varthree}{\vec{\valone},\valtwo}}\right)\ \valthree
 \ \ \rcbv\ \ 
 \subst{\termone}{\vec{\varone},\varthree,\vartwo}{\vec{\valone},\valtwo,\valthree}
 $$
 Since $\contextsizedone,\,\varone\typsep\typone \contextsep \contextdistrone \ \proves\ \termone \typsep\distrtypone$ by typing,
 the induction hypothesis ensures that 
 $\subst{\termone}{\vec{\varone},\varthree,\vartwo}{\vec{\valone},\valtwo,\valthree} \in 
 \setredtmsfin{q''\left(\prod_{\indexone=1}^n q_\indexone \right)
    \left(\sum_{\indextwo\in \indexsettwo} p_\indextwo q'_\indextwo\right)}{\distrtypone,\seone}$
 and by Lemma~\ref{lemma:reduction-and-candidates} we obtain that (\ref{eq:reducibility-lambda1}) holds, which allows to conclude.

 The case where $\contextdistrone \,=\, \emptyset$ is similar.
 
 \bigbreak
 }

 \longv{
 \item \textbf{Sub:} Suppose that $\contextsizedone \contextsep \contextdistrone\ \proves\ \termone\typsep\distrtyptwo$
 is derived from  $\contextsizedone \contextsep \contextdistrone\ \proves\ \termone\typsep \distrtypone$
 where $\distrtypone\ \subtypeleq\ \distrtyptwo$. Suppose that $\contextdistrone \,=\, \emptyset$.
  Let $(q_\indexone)_\indexone \in [0,1]^{n}$ and 
 $\left(\valone_1,\,\ldots,\,\valone_n\right) \in \prod_{\indexone=1}^n \ \setredvals{q_\indexone}{\typone_\indexone,\seone}$.
 By induction hypothesis, $\subst{\termone}{\vec{\varone}}{\valone} \in \setredtmsfin{\prod_{\indexone=1}^n q_\indexone}{\distrtypone,\seone}$
 so that by Lemma~\ref{lemma:subtyping-and-candidates} we have
 $\subst{\termone}{\vec{\varone}}{\valone} \in \setredtmsfin{\prod_{\indexone=1}^n q_\indexone}{\distrtyptwo,\seone}$
 which allows to conclude.
 
 The case where $\contextdistrone \neq \emptyset$ is similar.
  
 \bigbreak
 }
 
 \longv{
 \item \textbf{App:} Suppose that $\contextsizedone ,\, \contextsizedtwo ,\, \contextsizedthree \contextsep
 \contextdistrone ,\, \contextdistrtwo\ \proves\ \valone\ \valtwo \typsep \distrtypone$.
 Suppose that $\contextdistrone ,\, \contextdistrtwo\, =\, \emptyset$. 
 We set $\contextsizedone\ =\ \varone_1\typsep\typone_1,\ldots,\varone_n\typsep\typone_n$,
 $\contextsizedtwo\ =\ \vartwo_1\typsep\typtwo_1,\ldots,\vartwo_m\typsep\typtwo_m$
 and $\contextsizedthree\ =\ \varthree_1\typsep\typthree_1,\ldots,\varthree_l\typsep\typthree_l$.
 Let $(q_\indexone)_\indexone \in [0,1]^{n}$, $(q'_\indextwo)_\indextwo \in [0,1]^{m}$,
 $(q''_\indexthree)_\indexthree \in [0,1]^{l}$,
 $\left(\valone_1,\,\ldots,\,\valone_n\right) \in \prod_{\indexone=1}^n \ \setredvals{q_\indexone}{\typone_\indexone,\seone}$,
 $\left(\valtwo_1,\,\ldots,\,\valtwo_m\right) \in \prod_{\indextwo=1}^m \ \setredvals{q'_\indextwo}{\typtwo_\indextwo,\seone}$,
 and $\left(\valthree_1,\,\ldots,\,\valthree_l\right) \in \prod_{\indexthree=1}^l \ \setredvals{q_\indexthree}{\typthree_\indexthree,\seone}$.
 We need to prove that
 \begin{equation}
 \label{eq:reducibility-app1}
 \subst{\left(\valone\ \valtwo\right)}{\vec{\varone},\vec{\vartwo},\vec{\varthree}}{\vec{\valone},\vec{\valtwo},\vec{\valthree}}
 \ \ =\ \ \subst{\valone}{\vec{\varone},\vec{\vartwo},\vec{\varthree}}{\vec{\valone},\vec{\valtwo},\vec{\valthree}}\ \ 
 \subst{\valtwo}{\vec{\varone},\vec{\vartwo},\vec{\varthree}}{\vec{\valone},\vec{\valtwo},\vec{\valthree}}
 \end{equation}
 is in $\setredtms{\left(\prod_{\indexone=1}^n q_\indexone\right)\left(\prod_{\indextwo=1}^m q'_\indextwo\right)\left(\prod_{\indexthree=1}^l q''_\indexthree\right)}{\distrtypone,\seone}$.
 \begin{itemize}
  \item Suppose that $\prod_{\indexone=1}^n q_\indexone =0$. Then we need to prove that (\ref{eq:reducibility-app1}) is in 
  $\setredtms{0}{\distrtypone,\seone}$, which is immediate by Lemma~\ref{lemma:simple-types-and-candidates-of-proba-zero}
  as it is of simple type $\underlying{\distrtypone}$.
  \item Suppose that $\prod_{\indexone=1}^n q_\indexone \neq 0$. It follows that $\forall \indexone \in \indexsetone,\ q_\indexone \neq 0$.
  We have that  $\contextsizedone,\,\contextsizedtwo \contextsep \emptyset \ \proves\ \valone\typsep \typone \typarrow\distrtypone$
 which, by induction hypothesis, gives that $\valone\in\setredopenvals{\contextsizedone, \contextsizedtwo \contextsep \emptyset}{\typone \typarrow \distrtypone,\seone}$. Note that for every $\indexone \in \indexsetone$ we have $\typone_\indexone \refines \Nat$;
 since $q_\indexone \neq 0$, we have by definition of the sets of candidates that 
 $\setredvals{q_\indexone}{\typone_\indexone,\seone}\ =\ \setredvals{1}{\typone_\indexone,\seone}$.
 It follows that $\subst{\valone}{\vec{\varone},\vec{\vartwo}}{\vec{\valone},\vec{\valtwo}}\ =\
 \subst{\valone}{\vec{\varone},\vec{\vartwo},\vec{\varthree}}{\vec{\valone},\vec{\valtwo}\vec{\valthree}}
 \,\in\,\setredvals{\left(\prod_{\indexone=1}^n 1\right)\left(\prod_{\indextwo=1}^m q'_\indextwo\right)}{\typone \typarrow \distrtypone,\seone}
 \ =\ \setredvals{\prod_{\indextwo=1}^m q'_\indextwo}{\typone \typarrow \distrtypone,\seone}$.
 Since $\contextsizedone,\,\contextsizedthree \contextsep \contextdistrtwo \ \proves\ \valtwo \typsep \typone$,
 we obtain similarly from the induction hypothesis that 
 $\subst{\valtwo}{\vec{\varone},\vec{\vartwo},\vec{\varthree}}{\vec{\valone},\vec{\valtwo}\vec{\valthree}}
 \,\in\,\setredvals{\prod_{\indexthree=1}^l q''_\indexthree}{\typone,\seone}$.
 By definition of $\setredvals{\prod_{\indextwo=1}^m q'_\indextwo}{\typone \typarrow \distrtypone,\seone}$, we obtain that
 $$
 \subst{\valone}{\vec{\varone},\vec{\vartwo},\vec{\varthree}}{\vec{\valone},\vec{\valtwo},\vec{\valthree}}\ \ 
 \subst{\valtwo}{\vec{\varone},\vec{\vartwo},\vec{\varthree}}{\vec{\valone},\vec{\valtwo},\vec{\valthree}}
 \ \in\ \setredtmsfin{\left(\prod_{\indextwo=1}^m q'_\indextwo\right)\left(\prod_{\indexthree=1}^l q''_\indexthree\right)}{\distrtypone,\seone}
 $$
 and by downwards closure (Lemma~\ref{lemma/downward-closure-tred}) we obtain that (\ref{eq:reducibility-app1})
 is in $\setredtms{\left(\prod_{\indexone=1}^n q_\indexone\right)\left(\prod_{\indextwo=1}^m q'_\indextwo\right)\left(\prod_{\indexthree=1}^l q''_\indexthree\right)}{\distrtypone,\seone}$ so that we can conclude. 
 \end{itemize}

 The case where $\contextdistrone ,\, \contextdistrtwo \neq \emptyset$ is similar.
 
 \bigbreak
 }

 \longv{
 \item \textbf{Choice:} Suppose that $\contextsizedone \contextsep \contextdistrone \choice_p \contextdistrtwo\ 
 \proves\ \termone \choice_p \termtwo \typsep \distrtypone \choice_p \distrtyptwo$.
 Suppose that $\contextdistrone \neq \emptyset$ and that $\contextdistrtwo \neq \emptyset$.
 We set $\contextdistrone\,=\,\vartwo \typsep \distrelts{\typtwo_\indextwo^{p_\indextwo} \sep \indextwo \in \indexsettwo}$
 and $\contextdistrtwo\,=\,\vartwo \typsep \distrelts{\left(\typtwo'_\indexthree\right)^{p'_\indexthree} \sep \indexthree \in \indexsetthree}$
 where we suppose that $\indextwo \in \indexsettwo \cap \indexsetfour \Leftrightarrow 
  \typone_\indextwo = \typtwo_\indextwo$. We obtain that
 $$
 \begin{array}{rl}
 \contextdistrone \choice_p \contextdistrtwo \ =\ \vartwo \typsep 
 & \distrelts{\typtwo_\indextwo^{pp_\indextwo} \sep \indextwo \in \indexsettwo\setminus(\indexsettwo \cap \indexsetthree)} + 
 \distrelts{\left(\typtwo_\indexfour\right)^{pp_\indexfour+(1-p)p'_\indexfour} \sep \indexfour \in \indexsettwo \cap \indexsetthree}\\
 & + \distrelts{\left(\typtwo'_\indexthree\right)^{(1-p)p'_\indexthree} \sep \indexthree \in \indexsetthree\setminus(\indexsettwo \cap \indexsetthree)}\\
 \end{array}
 $$
 Let $(q_\indexone)_\indexone \in [0,1]^{n}$, $(q'_\indextwo)_\indextwo \in [0,1]^{|\indexsettwo\setminus(\indexsettwo \cap \indexsetthree)|}$,
 $(q''_\indexfour)_\indexfour \in [0,1]^{|\indexsettwo \cap \indexsetthree|}$,
 $(q'''_\indexthree)_\indexthree \in [0,1]^{|\indexsetthree\setminus(\indexsettwo \cap \indexsetthree)|}$,
 $\left(\valone_1,\,\ldots,\,\valone_n\right) \in \prod_{\indexone=1}^n \ \setredvals{q_\indexone}{\typone_\indexone,\seone}$,
 and 
 $$
 \valtwo \in \bigcap_{\indextwo \in \indexsettwo\setminus(\indexsettwo \cap \indexsetthree)} \setredvals{q'_\indextwo}{\typtwo_\indextwo,\seone}
 \cap \bigcap_{\indexfour \in \indexsettwo \cap \indexsetthree} \setredvals{q''_\indexfour}{\typtwo_\indexfour,\seone}
 \cap \bigcap_{\indexthree \in \indexsetthree\setminus(\indexsettwo \cap \indexsetthree)} \setredvals{q'''_\indexthree}{\typtwo'_\indexthree,\seone}
 $$
 We need to prove that $\subst{\left(\termone \choice_p \termtwo \right)}{\vec{\varone},\vartwo}{\vec{\valone},\valtwo}$ is in
 $$
 \begin{array}{l}
 \setredtmsfin{\left(\prod_{\indexone=1}^n q_\indexone\right)\left(
 \sum_{\indextwo \in \indexsettwo\setminus(\indexsettwo \cap \indexsetthree)} pp_\indextwo q'_\indextwo
 + \sum_{\indexfour \in \indexsettwo \cap \indexsetthree} (pp_\indexfour + (1-p)p'_\indexfour) q''_\indexfour 
 + \sum_{\indexthree \in \indexsetthree\setminus(\indexsettwo \cap \indexsetthree)} (1-p)p'_\indexthree q'''_\indexthree
 \right)}{\distrtypone \choice_p \distrtyptwo,\seone}\\
 =\ \ 
 \setredtmsfin{
 p\left(\prod_{\indexone=1}^n q_\indexone\right)\left(\sum_{\indextwo \in \indexsettwo\setminus(\indexsettwo \cap \indexsetthree)} p_\indextwo q'_\indextwo
 + \sum_{\indexfour \in \indexsettwo \cap \indexsetthree} p_\indexfour q''_\indexfour \right) + 
 (1-p)\left(\prod_{\indexone=1}^n q_\indexone\right)\left(\sum_{\indexfour \in \indexsettwo \cap \indexsetthree} p'_\indexfour q''_\indexfour
 + \sum_{\indexthree \in \indexsetthree\setminus(\indexsettwo \cap \indexsetthree)} p'_\indexthree q'''_\indexthree
 \right)}{\distrtypone \choice_p \distrtyptwo,\seone}
 \\
 \end{array}
 $$
 Typing gives us that $\contextsizedone \contextsep \contextdistrone\ \proves\ \termone \typsep \distrtypone$, which by induction hypothesis
 implies that
 $$
 \subst{\termone}{\vec{\varone},\vartwo}{\vec{\valone},\valtwo} \ \in\ \setredtmsfin{
 \left(\prod_{\indexone=1}^n q_\indexone\right)\left(\sum_{\indextwo \in \indexsettwo\setminus(\indexsettwo \cap \indexsetthree)} p_\indextwo q'_\indextwo
 + \sum_{\indexfour \in \indexsettwo \cap \indexsetthree} p_\indexfour q''_\indexfour \right) }{\distrtypone,\seone}
 $$
 Typing also implies that $\contextsizedone \contextsep \contextdistrtwo\ \proves\ \termtwo \typsep \distrtyptwo$, and provides by induction
 hypothesis
 $$
 \subst{\termtwo}{\vec{\varone},\vartwo}{\vec{\valone},\valtwo} \ \in\ 
 \setredtmsfin{
 \left(\prod_{\indexone=1}^n q_\indexone\right)\left(\sum_{\indexfour \in \indexsettwo \cap \indexsetthree} p'_\indexfour q''_\indexfour
 + \sum_{\indexthree \in \indexsetthree\setminus(\indexsettwo \cap \indexsetthree)} p'_\indexthree q'''_\indexthree
 \right)}{\distrtypone \choice_p \distrtyptwo,\seone}
 $$
 Since 
 $$
 \subst{\left(\termone \choice_p \termtwo \right)}{\vec{\varone},\vartwo}{\vec{\valone},\valtwo}
 \ \ \rcbv\ \ \distrelts{\left(\subst{\termone}{\vec{\varone},\vartwo}{\vec{\valone},\valtwo}\right)^p,\left(\subst{\termtwo}{\vec{\varone},\vartwo}{\vec{\valone},\valtwo}\right)^{1-p}}
 $$
 Lemma~\ref{lemma:reduction-and-candidates} allows to conclude.
 
 The cases where $\contextdistrone = \emptyset$ or $\contextdistrtwo = \emptyset$ are treated similarly.
 
 \bigbreak
 }
 
 \longv{
 \item \textbf{Let:} Suppose that
 $\contextsizedone,\,\contextsizedtwo,\,\contextsizedthree \contextsep \contextdistrone ,\,
 \left(\sum_{\indexone \in \indexsetone}\ p_{\indexone} \cdot \contextdistrtwo_{\indexone}\right)
 \proves \letin{\varone}{\termone}{\termtwo} \typsep \sum_{\indexone \in \indexsetone}\ p_{\indexone} \cdot \distrtypone_{\indexone}$.
 Let $\contextsizedone\ =\ \varone_1\typsep\typone_1,\ldots,\varone_n\typsep\typone_n$,
 $\contextsizedtwo\ =\ \vartwo_1\typsep\typtwo_1,\ldots,\vartwo_m\typsep\typtwo_m$,
 and $\contextsizedthree\ =\ \varthree_1\typsep\typthree_1,\ldots,\varthree_m\typsep\typthree_l$.
 Let $(q_\indexone)_\indexone \in [0,1]^n$, $(q'_\indextwo)_\indextwo \in [0,1]^m$ and $(q''_\indexthree)_\indexthree \in [0,1]^l$.
 Let $(\valone_1,\ldots,\valone_n) \in \prod_{\indexone = 1}^n \setredvals{q_\indexone}{\typone_\indexone,\seone}$,
 $(\valtwo_1,\ldots,\valtwo_m) \in \prod_{\indextwo = 1}^m \setredvals{q'_\indextwo}{\typtwo_\indextwo,\seone}$,
 and $(\valthree_1,\ldots,\valthree_l) \in \prod_{\indexthree = 1}^l \setredvals{q''_\indexthree}{\typthree_\indexthree,\seone}$.
 There are two subcases here.
 \begin{itemize}
  \item Suppose that $\termone$ is a value. Then the last typing rule is
  $$
    \AxiomC{$\contextsizedone,\,\contextsizedtwo \contextsep \contextdistrone
 \proves \termone \typsep \typone$}
 \AxiomC{$\contextsizedone,\,\contextsizedthree,\,\varone\typsep \typone \contextsep \contextdistrtwo
 \proves \termtwo \typsep \distrtypone$}
   \AxiomC{$\underlying{\contextsizedone}\,=\,\Nat$}
 \TrinaryInfC{$\contextsizedone,\,\contextsizedtwo,\,\contextsizedthree \contextsep \contextdistrone ,\,
 \contextdistrtwo \proves \letin{\varone}{\termone}{\termtwo} \typsep \distrtypone$}
  \DisplayProof
  $$
  We treat the case where $\contextdistrone = \contextdistrtwo = \emptyset$, the two other ones being similar.
  We need to prove that   
  \begin{equation}
  \label{eq:reducibility-let1}
   \subst{\left(\letin{\varone}{\termone}{\termtwo}\right)}{\vec{\varone},\vec{\vartwo},\vec{\varthree}}{\vec{\valone},\vec{\valtwo},\vec{\valthree}}
  \in \setredtmsfin{\left(\prod_{\indexone\in\indexsetone} q_\indexone\right)\left(\prod_{\indextwo\in\indexsettwo} q'_\indextwo\right)\left(\prod_{\indexthree\in\indexsetthree} q''_\indexthree\right)}{\distrtypone,\seone}
  \end{equation}
  We now distinguish two cases.
  \begin{itemize}
   \item Suppose that $\prod_{\indexone\in\indexsetone} q_\indexone\,=\,0$. Then (\ref{eq:reducibility-let1}) holds immediately
   since by Lemma~\ref{lemma:simple-types-and-candidates-of-proba-zero} all the terms of simple type $\underlying{\distrtypone}$
   are in $\setredtmsfin{0}{\distrtypone,\seone}$.
   \item Else for every $\indexone \in \indexsetone$ we have $\setredvals{q_\indexone}{\typone_\indexone,\seone}
   \,=\,\setredvals{1}{\typone_\indexone,\seone}$.
    Since $\contextsizedone,\,\contextsizedtwo \contextsep \contextdistrone \proves \termone \typsep \typone$,
  we obtain by induction hypothesis that $\subst{\termone}{\vec{\varone},\vec{\vartwo}}{\vec{\valone},\vec{\valtwo}}
  \in \setredtmsfin{\left(\prod_{\indexone\in\indexsetone} 1\right)\left(\prod_{\indextwo\in\indexsettwo} q'_\indextwo\right)}{\typone,\seone}$.
  None of the $\vec{\varthree}$ occur in $\termone$, so 
  $\subst{\termone}{\vec{\varone},\vec{\vartwo},\vec{\valthree}}{\vec{\valone},\vec{\valtwo},\vec{\valthree}}
  \in \setredtmsfin{\prod_{\indextwo\in\indexsettwo} q'_\indextwo}{\typone,\seone}$.
  Since $\contextsizedone,\,\contextsizedthree,\,\varone\typsep \typone \contextsep \contextdistrtwo
 \proves \termtwo \typsep \distrtypone$, we obtain by induction hypothesis that
   $\subst{\termtwo}{\vec{\varone},\vec{\varthree},\varone}{\vec{\valone},\vec{\valtwo},\vec{\valthree},\subst{\termone}{\vec{\varone},\vec{\vartwo},\vec{\varthree}}{\vec{\valone},\vec{\valthree}}}$
   is in $\setredtmsfin{\left(\prod_{\indexone\in\indexsetone} q_\indexone\right)\left(\prod_{\indextwo\in\indexsettwo} q'_\indextwo\right)\left(\prod_{\indexthree\in\indexsetthree} q''_\indexthree\right)}{\distrtypone,\seone}$.
   Since none of the variables of $\vec{\vartwo}$ occur in this term, we obtain
   $$
   \subst{\termtwo}{\vec{\varone},\vec{\vartwo},\vec{\varthree},\varone}{\vec{\valone},\vec{\valtwo},\vec{\valthree},\subst{\termone}{\vec{\varone},\vec{\vartwo},\vec{\varthree}}{\vec{\valone},\vec{\valtwo},\vec{\valthree}}}
   \ \in\ \setredtmsfin{\left(\prod_{\indexone\in\indexsetone} q_\indexone\right)\left(\prod_{\indextwo\in\indexsettwo} q'_\indextwo\right)\left(\prod_{\indexthree\in\indexsetthree} q''_\indexthree\right)}{\distrtypone,\seone}
   $$
  Now
  $$
  \begin{array}{rl}
   &  \subst{\left(\letin{\varone}{\termone}{\termtwo}\right)}{\vec{\varone},\vec{\vartwo},\vec{\varthree}}{\vec{\valone},\vec{\valtwo},\vec{\valthree}}\\
   =\ \ & \letin{\varone}{\subst{\termone}{\vec{\varone},\vec{\vartwo},\vec{\varthree}}{\vec{\valone},\vec{\valtwo},\vec{\valthree}}}{\subst{\termtwo}}{\vec{\varone},\vec{\vartwo},\vec{\varthree}}{\vec{\valone},\vec{\valtwo},\vec{\valthree}}\\
   \rcbv \ \ &
   \distrelts{\left(\subst{\subst{\termtwo}{\vec{\varone},\vec{\vartwo},\vec{\varthree}}{\vec{\valone},\vec{\valtwo},\vec{\valthree}}}{\varone}{
   \subst{\termone}{\vec{\varone},\vec{\vartwo},\vec{\varthree}}{\vec{\valone},\vec{\valtwo},\vec{\valthree}}}\right)^1}
   \\
   =\ \ & \distrelts{\left(\subst{\termtwo}{\vec{\varone},\vec{\vartwo},\vec{\varthree},\varone}{\vec{\valone},\vec{\valtwo},\vec{\valthree},\subst{\termone}{\vec{\varone},\vec{\vartwo},\vec{\varthree}}{\vec{\valone},\vec{\valtwo},\vec{\valthree}}}\right)^1}
  \end{array}
  $$
  and it follows from Lemma~\ref{lemma:reduction-and-candidates} that (\ref{eq:reducibility-let1}) holds, allowing to conclude.
  \end{itemize}

  \item Suppose that $\termone$ is not a value.
  We treat in a first time the case where $\contextdistrone = \contextdistrtwo = \emptyset$.
  The case where $\contextdistrone \neq \emptyset$ is exactly similar, while the case where
  $\contextdistrtwo \neq \emptyset$ reveals the reason why a sum 
  $\sum_{\indextwo\in\indexsettwo} p_\indextwo q'_\indextwo$ appears in the definitions of $\setredopentms{}{}$
  and $\setredopenvals{}{}$. The last typing rule is
 $$
 \AxiomC{$\contextsizedone,\,\contextsizedtwo \contextsep \emptyset
 \proves \termone \typsep \distrelts{\typone_{\indexsix}^{p_{\indexsix}} \sep \indexsix \in \indexsetsix}$}
 \AxiomC{$\contextsizedone,\,\contextsizedthree,\,\varone\typsep \typone_{\indexsix} \contextsep \emptyset
 \proves \termtwo \typsep \distrtypone_{\indexsix}$}
   \AxiomC{$\underlying{\contextsizedone}\,=\,\Nat$}
 \TrinaryInfC{$\contextsizedone,\,\contextsizedtwo,\,\contextsizedthree \contextsep \emptyset
 \proves \letin{\varone}{\termone}{\termtwo} \typsep \sum_{\indexsix \in \indexsetsix}\ p_{\indexsix} \cdot \distrtypone_{\indexsix}$}
 \DisplayProof
 $$

  We need to prove that   
  \begin{equation}
  \label{eq:reducibility-let1}
  \begin{array}{rl}
  &\subst{\left(\letin{\varone}{\termone}{\termtwo}\right)}{\vec{\varone},\vec{\vartwo},\vec{\varthree}}{\vec{\valone},\vec{\valtwo},\vec{\valthree}}\\
  =\ \ &
  \letin{\varone}{\subst{\termone}{\vec{\varone},\vec{\vartwo},\vec{\varthree}}{\vec{\valone},\vec{\valtwo},\vec{\valthree}}}{\subst{\termtwo}{\vec{\varone},\vec{\vartwo},\vec{\varthree}}{\vec{\valone},\vec{\valtwo},\vec{\valthree}}}
  \\
   & \quad \in \setredtmsfin{\left(\prod_{\indexone\in\indexsetone} q_\indexone\right)\left(\prod_{\indextwo\in\indexsettwo} q'_\indextwo\right)\left(\prod_{\indexthree\in\indexsetthree} q''_\indexthree\right)}{\sum_{\indexsix \in \indexsetsix}\ p_{\indexsix} \cdot \distrtypone_{\indexsix},\seone} 
  \end{array}
  \end{equation}
  We now distinguish two cases.
   \begin{itemize}
   \item Suppose that $\left(\prod_{\indexone\in\indexsetone} q_\indexone\right)\left(\prod_{\indextwo\in\indexsettwo} q'_\indextwo\right)\left(\prod_{\indexthree\in\indexsetthree} q''_\indexthree\right)\,=\,0$. 
   Then (\ref{eq:reducibility-let1}) holds immediately
   since by Lemma~\ref{lemma:simple-types-and-candidates-of-proba-zero} all the terms of simple type $\underlying{\sum_{\indexsix \in \indexsetsix}\ p_{\indexsix} \cdot \distrtypone_{\indexsix}}$
   are in $\setredtmsfin{0}{\sum_{\indexsix \in \indexsetsix}\ p_{\indexsix} \cdot \distrtypone_{\indexsix},\seone}$.
   \item Else, we use the induction hypothesis on 
   $\contextsizedone,\,\contextsizedtwo \contextsep \emptyset
 \proves \termone \typsep \distrelts{\typone_{\indexsix}^{p_{\indexsix}} \sep \indexsix \in \indexsetsix}$.
 Since $\underlying{\typone_\indexone}=\Nat$, for every $\indexone \in \indexsetone$ we have $\setredvals{q_\indexone}{\typone_\indexone,\seone}
   \,=\,\setredvals{1}{\typone_\indexone,\seone}$. Together with the fact that $\vec{\varthree}$ does not appear in $\termone$,
   we obtain that
   $$
   \subst{\termone}{\vec{\varone},\vec{\vartwo},\vec{\varthree}}{\vec{\valone},\vec{\valtwo},\vec{\valthree}}
   \in \setredtmsfin{\prod_{\indextwo\in\indexsettwo} q'_\indextwo}{\distrelts{\typone_{\indexsix}^{p_{\indexsix}} \sep \indexsix \in \indexsetsix},\seone}
   $$
   By definition, for every $0 \leq r < \prod_{\indextwo=1}^m\ q'_\indextwo$,
   there exists $n_r$ and $\distrtyptwo_r = \distrelts{\typone_{\indexfive}^{p_{r,\indexfive}} \sep \indexfive \in \indexsetfive_r} \distrleq
   \distrelts{\typone_{\indexsix}^{p_{\indexsix}} \sep \indexsix \in \indexsetsix}$ with $\indexsetfive_r \subseteq \indexsetsix$
   such that 
   $$
   \subst{\termone}{\vec{\varone},\vec{\vartwo},\vec{\varthree}}{\vec{\valone},\vec{\valtwo},\vec{\valthree}}
   \ \ \redval^{n_r}\ \ \distrone_r\ =\ \pseudorep{\valfour_{\indexfour}^{p''_{r,\indexfour}}\sep \indexfour \in \indexsetfour_r}
   \ \in\ \setreddists{r}{\distrtyptwo_r,\seone}
   $$
   This implies the existence of two families
 $(p^r_{\indexfour\indexfive})_{\indexfour \in \indexsetfour_r,\indexfive_r \in \indexsetfive}$ and
 $(q^r_{\indexfour\indexfive})_{\indexfour \in \indexsetfour_r,\indexfive_r \in \indexsetfive}$ of reals of $[0,1]$
 satisfying in particular 
 \begin{equation}
 \label{eq:let-arith1}
  r \leq \sum_{\indexfour\in\indexsetfour_r}\ \sum_{\indexfive \in \indexsetfive_r}\ p^r_{\indexfour\indexfive} q^r_{\indexfour\indexfive}
 \end{equation}
  \begin{equation}
 \label{eq:let-arith2}
 \sum_{\indexfour\in\indexsetfour_r}\ \sum_{\indexfive \in \indexsetfive_r}\ p^r_{\indexfour\indexfive} \leq 1
 \end{equation}
  \begin{equation}
 \label{eq:let-arith3}
 \forall \indexfour \in \indexsetfour ,\ \ \sum_{\indexfive \in \indexsetfive_r}\ p^r_{\indexfour\indexfive}\ =\ p''_{r,\indexfour}
 \end{equation}
  \begin{equation}
 \label{eq:let-arith3bis}
 \forall \indexfive \in \indexsetfive ,\ \ \sum_{\indexfour \in \indexsetfour_r}\ p^r_{\indexfour\indexfive}\ =\ p_{r,\indexfive}
 \end{equation}
 and
  \begin{equation}
 \label{eq:let-arith4}
 \forall \indexfour \in \indexsetfour_r,\ \ \forall \indexfive \in \indexsetfive_r,\ \ \valfour_\indexfour \in \setredvals{q^r_{\indexfour\indexfive}}{\typone_{\indexfive},\seone}
 \end{equation}
  By (\ref{eq:let-arith1}) and (\ref{eq:let-arith2}), we can apply Lemma~\ref{lemma:arithmetic-pumping}
  and we obtain $0 \leq r < \prod_{\indextwo=1}^m\ q'_\indextwo$ and
 a family $(r'_{\indexfour\indexfive})_{\indexfour \in \indexsetfour_r,\indexfive \in \indexsetfive_r}$
 satisfying
 \begin{equation}
  \label{eq:let-arith5}
  \forall \indexfour \in \indexsetfour_r,\ \ \forall \indexfive \in \indexsetfive_r,\ \ 
  0 \leq r'_{\indexfour\indexfive} < \left(\prod_{\indexone=1}^n\ q_\indexone\right)\left(\prod_{\indexthree=1}^l\ q''_\indexthree\right)q^r_{\indexfour\indexfive}
 \end{equation}
 and
 \begin{equation}
  \label{eq:let-arith6}
  r'' \leq \sum_{\indexfour\in\indexsetfour}\ \sum_{\indexfive \in \indexsetfive}\ p^r_{\indexfour\indexfive} r'_{\indexfour\indexfive}
 \end{equation}
   We now consider $r$ to be fixed to this value given by the lemma, this providing $\distrone_r$, $\distrtyptwo_r$ and so on.

   Since $\contextsizedone,\,\contextsizedthree,\,\varone\typsep \typone_{\indexsix} \contextsep \emptyset
 \proves \termtwo \typsep \distrtypone_{\indexsix}$, we obtain by induction hypothesis using (\ref{eq:let-arith4}) that for every 
 $\indexfour \in \indexsetfour$ and $\indexfive\in\indexsetfive$ we have 
 \begin{equation}
  \label{eq:let-arith7}
  \subst{\termtwo}{\vec{\varone},\vec{\vartwo},\vec{\varthree},\varone}{\vec{\valone},\vec{\valtwo},\vec{\valthree},\valfour_{\indexfour}}
  \ \in \ \setredtmsfin{\left(\prod_{\indexone=1}^n\ q_\indexone\right)\left(\prod_{\indexthree=1}^l\ q''_\indexthree\right)q^r_{\indexfour\indexfive}}{\distrtypone_\indexfive,\seone}
 \end{equation}
 By $\ref{eq:let-arith5}$, there exists for every $\indexfour \in \indexsetfour$ and $\indexfive\in\indexsetfive$
 and index $m_{\indexfour\indexfive}$ and a type $\distrtypone'_{\indexfour \indexfive} \subtypeleq \distrtypone_\indexfive$
 such that 
 \begin{equation}
  \label{eq:let-arith8}
  \subst{\termtwo}{\vec{\varone},\vec{\vartwo},\vec{\varthree},\varone}{\vec{\valone},\vec{\valtwo},\vec{\valthree},\valfour_{\indexfour}}
  \ \redval^{m_{\indexfour \indexfive}}
  \distrtwo_{\indexfour \indexfive} \ \in
  \ \setreddists{r'_{\indexfour\indexfive}}{\distrtypone'_{\indexfour\indexfive},\seone}
 \end{equation}
   Now set
   $$
   m \ \ =\ \ \max_{\indexfour \in \indexsetfour,\indexfive \in \indexsetfive}\ m_{\indexfour\indexfive}
   $$
   By Lemma~\ref{lemma:pumping-dred}, we obtain types 
   $\distrtypone'_{\indexfour\indexfive} \distrleq \distrtypone''_{\indexfour\indexfive} \distrleq \distrtypone_{\indexfive}$
   and distributions $\distrtwo'_{\indexfour \indexfive}$
   such that all the reduction lengths are the same:
   \begin{equation}
  \label{eq:let-arith9}
  \subst{\termtwo}{\vec{\varone},\vec{\vartwo},\vec{\varthree},\varone}{\vec{\valone},\vec{\valtwo},\vec{\valthree},\valfour_{\indexfour}}
  \ \redval^{m}
  \distrtwo'_{\indexfour \indexfive} \ \in
  \ \setreddists{r'_{\indexfour\indexfive}}{\distrtypone''_{\indexfour\indexfive},\seone}
 \end{equation}
   Now it follows of (\ref{eq:let-arith3}) that
   $$
   \distrone_r\ \ =\ \ \pseudorep{\valfour_{\indexfour}^{p^r_{\indexfour,\indexfive}}\sep \indexfour \in \indexsetfour_r,\ \indexfive \in \indexsetfive_r}
   $$
   which allows us to use Lemma~\ref{lemma:operational-decomposition-let}, obtaining that
   $$
  \subst{\left(\letin{\varone}{\termone}{\termtwo}\right)}{\vec{\varone},\vec{\vartwo},\vec{\varthree}}{\vec{\valone},\vec{\valtwo},\vec{\valthree}}\ \ 
  \redval^{n_r+m+1}\ \ 
  \sum_{\indexfour \in \indexsetfour}\ \sum_{\indexfive \in \indexsetfive}\ p^r_{\indexfour,\indexfive} \cdot \distrtwo'_{\indexfour \indexfive}
   $$
  By (\ref{eq:let-arith9}) and Lemma~\ref{lemma:backtracking-dred-multiple-instances}, we obtain that
  $$
  \sum_{\indexfour \in \indexsetfour}\ \sum_{\indexfive \in \indexsetfive}\ p^r_{\indexfour,\indexfive} \cdot \distrtwo'_{\indexfour \indexfive}
  \ \in\ 
  \setreddists{\sum_{\indexfour \in \indexsetfour}\ \sum_{\indexfive \in \indexsetfive}\ p^r_{\indexfour,\indexfive}r'_{\indexfour,\indexfive}
  }{\sum_{\indexfour \in \indexsetfour}\ \sum_{\indexfive \in \indexsetfive}\ p^r_{\indexfour,\indexfive}\distrtypone''_{\indexfour,\indexfive},\seone} 
  $$
  By (\ref{eq:let-arith6}) and downward closure (Lemma~\ref{lemma/downward-closure-tred}) we obtain 
  $$
  \sum_{\indexfour \in \indexsetfour}\ \sum_{\indexfive \in \indexsetfive}\ p^r_{\indexfour,\indexfive} \cdot \distrtwo'_{\indexfour \indexfive}
  \ \in\ 
  \setreddists{r''}{\sum_{\indexfour \in \indexsetfour}\ \sum_{\indexfive \in \indexsetfive}\ p^r_{\indexfour,\indexfive}\distrtypone''_{\indexfour,\indexfive},\seone} 
  $$
   and since by (\ref{eq:let-arith3bis}) we have $\sum_{\indexfour \in \indexsetfour}\ \sum_{\indexfive \in \indexsetfive}\ p^r_{\indexfour,\indexfive}\distrtypone''_{\indexfour,\indexfive} \distrleq \sum_{\indexsix \in \indexsetsix}\ p_\indexsix \distrtypone_\indexsix$ we can conclude that
   $$
  \subst{\left(\letin{\varone}{\termone}{\termtwo}\right)}{\vec{\varone},\vec{\vartwo},\vec{\varthree}}{\vec{\valone},\vec{\valtwo},\vec{\valthree}}
  \ \ \in\ \ 
   \setredtmsfin{\left(\prod_{\indexone\in\indexsetone} q_\indexone\right)\left(\prod_{\indextwo\in\indexsettwo} q'_\indextwo\right)\left(\prod_{\indexthree\in\indexsetthree} q''_\indexthree\right)}{\sum_{\indexsix \in \indexsetsix}\ p_{\indexsix} \cdot \distrtypone_{\indexsix},\seone} 
   $$
   \end{itemize}
  \end{itemize}
 \bigbreak
 }
 
 \longv{
 \item \textbf{Case:} Suppose that $\contextsizedone,\,\contextsizedtwo \contextsep \contextdistrone \proves \caseof{\valone}{\natsucc \rightarrow \valtwo
 \smallsep  \natzero \rightarrow \valthree} \typsep \distrtypone$.
 Suppose that $\contextdistrone \,=\, \emptyset$.
 We set $\contextsizedone\ =\ \varone_1\typsep\typone_1,\ldots,\varone_n\typsep\typone_n$
 and $\contextsizedtwo\ =\ \vartwo_1\typsep\typtwo_1,\ldots,\vartwo_m\typsep\typtwo_m$.

 Let $(q_\indexone)_\indexone \in [0,1]^{n}$, $(q'_\indextwo)_\indextwo \in [0,1]^{m}$, 
 $\left(\valone_1,\,\ldots,\,\valone_n\right) \in \prod_{\indexone=1}^n \ \setredvals{q_\indexone}{\typone_\indexone,\seone}$ and 
 $\left(\valone'_1,\,\ldots,\,\valone'_m\right) \in \prod_{\indextwo=1}^m \ \setredvals{q'_\indextwo}{\typtwo_\indextwo,\seone}$.
 We need to prove that
 $$
 \subst{\left(\caseof{\valone}{\natsucc \rightarrow \valtwo\smallsep  \natzero \rightarrow \valthree}\right)}{\vec{\varone},\vec{\vartwo}}{\vec{\valone},\vec{\valone'}} \ \in \ \setredtmsfin{\left(\prod_{\indexone=1}^n q_\indexone\right)\left(\prod_{\indextwo=1}^m q'_\indextwo\right)}{\distrtypone,\seone}
 $$
 i.e. that
 $$
 \caseof{\subst{\valone}{\vec{\varone},\vec{\vartwo}}{\vec{\valone},\vec{\valone'}}}{\natsucc \rightarrow \subst{\valtwo}{\vec{\varone},\vec{\vartwo}}{\vec{\valone},\vec{\valone'}}\smallsep  \natzero \rightarrow \subst{\valthree}{\vec{\varone},\vec{\vartwo}}{\vec{\valone},\vec{\valone'}}}
 $$
 is in $\setredtmsfin{\left(\prod_{\indexone=1}^n q_\indexone\right)\left(\prod_{\indextwo=1}^m q'_\indextwo\right)}{\distrtypone,\seone}$.
 Since $\contextsizedone \contextsep \emptyset \proves \valone \typsep \Nat^{\sizesucc{\sizeone}}$, we have by induction hypothesis that
 $\subst{\valone}{\vec{\varone}}{\vec{\valone}} \in \setredtmsfin{\prod_{\indexone=1}^n q_\indexone}{\distrelts{\left(\Nat^{\sizesucc{\sizeone}}\right)^1},\seone}$.
 Since it is a value, we have by Lemma~\ref{lemma:values-are-in-tred-iff-in-vred} the stronger statement that
 $\subst{\valone}{\vec{\varone}}{\vec{\valone}} \in \setredvals{\prod_{\indexone=1}^n q_\indexone}{\Nat^{\sizesucc{\sizeone}},\seone}$
 which implies that $\subst{\valone}{\vec{\varone}}{\vec{\valone}}$ is of the shape $\natsucc^k\ \natzero$ for $k \in \NN$
 satisfying $\prod_{\indexone=1}^n q_\indexone\, \neq \,0 \implies k < \sesem{\sizesucc{\sizeone}}{\seone}$.
 The typing also ensures that none of the variables of $\vec{\vartwo}$ occurs in $\valone$,
 so that $\subst{\valone}{\vec{\varone}}{\vec{\valone}}\ =\ \subst{\valone}{\vec{\varone},\vec{\vartwo}}{\vec{\valone},\vec{\valone'}}$.
 \begin{itemize}
  \item If $k=0$, then 
  $$
  \begin{array}{l}
  \caseof{\natzero}{\natsucc \rightarrow \subst{\valtwo}{\vec{\varone},\vec{\vartwo}}{\vec{\valone},\vec{\valone'}}\smallsep  \natzero \rightarrow \subst{\valthree}{\vec{\varone},\vec{\vartwo}}{\vec{\valone},\vec{\valone'}}}\\
  \ \ \rcbv\ \ \distrelts{\left(\subst{\valthree}{\vec{\varone},\vec{\vartwo}}{\vec{\valone},\vec{\valone'}}\right)^1}
  \\
  \end{array}
  $$
  Since $\contextsizedtwo \contextsep \emptyset \proves \valthree \typsep \distrtypone$, by induction hypothesis,
  we have that 
  $$
  \subst{\valthree}{\vec{\vartwo}}{\vec{\valone'}} \in 
  \setredtmsfin{\prod_{\indextwo=1}^m q'_\indextwo}{\distrtypone,\seone}
  $$
  and also, by the typing hypothesis, that none of the variables of $\vec{\varone}$ is free in $\subst{\valthree}{\vec{\vartwo}}{\vec{\valone'}}$
  so that $\subst{\valthree}{\vec{\vartwo}}{\vec{\valone'}} \ =\ \subst{\valthree}{\vec{\varone},\vec{\vartwo}}{\vec{\valone},\vec{\valone'}}$.
  But $\prod_{\indexone=1}^n q_\indexone \leq 1$, so that the downward-closure property of Lemma~\ref{lemma/downward-closure-tred}
  induces that
  $$
  \subst{\valthree}{\vec{\varone},\vec{\vartwo}}{\vec{\valone},\vec{\valone'}} \in
  \setredtmsfin{\left(\prod_{\indexone=1}^n q_\indexone\right)\left(\prod_{\indextwo=1}^m q'_\indextwo\right)}{\distrtypone,\seone}
  $$
  Now the closure by anti-reduction of Lemma~\ref{lemma:reduction-and-candidates} ensures that 
  $$
  \caseof{\subst{\valone}{\vec{\varone},\vec{\vartwo}}{\vec{\valone},\vec{\valone'}}}{\natsucc \rightarrow \subst{\valtwo}{\vec{\varone},\vec{\vartwo}}{\vec{\valone},\vec{\valone'}}\smallsep  \natzero \rightarrow \subst{\valthree}{\vec{\varone},\vec{\vartwo}}{\vec{\valone},\vec{\valone'}}}
 $$
 is in $\setredtmsfin{\left(\prod_{\indexone=1}^n q_\indexone\right)\left(\prod_{\indextwo=1}^m q'_\indextwo\right)}{\distrtypone,\seone}$.
  \item If $k>0$, then 
  $$
  \begin{array}{l}
  \caseof{\natsucc^{k}\ \natzero}{\natsucc \rightarrow \subst{\valtwo}{\vec{\varone},\vec{\vartwo}}{\vec{\valone},\vec{\valone'}}\smallsep  \natzero \rightarrow \subst{\valthree}{\vec{\varone},\vec{\vartwo}}{\vec{\valone},\vec{\valone'}}}\\
  \ \ \rcbv\ \ \distrelts{\left(\left(\subst{\valtwo}{\vec{\varone},\vec{\vartwo}}{\vec{\valone},\vec{\valone'}}\right) \ \left(\natsucc^{k-1}\ \natzero\right) \right)^1}\\
  \end{array}
  $$
  By typing hypothesis, we have  
  $\contextsizedtwo \contextsep \emptyset \proves \valtwo \typsep \Nat^{\sizeone} \typarrow \distrtypone$
  and the induction hypothesis provides
  $\subst{\valtwo}{\vec{\vartwo}}{\vec{\valone'}} \in \setredtmsfin{\prod_{\indextwo=1}^m q'_\indextwo}{\distrelts{\left(\Nat^{\sizeone} \typarrow \distrtypone\right)^1},\seone}$
  which, since none of the $\vec{\varone}$ appears freely in $\valtwo$, and by Lemma~\ref{lemma:values-are-in-tred-iff-in-vred},
  implies that $\subst{\valtwo}{\vec{\varone},\vec{\vartwo}}{\vec{\valone},\vec{\valone'}} \in \setredvals{\prod_{\indextwo=1}^m q'_\indextwo}{\Nat^\sizeone \typarrow \distrtypone,\seone}$.
  \begin{itemize}
   \item Suppose that $\prod_{\indexone=1}^n q_\indexone\, \neq \,0$. It follows that $k < \sesem{\sizesucc{\sizeone}}{\seone}$
   and therefore $k - 1 < \sesem{\sizeone}{\seone}$ which implies that 
   $\natsucc^{k-1}\ \natzero \in \setredvals{1}{\Nat^{\sizeone},\seone}$. Since $\subst{\valtwo}{\vec{\varone},\vec{\vartwo}}{\vec{\valone},\vec{\valone'}} \in \setredvals{\prod_{\indextwo=1}^m q'_\indextwo}{\distrtypone,\seone}$, we obtain that 
   $\left(\subst{\valtwo}{\vec{\varone},\vec{\vartwo}}{\vec{\valone},\vec{\valone'}}\right) \ \left(\natsucc^{k-1}\ \natzero\right)$
   is in $\setredtmsfin{\prod_{\indextwo=1}^m q'_\indextwo}{\distrtypone,\seone}$. By closure by anti-reduction
   (Lemma~\ref{lemma:reduction-and-candidates}), we have that
   $$
    \caseof{\subst{\valone}{\vec{\varone},\vec{\vartwo}}{\vec{\valone},\vec{\valone'}}}{\natsucc \rightarrow \subst{\valtwo}{\vec{\varone},\vec{\vartwo}}{\vec{\valone},\vec{\valone'}}\smallsep  \natzero \rightarrow \subst{\valthree}{\vec{\varone},\vec{\vartwo}}{\vec{\valone},\vec{\valone'}}}
   $$
   is in $\setredtmsfin{\prod_{\indextwo=1}^m q'_\indextwo}{\distrtypone,\seone}$
   and by downward closure (Lemma~\ref{lemma/downward-closure-tred}) we obtain that it is also in
   $\setredtmsfin{\left(\prod_{\indexone=1}^n q_\indexone\right)\left(\prod_{\indextwo=1}^m q'_\indextwo\right)}{\distrtypone,\seone}$,
   from which we conclude.
   \item Suppose that $\prod_{\indexone=1}^n q_\indexone\, = \,0$. Then all we need to prove is that
   $$
    \caseof{\subst{\valone}{\vec{\varone},\vec{\vartwo}}{\vec{\valone},\vec{\valone'}}}{\natsucc \rightarrow \subst{\valtwo}{\vec{\varone},\vec{\vartwo}}{\vec{\valone},\vec{\valone'}}\smallsep  \natzero \rightarrow \subst{\valthree}{\vec{\varone},\vec{\vartwo}}{\vec{\valone},\vec{\valone'}}}
   $$
   is in $\setredtmsfin{0}{\distrtypone,\seone}$. But this term has simple type $\underlying{\distrtypone}$
   and by Lemma~\ref{lemma:simple-types-and-candidates-of-proba-zero} the result holds.
  \end{itemize}

 \end{itemize}

 The case where $\contextdistrone \neq \emptyset$ is similar.
 
 \bigbreak
 }

 \item \textbf{$\letrecname$:} Suppose that $\contextsizedone,\,\contextsizedtwo
 \contextsep \contextdistrone \proves \letrec{\funcone}{\valone} \typsep  
\Nat^{\sizetwo} \typarrow \subst{\distrtyptwo}{\sizevarone}{\sizetwo}$.
 We treat the case where $\contextsizedtwo\,=\,\contextdistrone\,=\,\emptyset$.
 The general case is easily deduced using the downward-closure of the reducibility
 sets (Lemma~\ref{lemma/downward-closure-tred}).
 Let $\contextsizedone\ =\ \varone_1\typsep\Nat^{\sizetwo_1},\ldots,\,\varone_n\typsep\Nat^{\sizetwo_n}$. 
 We need to prove that, for every family $\left(q_\indexone\right)_\indexone \in [0,1]^n$
 and every $\left(\valtwo_1,\ldots,\valtwo_n\right) \in \prod_{\indexone=1}^n \setredvals{q_\indexone}{\Nat^{\sizetwo_\indexone},\seone}$,
 we have
 $$
 \subst{\left(\letrec{\funcone}{\valone}\right)}{\vec{\varone}}{\vec{\valtwo}} \ 
 = \ \left(\letrec{\funcone}{\subst{\valone}{\vec{\varone}}{\vec{\valtwo}}}\right)
 \ \in\ \setredvals{\prod_{\indexone = 1}^n q_\indexone}{\Nat^{\sizetwo} \typarrow \subst{\distrtyptwo}{\sizevarone}{\sizetwo},\seone}
 $$
 If there exists $\indexone \in \indexsetone$ such that $q_\indexone = 0$, the result is immediate as the term is simply-typed and
 Lemma~\ref{lemma:simple-types-and-candidates-of-proba-zero} applies.
 Else, for every $\indexone \in \indexsetone$, we have by definition that
 $\setredvals{q_\indexone}{\Nat^{\sizetwo_\indexone},\seone}\ =\ \setredvals{1}{\Nat^{\sizetwo_\indexone},\seone}$.
 Since the sets $\setredvals{}{}$ are downward-closed (Lemma~\ref{lemma/downward-closure-tred}), it is in fact enough to prove that
 for every $\left(\valtwo_1,\ldots,\valtwo_n\right) \in \prod_{\indexone=1}^n \setredvals{1}{\Nat^{\sizetwo_\indexone},\seone}$,
 we have
 $$
\letrec{\funcone}{\subst{\valone}{\vec{\varone}}{\vec{\valtwo}}}
 \ \in\ \setredvals{1}{\Nat^{\sizetwo} \typarrow \subst{\distrtyptwo}{\sizevarone}{\sizetwo},\seone}
 $$
 Moreover, by size commutation (Lemma~\ref{lemma:exchange-size-size-env}),
 $$
 \setredvals{1}{\Nat^{\sizetwo} \typarrow \subst{\distrtyptwo}{\sizevarone}{\sizetwo},\seone}
 \ =\ \setredvals{1}{\Nat^{\sizevarone} \typarrow \distrtyptwo,\seone\left[\sizevarone \mapsto \sesem{\sizetwo}{\seone}\right]}
 $$
 Let us therefore prove the stronger fact that, for \emph{every} integer $m \in \NN \cup \{\sizeinf\}$,
 $$
 \letrec{\funcone}{\subst{\valone}{\vec{\varone}}{\vec{\valtwo}}}
 \ \in\ \setredvals{1}{\Nat^{\sizevarone} \typarrow \distrtyptwo,\seone\left[\sizevarone \mapsto m\right]}
 $$
 Now, the typing derivation gives us that
 $
 \contextsizedone \contextsep
 \funcone \typsep \distrtypone \proves \valone \typsep \Nat^{ \sizesucc{\sizevarone}} \typarrow \subst{\distrtyptwo}{\sizevarone}{\sizesucc{\sizevarone}}
 $
 and that $\distrtypone$ induces an AST sized walk. 
 Denote $\left(\probaconv{n}{m}\right)_{n \in \NN,m\in\NN}$ its associated probabilities of convergence in finite time.
 By induction hypothesis, $\valone\in\setredopenvals{\contextsizedone \contextsep \funcone \typsep \distrtypone}{\Nat^{ \sizesucc{\sizevarone}} \typarrow \subst{\distrtyptwo}{\sizevarone}{\sizesucc{\sizevarone}},\seone}$ for every $\seone$ and we can apply Lemma~\ref{lemma:sized-walk-argument-for-letrec}.
 It follows that, for every $(n,m) \in \NN$,
 $$
 \letrec{\funcone}{\subst{\valone}{\vec{\varone}}{\vec{\valtwo}}}\ \ \in\ \ \setredvals{\probaconv{n}{m}}{\Nat^\sizevarone \typarrow \distrtyptwo,\seone\left[\sizevarone \mapsto m\right]}
 $$
 Let $\epsilon \in (0,1)$. By Lemma~\ref{lemma:sized-walks-get-arbitrarily-close-to-proba-one-in-finite-time},
 there exists $n \in \NN$ such that $\probaconv{n}{m} \geq 1 - \epsilon$. Using downward closure (Lemma~\ref{lemma/downward-closure-tred})
 and quantifying over all the $\epsilon$, we obtain
 $$
 \letrec{\funcone}{\subst{\valone}{\vec{\varone}}{\vec{\valtwo}}}\ \ \in\ \ \bigcap_{0< \epsilon < 1} \setredvals{1-\epsilon}{\Nat^\sizevarone \typarrow \distrtyptwo,\seone\left[\sizevarone \mapsto m\right]}
 $$
 so that, by continuity of $\setredvals{}{}$ (Lemma~\ref{lemma/continuity-lemma-vred}), we obtain
 \begin{equation}
 \label{eq:letrec1}
 \letrec{\funcone}{\subst{\valone}{\vec{\varone}}{\vec{\valtwo}}}\ \ \in\ \ \setredvals{1}{\Nat^\sizevarone \typarrow \distrtyptwo,\seone\left[\sizevarone \mapsto m\right]} 
 \end{equation}
 for every $m \in \NN$, allowing us to conclude. It remains however to treat the case where $m = \sizeinf$.
 Since $\positive{\sizevarone}{\distrtyptwo}$ and that (\ref{eq:letrec1}) holds for every $m \in \NN$,
 Lemma~\ref{lemma:letrec-infinite-size-case} applies and we obtain the result.

 \shortv{ 
 \bigbreak	
 
 \item \textbf{Other cases:} the other cases are treated in the long version~\cite{dal-lago-grellois:monadic-affine-sized-types-full}.\qed
 }
\end{varitemize}

\end{proof}

\noindent
This proposition, together with the definition of $\setredopentms{}{}$,
implies the main result of the paper, namely that typability implies almost-sure termination:

\begin{theorem}
 Suppose that $\termone \in \setdistrtypedclosedterms{\distrtypone}$. Then $\termone$ is AST.
\end{theorem}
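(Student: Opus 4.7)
The plan is to derive this theorem as a direct corollary of the fundamental results already established, namely the Typing Soundness proposition (Proposition~\ref{prop:typing-soundness}) and the fact that reducibility with degree $1$ implies AST (Corollary~\ref{corollary:reducibility-and-ast}). Since all the technical machinery has been developed precisely to enable this conclusion, the proof will essentially amount to chaining these two facts together.

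First, since $\termone \in \setdistrtypedclosedterms{\distrtypone}$ means by definition that $\emptyset \contextsep \emptyset \proves \termone \typsep \distrtypone$, I would apply Proposition~\ref{prop:typing-soundness} to this derivation, fixing an arbitrary size environment $\seone$ (e.g., the one mapping every size variable to $0$; any choice works since the term is closed). This yields $\termone \in \setredopentms{\emptyset \contextsep \emptyset}{\distrtypone, \seone}$.

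Next, I would invoke the observation made right after the definition of reducibility sets for open terms, namely that $\setredopentms{\emptyset \contextsep \emptyset}{\distrtypone, \seone} = \setredtmsfin{1}{\distrtypone, \seone}$. Indeed, when both the sized and distribution contexts are empty, the defining quantification over substitutions for free variables becomes vacuous and the product of degrees collapses to the empty product, which is $1$. Hence $\termone \in \setredtmsfin{1}{\distrtypone, \seone}$.

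Finally, I would apply Corollary~\ref{corollary:reducibility-and-ast}, which directly yields that $\termone$ is AST. The main obstacle in this theorem is not in the proof itself but lies entirely upstream: it is Proposition~\ref{prop:typing-soundness}, whose proof in the $\letrecname$ case depends on the delicate sized-walk argument of Lemma~\ref{lemma:sized-walk-argument-for-letrec} combined with the continuity lemma (Lemma~\ref{lemma/continuity-lemma-vred}) and the treatment of infinite sizes (Lemma~\ref{lemma:letrec-infinite-size-case}). Once those are in place, the present theorem follows by a simple chain of implications.
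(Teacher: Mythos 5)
Your proposal matches the paper's own proof exactly: apply Proposition~\ref{prop:typing-soundness} to the empty-context typing judgement, use the identity $\setredopentms{\emptyset\contextsep\emptyset}{\distrtypone,\seone}=\setredtmsfin{1}{\distrtypone,\seone}$, and conclude with Corollary~\ref{corollary:reducibility-and-ast}. The argument is correct and takes the same route, with the real work correctly attributed to the upstream soundness proposition.
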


\begin{proof}
 Suppose that $\termone \in \setdistrtypedclosedterms{\distrtypone}$, then by Proposition~\ref{prop:typing-soundness}
 we have $\termone \in \setredopentms{\emptyset\contextsep\emptyset}{\distrtypone,\seone}$ for every $\seone$.
 By definition, $\setredopentms{\emptyset\contextsep\emptyset}{\distrtypone,\seone} = \setredtmsfin{1}{\distrtypone,\seone}$.
 Corollary~\ref{corollary:reducibility-and-ast} then implies that $\termone$ is AST.
\end{proof}

\section{Conclusions and Perspectives}

We presented a type system for an affine, simply-typed
$\lambda$-calculus enriched with a probabilistic choice operator,
constructors for the natural numbers, and recursion. This affinity
constraint implies that a given higher-order variable may occur
(freely) at most once in any probabilistic branch of a program. The
type system we designed decorates the affine simple types with
\emph{size information}, allowing to incorporate in the types relevant
information about the recursive behaviour of the functions contained
in the program.  A guard condition on the typing rule for $\letrecname$,
formulated with reference to an appropriate Markov chain, ensures that
typable terms are AST. The proof of soundness of this type system for
AST relies on a quantitative extension of the reducibility method, to
accommodate sets of candidates to the infinitary and probabilistic
nature of the computations we consider.

A first natural question is the one of the decidability of type
inference for our system. In the deterministic case, this question was
only addressed by Barthe and colleagues in an unpublished
tutorial~\cite{barthe-gregoire-riba:type-based-termination-tutorial},
and their solution is technically involved, especially when it comes
to dealing with the fixpoint rule.  We believe that their approach
could be extended to our system of monadic sized types, and hope that
it could provide a decidable type inference procedure for it. However,
this extension will certainly be challenging, as we need to
appropriately infer distribution types associated with AST sized walks
in the $\letrecname$ rule.

Another perspective would be to study the general, \emph{non-affine}
case. This is challenging, for two reasons. First, the system of size
annotations needs to be more expressive in order to distinguish
between various occurrences of a same function symbol in a same
probabilistic branch. A solution would be to use the combined power of
dependent types -- which already allowed Xi to formulate an
interesting type system for
termination in the deterministic case~\cite{xi:dependent-types-program-termination} -- and of
linearity: we could use \emph{linear dependent
  types}~\cite{dal-lago-gaboardi:linear-dependent-types} to formulate
an extension of the monadic sized type system keeping track of
\emph{how many} recursive calls are performed, and of the size of
\emph{each} recursive argument. The second challenge would then be to
associate, in the typing rule for $\letrecname$, this information contained
in linear dependent types with an appropriate random process. This
random process should be kept decidable to guarantee that at least
\emph{derivation} checking can be automated, and there will probably be a
trade-off between the duplication power we allow in programs and the
complexity of deciding AST for the guard in the $\letrecname$ rule.

The extension of our type system to deal with general inductive datatypes is
essentially straightforward.  Other perspectives would be to enrich
the type system so as to be able to treat coinductive data,
polymorphic types, or ordinal sizes, three features present in most
system of sized types dealing with the traditional deterministic case,
but which we chose not to address in this paper to focus on the
already complex task of accommodating sized types to a probabilistic
and higher-order framework.

\newpage
\bibliographystyle{splncs}
\bibliography{main}

\end{document}